\def\draft{0}
\def\doubleblind{0}
\newcounter{algsubstate}
\renewcommand{\thealgsubstate}{\alph{algsubstate}}
\algnewcommand\algorithmicinput{\textbf{Input:}}
\algnewcommand\Input{\item[\algorithmicinput]}
\algnewcommand\algorithmicoutput{\textbf{Output:}}
\algnewcommand\Output{\item[\algorithmicoutput]}
\algnewcommand\algorithmicgoal{\textbf{Goal:}}
\algnewcommand\Goal{\item[\algorithmicgoal]}
\newcommand{\alglinenoNew}[1]{\newcounter{ALG@line@#1}}
\newcommand{\alglinenoPop}[1]{\setcounter{ALG@line}{\value{ALG@line@#1}}}
\newcommand{\alglinenoPush}[1]{\setcounter{ALG@line@#1}{\value{ALG@line}}}
\newcommand{\vnote}[1]{\ifnum\draft=1\textcolor{orange}{[\textbf{Santhoshini:} #1]}\fi}
\newcommand{\mnote}[1]{\ifnum\draft=1\textcolor{red}{[\textbf{Madhu:} #1]}\fi}
\newcommand{\snote}[1]{\ifnum\draft=1\textcolor{teal}{[\textbf{Noah:} #1]}\fi}
\newcommand{\rnote}[1]{\ifnum\draft=1\textcolor{brown}{[\textbf{Raghuvansh:} #1]}\fi}
\newcommand{\blind}[2]{{\ifnum\draft=1\color{purple}\fi \ifnum\doubleblind=1#2\fi\ifnum\doubleblind=0#1\fi\ifnum\doubleblind=2$\{$ #1 $\vert$ #2 $\}$\fi}}
\numberwithin{equation}{section}
\declaretheoremstyle[bodyfont=\it,qed=\qedsymbol]{noproofstyle}
\declaretheorem[name=Observation,numbered=no]{observation*}
\declaretheorem[numberlike=equation]{fact}
\declaretheorem[numberlike=equation]{theorem}
\declaretheorem[name=Theorem,numbered=no]{theorem*}
\declaretheorem[numberlike=equation]{lemma}
\declaretheorem[name=Lemma,numbered=no]{lemma*}
\declaretheorem[numberlike=equation]{corollary}
\declaretheorem[name=Corollary,numbered=no]{corollary*}
\declaretheorem[numberlike=equation]{proposition}
\declaretheorem[name=Proposition,numbered=no]{proposition*}
\declaretheorem[numberlike=equation]{claim}
\declaretheorem[name=Claim,numbered=no]{claim*}
\declaretheorem[name=Conjecture,numbered=no]{conjecture*}
\declaretheorem[name=Question,numbered=no]{question*}
\declaretheoremstyle[bodyfont=\it]{defstyle} 
\declaretheorem[numberlike=equation,style=defstyle]{definition}
\declaretheorem[unnumbered,name=Definition,style=defstyle]{definition*}
\declaretheorem[unnumbered,name=Example,style=defstyle]{example*}
\declaretheorem[unnumbered,name=Notation=defstyle]{notation*}
\declaretheorem[unnumbered,name=Construction,style=defstyle]{construction*}
\declaretheoremstyle[]{rmkstyle}
\crefname{claim}{Claim}{Claims}
\crefname{fact}{Fact}{Facts}
\newcommand{\mcut}{\mathsf{Max}\text{-}\mathsf{CUT}}
\newcommand{\mdcut}{\mathsf{Max}\text{-}\mathsf{DICUT}}
\newcommand{\Win}{\mathsf{Win}}
\newcommand{\Ind}{\mathsf{ind}}
\newcommand{\bi}{\mathsf{b}\text{-}\mathsf{ind}}
\newcommand{\di}{\mathsf{d}\text{-}\mathsf{ind}}
\newcommand{\dbi}{\mathsf{db}\text{-}\mathsf{ind}}
\newcommand{\abi}{\hat{\mathsf{b}}\text{-}\mathsf{ind}}
\newcommand{\adi}{\hat{\mathsf{d}}\text{-}\mathsf{ind}}
\newcommand{\M}{\mathbb{M}}
\newcommand{\A}{\mathbb{A}}
\newcommand{\MS}{\M_\Delta}
\newcommand{\MN}{\M_{\geq 0}}
\newcommand{\AS}{\A_\Delta}
\newcommand{\AN}{\A_{\geq 0}}
\newcommand{\sstar}{(\star,\star)}
\newcommand{\Snap}{\mathsf{Snap}}
\newcommand{\SnapG}{\Snap_{\CG,\vect}}
\newcommand{\SnapGw}{\SnapG^{\sim w}}
\newcommand{\SnapH}{\Snap_{\CH,\vect}}
\newcommand{\Proj}{\mathsf{Proj}}
\newcommand{\RSnap}{\mathsf{RSnap}}
\newcommand{\RSnapG}{\RSnap_{\CG,\vecd,\vect}}
\newcommand{\Adj}{\mathsf{AdjMat}}
\newcommand{\AdjG}{\Adj_\CG}
\newcommand{\AdjH}{\Adj_\CH}
\newcommand{\AdjHZ}{\Adj_{\CH_0}}
\newcommand{\AdjK}{\Adj_\CK}
\newcommand{\AdjL}{\Adj_\CL}
\newcommand{\AdjLP}{\Adj_{\CL'}}
\newcommand{\Obl}{\mathcal{A}}
\newcommand{\val}{\mathsf{val}}
\newcommand{\bias}{\mathsf{bias}}
\renewcommand{\d}{\mathsf{deg}}
\renewcommand{\deg}{\d}
\newcommand{\dout}{\mathsf{deg}\text{-}\mathsf{out}}
\newcommand{\din}{\mathsf{deg}\text{-}\mathsf{in}}
\newcommand{\1}{\mathbbm{1}}
\newcommand{\Bern}{\mathsf{Bern}}
\newcommand{\orig}{\mathtt{orig}}
\newcommand{\RefinePart}{\mathtt{RefinePart}}
\newcommand{\RefineAlg}{\mathtt{RefineAlg}}
\newcommand{\eStored}{\mathtt{eStored}}
\newcommand{\vStored}{\mathtt{vStored}}
\newcommand{\vCutoff}{\mathtt{vCutoff}}
\newcommand{\eCutoff}{\mathtt{eCutoff}}
\newcommand{\bEst}{\mathtt{bEst}}
\newcommand{\dEst}{\mathtt{dEst}}
\newcommand{\eEst}{\mathtt{eEst}}
\newcommand{\vEst}{\mathtt{vEst}}
\newcommand{\AEst}{\mathtt{AEst}}
\newcommand{\nEst}{\nu\mathtt{Est}}
\newcommand{\mmin}{m_{\mathtt{min}}}
\newcommand{\mmax}{m_{\mathtt{max}}}
\newcommand{\Cwin}{C_{\mathrm{win}}}
\newcommand{\Csmooth}{C_{\mathrm{smooth}}}
\newcommand{\Csparse}{C_{\mathrm{spar}}}
\newcommand{\epsbias}{\epsilon_{\mathrm{bias}}}
\newcommand{\epserr}{\epsilon_{\mathrm{err}}}
\newcommand{\epssparse}{\epsilon_{\mathrm{spar}}}
\DeclarePairedDelimiter{\abs}{\lvert}{\rvert}
\DeclarePairedDelimiter{\card}{\lvert}{\rvert}
\DeclarePairedDelimiter{\paren}{(}{)}
\DeclarePairedDelimiter{\set}{\{}{\}}
\newcommand{\mfail}{\mathtt{overflow}}
\newcommand{\ssparse}{\vecsigma_{\mathrm{spar}}}
\newcommand{\hmsparse}{\hat{m}_{\mathrm{spar}}}
\newcommand{\msparse}{m_{\mathrm{spar}}}
\newcommand{\Gsparse}{\CG_{\mathrm{spar}}}
\newcommand{\pspar}{p_{\mathrm{spar}}}
\newcommand{\FJ}{\mathrm{FJ}}
\newcommand{\Hash}{\CH}
\newcommand{\algmargin}{\the\ALG@thistlm}
\algnewcommand{\parState}[1]{\State%
  \parbox[t]{\dimexpr\linewidth-\algmargin}{\strut #1\strut}}
\title{Improved Streaming Algorithms for Maximum Directed Cut via Smoothed Snapshots}
\author{Raghuvansh R. Saxena\thanks{Microsoft Research. Email: \texttt{raghuvansh.saxena@gmail.com}} 
\and Noah G. Singer\thanks{Department of Computer Science, Carnegie Mellon University, Pittsburgh, PA, USA. Supported by an NSF Graduate Research Fellowship (Award DGE2140739). Email: \texttt{ngsinger@cs.cmu.edu}.} 
\and Madhu Sudan\thanks{School of Engineering and Applied Sciences, Harvard University, Cambridge, Massachusetts, USA. Supported in part by a Simons Investigator Award and NSF Award CCF 2152413. Email: \texttt{madhu@cs.harvard.edu}.}
\and Santhoshini Velusamy\thanks{School of Engineering and Applied Sciences, Harvard University, Cambridge, Massachusetts, USA. Supported in part by a Google Ph.D. Fellowship, a Simons Investigator Award to Madhu Sudan, and NSF Award CCF 2152413. Email: \texttt{svelusamy@g.harvard.edu}.}}
\date{}
\begin{document}
\maketitle

\begin{abstract}

 We give an $\widetilde{O}(\sqrt{n})$-space single-pass $0.483$-approximation streaming algorithm for estimating the maximum directed cut size ($\mdcut$) in a directed graph on $n$ vertices. This improves over an $O(\log n)$-space $4/9 < 0.45$ approximation algorithm  due to Chou, Golovnev, and Velusamy (FOCS 2020), which was known to be optimal for $o(\sqrt{n})$-space algorithms. $\mdcut$ is a special case of a {\em constraint satisfaction problem} (CSP). In this broader context, we give the {\em first} CSP for which algorithms with $\widetilde{O}(\sqrt{n})$ space can provably outperform $o(\sqrt{n})$-space algorithms. 
 
 The key technical contribution of our work is development of the notions of a first-order snapshot of a (directed) graph and of estimates of such snapshots. These snapshots can be used to simulate certain (non-streaming) $\mdcut$ algorithms, including the ``oblivious'' algorithms introduced by Feige and Jozeph (Algorithmica, 2015), who showed that one such algorithm achieves a 0.483-approximation.

Previous work \blind{of the authors}{by Saxena, Singer, Sudan, and Velusamy} (SODA 2023) studied the restricted case of bounded-degree graphs, and observed that in this setting, it is straightforward to estimate the snapshot with $\ell_1$ errors and this suffices to simulate oblivious algorithms. But for unbounded-degree graphs, even defining an achievable and sufficient notion of estimation is subtle. We describe a new notion of snapshot estimation and prove its sufficiency using careful smoothing techniques, and then develop an algorithm which sketches such an estimate via a delicate process of intertwined vertex- and edge-subsampling.

Prior to our work, the only streaming algorithms for {\em any} CSP on general instances were based on generalizations of the $O(\log n)$-space algorithm for $\mdcut$, and can roughly be characterized as based on ``zeroth'' order snapshots. Our work thus opens the possibility of a new class of algorithms for approximating CSPs by demonstrating that more sophisticated snapshots can outperform cruder ones in the case of $\mdcut$. 

\end{abstract}

\newpage 
\setcounter{tocdepth}{2}
{\small \tableofcontents}


\newpage

\section{Introduction}

We consider approximating the maximum directed cut value of a directed graph by a streaming algorithm presented with a stream of edges in an arbitrary (worst-case) order. Our main result is a single-pass algorithm using $\widetilde{O}(\sqrt{n})$-space that gives a $.483$ approximation algorithm. Along the way we develop the notions of snapshots of graphs and estimates of such snapshots, which introduce new tools for approximating graph theoretic quantities and more generally for approximating Constraint Satisfaction Problems (CSPs). In what follows we explain the background of the directed cut problem, the significance of the result, and the techniques used to achieve this result.

\subsection{Background}

We begin by defining the maximum directed cut ($\mdcut$) problem in a directed graph $\CG$. (These definitions will all be informal; see \cref{sec:prelims} for formal definitions.) Given a  graph $\CG$ on $n$ vertices, labeled $1, \ldots, n$, \emph{cut} of $\CG$ is a binary string $\vecx \in \{0,1\}^n$, assigning a bit to every vertex in $\CG$. We say $\vecx$ \emph{cuts} a directed edge $(u,v)$ if $x_u = 1$ and $x_v = 0$. (Note the asymmetry between $u$ and $v$.) The \emph{value} $\val_\CG(\vecx)$ of a cut $\vecx$ is the total fraction of edges it cuts, and the \emph{value} $\val_\CG$ of $\CG$ is the maximum value of any cut. A uniformly random cut has value $\frac14$ in expectation, so every graph has value at least $\frac14$.

We consider \emph{streaming} algorithms for the problem of estimating the $\mdcut$ value $\val_\CG$ of a directed graph $\CG$, given a stream $\vecsigma = (e_1 = (u_1,v_1),\ldots,e_m = (u_m,v_m))$ of the graph's edges in arbitrary order. We say an algorithm is an \emph{$\alpha$-approximation} for the $\mdcut$ problem if its output $\hat{v}$ satisfies $\alpha \cdot \val_\CG \leq \hat{v} \leq \val_\CG$ (with high probability). We say an algorithm is a \emph{space-$s(n)$ streaming algorithm} (where $n$ is the number of vertices in $\CG$) if it reads the stream of edges $\vecsigma$ in sequential order and uses $s(n)$ space.

The $\mdcut$ problem is one example of a so-called \emph{constraint satisfaction problem (CSP)}. We omit a full definition as we do not require it, but these problems are basically defined by two things: (1) a ``global'' space of allowed ``assignments'' to ``variables'' and (2) a collection of ``local'' constraints, each of which specifies allowed values for a small subset of variables. For $\mdcut$, variables are vertices, assignments are cuts, and constraints are edges; we will use these terms interchangeably. The ``symmetric version'' of $\mdcut$ is another CSP called \emph{maximum cut} ($\mcut$), in which a cut $\vecx$ cuts an edge $(u,v)$ if $x_u \neq x_v$; we mention it here as it serves a useful point of comparison for $\mdcut$.

\subsection{Recent work}

Over the last decade, there has been extensive work on the approximability of various CSPs in various streaming models \cite{KK15,KKS14,GVV17,KKSV17,GT19,KK19,CGV20,CGSV21-finite,SSV21,CGS+22-linear-space,BHP+22,CGS+22-monarchy,SSSV23-random-ordering}; see also the surveys \cite{Sin22,Sud22}.

$\mdcut$ has emerged as the central benchmark for algorithms among CSPs in the streaming setting. It was the first problem shown to admit a non-trivial approximation in sublinear (in $n$) space in the work of Guruswami, Velingker, and Velusamy~\cite{GVV17}. Subsequent work of Chou, Golovnev, and Velusamy~\cite{CGV20} gave an improved algorithm for $\mdcut$ along with a tight bound on the approximability --- pinning the approximability of $\mdcut$ for $o(\sqrt n)$-space streaming at $\frac49$.

\begin{theorem}[\cite{CGV20}]\label{thm:cgv20}
For every $\epsilon > 0$, there is a streaming algorithm (in fact, a linear sketching algorithm) which $(4/9-\epsilon)$-approximates the $\mdcut$ value of a graph in $O_\epsilon(\log n)$ space. Conversely, every $(4/9+\epsilon)$-approximation streaming algorithm for $\mdcut$ uses $\Omega_\epsilon(\sqrt n)$ space.
\end{theorem}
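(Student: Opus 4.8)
The plan is to handle the algorithmic upper bound and the space lower bound separately; both revolve around a single statistic, the \emph{total bias} $\beta := \frac1{2m}\sum_v |d_v^+ - d_v^-|$ (with $d_v^\pm$ the out-/in-degree of $v$ and $m = |E|$). For the algorithm I would sandwich the $\mdcut$ value between two explicit functions of $\beta$ and $m$: a certifiable upper bound $U(\beta)$ and a rounding lower bound $L(\beta)$, verify $L(\beta) \ge \tfrac49 U(\beta)$ for every $\beta \in [0,1]$, and have the streaming algorithm estimate $\beta$ and $m$ and output a slightly deflated $L(\hat\beta)$ --- which is automatically a valid lower bound on $\val_\CG$, and is within a $4/9-\epsilon$ factor since $\val_\CG \le U(\beta)$.

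\emph{Upper bound.} Any cut $\vecx$ cuts only out-edges of its $1$-side and only in-edges of its $0$-side, so the number of cut edges is at most $\min\!\big(\sum_{x_v=1} d_v^+,\ \sum_{x_v=0} d_v^-\big) \le \tfrac12\sum_v \max(d_v^+,d_v^-)$; optimizing the side of each vertex gives $\val_\CG \le \tfrac1{2m}\sum_v\max(d_v^+,d_v^-) = \tfrac12(1+\beta) =: U(\beta)$. \emph{Lower bound.} Round each vertex independently so that $\mathbb E[2x_v-1] = \gamma\, s_v$ with $s_v := \operatorname{sign}(d_v^+ - d_v^-)$ and $\gamma\in[0,1]$ a parameter. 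Expanding the cut value in $\pm1$ variables, its expectation is $m\big(\tfrac14 + \tfrac{\gamma\beta}{2} - \tfrac{\gamma^2}{4m}\sum_{(u,v)\in E} s_u s_v\big)$. The delicate step is the correlation term: since $\sum_v (d_v^+-d_v^-)=0$, the number of ``sign-flipping'' edges is at least $m\beta$, so $\sum_{(u,v)\in E}s_us_v \le m(1-2\beta)$; hence $\val_\CG \ge \tfrac14 + \tfrac{\gamma\beta}{2} - \tfrac{\gamma^2(1-2\beta)}{4}$, and optimizing $\gamma$ yields $L(\beta) = \tfrac14 + \tfrac{\beta^2}{4(1-2\beta)}$ for $\beta\le\tfrac13$ and $L(\beta)=\beta$ for $\beta\ge\tfrac13$. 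A one-variable calculation gives $\min_\beta L(\beta)/U(\beta) = \tfrac49$, attained at $\beta=\tfrac15$. For the implementation, $m$ is computed exactly and $\sum_v|d_v^+-d_v^-|$ is the $\ell_1$ norm of a vector receiving a $\pm1$ update per edge, so Indyk's $\ell_1$ linear sketch $(1\pm\epsilon')$-estimates it in $O_{\epsilon'}(\log n)$ space.

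For the space lower bound I would construct two distributions $\mathcal D_Y,\mathcal D_N$ on $n$-vertex digraphs with $\val_\CG \ge \tfrac35-o(1)$ with high probability under $\mathcal D_Y$ and $\val_\CG \le \tfrac4{15}+o(1)$ with high probability under $\mathcal D_N$ --- so a $(4/9+\epsilon)$-approximation distinguishes them, outputting more than $\tfrac4{15}$ on $\mathcal D_Y$ and at most $\tfrac4{15}+o(1)$ on $\mathcal D_N$. Here $\mathcal D_N$ is a ``random'' instance whose vertices carry the bias profile of the critical $\beta=\tfrac15$ but whose orientations are otherwise uninformative, while $\mathcal D_Y$ plants a hidden near-bipartite source/sink structure that makes the bias exploitable; the two agree on all first-order (degree/bias) statistics, so they already defeat sketching algorithms of the above kind. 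To defeat all $o(\sqrt n)$-space algorithms I would cut the edge stream into blocks, turn a space-$s$ algorithm into an $s$-bit one-way protocol for the induced distinguishing problem, and prove an $\Omega(\sqrt n)$ one-way communication lower bound for it --- a Boolean-Hidden-Matching-flavored statement whose $\sqrt n$ barrier is a birthday-paradox phenomenon: separating planted from random requires observing two planted edges through a common vertex, which has probability $O(s^2/n)$ with a memory of $s$ edges.

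I expect the two hard points to be: (i) on the algorithm side, the sharp rounding analysis --- the crude bound $\sum_{(u,v)}s_us_v\le m$ only gives $L(\beta)=\tfrac14+\tfrac{\beta^2}4$ and an approximation ratio of $\sqrt2-1\approx0.414<\tfrac49$, so one genuinely needs the $\sum_v(d_v^+-d_v^-)=0$ refinement, and one must check that neither $L$ nor $U$ can be improved (otherwise $4/9$ would not be the right constant); and (ii) on the lower-bound side, building YES/NO distributions whose values are in ratio essentially $4/9$ yet provably indistinguishable, and establishing that the associated communication lower bound is $\Omega(\sqrt n)$ and not merely $\Omega(\log n)$ --- precisely the point at which $\mdcut$ separates the $o(\sqrt n)$- and $\widetilde O(\sqrt n)$-space regimes.
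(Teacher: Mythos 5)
Note first that the paper does not prove this statement: \cref{thm:cgv20} is quoted from \cite{CGV20} as background, so there is no internal proof to compare against and your proposal has to stand on its own. Its algorithmic half essentially does: with $\beta=\frac{1}{2m}\sum_v|d_v^+-d_v^-|$, your upper bound $\val_\CG\le\frac12(1+\beta)$ is correct, and so is the oblivious-rounding lower bound, including the key refinement $\sum_{(u,v)\in E}s_us_v\le m(1-2\beta)$ (this follows by summing the identities $\sum_{v:b_v>0}b_v=-\sum_{v:b_v<0}b_v=m\beta$ over the two sides, and it still holds in the presence of zero-bias vertices, whose incident edges contribute $0$ to the sum); the minimum of $L(\beta)/U(\beta)$ is indeed $4/9$, attained at $\beta=1/5$, and your observation that the crude bound only gives $\sqrt2-1$ correctly identifies where the sharpness lies. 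Estimating $\beta$ via an $\ell_1$ linear sketch of the vector receiving updates $+1$ at $u$ and $-1$ at $v$ per edge, together with an exact edge counter, gives the $O_\epsilon(\log n)$-space linear sketch; one only has to check that $L$ is Lipschitz so that the $(1\pm\epsilon')$ error in $\hat\beta$ costs only $O(\epsilon)$, which is routine. This is, up to presentation, the bias-based argument of \cite{GVV17,CGV20}.

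The genuine gap is the second half of the theorem. What you give for the $\Omega_\epsilon(\sqrt n)$ lower bound is a program, not a proof, and the omitted steps are the entire technical content of that result: (a) constructing YES/NO distributions with values $\approx 3/5$ and $\approx 4/15$ (your numbers are the right ones, matching the extremal bias $1/5$) that are provably indistinguishable, and (b) showing indistinguishability against \emph{every} $o(\sqrt n)$-space streaming algorithm, not merely against sketches that depend on first-order (degree/bias) statistics. Step (b) cannot be discharged by the birthday-paradox heuristic you invoke: a space-$s$ algorithm is not restricted to remembering $s$ edges, so one must formulate a sequential multi-party one-way communication problem (in \cite{CGV20} a hidden-partition-type problem in the lineage of Boolean Hidden Matching and the Kapralov--Khanna--Sudan framework), reduce the streaming task to it by splitting the stream and a hybrid argument, and then prove the $\Omega(\sqrt n)$ communication bound by a Fourier-analytic argument bounding the total variation distance between the two message-induced distributions. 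None of this is sketched beyond naming it, and it is precisely the part of the theorem that distinguishes $\Omega(\sqrt n)$ from the easy $\Omega(\log n)$; as written, the converse direction is asserted rather than proved.
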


Both the algorithms in \cite{GVV17} and \cite{CGV20} are what previous works have called ``bias-based'' algorithms, or what we will call a ``zeroth-order snapshot'' algorithms. Roughly, the bias of a vertex captures the ratio of its in-degree to its out-degree, and a zeroth-order snapshot computes a histogram of the bias of vertices in the graph and uses this histogram (and no other information) to approximate the $\mdcut$ value of a graph. Strikingly, the work of \cite{CGV20} shows that zeroth-order snapshot based algorithms are optimal among $o(\sqrt{n})$-space streaming algorithms.

Subsequent work of Chou, Golovnev, Sudan, and Velusamy~\cite{CGSV21-finite} showed that this result is part of a broader landscape for $o(\sqrt n)$-space streaming complexity of CSPs. In particular, Chou, Golovnev, Sudan, and Velusamy~\cite{CGSV21-finite} proved a \emph{dichotomy theorem} for all finite CSPs. The understanding of $\mdcut$ plays a central role in their results. In particular, they generalize the zeroth-order snapshot based $\mdcut$ algorithm of \cite{CGV20} to all CSPs. Their lower bounds also generalize the lower bounds from \cite{CGV20} with some notions (``padded one-wise independent problems'') that are direct abstractions of $\mdcut$ and share tight lower bounds. 

One might ask, given a particular CSP, if there are any algorithms that outperform zeroth-order snapshot algorithms studied in \cite{CGSV21-finite}. For a wide class of CSPs, including $\mcut$, the answer is ``NO'' --- there are recent $\Omega(n)$-space lower bounds ruling out all nontrivial approximations \cite{KK19,CGS+22-linear-space}.\footnote{$\Omega(n)$ space is tight up to logarithmic factors because randomly sparsifying down to $O(n/\epsilon^2)$ constraints gives $(1-\epsilon)$-approximations.}\footnote{The condition for inapproximability given in \cite{CGS+22-linear-space} for a predicate $f : \BZ_q^k \to \{0,1\}$ is termed ``width'', and states that $f$'s support contains some translate of the diagonal $\{(a,\ldots,a):a \in \BZ_q^k\}$. More broadly, the strongest known hardness results for CSPs (e.g., also in \cite{CGSV21-finite}) seem to rely on ``niceness'' properties of the support of $f$.} Thus to make advances one has to restrict the problems considered, and in this work we focus on the simplest remaining problem after $\mcut$, namely, $\mdcut$. 

For $\mdcut$, till this work and \blind{a recent related work by the authors}{a recent related work} \cite{SSSV23-random-ordering} it was conceivable that there were no improvements possible in $o(n)$ space. But at the same time the above mentioned lower bound from \cite{CGV20} did not extend to this setting and it was unclear whether this was due to a limitation of the lower bounds techniques or if better algorithms exist.

\blind{In a previous work \cite{SSSV23-random-ordering}, the authors gave}{The prior work \cite{SSSV23-random-ordering} gives} some evidence for the possibility that better algorithms for $\mdcut$ do indeed exist. To be precise, recall that the sketching algorithm of \cite{CGV20} is a $\frac49 \approx 0.444$-approximation, which uses $O(\log n)$ space and is optimal among $o(\sqrt n)$-space streaming algorithms (\cref{thm:cgv20}). \blind{In \cite{SSSV23-random-ordering} we proved}{In \cite{SSSV23-random-ordering} it is proved} that for $\mdcut$, the algorithm of \cite{CGV20} \emph{can} be beaten in certain restricted models such as when the input stream is \emph{randomly} (instead of adversarially) ordered, or the graph has constant maximum-degree. In particular:

\begin{theorem}[\cite{SSSV23-random-ordering}]\label{thm:bounded-degree-alg}
For every $d \in \BN$, there is a streaming algorithm which $0.483$-approximates the $\mdcut$ value of a graph with maximum degree $d$ in $\tilde{O}_d(\sqrt n)$ space.
\end{theorem}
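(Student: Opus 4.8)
The plan is to reduce \cref{thm:bounded-degree-alg} to two self-contained pieces: a combinatorial claim that a good enough \emph{snapshot} of $\CG$ determines $\val_\CG$ up to the target factor, and a streaming claim that such a snapshot is computable in $\widetilde{O}_d(\sqrt n)$ space. Fix the degree bound $d$ and a small constant $\epsilon>0$. Call $\tau(v):=(\dout(v),\din(v))$ the \emph{type} of a vertex $v$; when $\CG$ has maximum degree $d$ there are only $|T|=O(d^2)$ possible types, and the \emph{(first-order) snapshot} of $\CG$ is the distribution $\mu_\CG$ on $T\times T$ with $\mu_\CG(s,t)=\Pr_{(u,v)\sim E(\CG)}[\tau(u)=s,\ \tau(v)=t]$ --- the joint law of the endpoint types of a uniformly random edge. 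Crucially, since $d$ is constant this is a distribution on a \emph{constant-size} alphabet with no discretization issues, which is exactly what makes the bounded-degree case easy and what breaks for unbounded degree.

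\emph{Step 1 (a snapshot determines a $0.483$-approximation).} The oblivious algorithm of Feige and Jozeph rounds each vertex $v$ independently to $1$ with probability $g(\tau(v))$ for a fixed $g:T\to[0,1]$ (depending only on the bias $\tfrac{\dout(v)-\din(v)}{\dout(v)+\din(v)}$) and has approximation ratio at least $0.483$ --- in fact slightly more, leaving room for an $O(\epsilon)$ loss. The point is that the expected value of the random assignment $\bm{x}$ it produces is a bounded \emph{linear} functional of the snapshot,
\[
\mathbb{E}[\val_\CG(\bm{x})]=\mathbb{E}_{(u,v)\sim E(\CG)}\bigl[g(\tau(u))(1-g(\tau(v)))\bigr]=\sum_{s,t\in T}\mu_\CG(s,t)\,g(s)(1-g(t))=:\Phi(\mu_\CG),
\]
and since every coefficient $g(s)(1-g(t))$ lies in $[0,1]$ we get $|\Phi(\mu)-\Phi(\mu')|\le\|\mu-\mu'\|_1$. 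Hence if $\|\hat\mu-\mu_\CG\|_1\le\epsilon$, then $\hat v:=\max\{\tfrac14,\,\Phi(\hat\mu)-\epsilon\}$ satisfies $\hat v\le\Phi(\mu_\CG)\le\val_\CG$ (as $\val_\CG(\bm{x})\le\val_\CG$ always and $\tfrac14\le\val_\CG$) and $\hat v\ge\Phi(\mu_\CG)-2\epsilon\ge 0.483\,\val_\CG-2\epsilon\ge(0.483-8\epsilon)\val_\CG$ (using $\val_\CG\ge\tfrac14$); taking $\epsilon$ small makes this a $0.483$-approximation, computed offline from $\hat\mu$ alone.

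\emph{Step 2 (estimating $\mu_\CG$ in $\widetilde{O}_d(\sqrt n)$ space).} Double-subsample vertices: draw $S\subseteq[n]$ by including each vertex independently with probability $p=\Theta_d(\epsilon^{-1}\sqrt{\log n}/\sqrt{m})$ --- realized by an $O(\log n)$-wise independent hash function in $\widetilde{O}(1)$ space --- so that $p^2m=\Theta_d(\epsilon^{-2}\log n)$ (if this forces $p\ge 1$, simply store all $m=\widetilde{O}_d(\epsilon^{-2})$ edges and compute $\mu_\CG$ exactly). While streaming, maintain (a) exact counters $\dout(v),\din(v)$ for every $v\in S$, storing $v$ only once it is first seen, and (b) the set $F$ of edges with \emph{both} endpoints in $S$. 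At most $2m\le dn$ vertices ever appear, so the stored vertex set has size $\le 2pm=\widetilde{O}_d(\sqrt n)$ w.h.p., while $F$ retains each edge of $\CG$ with probability $p^2$, so $\mathbb{E}|F|=p^2m=\Theta_d(\epsilon^{-2}\log n)$ (and $|F|\le\binom{|S|}{2}$ always). At the end both endpoint types of every edge in $F$ are known exactly, and we output $\hat\mu(s,t):=|F|^{-1}\cdot\#\{(u,v)\in F:\tau(u)=s,\tau(v)=t\}$. A routine concentration argument --- $|F|$ and each numerator are tightly concentrated around means of order $\Theta_d(\epsilon^{-2}\log n)$, the positive correlations among edges through a shared vertex being harmless since each vertex lies in $\le d$ edges, and $O(|T|^2\epsilon^{-2})=O_d(\epsilon^{-2})$ samples sufficing to $\ell_1$-learn a distribution on $T\times T$ --- gives $\|\hat\mu-\mu_\CG\|_1\le\epsilon$ with probability $1-1/\mathrm{poly}(n)$. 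The one wrinkle is that $p$ depends on the unknown $m$; run $O(\log(dn))$ copies, one per dyadic guess $\widehat m\in\{1,2,4,\dots\}$, each capped at $\widetilde{O}_d(\sqrt n)$ storage, and at the end select the copy whose guess is within a factor $2$ of the true $m$ (for which the cap is not exceeded w.h.p.). This costs an $O(\log n)$ space factor; combined with Step~1 it proves \cref{thm:bounded-degree-alg}.

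\emph{Where the work is.} Each piece above is elementary, and the only care needed is in Step~2: picking $p$ so that $|S|$ stays $\widetilde{O}_d(\sqrt n)$ while $|F|$ is still $\omega(1)$, dealing with the unknown $m$, and getting honest tails despite dependence through shared vertices. What is genuinely hard --- and the reason the theorem is confined to bounded degree --- is that with unbounded degrees the type alphabet is infinite, an edge's endpoint types can no longer be read off from $O(\log n)$-bit counters kept for $\widetilde{O}(\sqrt n)$ sampled vertices, and even pinning down what an ``approximate snapshot'' should mean (achievably \emph{and} sufficiently) is delicate; that is the subject of the present paper.
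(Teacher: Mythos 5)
Your proposal is correct and takes essentially the same route as the cited work \cite{SSSV23-random-ordering} that this paper summarizes in \cref{sec:overview}: sample vertices up front at rate $\approx 1/\sqrt{m}$ via a pre-chosen hash function, track the sampled vertices' degrees/biases exactly and keep the induced sampled edge set, use the fact that in a degree-$d$ graph each edge-indicator depends on only $O_d(1)$ others to get concentration, guess $m$ dyadically, and simulate the Feige--Jozeph oblivious algorithm as a Lipschitz linear functional of the estimated snapshot (whose ratio $\alpha_\FJ>0.4835$ absorbs the $O(\epsilon)$ loss). The only minor overclaim is the $1-1/\mathrm{poly}(n)$ success probability: with $\mathbb{E}|F|=\Theta_d(\epsilon^{-2}\log n)$ and Chebyshev under limited dependence you get failure probability $O(1/\log n)$, which is weaker but entirely sufficient for the theorem.
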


In doing so the work of \cite{SSSV23-random-ordering} introduces the notion that we call a ``first-order snapshot'' --- where information about the input graph is ``compressed'' to a histogram of edges based on the biases of their two endpoints. (See \cref{def:snapshot} below for the definition of a snapshot. We refer here to ``first-order'' snapshots since higher-order snapshots might maintain a histogram of longer length paths or other subgraphs with more than one edge.) For bounded-degree graphs, mapping a graph to its snapshot is clearly a compression (since the number of possible biases is finite), and this compressed information can be estimated, under fairly natural notions of estimation, by an $\tilde{O}(\sqrt n)$-space streaming algorithm. 

However, \cref{thm:bounded-degree-alg} does not answer the question of whether the $\mdcut$ algorithm of \cite{CGV20} can be beaten on \emph{general} graphs in $o(n)$ space. Indeed, their algorithm breaks down in a fundamental way on general graphs, so it could be considered evidence only that more sophisticated lower bound techniques are necessary to rule out such algorithms. We further discuss why we believe that \cref{thm:bounded-degree-alg} was far from a resolution to this question in \cref{sec:significance} below.

\subsection{Main result}

Our main theorem gives an algorithm that uses slightly more than $\sqrt n$ space and outperforms the algorithm of \cite{CGV20}:

\begin{theorem}[Main theorem]\label{thm:main-thm}
There is a streaming algorithm which $0.483$-approximates the $\mdcut$ value of an arbitrary (multi)graph in $\tilde{O}(\sqrt n)$ space.
\end{theorem}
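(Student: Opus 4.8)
The plan is to split the theorem into a \emph{robustness} statement — that an approximate ``smoothed snapshot'' of the input graph $\CG$ pins down $\val_\CG$ up to a $0.483$ factor — and an \emph{algorithmic} statement — that such an approximate smoothed snapshot can be produced by an $\tilde{O}(\sqrt n)$-space single-pass sketch from an adversarially ordered edge stream. Composing the two, the final algorithm sketches the smoothed snapshot, runs the best oblivious rounding algorithm ``on the sketch,'' and reports the resulting value.

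For robustness, recall that a Feige--Jozeph-style oblivious algorithm assigns each vertex $v$ to the $1$-side independently with probability $g(\bias(v))$ for a fixed $g:[-1,1]\to[0,1]$, so its expected number of cut edges is $\sum_{(u,v)\in E} g(\bias(u))\bigl(1-g(\bias(v))\bigr)$, which is a functional of the exact-bias snapshot of $\CG$ alone; Feige--Jozeph exhibit a $g$ making this at least $0.483\cdot\val_\CG$ for every $\CG$. Working with the smoothed snapshot $\SnapGw$ — which spreads each edge's contribution over a width-$w$ window around the true biases of its two endpoints — is what makes this functional \emph{robust}: the smoothing keeps the oblivious value within $O(w)$ of the unsmoothed one (using that $g$ has bounded variation), absorbing the loss into the slack available beyond $0.483$, and it makes the value Lipschitz in the underlying biases. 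I would then prove a sufficiency lemma: any measure on bias tuples (in fact bias/degree tuples, as the sketch will actually produce) lying within $\epsilon$ of $\SnapGw$ in a transportation-type metric yields an oblivious value within $O(\epsilon)$ of the true one. The delicate point is that a few vertices of very large degree can each own a constant fraction of the snapshot's mass, so a plain $\ell_1$ distance on the bin-histogram is \emph{not} robust under the bias-binning errors the sketch incurs; the smoothing and the metric are tuned so that mis-estimating one vertex's bias by $\delta$ moves the value by only $O(\delta)$ times that vertex's share of the edges.

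For the algorithmic piece, the target is a single-pass, $\tilde{O}(\sqrt n)$-space sketch outputting $\hat\mu$ with $d(\hat\mu,\SnapGw)\le\epsilon$ w.h.p. A sample from $\SnapGw$ is ``a uniformly random edge $(u,v)$ together with (smoothed) biases of $u$ and $v$'' jointly with their discretized degrees, and $\bias(w)$ can be read off to accuracy $\approx 1/\sqrt{\min(\deg(w),K)}$ from $K$ uniformly random incident edges of $w$ with their in/out labels — so the high-degree vertices, which dominate the mass, are exactly the ones whose biases are cheap to pin down. The sketch therefore \emph{intertwines} vertex- and edge-subsampling: it produces, for on the order of $\sqrt n$ sampled edges, bias estimates for both endpoints, with the $O(\sqrt n)$ vertices of degree exceeding roughly $m/\sqrt n$ carried by dedicated exact in/out-degree counters and the remaining low-degree vertices handled via edge-level reservoirs; the sparse ($m=O(n)$) and dense regimes must be handled somewhat differently, and both feed into why $\sqrt n$ is the right budget. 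The space is then $\sqrt n$ sampled objects each with polylogarithmic auxiliary state.

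I expect the main obstacle to be that in a worst-case single-pass stream one only \emph{discovers} a sampled edge's endpoint part-way through, so the incident edges usable to estimate that endpoint's bias form an \emph{adversarially chosen suffix} of the vertex's true edge set — the adversary can front-load a vertex's out-edges and back-load its in-edges, systematically skewing any suffix-based estimate. Defeating this is what forces the two subsampling processes to be interleaved and underlies the battery of ``independence'' notions ($\bi,\di,\dbi$ and their hatted variants): one argues that a vertex is discovered at a roughly uniform point in its own edge-sequence (because it is discovered through a uniformly random incident edge), that the resulting suffix-based, smoothed bias estimate is then close enough in the transportation metric, and that the errors across the $\sqrt n$ samples concentrate. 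Granting the robustness lemma and this sketch, composing them gives a $0.483$-approximation to $\val_\CG$ in $\tilde{O}(\sqrt n)$ space, as claimed.
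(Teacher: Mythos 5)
Your first half (robustness of an oblivious/Feige--Jozeph value computed from a smoothed snapshot) is in the spirit of the paper's reduction, but your second half contains a genuine gap: the way you propose to learn the biases of sampled endpoints does not work against adversarial ordering, and your proposed fix does not repair it. You correctly identify the obstacle --- once an endpoint is ``discovered'' mid-stream, only a suffix of its incident edges is available, and the adversary can front-load out-edges and back-load in-edges --- but your resolution, namely that the vertex is discovered at a roughly uniform point of its own edge sequence so the suffix-based estimate is close in a transportation metric, is false. With all $d/2$ out-edges first and all $d/2$ in-edges last, a vertex of true bias $0$ yields a suffix whose bias is bounded away from $0$ in expectation over a uniform discovery time (with probability $1/2$ the suffix consists only of in-edges), so the suffix estimator has a constant systematic error that no $O(\lambda w)$ smoothing slack can absorb. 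The same issue undermines your ``dedicated exact in/out-degree counters'' for vertices of degree exceeding $m/\sqrt n$: you cannot know which vertices these are at the time of their first edge, and any identification made part-way through the stream again leaves you with an adversarial suffix.

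The paper's algorithm avoids suffixes altogether rather than arguing they are benign. For each degree layer $a$ it fixes, \emph{before} the stream, a $4$-wise independent hash function $\pi_a$ (\cref{lem:hash-functions}) that determines which vertices will be tracked in that layer, and independently subsamples edges at rate $q_a\approx \mathrm{polylog}(n)/d_a$; since the per-edge sketch (\cref{alg:sketch}) stores any surviving edge incident to a hashed vertex, every tracked vertex ends up with \emph{all} of its edges in the subsampled graph $\CG_a$ (up to cutoffs), i.e., a uniform Bernoulli$(q_a)$ sample of its full neighborhood, not a suffix. Chernoff bounds then place its apparent degree and bias within one degree/bias class of the truth, which is exactly the ``off-by-one'' error the pointwise smoothed estimate of the refined snapshot tolerates (\cref{lem:bias-arr-correct}), and the smoothing lemmas (\cref{lem:pointwise-impl-smooth,lem:graph-smoothing}) convert this into the final guarantee. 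Heavy vertices are not handled by exact counters but by the same layered scheme with $p_a\to 1$ and $q_a\to \mathrm{polylog}(n)/d_a$. So the decomposition you propose is reasonable, but the crucial algorithmic idea --- pre-committing to tracked vertices via hashing so that bias estimates come from unbiased subsamples of entire neighborhoods --- is missing, and without it the stated space bound cannot be achieved by your scheme.
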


See \cref{thm:main-thm-formal} below for the fully detailed statement.

The fact that we achieve the same approximation factor as \cite{SSSV23-random-ordering} is not a coincidence. Both works obtain their final algorithm by constructing a first-order snapshot of the input graph, and then observing that this information suffices to simulate the performance of ``oblivious'' algorithms on the given input, and finally using a result of Feige and Jozeph~\cite{FJ15} that gives an oblivious algorithm to approximate $\mdcut$ to a factor of $\approx .483$. (Roughly, oblivious algorithms randomly and independently assign vertices to either the $0$-side or the $1$-side where the probability of choosing a side depends on the bias of the vertex, and these probabilities are chosen to optimize the expected number of edges crossing the cut --- a quantity that can be optimized using just the first-order snapshot information.) While this chain of reasoning is similar, every step becomes more complex in the unbounded-degree setting. Indeed as we explain below, designing algorithms for bounded-degree graphs is and has been substantially easier than the general case.

\subsection{Beyond bounded-degree instances}\label{sec:significance}

Before turning to our setting with $\tilde{O}(\sqrt n)$ space, we first remark on the role of degree in the earlier works of \cite{GVV17,CGV20,CGSV21-finite}. The algorithms in all these works work for general degree graphs and use powerful norm estimation algorithms as black boxes. If one were to consider the simpler case of their problems in the bounded-degree setting, these algorithms could have been implemented without reliance on these subroutines. Specifically, their algorithms only need an estimate of the absolute value of ``bias times the degree'' for a random vertex, and this could be estimated by simply picking a random sample of the vertices and computing their bias and degree as the stream passes by. For general CSPs (even on non-Boolean domains) also such a process would suffice, and this would not only simplify the algorithms significantly, it even would achieve a space bound of $O(\log n)$ which is better than the current bounds given in \cite{CGSV21-finite} for general CSPs.

Digging deeper into this analogy one can consider $\ell_p$ norm estimation problems themselves. For this class of problems also one can define a bounded-degree version of the problem --- where one is trying to compute the $\ell_p$ norm of a vector in $\{-C,\ldots,C\}^n$ in the turnstile update model. In this bounded-degree setting, the $\ell_p$ norm can be trivially computed by randomly sampling an $O_C(1)$-sized subset of the coordinates and maintaining their values. Thus $\ell_p$ norms can be estimated in $O(\log n)$ space for every $p$ in this bounded-degree setting, whereas in the general case it is well-known that $\ell_p$ norm estimation requires polynomial in $n$ space for $p > 2$. 

Thus, the bounded-degree setting can be vastly easier to solve and results in this setting may best be viewed as a proof of concept --- though even this ``proof of concept'' may be misleading, as exemplified by the $\ell_p$ norm estimation problem. 

Turning to our specific goal --- that of computing (first-order) snapshots of a graph in $\widetilde{O}(\sqrt{n})$ time --- \blind{our prior work}{the prior work of} \cite{SSSV23-random-ordering} again manages to estimate this snapshot in the bounded-degree setting by sampling $\widetilde{O}(\sqrt{n})$ vertices and maintaining the bias of the sampled vertices as well as the induced subgraph on these vertices. We discuss why this is reasonable in the bounded-degree setting in the following subsection. But such a simple algorithm is definitely not going to work in the general setting! In particular, computing a good estimate of the snapshot is at least as hard as computing the $\ell_1$ norm of a vector in the turnstile model with unit updates. Indeed, ``snapshot'' estimation seems to be a ``higher-level'' challenge than simple norm estimation and roughly requires computing some ``two-wise'' marginals of the graph updates, whereas bias corresponded to ``one-wise'' marginals. Black-box use of norm-estimation algorithms no longer seems to suffice to solve these ``two-wise'' marginal problems, which seem to need new algorithmic ideas. We feel this class of problems and the ideas used here to deal with them may be of even broader interest than the application to $\mdcut$. 

\subsection{Technical overview}\label{sec:overview}

Our goal is to approximate the $\mdcut$ value of a graph $\CG$ by estimating its snapshot $\SnapG$. 

Setting aside the streaming model momentarily, the ``gold standard'' way to estimate the snapshot would be to sample a small set $E$ of edges uniformly and independently at random, measure the biases of the endpoints of every edge in $E$, and use this to estimate the snapshot. Unfortunately, since the stream is adversarially-ordered, there is no obvious way to implement this procedure since by the time a ``random'' edge appears in our stream, many of the edges incident to its endpoints might have already appeared, and thus, we may not know its endpoints' biases.\footnote{As observed in \cite{SSSV23-random-ordering}, when the edges in the stream are \emph{randomly} ordered this simple setup does give an algorithm: One can simply set $E$ to be the first $O(1)$ edges in the stream and then observe the biases of the endpoints over the remainder of the stream.}

To get an algorithm for adversarially-ordered streams, we could hope to somehow sample a set $E$ of edges in a way which maintains the property that for every edge in $E$, we know the bias of its endpoints. While $E$ may not be a uniformly random set of edges, we could still hope for an estimate of the snapshot if $E$ is ``sufficiently'' random. A natural approach proposed in \cite{SSSV23-random-ordering} for doing this is the following. \blind{We}{They} sample a uniform set $S$ of \emph{vertices} upfront, i.e., before the stream, by uniformly including every vertex with some probability $p$ independently. Then, during the stream, \blind{we}{they} measure the bias of every vertex in $S$ and store the \emph{induced subgraph} on $S$ as $E$. Since $S$ is sampled before the stream begins, this approach has the advantage that even though the graph is adversarially-ordered, we end up knowing the biases of the endpoints of every edge in $E$. Here is where the $\sqrt{n}$ space dependency comes from: By a ``birthday paradox'' argument\footnote{It suffices to consider only sparse graphs (see \cref{lem:cut-sparsifier}).}, since $E$ is the induced subgraph on $S$, in order to expect to even see any edges in $E$ we will need $|S| = \Omega(\sqrt n)$. But we are very far from done at this point, because there is a crucial issue as compared to the gold standard case: The edges which are included in $E$ are no longer independent! In particular, if two edges $e$ and $e'$ share a common endpoint (or two, in the case of a multigraph!), then conditioning on $e \in E$ increases the probability that $e' \in E$. Here is where the maximum-degree assumption in \cite{SSSV23-random-ordering} comes in: If $\CG$ has maximum-degree $D$, $e \in E$ is independent of all but $\leq 2D+1$ events $e' \in E$. It turns out that when $D=O(1)$, this lets us get enough control on the variance of which edges show up in $E$ to give a correct estimate. But for larger $D$, this approach completely breaks down, and for good reason: In the extreme example of a \emph{star graph} (i.e., a graph where one vertex is connected to all other vertices), we must store the center of the star in $S$, or otherwise $E$ will be empty! But if we place \emph{every} vertex in $S$ we will use linear space --- we want to store the center with probability $1$, but the other (low-degree) vertices with probability only $O(1/\sqrt{n})$. Thus, in order to extend this simple estimator to general graphs, we will very roughly want to place vertices in $S$ with probability which increases as a function of their degree. Implementing this in the streaming setting creates numerous challenges, and solving these is a main focus of this paper.

\subsubsection{Vertex-sampling in the general case}\label{sec:overview:sampling-vertices}

Our goal now is to extend the vertex sampling approach described above to general graphs. We remark that even in the general case, we can assume WLOG that the number of edges in the graph is $\Omega(\sqrt n)$ and $O(n)$.\footnote{If the graph has $O(\sqrt n)$ edges we can afford to store the entire graph within our space bound. A standard sparsification argument (see \cref{lem:cut-sparsifier}) shows replacing $\CG$ with a random subsample of $O(n/\epsilon^2)$ edges changes the $\mdcut$ value by only $\epsilon$.}

As we mentioned in the previous subsection, we would like to sample a set $S$ of vertices, such that every vertex is included independently with probability which increases with the degree. This is the first step towards estimating the snapshot, which will also require sampling edges between these vertices; we focus on the former task for now, and address the latter in the following subsection.

Instead of sampling one set $S$ of vertices, we will aim for a slightly more detailed goal, which is to sample a set $U_a$ of vertices of degree between $d_{a-1}$ and $d_a$, where $1 = d_1 < \cdots < d_k = O(n)$ is some partition of the possible degrees in the graph. (For concreteness, we use $d_a = 2^{a-1}$.) We envision the graph as consisting of $k$ \emph{layers}; a vertex of degree between $d_{a-1}$ and $d_a$ is in layer $a$ (and has ``degree class'' $a$, in analogy to the bias classes). 

Consider the task of sampling $U_a$, a uniform set of vertices in a fixed layer $a$. Recall that in the previous subsection (the bounded-degree case), we placed all vertices in $S$ with some fixed probability $p$ independently. Now, we would like to place layer-$a$ vertices into $U_a$ with some probability $p_a$ independently.\footnote{There is a technical reason from switching $S$ to $U$ here to denote sets of vertices: It is convenient to think of first sampling a set $S_a$ before the stream to include \emph{all} vertices w.p. $p_a$ (even those not in layer $a$), and then $U_a$ is the intersection of $S_a$ with the set of vertices in layer $a$, which are the vertices we actually want to track.} To fit within our space bound, we only require $|U_a| = O(\sqrt n)$; this turns out to imply that $p_a$ can grow as a function of $a$. For instance, if $d_a = \Omega(\sqrt n)$ then there can only be $O(\sqrt n)$ vertices in layer $a$, so we can even afford $p_a = 1$. (Making $p_a$ this large in high layers is actually necessary for good estimates, as shown by the ``star'' example.) But there is a seeming paradox in this plan: \emph{When a vertex $v$ first appears in the stream, we would like to know its layer $a$, so that we can toss an appropriately biased coin (i.e., Bernoulli-$p_a$) to determine whether it goes into $U_a$; but as this is the \emph{first} appearance of $v$, we know nothing about its degree besides that it is at least $1$!}

One natural way to deal with this problem is to \emph{defer} deciding whether to a vertex has high degree until we see many edges touching it. To do this, we take advantage of subsampling \emph{edges} as well as vertices. To layer $a$ we also associate an ``edge-subsampling probability'' $q_a$ and a ``subsampled graph'' $\CG_a$ which includes every edge in $\CG$ independently with probability $q_a$. We choose $q_a= Cd_a^{-1}$ for a large constant $C$, meaning that vertices with degree $d_a$ in $\CG$ have degree roughly $C$ in $\CG_a$.\footnote{Actually, in order for adequate concentration of the degree in $\CG_a$, we will need $C = \Omega(\log n)$, but we ignore this for simplicity. Also, if $d_a < C$, we set $q_a=1$, i.e., we need no edge-subsampling. This is equivalent to the bounded-degree case we already analyzed.} This allows us to sample $U_a$ in the streaming setting: We sample $\CG_a$ on the fly, and then we add to $U_a$ with probability $p_a$ each new vertex with $\CG_a$-degree roughly between $0.49C$ and $1.01C$.\footnote{We are cheating slightly here: We do not know the degree of such a vertex when it first appears, so we cannot decide whether it has degree falling in this range. Instead, during the stream we can store a set $\hat{U}_a$ containing each vertex with positive $\CG_a$-degree w.p. $p_a$. Then, after the stream, we set $U_a$ to be the set of vertices in $\hat{U}_a$ with $\CG_a$-degree in the appropriate range. This point will come up again when we want to store $\CG_a$-edges associated to these vertices in the following subsection; we will have to store edges for every vertex in $N_a$ and then use ``cutoffs'' to stop storing edges once we know they cannot be in $U_a$. We ignore these details in this overview.} Note that $\CG_a$ is too large to store --- in particular, $\CG_1$ has $m$ edges, and more generally $\CG_a$ has $q_am$ edges in expectation --- so we will need to carefully choose which edges to store when crafting our estimate in the following subsection.

Now $U_a$ will contain a random sample of layer-$a$ vertices, but --- and this is crucial --- it may also contain other randomly sampled vertices, like those in layers $a-1$ or $a+1$. E.g., consider a vertex of degree $d_a+1$, which is technically in layer $a+1$, but is also likely to have $\CG_a$-degree under $q_ad_a$; indeed, one cannot differentiate between this vertex and a layer-$a$ vertex with high probability based on $\CG_a$-degree. To put it another way, by the time we see the first incident edge to a vertex in $\CG_a$, many of its incident edges in $\CG$ may have already passed by in the stream, meaning we \emph{cannot track} its ``global'' bias or degree exactly. This creates a substantial technical issue in even defining the type of estimates we are trying to achieve, which we have to resolve by certain ``smoothing'' arguments which we defer until the subsection after the following (\cref{sec:overview:analysis}).

\subsubsection{Sampling edges for the estimate}\label{sec:overview:sampling-edges}

In the previous section, we described a scheme based on vertex- and edge-subsampling which samples uniform sets $U_a$ of layer-$a$ vertices (perhaps along with ``borderline'' vertices in layer $a+1$ and $a-1$, which we ignore for now). But our ultimate target is estimating the \emph{snapshot}, which counts edges between vertices of different biases; in this sense, these sets of vertices are only means to an end, and we need to also describe how we sample edges \emph{between} these sampled vertices and use them to estimate the snapshot. Recall that in each layer we subsampled a graph $\CG_a$, which was still too large to store; the edges we use to produce the estimate will be a subset of $\CG_a$'s edges which we can actually store.

Since our algorithm now breaks down vertices by their degrees as well as their biases, it turns out that the natural object to aim to estimate is not the snapshot itself, but instead what we will call the \emph{refined snapshot} of $\CG$, denoted $\RSnapG$ (see \cref{def:refined-snapshot} below). This is a four-dimensional array whose $(a,b,i,j)$-th entry contains the fraction of edges in $\CG$ that go from vertices in bias class $i$ and degree class $a$ to vertices in bias class $j$ and degree class $b$. Note that the refined snapshot $\RSnapG$ is only more granular than the snapshot $\SnapG$ which we actually want to estimate --- in particular, $\SnapG$ can be computed from $\RSnapG$ by ``projecting'' the latter to its third and fourth coordinates. Note also that the snapshot's dimensions are constant, while the refined snapshot's dimensions are polylogarithmic (because the number of layers is $k = \Theta(\log n)$). We will abbreviate $A = \RSnapG$ for convenience and focus on estimating $A$ by inspecting edges between these sets $U_a$.

\paragraph{Estimating edges within each layer.} There is a class of entries $(a,b,i,j)$ of $A$ which are relatively simple to estimate: The ``degree diagonal'' $a=b$, or in other words, entries which correspond to edges \emph{within a single layer}. For this, we can just store the induced subgraph of $\CG_a$ on $U_a$. Looking at these induced subgraphs --- modulo the issue of estimating the bias and degree of sampled vertices --- will be roughly equivalent to the bounded-degree case, essentially because vertices in $U_a$ have small degree in $\CG_a$. However, this is only a small subset of the entries of $A$ which we need to estimate. Given $a \neq b \in [k]$, how can we estimate the ``cross edges'' between layers $a$ and $b$?

\paragraph{Estimating cross edges.} The difficulty with estimating cross-edges, in comparison to the in-layer edges discussed before, is that there can be wide discrepancies between the degrees of the edges' endpoints (both in the global graph and in any particular layer). That is, for $a < b \in [k]$, vertices in degree class $b$ are expected to have a high degree in layer $a$ (as they are expected to have $d_b$ edges and we subsample with probability $C d_a^{-1}$) while the vertices in degree class $a$ are expected to have almost no edges in layer $b$.  This makes the concentration analysis more subtle than the bounded-degree case, but we can still get by with two crucial observations:

\snote{Say here: We could do $U_a$ to $U_b$, but it's easier}

\begin{enumerate}
\item \label{item:alledges} When looking at the layer $a$, we can strengthen the algorithm to remember {\em all} edges in $\CG_a$ that are incident on a vertex in $U_a$ (and not just the induced subgraph on $U_a$).

\item \label{item:extravertices} Secondly, as we mentioned in the previous subsection, as the layer $a$ increases, the maximum number of vertices in $\CG$ in layer $a$ decreases. This means that we can afford for the probability $p_a$ that any particular layer-$a$ vertex is stored in $U_a$ to increase, and for instance when $d_a = \Omega(\sqrt n)$ we can even afford $p_a = 1$ as there are only $O(\sqrt n)$ such vertices.
\end{enumerate}

We claim that, with the above modification, one can estimate cross edges between layers $a$ and $b$ by looking at the graph $\CG_a$ and counting the number of edges in this graph that go from vertices in $U_a$ to vertices in $U_b$. To see why this works, we consider $a = 1$ and two cases for $b$:
\begin{itemize}
\item {\bf When $d_b = \Omega(\sqrt{n})$:} In this case, by \cref{item:extravertices}, $U_b$ is sufficiently large as to contain all the vertices in $\CG$ with degree class $b$. Given the fact that $U_b$ has all these vertices and we have \cref{item:alledges}, whether or not a cross edge $e = (u,v)$, where $u$ has degree class $a$ and $v$ has degree class $b$, is counted depends only on whether or not $u \in U_a$ and whether or not $e \in \CG_a$. The latter is independent across all edges while the former has only a small amount of dependence, as the vertices in degree class $a$ have low degree in $\CG_a$, and does not harm the concentration inequalities too much.

\item {\bf When $d_b = o(\sqrt{n})$:} The argument above will not directly work in this case, as now whether or not a cross edge is counted also depends on whether or not $v \in U_b$. As the vertices in degree class $b$ have high degree in $\CG_a$, this creates a lot of dependencies (depending on $d_b/d_a$) and breaks the concentration bounds. 

What saves us here is that in \cref{item:alledges,item:extravertices}, we sample all edges in $\CG_a$ that are incident on $U_a$ and also are relatively likely to remember any particular vertex in $U_b$. Thus, the number of cross edges between $a$ and $b$ that are remembered in $\CG_a$ is much larger than $O(1)$ (which was the number obtained in the bounded-degree case). Having a larger number of cross edges also means we can also afford to deviate by more without affecting the multiplicative guarantee, and this larger deviation will help us deal with the extra dependencies in this case.
\end{itemize}

\subsubsection{The analysis via windowed averaging and smoothing}\label{sec:overview:analysis}

Even with the modifications above, there is a major problem that we still have to overcome. This problem arises because we do not compute the degrees and biases of the vertices in $\CG$ exactly, and instead {\em estimate} them from the sampled graphs $\CG_a$. These estimates will always be slightly off, and this can wreak havoc in the analysis. As we mentioned above, if for instance the degree of a vertex is at the ``boundary'' of degree classes $a$ and $a+1$, it is impossible to determine with high probability, which entries of the estimate for $A$ will the edges which touch this vertex contribute to --- in some subsamples, the vertex could ``appear to be'' in degree class $a$ and in others it could be in $a+1$. But morally, if the partition is sufficiently fine, these mistakes should not matter too much anyhow because it is possible to ``slightly tweak'' the bias of vertices to put them into neighboring classes without significantly changing the $\mdcut$ value.

The approach we come up with to circumvent these issues, which may be of independent interest, is the following: Instead of trying to estimate entries of $A$ individually, we group them into ``windows'' and estimate the average of all the entries in the window instead. For example, when trying to estimate whether a vertex has degree class $a$, we instead take a windowing parameter $w$ and estimate over all degree classes $\set*{ a - w, \dots, a + w }$.\footnote{One technical issue with this approach is we have to handle the ``degenerate'' cases where, e.g. $a < w$ so the set of allowed degree classes is smaller than $2w+1$. We correspondingly have to weight the entries in the matrix to equalize the contributions of different edges, and this introduces some more potential errors in the algorithm as these ``weighting factors'' for sampled edges can also be estimated incorrectly. We ignore these details in this introduction.} Therefore, our algorithm targets a certain kind of ``windowed'' estimate as opposed to a simple $\ell_1$-estimate. In particular, we define a novel (and incomparable) notion of estimating an array which we call \emph{pointwise smooth estimation} (\cref{def:pointwise}). In this notion of estimation, for an estimate $\hat{A}$ of $A$, as $\delta$ gets arbitrarily small, we do not require that each entry of $\hat{A}$ approaches one fixed value; instead, it must approach an \emph{interval} (which gets arbitrarily narrow as $w$ gets arbitrarily large). Sufficiency of this kind of estimate relies on exactly the kind of informal ``tweaking'' (``smoothing'') analysis we mentioned before.

Informally, the intervals are defined as follows: Consider the $(a,b,i,j)$-th entry of $\hat{A}$, and any edge $e$ from bias class $i'$ and degree class $a'$ to bias class $j'$ and degree class $b'$. Then we declare:

\begin{itemize}
    \item ``Inner'' edges: If $\|(a,b,i,j)-(a',b',i',j')\|_\infty \leq w-1$, then $e$ \emph{must} count in $\hat{A}(a,b,i,j)$.
    \item ``Outer'' edges: If $\|(a,b,i,j)-(a',b',i',j')\|_\infty \geq w+1$, then $e$ \emph{must not} count in $\hat{A}(a,b,i,j)$.
    \item ``Borderline'' edges: If $\|(a,b,i,j)-(a',b',i',j')\|_\infty = w$, then $e$ may or may not count in $\hat{A}(a,b,i,j)$.
\end{itemize}

\newcommand{\gridsize}{0.25cm}

\begin{figure}[ht]
\centering
\begin{tikzpicture}
\draw [step=\gridsize, fill=gray!30] (-\gridsize*20, \gridsize*0) grid + (\gridsize*3, \gridsize*3) rectangle (-\gridsize*20, \gridsize*0) ;

\draw [step=\gridsize, fill=darkgray] (-\gridsize*19, \gridsize*1) grid + (\gridsize*1, \gridsize*1) rectangle (-\gridsize*19, \gridsize*1) ;

\draw [step=\gridsize, fill=gray!30] (-\gridsize*2, \gridsize*0) grid + (\gridsize*7, \gridsize*7) rectangle (-\gridsize*2, \gridsize*0);

\draw [step=\gridsize, fill=darkgray] (-\gridsize*1, \gridsize*1) grid + (\gridsize*5, \gridsize*5) rectangle (-\gridsize*1, \gridsize*1);

\draw [step=\gridsize, fill=gray!30] (\gridsize*20, \gridsize*0) grid + (\gridsize*15, \gridsize*15) rectangle (\gridsize*20, \gridsize*0);

\draw [step=\gridsize, fill=darkgray] (\gridsize*21, \gridsize*1) grid + (\gridsize*13, \gridsize*13) rectangle (\gridsize*21, \gridsize*1);

\draw[->, ultra thick] (-\gridsize*15, \gridsize*12) -- node[above] {Increasing the window size $w$} (\gridsize*8, \gridsize*12);

\end{tikzpicture}
\caption{A depiction of how larger windows reduce the ``borderline'' effects (in two dimensions). As $w$ becomes larger and larger, a $w \times w$ rectangle (dark gray) dominates its ``boundary'' (light gray) more and more. Geometrically, a rectangle is two-dimensional while its boundary is ``essentially one-dimensional''. However, for estimating the $\mdcut$ value in a graph, smoothing over size-$w$ windows for large $w$ introduces errors from the use of ``continuity'' results (i.e., \cref{lem:graph-smoothing} below). The right choice of $w$ strikes a balance between these two forces.}
\end{figure}
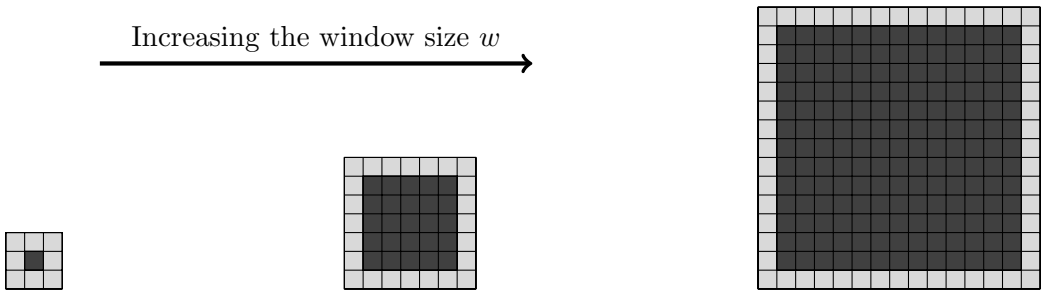

\renewcommand{\gridsize}{0.4cm}
\newcommand{\gridn}{13}
\newcommand{\gridw}{4}

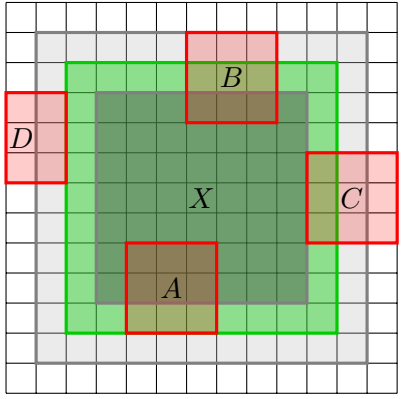
\begin{figure}[ht]

\centering
\begin{tikzpicture}
\draw [black, step=\gridsize] (0,0) grid +(\gridsize*\gridn,\gridsize*\gridn);

\draw [gray, very thick, fill=lightgray, fill opacity=0.3] ({\gridsize*((\gridn-1)/2-(\gridw+1)},{\gridsize*((\gridn-1)/2-(\gridw+1)}) rectangle ({\gridsize*((\gridn+1)/2+(\gridw+1)},{\gridsize*((\gridn+1)/2+(\gridw+1)});
\draw [green!80!black, very thick, fill=green!80!black, fill opacity=0.4] ({\gridsize*((\gridn-1)/2-(\gridw)},{\gridsize*((\gridn-1)/2-(\gridw)}) rectangle ({\gridsize*((\gridn+1)/2+(\gridw)},{\gridsize*((\gridn+1)/2+(\gridw)});
\draw [gray, very thick, fill=darkgray, fill opacity=0.4] ({\gridsize*((\gridn-1)/2-(\gridw-1)},{\gridsize*((\gridn-1)/2-(\gridw-1)}) rectangle ({\gridsize*((\gridn+1)/2+(\gridw-1)},{\gridsize*((\gridn+1)/2+(\gridw-1)});

\node [align=center] at ({\gridsize*(\gridn)/2},{\gridsize*(\gridn)/2}) {$X$};

\draw [red, very thick, fill=red, fill opacity=0.2] ({\gridsize*4},{\gridsize*2}) rectangle ({\gridsize*7},{\gridsize*5});
\node [align=center] at ({\gridsize*5.5},{\gridsize*3.5}) {$A$};

\draw [red, very thick, fill=red, fill opacity=0.2] ({\gridsize*6},{\gridsize*9}) rectangle ({\gridsize*9},{\gridsize*12});
\node [align=center] at ({\gridsize*7.5},{\gridsize*10.5}) {$B$};

\draw [red, very thick, fill=red, fill opacity=0.2] ({\gridsize*10},{\gridsize*5}) rectangle ({\gridsize*13},{\gridsize*8});
\node [align=center] at ({\gridsize*11.5},{\gridsize*6.5}) {$C$};

\draw [red, very thick, fill=red, fill opacity=0.2] ({\gridsize*0},{\gridsize*7}) rectangle ({\gridsize*2},{\gridsize*10});
\node [align=center] at ({\gridsize*0.5},{\gridsize*8.5}) {$D$};

\end{tikzpicture}
\caption{Consider estimating a (two-dimensional) matrix with ``off-by-one'' errors, wherein the mass of each entry may shift to one of $8$ neighboring entries (red boxes). If we estimate an average over a window of size $w=4$ in taxicab distance around an entry $X$ (green rectangle): (i) ``Outer'' entries, such as the one marked $D$, beyond distance $w+1=5$ from the center (light gray) can \emph{never} contribute. (ii) ``Inner'' entries, such as the one marked $A$, within distance at most $w-1=3$ from the center (dark gray) \emph{always} contribute. (iii) ``Borderline'' entries, such as $B$ or $C$, \emph{may or may not contribute}, depending on the specific error pattern.}
\end{figure}

The key lemma in the analysis then states:

\begin{lemma}[Informal version of \cref{lem:pointwise-impl-smooth,lem:graph-smoothing}]
If an approximation ratio $\alpha$ to $\mdcut$ can be achieved by looking at the snapshot of a graph $\CG$, and $\hat{A}$ is a $(w,\delta)$-pointwise estimate of the refined snapshot $A = \RSnapG$, then $\hat{A}$ can be used to achieve an $(\alpha-\epsilon)$-approximation to $\mdcut$ for $\epsilon = O(\delta (k\ell)^2 + w\lambda + 1/w)$, where $k$, $\ell$, and $\lambda$ are the number of degree classes in $\vecd$, the number of bias classes in $\vect$, and the maximum width of any interval in $\vect$, respectively.
\end{lemma}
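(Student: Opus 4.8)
The plan is to treat $\hat{A}$ as an \emph{approximate smoothing} of the true refined snapshot $A=\RSnapG$ and then to invoke a continuity property of the snapshot-based approximation. Since an $\alpha$-approximation is achievable by looking at snapshots, fix the associated map $\beta$ from snapshots to values (concretely, the value of the Feige--Jozeph oblivious strategy computed from the snapshot), which satisfies $\alpha\cdot\val_\CG \le \beta(\SnapG)\le\val_\CG$ for every $\CG$. The algorithm will form $\Proj(\hat{A})$ --- the projection of $\hat A$ onto its two bias coordinates, a ``pseudo-snapshot'' --- feed it to $\beta$, and output $\beta(\Proj(\hat{A}))$ minus a $\Theta(\epsilon)$ correction. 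Hence it suffices to prove $\beta(\SnapG)-O(\epsilon)\le\beta(\Proj(\hat{A}))\le\val_\CG+O(\epsilon)$; the claimed multiplicative guarantee then follows since $\val_\CG\ge\tfrac14$.

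First I would carry out the combinatorial step (\cref{lem:pointwise-impl-smooth}): compare $\hat{A}$ to the canonical windowed average $\bar A := T_w(A)$, where $T_w$ redistributes the mass of each edge-type $\tau_0$ roughly uniformly over the $\ell_\infty$-ball of radius $w$ around $\tau_0$ (with the boundary reweighting mentioned in the footnotes so that $\bar A$ stays a valid distribution even near the extremes of the degree/bias ranges). By the definition of a $(w,\delta)$-pointwise estimate, $\hat A$ agrees with $\bar A$ up to (i) an entrywise error $\delta$, contributing $\ell_1$-error $O(\delta(k\ell)^2)$ over the $\Theta((k\ell)^2)$ entries, and (ii) the ``borderline'' ambiguity, i.e.\ the gap between averaging over the radius-$(w-1)$ ball (inner edges, which must count) and the radius-$w$ ball; since the shell at $\ell_\infty$-distance exactly $w$ in four dimensions carries only an $O(1/w)$ fraction of the ball's mass, this contributes $\ell_1$-error $O(1/w)$. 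Projection only shrinks $\ell_1$-distance, and $\beta$ is $1$-Lipschitz in $\ell_1$ (it is a max of expectations of $[0,1]$-valued functions), so $|\beta(\Proj(\hat A))-\beta(\Proj(\bar A))|=O(\delta(k\ell)^2+1/w)$.

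It remains to compare $\beta(\Proj(\bar A))$ with $\beta(\SnapG)$, which is the content of the graph-smoothing step (\cref{lem:graph-smoothing}). The key point is that $\Proj(\bar A)$ is obtained from $\SnapG=\Proj(A)$ by moving each edge's two bias classes by at most $w$, hence each endpoint bias by at most $w\lambda$ (every bias class in $\vect$ has width $\le\lambda$). I would realize this as a genuine perturbation of $\CG$: build a graph $\CG'$ from $\CG$ by reversing, at each vertex, a carefully chosen $O(w\lambda)$-fraction of its incident edges so that the vertex's bias shifts into the intended neighboring class. This alters only an $O(w\lambda)$-fraction of all edges, so $|\val_{\CG'}-\val_\CG|=O(w\lambda)$ and $\Snap_{\CG',\vect}$ matches $\Proj(\bar A)$ up to $\ell_1$-error $O(w\lambda)$. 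Applying the $\alpha$-via-snapshots guarantee to $\CG'$ gives $\beta(\Snap_{\CG',\vect})\ge\alpha\,\val_{\CG'}\ge\alpha\,\val_\CG-O(w\lambda)$ and $\beta(\Snap_{\CG',\vect})\le\val_{\CG'}\le\val_\CG+O(w\lambda)$; feeding this back through the $\ell_1$-Lipschitzness of $\beta$ and the previous paragraph yields $\epsilon=O(\delta(k\ell)^2+w\lambda+1/w)$, as desired.

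I expect the main obstacle to be exactly this realization step: the windowing operation perturbs edge-types \emph{per edge}, whereas to invoke Feige--Jozeph one must exhibit an actual graph $\CG'$, in which the bias of each \emph{vertex} is tweaked consistently across all of its incident edges --- and one must also control how re-orienting edges at one vertex perturbs its neighbors' biases and degree classes. Reconciling these (presumably via a probabilistic construction of $\CG'$, or a fractional relaxation, and a union-bound/concentration argument) is delicate, which is likely why the argument is split into \cref{lem:pointwise-impl-smooth} (a clean statement purely about arrays and windows) and \cref{lem:graph-smoothing} (the genuinely graph-theoretic ``tweaking''). A secondary subtlety is that continuity of $\beta$ under bias perturbations needs the oblivious strategy to behave continuously in the bias; if the Feige--Jozeph strategy is not already Lipschitz one must first mollify it, absorbing a further $O(w\lambda)$-type loss into $\epsilon$.
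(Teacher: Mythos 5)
Your first half is essentially the paper's \cref{lem:pointwise-impl-smooth} and is fine: the entrywise slack $\delta$ costs $\delta(k\ell)^2$ in $\ell_1$, the borderline ambiguity costs $O(1/w)$ (the paper proves this as the sandwiching bound $\|A^{+w}-A^{-w}\|_1 \leq \Cwin/w$, \cref{lem:sandwich}; note that besides the shell volume one must also control the discrepancy between the lower- and upper-bound normalization factors $\nu^{-w}$ and $\nu^{+w}$, which your sketch omits but the same surface-to-volume argument handles), projection contracts $\ell_1$ and commutes with smoothing (\cref{lem:4-to-2}), and the continuity clause of \cref{def:snapshot-alg} transfers the $\ell_1$ bound to the output value.

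The genuine gap is in the second half, realizing the smoothed snapshot by an actual graph (\cref{lem:graph-smoothing}). Your construction moves each vertex of $\CG$ into \emph{one} intended neighboring bias class by reversing an $O(w\lambda)$-fraction of its incident edges. But $M^{\sim w}$ spreads the mass of \emph{every} edge over the full $(2w+1)\times(2w+1)$ window of bias-class pairs with weights $\nu^{\sim w,\ell}$, whereas any graph in which each vertex occupies a single class places each edge's mass in a single snapshot entry: for a single edge, or a star whose center is assigned one class, the resulting snapshot stays $\Omega(1)$-far from $M^{\sim w}$ in $\ell_1$ no matter how cleverly the single target classes are chosen, and a probabilistic assignment only matches $M^{\sim w}$ in expectation, not for the realized graph. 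The paper's key ingredient, which your "fractional relaxation" remark gestures at but does not supply, is the blow-up of \cref{claim:K}: each vertex $v$ is replaced by one weighted copy per class in $\Win^{w,\ell}(\bi^\vect_\CG(v))$ and each edge's weight is split evenly over the product of windows, which preserves $\val_\CG$ and all biases; only then is each copy nudged into its target class. Moreover the nudging is not done by reversing edges --- which changes both endpoints' biases and degrees at once and flips the direction of the affected edges, exactly the difficulty you flag and leave unresolved --- but by routing weight proportional to $\lambda(w+1)$ times the copy's degree through an auxiliary isolated vertex $\sstar$ (\cref{claim:change-bias}), which alters one copy's in/out-degrees at a time and leaves every other vertex untouched; the total weight moved is $O(\lambda w)\,m_\CG$, giving simultaneously $|\val_\CG-\val_\CH| \leq O(\lambda w)$ and $\|\SnapH - M^{\sim w}\|_1 \leq O(\lambda w)$. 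Without the blow-up and a $\sstar$-type mechanism, your second step does not produce a graph certifying the $w\lambda$ term, so the proposal as written does not go through.
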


The three terms in the error $\epsilon$ come from, respectively: Switching from an entrywise ($\ell_\infty$) error guarantee for the refined snapshot $\hat{A}$ to a global ($\ell_1$) error guarantee for the snapshot; a surface area-to-volume ratio bounding errors from the ``borderline''; and the ``smoothing'' operation which tweaks the biases of vertices. Note that there is a significant interplay of parameters here: To achieve any fixed $\epsilon = O(1)$, we'll have to set $w = O(1/\epsilon)$, $\lambda = O(1/w) = O(1/\epsilon^2)$, and since $k = \log n$, we ultimately need $\delta = O(\epsilon^5/\log^2 n)$.\footnote{Such a guarantee is believable because in the ``gold-standard'' case where we sample \emph{independently} the random edges and look at their biases (and degrees), the deviations would be $O(1/\sqrt n)$ by a Chernoff bound.} Complementarily, \cref{lem:bias-arr-correct} is the key correctness statement, stating that such a pointwise smoothed estimate is achieved by our algorithm. For simplicity, in the algorithm, we only handle a fixed partition degree partition $\vecd$ (i.e., where $d_a = 2d_{a-1}$). To make $\lambda$ arbitrarily small, we use the fact that a bias partition can be subdivided arbitrarily while maintaining the approximation ratio of what can be deduced from the snapshot.

\subsection{Future directions}

Via the trivial reduction from $\mcut$, it is known that for all $\epsilon > 0$, streaming algorithms which $(\frac12+\epsilon)$-approximate $\mdcut$ require $\Omega_\epsilon(n)$ space (cf. \cite{CGS+22-linear-space}). There are a number of interesting alternatives for what could happen between $\omega(\sqrt n)$ and $o(n)$ space. Three main scenarios are:

\begin{description}
    \item[Scenario 1.] ``Algorithms beating \cref{thm:main-thm}'': There are $(\frac12-\epsilon)$-approximations in $\tilde{O}(\sqrt n)$ space.
    \item[Scenario 2.] ``First-order snapshots give optimal sublinear-space algorithms'': Beating $0.483$ requires $\Omega(n)$ space.\footnote{Actually, $0.483$ is not \emph{exactly} the best we can do; rather, we are interested in the best ratio achievable by ``continuous snapshot algorithms'' (see \cref{def:snapshot-alg} below).}
    \item[Scenario 3.] ``Approximation vs. space tradeoff'': Beating $0.483$ can be achieved in $o(n)$ space, but $(\frac12-\epsilon)$-approximations for $\epsilon > 0$ require arbitrarily close to $\Omega(n)$ space for arbitrarily small $\epsilon$.
\end{description}

We are particularly interested in the possibility that for $\mdcut$, one can beat the first-order snapshot algorithms we consider here by estimating instead ``higher-order snapshots'', where by a $t$-order snapshot we mean a histogram of bias patterns among subgraphs with $t$ edges. These seem to correspond to $n^{1-1/(t+1)}$-space algorithms in the bounded-degree case; is it possible to build on the techniques in this paper to estimate $t$-order snapshots within this space bound? Conversely, could there be matching ``dichotomy'' lower bounds --- e.g., for $\mdcut$, could first-order snapshot algorithms be optimal in $o(n^{2/3})$ space? Finally, we mention that Singer~\cite{Sin23} gives oblivious algorithms for $\textsf{Max-}k\textsf{AND}$ beating the $o(\sqrt n)$-space approximation ratios calculated in \cite{BHP+22}; could our snapshot estimation techniques be extended to work for these problems, or even for all finite CSPs?

\subsection*{Outline of the paper}

In \cref{sec:prelims} we introduce some notation and review some background material. The main technical content of the paper is from \cref{sec:reduction} onwards, which can be divided into two independent steps. In the first step, we roughly reduce achieving an approximation factor of $0.483$ on a graph $\CG$ to a problem which we call ``pointwise smoothed estimation'' of a graph. The basic definitions and statements here are in \cref{sec:reduction}; we define ``continuous snapshot algorithms'' (\cref{def:snapshot-alg}), one of which achieves a $0.483$-approximation, and show how they can be simulated given ``pointwise smoothed estimates'' (\cref{def:pointwise}); as mentioned above, the reduction itself is divided between \cref{lem:graph-smoothing,lem:pointwise-impl-smooth}. The proofs of these lemmas are postponed until \cref{sec:graph-smoothing-analysis,sec:pointwise-impl-smooth-analysis}. In the second step, we show how such a ``pointwise smoothed estimate'' can be achieved via a streaming algorithm which implements the ``edge-and-vertex-subsampling'' paradigm outlined above. We present the algorithm in \cref{sec:algo}; the key correctness lemma, \cref{lem:bias-arr-correct}, which states that (under certain niceness conditions) we achieve a pointwise smoothed estimate, is proven in \cref{sec:bias-arr-correct}.


\section{Preliminaries and notation}\label{sec:prelims}

$[\ell]$ denotes the set of natural numbers $\{1,\ldots,\ell\}$. We use standard asymptotic notation $O(\cdot),o(\cdot)$, etc., with the convention that subscripts (e.g., $f(x,y) = O_y(g(x))$) denote arbitrary dependence in the implicit constant.

\subsection{Matrices and arrays}

For $\ell \in \BN$, we let $\M^\ell \eqdef \BR^{\ell \times \ell}$ denote the space of real $\ell \times \ell$ matrices, $\MN^\ell \subseteq \M^\ell$ the space of matrices with nonnegative entries, and $\MS^\ell \subseteq \M^\ell$ matrices with nonnegative entries summing to $1$. For $i,j \in [\ell]$, $M(i,j)$ denotes the $(i,j)$-th entry of $M$. Given two matrices $M,N \in \M^\ell$, we let $\|M-N\|_1$ and $\|M-N\|_\infty$ denote their entrywise $1$- and $\infty$-norms, respectively, i.e., \[ \|M-N\|_1 \eqdef \sum_{i,j=1}^\ell |M(i,j)-N(i,j)| \text{ and } \|M-N\|_\infty \eqdef \max_{i,j \in [\ell]} |M(i,j) - N(i,j)|. \]

For $k,\ell \in \BN$, we define analogues of this notation for four-dimensional arrays: $\A^{k,\ell} \eqdef \BR^{k \times k \times \ell \times \ell}$ denotes $k \times k \times \ell \times \ell$ arrays, $\AN^{k,\ell}$ nonnegative arrays, and $\AS^{k,\ell}$ nonnegative arrays summing to $1$; we also define $1$- and $\infty$-norms for arrays. We typically use the letters $A$ and $B$ for four-dimensional arrays, and $M$ and $N$ for (two-dimensional) matrices.

\subsection{(Directed) graphs, degrees, biases, and (directed) cuts}

In this paper, we consider directed graphs without self-loops.\footnote{We avoid self-loops because, from the perspective of $\mdcut$ (which we are about to define), a self-loop edge is never satisfied by any assignment and is therefore uninteresting from an algorithmic perspective.} It will be convenient to use two related definitions, ``weighted graphs'' and ``multigraphs'', corresponding to nonnegative real and nonnegative integer edge weights, respectively. In particular, multigraphs will be convenient to encode input to our streaming algorithm, while the more general notion of weighted graphs will be convenient in the analysis.

A \emph{weighted graph} on a vertex-set $V = V(\CG)$ is defined by an \emph{adjacency matrix} $\AdjG \in \MN^{V \times V}$ with zeros on the diagonal. We let $m_\CG = \sum_{u,v \in V} \AdjG(u,v)$ denote the total weight in a weighted graph $\CG$.

Given a vertex $v \in V$ in a weighted graph $\CG$, we define its \emph{out-} and \emph{in-degrees} \[ \dout_\CG(v) \eqdef \sum_{u \in V} \AdjG(v,u) \text{ and }\din_\CG(v) \eqdef \sum_{u \in V} \AdjG(u,v), \] and its \emph{(total) degree} \[ \d_\CG(v) \eqdef \dout_\CG(v) + \din_\CG(v). \] If $\d_\CG(v) = 0$, we say $v$ is \emph{isolated}; otherwise, we define $v$'s \emph{bias} \[ \bias_\CG(v) \eqdef \frac{\dout_\CG(v) - \din_\CG(v)}{\d_\CG(v)} \in [-1,1]. \] Finally, for a ``cut'' $\vecx \in \{0,1\}^V$, we define its \emph{value} in $\CG$ \[ \val_\CG(\vecx) \eqdef \frac1{m_\CG} \sum_{u,v \in V} x_v (1-x_u)\cdot \AdjG(u,v) , \] and the overall $\mdcut$ value of $\CG$ as the maximum value of any cut: \[ \val_\CG(\vecx) \eqdef \max_{\vecx \in \{0,1\}^V} \val_\CG(\vecx). \]

A \emph{multigraph} is a weighted graph where the entries of the adjacency matrix are all integers; equivalently, the graph is specified by a \emph{multiset} of edges $E(\CG) \subseteq \{(u,v) : u \neq v \in V(\CG)\}$, and entries of the matrix equal multiplicies of each edge. Our streaming algorithms will be presented a multigraph with its edges enumerated in arbitrary (adversarial) order, with the goal of achieving an approximation to the $\mdcut$ value of a graph.\footnote{As is standard in the streaming and sketching literature, we will have to assume that the length of the stream $m \leq \poly(n)$. Also, one could consider a more general input model, where we get an arbitrary sequence of edges and (nonnegative real) weights, where the edges are possibly repeated, and the maximum and minimum weights are $w_{\max} \leq \poly(n)$ and $w_{\min} \geq 1/\poly(n)$ respectively. In this model, we can only handle unit weights, but this is essentially without loss of generality because one can multiply by roughly $w_{\max}/(w_{\min} \epsilon)$ and then ``round'' every weight to the nearest integer while preserving the $\mdcut$ value up to $O(\epsilon)$.}

\subsection{Concentration}

We write $\exp(x) = e^{-x}$. We shall need a number of concentration inequalities which operate in different parameter regimes of interest. We list several well-known inequalities as well as some convenient corollaries.

\begin{lemma}[Chernoff upper bound]\label{lem:chernoff-ub}
Let $X_1,\ldots,X_n$ be independent $\{0,1\}$-valued random variables, and let $X = \sum_{i=1}^n X_i$. Then for all $\delta > 0$, \[ \Pr[X \geq (1+\delta) \Exp[X]] \leq \exp(-\delta^2 \Exp[X]/(2+\delta)). \] 
\end{lemma}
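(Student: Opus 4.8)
The plan is the textbook exponential-moment (``Chernoff'') argument, followed by a single elementary calculus estimate to massage the bound into the stated form. Write $p_i = \Pr[X_i = 1]$ and $\mu = \Exp[X] = \sum_{i=1}^n p_i$. For any parameter $t > 0$ the variable $e^{tX}$ is nonnegative, so Markov's inequality gives
\[ \Pr[X \geq (1+\delta)\mu] = \Pr[e^{tX} \geq e^{t(1+\delta)\mu}] \leq e^{-t(1+\delta)\mu}\, \Exp[e^{tX}]. \]
By independence, $\Exp[e^{tX}] = \prod_{i=1}^n \Exp[e^{tX_i}]$, and for each $i$ we have $\Exp[e^{tX_i}] = 1 + p_i(e^t - 1) \leq e^{p_i(e^t - 1)}$ using the elementary inequality $1 + x \leq e^x$. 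Multiplying over $i \in [n]$ gives $\Exp[e^{tX}] \leq e^{\mu(e^t-1)}$, hence $\Pr[X \geq (1+\delta)\mu] \leq e^{\mu(e^t - 1) - t(1+\delta)\mu}$ for every $t > 0$.

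Next I would optimize over $t$: differentiating the exponent $\mu(e^t - 1) - t(1+\delta)\mu$ in $t$ shows that the minimum is attained at $e^t = 1+\delta$, i.e.\ $t = \ln(1+\delta)$, which is a legal positive choice since $\delta > 0$. Substituting back yields the classical bound
\[ \Pr[X \geq (1+\delta)\mu] \leq \paren*{\frac{e^\delta}{(1+\delta)^{1+\delta}}}^{\mu}. \]

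It remains to pass from this to the form in the statement, and this is the only step that needs a computation. It suffices to prove the scalar inequality $\tfrac{e^\delta}{(1+\delta)^{1+\delta}} \leq \exp\paren*{-\tfrac{\delta^2}{2+\delta}}$ for all $\delta > 0$: raising it to the power $\mu \geq 0$ and chaining with the previous display gives exactly $\Pr[X \geq (1+\delta)\mu] \leq \exp(-\delta^2\mu/(2+\delta))$. Taking logarithms, the scalar inequality is equivalent to $g(\delta) \eqdef (1+\delta)\ln(1+\delta) - \delta - \tfrac{\delta^2}{2+\delta} \geq 0$ on $(0,\infty)$. Here $g(0) = 0$; $g'(\delta) = \ln(1+\delta) - 1 + \tfrac{4}{(2+\delta)^2}$ with $g'(0) = 0$; and $g''(\delta) = \tfrac{1}{1+\delta} - \tfrac{8}{(2+\delta)^3}$, which after clearing denominators is nonnegative if and only if $(2+\delta)^3 \geq 8(1+\delta)$, i.e.\ $\delta(\delta^2 + 6\delta + 4) \geq 0$ --- manifest for $\delta \geq 0$. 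Hence $g'' \geq 0$, so $g'$ is nondecreasing and therefore (since $g'(0) = 0$) nonnegative, so $g$ is nondecreasing and therefore (since $g(0) = 0$) nonnegative, which is what we wanted. The only nontrivial step here is this last calculus lemma, and as the chain above shows it is entirely routine; everything upstream is standard bookkeeping.
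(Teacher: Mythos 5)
Your proof is correct: the paper states this Chernoff bound as a standard fact without proof, and your argument is the textbook moment-generating-function derivation (Markov on $e^{tX}$, $1+x\leq e^x$, optimize at $t=\ln(1+\delta)$) followed by the scalar inequality $(1+\delta)\ln(1+\delta)-\delta\geq \delta^2/(2+\delta)$, whose second-derivative verification I checked and which is right. Note that the displayed bound $\bigl(e^\delta/(1+\delta)^{1+\delta}\bigr)^\mu$ also handles the degenerate case $\mu=0$ exactly as you say, so nothing further is needed.
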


\begin{lemma}[Chernoff lower bound]\label{lem:chernoff-lb}
Let $X_1,\ldots,X_n$ be independent $\{0,1\}$-valued random variables, and let $X = \sum_{i=1}^n X_i$. Then for all $0 \leq \delta \leq 1$, \[ \Pr[X \leq (1-\delta) \Exp[X]] \leq \exp(-\delta^2 \Exp[X]/2). \]
\end{lemma}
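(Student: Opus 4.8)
The plan is to use the standard exponential moment (Bernstein/Chernoff) method applied to the \emph{negative} exponential, since we are bounding a lower tail. Write $\mu \eqdef \Exp[X] = \sum_{i=1}^n p_i$ where $p_i \eqdef \Pr[X_i = 1]$, and fix a parameter $t > 0$ to be chosen. First I would observe that since $x \mapsto e^{-tx}$ is decreasing, the event $\{X \leq (1-\delta)\mu\}$ coincides with $\{e^{-tX} \geq e^{-t(1-\delta)\mu}\}$, so Markov's inequality gives $\Pr[X \leq (1-\delta)\mu] \leq e^{t(1-\delta)\mu}\,\Exp[e^{-tX}]$.

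The next step is to control the moment generating function $\Exp[e^{-tX}]$. By independence of the $X_i$, it factors as $\prod_{i=1}^n \Exp[e^{-tX_i}]$, and each factor is exactly $\Exp[e^{-tX_i}] = 1 - p_i + p_i e^{-t} = 1 + p_i(e^{-t}-1)$. Applying the elementary inequality $1 + y \leq e^y$ (valid for all real $y$) with $y = p_i(e^{-t}-1)$ and multiplying over $i$ yields $\Exp[e^{-tX}] \leq \exp\!\big(\mu(e^{-t}-1)\big)$. Combining with the Markov step gives, for every $t > 0$,
\[
\Pr[X \leq (1-\delta)\mu] \leq \exp\!\Big(\mu\big(e^{-t} - 1 + t(1-\delta)\big)\Big).
\]

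Now I would optimize the exponent over $t$: the function $t \mapsto e^{-t} - 1 + t(1-\delta)$ is minimized (for $\delta \in (0,1)$) at $t = \ln\frac{1}{1-\delta} > 0$, where $e^{-t} = 1-\delta$. Substituting this choice collapses the bound to $\exp\!\big(\mu(-\delta - (1-\delta)\ln(1-\delta))\big)$, i.e., the familiar form $\big(e^{-\delta}/(1-\delta)^{1-\delta}\big)^\mu$. It remains to verify the numerical inequality $-\delta - (1-\delta)\ln(1-\delta) \leq -\delta^2/2$ for all $\delta \in [0,1]$, which is equivalent to showing $g(\delta) \eqdef (1-\delta)\ln(1-\delta) + \delta - \delta^2/2 \geq 0$ on $[0,1]$. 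This is the one step requiring a small calculus argument, and I expect it to be the only genuinely fiddly point: note $g(0) = 0$ and $g'(\delta) = -\ln(1-\delta) - \delta$, which is nonnegative on $[0,1)$ because $\ln(1-\delta) \leq -\delta$; hence $g$ is nondecreasing and thus $g \geq g(0) = 0$. (The boundary cases $\delta \in \{0,1\}$ are handled by continuity/triviality.) Plugging this inequality back in gives $\Pr[X \leq (1-\delta)\mu] \leq \exp(-\delta^2\mu/2)$, as claimed.

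The main obstacle is really just bookkeeping: making sure the optimization in $t$ is carried out cleanly and that the terminal numerical inequality is justified rather than asserted; everything else is the textbook moment-generating-function computation. No step here depends on the surrounding paper, so the argument is entirely self-contained.
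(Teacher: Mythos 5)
Your proof is correct: the Markov step applied to $e^{-tX}$, the factorization via independence with $1+y \leq e^y$, the choice $t = \ln\frac{1}{1-\delta}$, and the terminal inequality $-\delta-(1-\delta)\ln(1-\delta) \leq -\delta^2/2$ (verified via $g'(\delta) = -\ln(1-\delta)-\delta \geq 0$) are all sound, and the edge cases $\delta \in \{0,1\}$ are handled appropriately. The paper simply states this lemma as a well-known concentration inequality without proof, and your argument is exactly the standard moment-generating-function derivation one would cite for it, so there is nothing to reconcile.
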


\begin{corollary}[Two-sided Chernoff bound]\label{lem:chernoff-twosided}
Let $X_1,\ldots,X_n$ be independent $\{0,1\}$-valued random variables, and let $X = \sum_{i=1}^n X_i$. Then for all $0 \leq \delta \leq 1$, \[ \Pr[|X-\Exp[X]| \geq \delta \Exp[X]] \leq 2\exp(-\delta^2 \Exp[X]/3). \]
\end{corollary}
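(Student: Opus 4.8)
The plan is to derive this as an immediate consequence of the two one-sided bounds already stated, namely \cref{lem:chernoff-ub,lem:chernoff-lb}, via a union bound over the two tails. First I would observe that the event $\{|X - \Exp[X]| \geq \delta\Exp[X]\}$ is contained in the union of the upper-tail event $\{X \geq (1+\delta)\Exp[X]\}$ and the lower-tail event $\{X \leq (1-\delta)\Exp[X]\}$, so by the union bound it suffices to bound each of these by $\exp(-\delta^2\Exp[X]/3)$.

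For the upper tail, \cref{lem:chernoff-ub} gives $\Pr[X \geq (1+\delta)\Exp[X]] \leq \exp(-\delta^2\Exp[X]/(2+\delta))$; since $0 \leq \delta \leq 1$ implies $2 + \delta \leq 3$, we have $\delta^2\Exp[X]/(2+\delta) \geq \delta^2\Exp[X]/3$, and hence this probability is at most $\exp(-\delta^2\Exp[X]/3)$. For the lower tail, \cref{lem:chernoff-lb} (applicable since $0 \leq \delta \leq 1$) gives $\Pr[X \leq (1-\delta)\Exp[X]] \leq \exp(-\delta^2\Exp[X]/2) \leq \exp(-\delta^2\Exp[X]/3)$, using $1/2 \geq 1/3$. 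Summing the two bounds yields $\Pr[|X-\Exp[X]| \geq \delta\Exp[X]] \leq 2\exp(-\delta^2\Exp[X]/3)$, as desired.

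There is no real obstacle here — the only thing to be careful about is the monotonicity step $2+\delta \leq 3$, which requires the hypothesis $\delta \leq 1$ (and indeed this is precisely why the corollary is stated only for $\delta$ in that range), and the trivial comparison $1/2 \geq 1/3$ for the lower tail. Everything else is a direct quotation of the preceding lemmas.
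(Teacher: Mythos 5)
Your proof is correct and is exactly the intended derivation: the paper states this as an unproved corollary of \cref{lem:chernoff-ub,lem:chernoff-lb}, and your union-bound argument with $2+\delta \leq 3$ (using $\delta \leq 1$) and $1/2 \geq 1/3$ is the standard way to obtain it.
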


\begin{corollary}[Chernoff upper bound, high deviation form]\label{lem:chernoff-highdev}
Let $X_1,\ldots,X_n$ be independent $\{0,1\}$-valued random variables, and let $X = \sum_{i=1}^n X_i$. Then for all $\eta \geq 3 \Exp[X]$, \[ \Pr[X \geq \eta] \leq \exp(-\eta/8). \]
\end{corollary}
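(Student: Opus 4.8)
The plan is to obtain this high-deviation estimate as a direct corollary of the multiplicative Chernoff upper bound (\cref{lem:chernoff-ub}). First I would dispose of the degenerate case $\Exp[X] = 0$: here every $X_i$ vanishes almost surely, so $X = 0$ almost surely, and the claim $\Pr[X \geq \eta] \leq \exp(-\eta/8)$ is immediate (the left side is $0$ when $\eta > 0$ and both sides equal $1$ when $\eta = 0$). So from now on assume $\mu \eqdef \Exp[X] > 0$.

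Next I would introduce the multiplicative deviation parameter $\delta \eqdef \eta/\mu - 1$, so that $\eta = (1+\delta)\mu$. The hypothesis $\eta \geq 3\mu$ is then exactly the statement $\delta \geq 2$; in particular $\delta > 0$, so \cref{lem:chernoff-ub} applies and gives $\Pr[X \geq \eta] = \Pr[X \geq (1+\delta)\mu] \leq \exp\paren*{-\delta^2\mu/(2+\delta)}$. The only remaining task is a comparison of exponents: I would show $\delta^2\mu/(2+\delta) \geq \eta/8$. Substituting $\mu = \eta/(1+\delta)$, this reduces to verifying the purely numerical inequality $\delta^2/((1+\delta)(2+\delta)) \geq 1/8$ for every $\delta \geq 2$. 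Rewriting the left-hand side as $\paren*{1 + 3/\delta + 2/\delta^2}^{-1}$ makes it manifest that it is increasing in $\delta$, and at the endpoint $\delta = 2$ its value is $1/3 \geq 1/8$; hence the inequality holds throughout the range $\delta \geq 2$, which completes the argument.

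I do not anticipate any genuine obstacle here. The one point worth flagging is that the constant $8$ in the exponent is calibrated precisely so that the bound survives the worst case $\eta = 3\mu$ (equivalently $\delta = 2$); the monotonicity observation in the last step is exactly what confirms that this boundary case is the extremal one, so no case analysis beyond separating $\mu = 0$ from $\mu > 0$ is required.
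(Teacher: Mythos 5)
Your proposal is correct, and it is exactly the intended derivation: the paper states \cref{lem:chernoff-highdev} as an unproved corollary of the multiplicative bound \cref{lem:chernoff-ub}, and your substitution $\delta = \eta/\mu - 1 \geq 2$ together with the monotonicity of $\delta^2/((1+\delta)(2+\delta))$ fills in precisely that step (plus the trivial $\mu=0$ case). One tiny remark: at the extremal point $\delta=2$ your argument actually yields the stronger bound $\exp(-\eta/3)$, so the constant $8$ is not tight for this derivation, but that only gives slack and does not affect correctness.
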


\begin{lemma}[Weighted Chernoff bound {\cite[cf. Theorem 3.3]{CL06}}]
Let $X_1,\ldots,X_n$ be independent $\{0,1\}$-valued random variables. Let $0 < \nu_1,\ldots,\nu_n$ be weights, and let $X = \sum_{i=1}^n \nu_i X_i$. Let $\lambda_0 = \max_i \{\nu_i\}$ and $\lambda_2 = \sum_{i=1}^n \nu_i^2 \Exp[X_i]$. Then for all $\delta > 0$, \[ \Pr[X \geq (1+\delta) \Exp[X]] \leq \exp(-\delta^2\Exp[X]^2/2\lambda_2) \] and \[ \Pr[X \leq (1-\delta) \Exp[X]] \leq \exp(-\delta^2\Exp[X]^2/(2\lambda_2+\lambda_0 \delta \Exp[X]). \]
\end{lemma}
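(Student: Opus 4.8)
This is a weighted Chernoff / Bennett--Bernstein bound, and the natural route is the exponential-moment method; the weights $\nu_i$ and the two different denominators are the only wrinkles over the textbook proof. Write $\mu \eqdef \Exp[X] = \sum_{i=1}^n \nu_i\Exp[X_i]$, and note that since each $X_i \in \{0,1\}$ we have $\Exp[X_i^2] = \Exp[X_i]$, so $\lambda_2 = \sum_i \nu_i^2\Exp[X_i^2]$ is exactly the weighted second-moment parameter of the classical statements. The first step is, for a parameter $s > 0$, to apply Markov's inequality to $e^{sX}$ (for the upper tail) or $e^{-sX}$ (for the lower tail) and factor over the independent coordinates, e.g.\ $\Pr[X \geq (1+\delta)\mu] \leq e^{-s(1+\delta)\mu}\prod_{i=1}^n \Exp[e^{s\nu_i X_i}]$; since $X_i$ is $\{0,1\}$-valued, each factor is computed exactly, $\Exp[e^{\pm s\nu_i X_i}] = 1 + (e^{\pm s\nu_i}-1)\Exp[X_i] \leq \exp\!\paren*{(e^{\pm s\nu_i}-1)\Exp[X_i]}$. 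Thus each direction reduces to a deterministic one-variable optimization over $s > 0$, and the only place the two directions diverge is the elementary estimate used for $e^{\pm s\nu_i}-1$.

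For the lower tail I would use $e^{-u} \leq 1 - u + \tfrac12 u^2$, which holds for \emph{every} $u \geq 0$ (no upper bound on $u$ needed): then $e^{-s\nu_i}-1 \leq -s\nu_i + \tfrac12 s^2\nu_i^2$, so summing against $\Exp[X_i]$ gives $\sum_i (e^{-s\nu_i}-1)\Exp[X_i] \leq -s\mu + \tfrac{s^2}{2}\lambda_2$ and hence $\Pr[X \leq (1-\delta)\mu] \leq \exp\!\paren*{-s\delta\mu + \tfrac{s^2}{2}\lambda_2}$; optimizing at $s = \delta\mu/\lambda_2$ gives the sub-Gaussian bound $\exp\!\paren*{-\delta^2\mu^2/(2\lambda_2)}$. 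Since the lemma's stated lower-tail denominator $2\lambda_2 + \lambda_0\delta\mu$ only enlarges the denominator (weakening the bound), the stated second inequality follows immediately. For the upper tail the analogue $e^u - 1 \leq u + \tfrac12 u^2$ \emph{fails} for $u > 0$, so I would instead use that $u \mapsto (e^u-1-u)/u^2$ is nondecreasing on $(0,\infty)$ (its power series has nonnegative coefficients) together with $0 < s\nu_i \leq s\lambda_0$ to obtain $e^{s\nu_i}-1 \leq s\nu_i + \tfrac12 s^2\nu_i^2\,\phi(s\lambda_0)$, where $\phi(a) \eqdef 2(e^a-1-a)/a^2$ satisfies $\phi(0^+) = 1$ and $\phi(a) \leq (1 - a/3)^{-1}$ for $0 \leq a < 3$. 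This yields $\Pr[X \geq (1+\delta)\mu] \leq \exp\!\paren*{-s\delta\mu + \tfrac{s^2}{2}\phi(s\lambda_0)\lambda_2}$; running the standard Bernstein optimization (take $s = \delta\mu/(\lambda_2 + \lambda_0\delta\mu/3)$, which lies in $[0,3/\lambda_0)$) gives $\exp\!\paren*{-\tfrac{\delta^2\mu^2}{2(\lambda_2 + \lambda_0\delta\mu/3)}} \leq \exp\!\paren*{-\tfrac{\delta^2\mu^2}{2\lambda_2 + \lambda_0\delta\mu}}$, and in the regime where the correction $\lambda_0\delta\mu$ is dominated by $\lambda_2$ --- in particular whenever $\delta = O(1)$, which is the regime in which the first displayed inequality is invoked --- the factor $\phi(s\lambda_0)$ is $1 + o(1)$ and this collapses to the stated clean form $\exp\!\paren*{-\delta^2\mu^2/(2\lambda_2)}$. (One keeps the $\lambda_0\delta\mu$ correction, following \cite{CL06}, for robustness when $\delta$ is large.)

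I expect no deep obstacle: the $X_i$ are independent and bounded by hypothesis, the per-coordinate MGF is an exact identity, and taking products and applying Markov is routine. The only part that needs genuine care is the upper tail, where one must (i) use the correct quadratic surrogate $e^u - 1 \leq u + \tfrac12 u^2\phi(s\lambda_0)$ rather than the na\"ive $u + \tfrac12 u^2$, and (ii) carry out the Bernstein optimization so that the $\lambda_0\delta\mu$ term emerges with the right constant in the denominator --- or, more economically, cite \cite[Theorem~3.3]{CL06} as a black box and verify only the reduction to $\{0,1\}$-valued summands and the substitution of the weights $\nu_i$, both of which are immediate.
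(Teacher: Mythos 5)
Your proposed argument is the standard exponential-moment/Bernstein proof; note that the paper itself offers no proof for this lemma at all --- it is invoked as a black box via the citation to \cite[Theorem~3.3]{CL06} --- so your closing suggestion to cite and only verify the specialization to $\{0,1\}$-valued weighted summands is in effect what the paper does. Your lower-tail computation is correct: using $e^{-u}\leq 1-u+u^2/2$ for all $u\geq 0$ and optimizing at $s=\delta\Exp[X]/\lambda_2$ gives $\Pr[X\leq(1-\delta)\Exp[X]]\leq\exp(-\delta^2\Exp[X]^2/(2\lambda_2))$, which is even stronger than the second displayed inequality. Likewise your upper-tail computation correctly yields the Bernstein-type bound $\exp\bigl(-\delta^2\Exp[X]^2/(2\lambda_2+2\lambda_0\delta\Exp[X]/3)\bigr)$.

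The genuine gap is in the last step of your upper tail: you do not prove the first displayed inequality, and no argument can, because as printed it is false for large $\delta$. Take a single unweighted Bernoulli($p$) variable and $\delta=1/p-1$: then $\Pr[X\geq(1+\delta)\Exp[X]]=p$, while the claimed bound is $\exp(-\delta^2 p/2)\approx\exp(-1/(2p))\ll p$. Your remark that the correction term ``collapses'' when $\delta=O(1)$ is not a proof of a statement quantified over all $\delta>0$ (and the corollary derived from this lemma in the paper is in fact applied downstream with deviations $\delta$ that need not be $O(1)$). What has happened is that the lemma as transcribed swaps the two tails of \cite[Theorem~3.3]{CL06}: there the clean sub-Gaussian bound is the \emph{lower}-tail bound, and the \emph{upper} tail carries the $\lambda_0\delta\Exp[X]$-type correction in the denominator --- exactly the orientation your own calculations produce. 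So the correct resolution is not to restrict $\delta$ or to absorb the correction heuristically, but to observe that the two probabilities in the statement should be exchanged; with that fix your proof (using $2\lambda_0\delta\Exp[X]/3\leq\lambda_0\delta\Exp[X]$) establishes the lemma in full, and one should then recheck the two-sided corollary, whose upper-tail case now needs the corrected denominator (harmless for $\delta\le 1$, the regime where the lower tail is nonvacuous).
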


\begin{corollary}[Two-sided weighted Chernoff bound, low weights]\label{lem:weighted-chernoff}
Let $X_1,\ldots,X_n$ be independent $\{0,1\}$-valued random variables. Let $0 < \nu_1,\ldots,\nu_n \leq 1$ be weights, and let $X = \sum_{i=1}^n \nu_i X_i$. Then for all $\delta > 0$, \[ \Pr[|X - \Exp[X]| \geq \delta \Exp[X]] \leq 2 \exp(-\delta^2\Exp[X]/3). \]
\end{corollary}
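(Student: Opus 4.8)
The plan is to obtain this corollary as an immediate consequence of the preceding weighted Chernoff bound (the Lemma of \cite{CL06}), simply by bounding its two parameters $\lambda_0$ and $\lambda_2$ using the hypothesis that all weights satisfy $\nu_i \le 1$. First I would record that, since $0 < \nu_i \le 1$ implies $\nu_i^2 \le \nu_i$, we have $\lambda_2 = \sum_{i=1}^n \nu_i^2 \Exp[X_i] \le \sum_{i=1}^n \nu_i \Exp[X_i] = \Exp[X]$, while trivially $\lambda_0 = \max_i \nu_i \le 1$. We may also assume $\Exp[X] > 0$, since otherwise $X = 0$ almost surely and the claimed bound holds trivially.

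Next I would bound each tail separately. For the upper tail, plugging $\lambda_2 \le \Exp[X]$ into the first inequality of the weighted Chernoff bound gives, for every $\delta > 0$, that $\Pr[X \ge (1+\delta)\Exp[X]]$ is at most $\exp(-\delta^2\Exp[X]^2/(2\lambda_2)) \le \exp(-\delta^2\Exp[X]/2) \le \exp(-\delta^2\Exp[X]/3)$. For the lower tail, note first that if $\delta \ge 1$ then the event $\{X \le (1-\delta)\Exp[X]\}$ is empty because $X \ge 0$, so there is nothing to prove; hence we may assume $0 < \delta < 1$. In that regime, $2\lambda_2 + \lambda_0 \delta \Exp[X] \le 2\Exp[X] + \Exp[X] = 3\Exp[X]$, so the second inequality of the weighted Chernoff bound yields $\Pr[X \le (1-\delta)\Exp[X]] \le \exp(-\delta^2\Exp[X]^2/(2\lambda_2+\lambda_0\delta\Exp[X])) \le \exp(-\delta^2\Exp[X]/3)$.

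Finally, a union bound over the two tail events gives $\Pr[|X-\Exp[X]| \ge \delta\Exp[X]] \le 2\exp(-\delta^2\Exp[X]/3)$ for all $\delta > 0$, as claimed. There is essentially no obstacle here: the argument is a one-line parameter substitution into the cited inequality, and the only point requiring a moment's attention is the degenerate range $\delta \ge 1$ for the lower tail, where the bound is vacuous.
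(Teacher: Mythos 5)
Your proposal is correct and takes essentially the same route as the paper's one-line proof: bound $\lambda_2 \leq \Exp[X]$ and $\lambda_0 \leq 1$ and substitute into the preceding weighted Chernoff lemma, which you merely spell out tail by tail. One small nitpick: at $\delta = 1$ the lower-tail event $\{X \leq 0\}$ need not be empty (it is $\{X = 0\}$), but this is harmless since your estimate $2\lambda_2 + \lambda_0 \delta \Exp[X] \leq 3\Exp[X]$ is valid for all $\delta \leq 1$, so the same substitution covers that case as well.
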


\begin{proof}
Follows from the previous lemma since $\lambda_2 \leq \sum_{i=1}^n \nu_i \Exp[X_i] = \Exp[X]$ and $\lambda_0 \leq 1$.
\end{proof}

\begin{lemma}[Chebyshev bound]
\label{lem:chebyshev}
Let $X_1,\ldots,X_n$ be random variables, and let $X = \sum_{i=1}^n X_i$. Then for all $\eta > 0$, \[ \Pr[|X - \Exp[X]| \geq \eta] \leq \frac{\Var[X]}{\eta^2}. \]
\end{lemma}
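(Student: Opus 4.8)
The statement is just Chebyshev's inequality; the decomposition $X = \sum_{i=1}^n X_i$ is irrelevant to the argument, so the plan is the standard two-step reduction to Markov's inequality. First I would invoke (or, for self-containedness, quickly prove) Markov's inequality: for a nonnegative random variable $Y$ and any $t > 0$, $\Pr[Y \geq t] \leq \Exp[Y]/t$, which follows from $\Exp[Y] \geq \Exp[Y \cdot \1[Y \geq t]] \geq t \cdot \Pr[Y \geq t]$ using $Y \geq 0$.

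Next, apply this with $Y \eqdef (X - \Exp[X])^2$, which is nonnegative, and $t \eqdef \eta^2 > 0$. Since squaring is monotone on the nonnegative reals, the event $\{ |X - \Exp[X]| \geq \eta \}$ is exactly the event $\{ Y \geq \eta^2 \}$, so Markov's inequality gives $\Pr[|X - \Exp[X]| \geq \eta] = \Pr[Y \geq \eta^2] \leq \Exp[Y]/\eta^2$. Finally, by the definition of variance, $\Exp[Y] = \Exp[(X-\Exp[X])^2] = \Var[X]$, which is precisely the claimed bound.

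There is no genuine obstacle here: this is a textbook inequality, and the sum structure in the hypothesis is vestigial (it presumably appears only for uniformity with the Chernoff-type lemmas above, which will later be applied to sums). The one mild subtlety worth a remark is the degenerate case $\Var[X] = \infty$, where the inequality holds vacuously, so no integrability assumption on the individual $X_i$ is needed.
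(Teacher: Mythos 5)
Your proof is correct: it is the standard Markov-on-$(X-\Exp[X])^2$ argument, and the paper itself states this lemma without proof as a well-known inequality, so there is nothing to compare against beyond noting that your route is the canonical one. Your side remarks (the vestigial sum structure and the vacuous case $\Var[X]=\infty$) are also accurate.
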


\begin{corollary}[Chebyshev with limited independence]\label{lem:chebyshev-indep}
 Let $X_1,\ldots,X_n$ be random variables such that $0 \leq X_1,\ldots,X_n \leq 1$, and let $X = \sum_{i=1}^n X_i$. Further, suppose that each $X_i$ is independent (pairwise) of all but $D$ variables $\{X_j\}_{j \in [n]}$. Then for all $\eta > 0$, \[ \Pr[|X - \Exp[X]| \geq \eta] \leq \frac{D\cdot \Exp[X]}{\eta^2}. \] In particular, if the variables are pairwise independent, then \[ \Pr[|X - \Exp[X]| \geq \eta] \leq \frac{\Exp[X]}{\eta^2}. \]
\end{corollary}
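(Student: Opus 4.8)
The plan is to invoke the Chebyshev bound (\cref{lem:chebyshev}) and then control $\Var[X]$ using the limited-independence hypothesis. First I would write $\Var[X] = \sum_{i,j=1}^n \mathrm{Cov}(X_i,X_j)$ and note that $\mathrm{Cov}(X_i,X_j) = 0$ whenever $X_i$ and $X_j$ are (pairwise) independent. Hence only the ``dependent'' pairs contribute, and by hypothesis each index $i$ belongs to at most $D$ such pairs in each coordinate: writing $N(i) \eqdef \set*{j \in [n] : X_i \text{ not independent of } X_j}$ (which contains $i$ itself, and which we read the hypothesis as saying has size at most $D$), we get $\Var[X] = \sum_{i} \sum_{j \in N(i)} \mathrm{Cov}(X_i,X_j)$.

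The key estimate is the pointwise inequality $X_iX_j \le \tfrac12(X_i^2 + X_j^2)$, which combined with $X_i^2 \le X_i$ (valid since $0 \le X_i \le 1$) and $\Exp[X_i]\Exp[X_j] \ge 0$ yields, for every pair,
\[ \mathrm{Cov}(X_i,X_j) \le \Exp[X_iX_j] \le \tfrac12\paren*{\Exp[X_i^2] + \Exp[X_j^2]} \le \tfrac12\paren*{\Exp[X_i] + \Exp[X_j]}. \]
Substituting this bound into the double sum splits $\Var[X]$ into a term $\tfrac12\sum_i |N(i)|\,\Exp[X_i] \le \tfrac{D}{2}\Exp[X]$ and a term $\tfrac12\sum_i\sum_{j \in N(i)}\Exp[X_j]$; for the latter, the symmetry of independence ($j \in N(i) \iff i \in N(j)$) lets me reindex it as $\tfrac12\sum_j |N(j)|\,\Exp[X_j] \le \tfrac{D}{2}\Exp[X]$. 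Adding the two gives $\Var[X] \le D\,\Exp[X]$, and Chebyshev then yields $\Pr[|X - \Exp[X]| \ge \eta] \le \Var[X]/\eta^2 \le D\,\Exp[X]/\eta^2$. The pairwise-independent case is exactly $D = 1$ (where $N(i) = \set{i}$), giving $\Exp[X]/\eta^2$.

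I do not expect any real obstacle here — the statement is a routine variance computation. The only points deserving a little care are the bookkeeping in the double-counting step (ensuring each $\Exp[X_i]$ is charged at most $D$ times overall, which is precisely where the symmetry of the independence relation is used) and the reading of the hypothesis ``independent of all but $D$ variables''; interpreting the count to include $X_i$ itself is what makes the constant come out to exactly $D$ and makes the pairwise-independent corollary fall out as the clean $D=1$ instance.
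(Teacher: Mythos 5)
Your proof is correct and takes essentially the same route as the paper: bound $\Var[X]$ by summing covariances over the at most $D$ dependent pairs per index (interpreting the count as including $i$ itself, which is also what makes the pairwise-independent case come out as $D=1$), and then apply \cref{lem:chebyshev}. The only immaterial difference is that the paper bounds $\Exp[X_iX_j] \leq \Exp[X_i]$ directly using $X_j \leq 1$, which sidesteps your AM--GM symmetrization and the double-counting/reindexing step.
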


\begin{proof}
Follows using $\Var[X] = \sum_{i,j=1}^n \Exp[X_iX_j] - \Exp[X_i]\Exp[X_j]$, the limited independence assumption, and the fact that for all $i, j \in [n]$, $\Exp[X_iX_j] \leq \Exp[X_i]$ (using $0 \leq X_i, X_j \leq 1$).
\end{proof}

\subsection{Sparsification for $\mdcut$}

The following lemma is a standard statement about sparsification for the $\mdcut$ problem, which essentially lets us reduce to considering graphs with linearly many edges. We include the proof in \cref{app:prelim-proofs} for completeness.

\begin{lemma}[Linear sparsification preserves $\mdcut$ values]\label{lem:cut-sparsifier}

There exists a universal constant $\Csparse > 0$ such that the following holds. For every $\epssparse \in (0,1)$ and $n,m \in \BN$, suppose $\Csparse n /(\epssparse^2 m) \leq \pspar \leq 1$. Then for every (multi)graph $\CG$ on $n$ vertices with $m$ edges, if we let $\Gsparse$ be the random multigraph resulting from throwing away every edge of $\CG$ independently with probability $1-\pspar$, then with probability $99/100$ over the choice of $\Gsparse$, we have $|\val_\CG - \val_{\Gsparse}| \leq \epssparse$ and $|m_{\Gsparse} - \pspar m| \leq \epssparse \pspar m$.
\end{lemma}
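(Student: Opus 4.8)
The plan is to run the standard ``union bound over all cuts'' sparsification argument. For a cut $\vecx \in \{0,1\}^n$, write $C_\CG(\vecx) \eqdef m_\CG \cdot \val_\CG(\vecx)$ for the (weighted) number of $\CG$-edges that $\vecx$ cuts, and $C_{\Gsparse}(\vecx)$ for the analogous quantity in $\Gsparse$; note that the set of cuts is still $\{0,1\}^n$ even though $\CG$ is a multigraph. Since $\Gsparse$ keeps each edge of $\CG$ independently with probability $\pspar$, both $m_{\Gsparse}$ and, for each fixed $\vecx$, $C_{\Gsparse}(\vecx)$ are sums of independent $\{0,1\}$-valued random variables, with $\Exp[m_{\Gsparse}] = \pspar m$ and $\Exp[C_{\Gsparse}(\vecx)] = \pspar\, C_\CG(\vecx)$. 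I would define the ``good event'' to be the conjunction of $m_{\Gsparse} \in (1 \pm \epssparse)\pspar m$ and $|C_{\Gsparse}(\vecx) - \pspar\, C_\CG(\vecx)| \le \epssparse\pspar m$ for \emph{every} $\vecx$, and then argue it holds except with probability $1/100$. Granting the good event: since $0 \le C_\CG(\vecx) \le m$ and $m_{\Gsparse} \ge (1-\epssparse)\pspar m > 0$, a short estimate using $1/(1-\epssparse) \le 1 + O(\epssparse)$ gives $|\val_{\Gsparse}(\vecx) - \val_\CG(\vecx)| = |C_{\Gsparse}(\vecx)/m_{\Gsparse} - C_\CG(\vecx)/m| = O(\epssparse)$ \emph{uniformly} over all $\vecx$; taking the maximum over $\vecx$ on each side then yields $|\val_{\Gsparse} - \val_\CG| = O(\epssparse)$, while the bound on $m_{\Gsparse}$ is already built into the good event. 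Rescaling $\epssparse$ by the absolute constant hidden in this $O(\cdot)$ --- equivalently, enlarging $\Csparse$ --- gives exactly the statement of the lemma (and also places us in the regime $\epssparse < 1/2$ needed for the estimate above).

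The remaining work is to bound the failure probability of the good event. For $m_{\Gsparse}$ this is a one-shot application of the two-sided Chernoff bound \cref{lem:chernoff-twosided}: $\Pr[|m_{\Gsparse} - \pspar m| \ge \epssparse\pspar m] \le 2\exp(-\epssparse^2\pspar m/3)$. For a fixed $\vecx$, abbreviate $c \eqdef C_\CG(\vecx) \in [0,m]$ and treat the two tails of $C_{\Gsparse}(\vecx)$ separately. The lower tail is vacuous unless $\pspar c \ge \epssparse\pspar m$, in which case $\delta \eqdef \epssparse m/c$ lies in $(0,1]$ and \cref{lem:chernoff-lb} gives $\Pr[C_{\Gsparse}(\vecx) \le \pspar c - \epssparse\pspar m] \le \exp(-\delta^2\pspar c/2) \le \exp(-\epssparse^2\pspar m/2)$, using $\delta^2\pspar c = \delta\epssparse\pspar m$ and $\delta \ge \epssparse$. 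For the upper tail I would use \cref{lem:chernoff-ub}, which crucially holds for \emph{all} $\delta > 0$, with the same $\delta = \epssparse m/c$: splitting into the regimes $\delta \ge 1$ and $\delta < 1$ (and using $c \le m$ in the latter) shows $\Pr[C_{\Gsparse}(\vecx) \ge \pspar c + \epssparse\pspar m] \le \exp(-\delta^2\pspar c/(2+\delta)) \le \exp(-\epssparse^2\pspar m/3)$. So each $\vecx$ contributes at most $2\exp(-\epssparse^2\pspar m/3)$ to the failure probability.

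Finally, the hypothesis $\pspar \ge \Csparse n/(\epssparse^2 m)$ yields $\epssparse^2\pspar m \ge \Csparse n$, so each bad event above has probability at most $2\exp(-\Csparse n/3)$; union-bounding over the $2^n$ cuts together with the single event for $m_{\Gsparse}$ gives total failure probability at most $2^{n+2}\exp(-\Csparse n/3)$, which is below $1/100$ for all $n \ge 1$ provided $\Csparse$ is a large enough absolute constant. I do not expect a genuine obstacle here; the points that need care are: (i) since $\delta = \epssparse m/c$ ranges over all of $[\epssparse,\infty)$ as $c$ varies, one must invoke the all-$\delta$ form of the Chernoff upper bound for the upper tail rather than a version restricted to $\delta \le 1$; and (ii) the bookkeeping in converting the additive guarantee on edge counts into an additive guarantee on \emph{values} and absorbing the resulting constants into $\Csparse$.
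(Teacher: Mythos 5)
Your proposal is correct; I see no gap in it (the only unaddressed corner is the trivial case $C_\CG(\vecx)=0$, where the upper-tail event is impossible since then $C_{\Gsparse}(\vecx)=0$ deterministically). It does, however, take a somewhat different route from the paper. You prove a \emph{two-sided, additive} concentration bound $|C_{\Gsparse}(\vecx)-\pspar C_\CG(\vecx)|\le\epssparse\pspar m$ for \emph{every} cut $\vecx$ (union-bounding both tails over all $2^n$ cuts, with the all-$\delta$ form of \cref{lem:chernoff-ub} handling the regime $\delta=\epssparse m/C_\CG(\vecx)\ge 1$), and then convert this uniform per-cut guarantee into $|\val_\CG-\val_{\Gsparse}|=O(\epssparse)$ via the bound on $m_{\Gsparse}$. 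The paper instead only controls deviations \emph{relative to the optimum}: it lower-bounds $g'=m_{\Gsparse}\val_{\Gsparse}$ by applying Chernoff at a single optimal cut $\vecx^*$, and upper-bounds it by a union bound showing no cut exceeds $(1+\epssparse/2)\pspar g$, splitting into the cases $y(\vecx)\le m/12$ (high-deviation Chernoff, \cref{lem:chernoff-highdev}) and $y(\vecx)>m/12$; both steps lean on the $\mdcut$-specific fact that $g\ge m/4$ (a random cut satisfies a quarter of the edges) to make the exponents $\Omega(n)$. Your argument never uses $\val_\CG\ge 1/4$, so it is slightly more generic (it would sparsify any Boolean CSP with $\{0,1\}$-valued constraints in the same way), at the cost of needing two-sided concentration for every cut rather than a one-sided bound against $pg$; the paper's version exploits the structure of $\mdcut$ to keep each per-cut tail bound anchored to the (large) optimum. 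Both yield the same parameters up to the constants absorbed into $\Csparse$ and the rescaling of $\epssparse$.
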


\subsection{$k$-wise independent hash families}

The following definition of a $k$-wise independent hash family will play a role in the algorithm we present in \cref{sec:algo} below.

\begin{definition}
A family of hash functions $\Hash=\{h:[n]\rightarrow [m]\}$ is \emph{$k$-wise independent} if it satisfies the following properties:
\begin{itemize}
 \item For every $x\in [n]$ and $a\in [m]$, and $h \sim \Hash(n,m)$ uniformly, $\Pr[h(x)=a]=\frac{1}{m}$, and
 \item For every distinct $x_1,\dots,x_k\in [n]$, and $h \sim \Hash(n,m)$ uniformly, $h(x_1),\dots,h(x_k)$ are independent random variables.
\end{itemize}
\end{definition}

\begin{lemma}[{\cite{Jof74}, see e.g. \cite[\S2.6]{SSSV23-random-ordering}}]\label{lem:hash-functions}
For every $k,n,m=2^\ell \in \BN$, there exists a family of $k$-wise independent hash functions $\Hash_k = \{h : [n] \rightarrow [m]\}$ such that a uniformly random hash function can be sampled with $O_k(\log n + \log m)$ bits of randomness.
\end{lemma}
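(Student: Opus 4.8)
The plan is to use the classical polynomial-hashing construction over a finite field of characteristic two, exploiting the hypothesis that $m = 2^\ell$ is a power of two to obtain \emph{exact} uniformity of the outputs (rather than the near-uniformity one gets from reducing modulo an arbitrary $m$). Concretely, first set $L = \max\{\lceil \log_2 n\rceil,\, \ell,\, \lceil \log_2 k\rceil\}$ and let $q = 2^L$, so that simultaneously $q \geq n$, $q \geq m$, and $q \geq k$. Fix once and for all (deterministically) an injection of $[n]$ into $\mathbb{F}_q$ and an $\mathbb{F}_2$-linear isomorphism $\mathbb{F}_q \cong \{0,1\}^L$; let $\pi : \{0,1\}^L \to \{0,1\}^\ell$ be projection onto the first $\ell$ coordinates, which we identify with a surjection $\mathbb{F}_q \to [m]$.

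To sample a hash function, I would draw coefficients $a_0, \dots, a_{k-1} \in \mathbb{F}_q$ independently and uniformly at random — this uses $kL = O_k(\log n + \log m)$ bits of randomness, as claimed — and set $h(x) = \pi\bigl(a_0 + a_1 x + \cdots + a_{k-1} x^{k-1}\bigr)$, where $x \in [n]$ is identified with its image in $\mathbb{F}_q$. The correctness of the two defining properties follows from the Vandermonde argument: for any distinct $x_1, \dots, x_k \in [n]$, the map $(a_0, \dots, a_{k-1}) \mapsto (p(x_1), \dots, p(x_k))$, where $p(x) = \sum_{j=0}^{k-1} a_j x^j$, is linear and given by the $k \times k$ Vandermonde matrix on the points $x_1, \dots, x_k$; since these points are distinct and $q \geq k$, this matrix is invertible over $\mathbb{F}_q$, so $(p(x_1), \dots, p(x_k))$ is uniformly distributed on $\mathbb{F}_q^k$. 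Applying the surjective $\mathbb{F}_2$-linear map $\pi$ coordinatewise, $(h(x_1), \dots, h(x_k))$ is uniformly distributed on $[m]^k$, which yields both marginal uniformity (the first bullet, taking $k = 1$) and mutual independence (the second bullet).

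The only points requiring any care are edge cases and the exactness of uniformity. Exactness holds \emph{only} because $\pi$ is a coordinate projection of $\mathbb{F}_q$ viewed as $\{0,1\}^L$ and $m \mid q$; this is where the power-of-two hypothesis on $m$ is used, and it lets us avoid the small statistical error that a generic modular reduction would incur. The condition $q \geq k$ is exactly what makes the Vandermonde matrix nonsingular, and enlarging $L$ to be at least $\lceil \log_2 k\rceil$ costs nothing in the final bound since $k$ is treated as a constant; likewise the regimes $n < k$ or $n < m$ are harmless because $L$ is defined as a maximum. I do not anticipate any genuine obstacle here — this is a standard fact (essentially Joffe's theorem), and the writeup amounts to fixing the field, the embedding, and the projection, and then doing the bookkeeping of parameters to confirm the $O_k(\log n + \log m)$ randomness bound.
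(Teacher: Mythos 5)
Your proposal is correct and is essentially the same argument the paper relies on: the lemma is stated as a citation to Joffe's classical polynomial-hashing construction (no in-text proof is given), and your degree-$(k-1)$ random polynomial over $\mathbb{F}_{2^L}$ followed by an $\mathbb{F}_2$-linear projection onto $\ell$ bits is exactly that construction, with the correct randomness accounting $kL = O_k(\log n + \log m)$. One minor imprecision that does not affect correctness: the Vandermonde matrix is nonsingular because the points $x_1,\dots,x_k$ are distinct elements of the field (its determinant is $\prod_{i<j}(x_j-x_i)\neq 0$), not because $q \geq k$ per se --- that inequality only guarantees that $k$ distinct field elements can exist at all.
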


\section{Reducing $\mdcut$ approximation to snapshot estimation}\label{sec:reduction}

In this section, we develop some machinery to reduce the $\mdcut$ approximation problem for a graph $\CG$ to a problem of estimating a ``pointwise snapshot estimate'' of $\CG$ in the sense of \cref{def:pointwise} below. To begin, we formally define snapshots (\cref{def:snapshot} below). Then, we define various useful notions of smoothing matrices and arrays. Eventually, the statements of the key reduction lemmas are \cref{lem:graph-smoothing,lem:pointwise-impl-smooth}. We also discuss how the measuring the snapshot implies approximation algorithms with factor at least $0.483$ (\cref{lem:fj} below).

\subsection{Snapshots}

Let $\BT^\ell \subseteq \BR^{\ell + 1}$ denote the space of vectors $\vect = (t_0,\ldots,t_\ell)$ such that $t_0 < \cdots < t_\ell$. We call such a vector a \emph{threshold vector of length $\ell$}. Given a threshold vector $\vect \in \BT^\ell$, for any $x \in [t_0,t_\ell]$, we define $x$'s \emph{index} $\Ind^\vect(x)$ (w.r.t. $\vect$) as the unique $i \in [\ell]$ such that $t_{i-1} \leq x < t_i$ (and if $x = t_\ell$ then $\Ind^\vect(x) = \ell$). 

Let $\BT^\ell_{\pm1} \subseteq \BT^\ell$ denote the subset of threshold vectors with $t_0 = -1$ and $t_\ell = 1$. We think of such vectors as defining partitions of biases in graphs. For shorthand, given a weighted graph $\CG$ and a (nonisolated) vertex $v$, we write $\bi^\vect_\CG(v) \eqdef \Ind^\vect(\bias_\CG(v)) \in [\ell]$ for the index representing the ``bias class'' containing $v$, and given a pair of nonisolated vertices $u,v$, we write $\bi^\vect_\CG(u,v) \eqdef (\Ind^\vect(\bias_\CG(u)),\Ind^\vect(\bias_\CG(v))) \in [\ell]^2$ for their pair of bias classes. We say $\vect$ is \emph{$\lambda$-wide} if for every $i \in [\ell]$, $\lambda / 2 \leq t_i - t_{i-1} \leq \lambda$. The width of the partition turns out to factor into the error bound in \cref{lem:graph-smoothing} below. (One should think of $\lambda \approx 1/\ell$.)

\begin{definition}[{Snapshot (``snapshot'' in \cite{SSSV23-random-ordering})}]\label{def:snapshot}
Given a weighted graph $\CG$ and threshold vector $\vect \in \BT^\ell_{\pm1}$, we define the \emph{snapshot} $\SnapG \in \MS^\ell$ by \[ \SnapG(i,j) \eqdef \frac1{m_\CG} \sum_{u,v=1}^n \AdjG(u,v) \1_{\bi^\vect_\CG(u,v) = (i,j)}.\footnote{Note that $\bi_\CG^\vect(u,v)$ is not defined if $v$ or $u$ is isolated. But in either case, $\AdjG(u,v)$ vanishes, so we adopt the convention of discarding these terms.} \]
\end{definition}

In other words, the matrix $\SnapG$ counts the weight fraction of edges in the graph between each pair of bias classes. Note that this is a \emph{normalized} matrix, i.e., its entries sum to $1$, unlike the adjacency matrix $\AdjG$.

Next, we introduce a new version of a snapshot of a graph $\CG$ which also takes into account the degrees of the vertices, which we call the \emph{refined snapshot}.  Suppose we also have a threshold vector $\vecd \in \BT^k$ which partitions vertex degrees in $\CG$, in the following sense: all nonisolated vertices in $\CG$ have degree between $d_0$ and $d_k$. We define similar notations: For nonisolated $v$, we write $\di_\CG^\vecd(v) \eqdef \Ind^\vecd(\deg_\CG(v))$ for the ``degree class'' of $v$. (For notational convenience, if $\deg_\CG(v) = 0$ we will write $\di_\CG^\vecd(v) = -\infty$.) For nonisolated $u,v$, we define:

\begin{equation}
\label{eq:db-ind}
\dbi_\CG^{\vecd,\vect}(u,v) = (\di_\CG^\vecd(u),\di_\CG^\vecd(v),\bi^\vect_\CG(u),\bi^\vect_\CG(v)) . 
\end{equation}

This lets us define:

\begin{definition}[Refined snapshot]\label{def:refined-snapshot}
Given a weighted graph $\CG$ and threshold vectors $\vect \in \BT^\ell_{\pm1}$, $\vecd = (d_0,\ldots,d_k)$, such that every nonisolated vertex $v \in V(\CG)$ has $d_0 \leq \deg_\CG(v) \leq d_k$, we define the \emph{refined snapshot} $\RSnapG \in \AS^{k,\ell}$ by
\begin{equation}
\label{eq:biasdegg}
\RSnapG(a,b,i,j) \eqdef \frac1{m_\CG} \sum_{u,v=1}^n \AdjG(u,v) \1_{\dbi_\CG^{\vecd,\vect}(u,v) = (a,b,i,j)} .
\end{equation}
\end{definition}

This array is only more informative than the snapshot; in particular, the snapshot can be recovered via a ``projection'':

\begin{definition}[Projecting arrays into matrices]\label{def:projection}
Given an array $A \in \A^{k,\ell}$, we define a matrix $\Proj(A) \in \M^\ell$ by projecting onto the third and fourth coordinates, i.e., \[ (\Proj(A))(i,j) = \sum_{a,b=1}^k A(a,b,i,j). \]
\end{definition}

\begin{fact}\label{fact:bias-deg-proj}
Let $\vecd = (d_0,\ldots,d_k) \in \BT^k$ be a degree partition and let $\CG$ be a weighted graph such that all nonisolated vertices have degree between $d_0$ and $d_k$. Then $\Proj(\RSnapG) = \SnapG$.
\end{fact}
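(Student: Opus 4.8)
The plan is to prove this directly by unwinding the two definitions and performing a careful regrouping of sums. Both $\Proj(\RSnapG)$ and $\SnapG$ are $\ell \times \ell$ matrices, so it suffices to check that they agree entry-by-entry; fix $i,j \in [\ell]$.

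First I would write out $(\Proj(\RSnapG))(i,j)$ using \cref{def:projection} and then substitute the definition of the refined snapshot from \eqref{eq:biasdegg}. This gives
\[ (\Proj(\RSnapG))(i,j) = \sum_{a,b=1}^k \frac{1}{m_\CG} \sum_{u,v=1}^n \AdjG(u,v) \1_{\dbi_\CG^{\vecd,\vect}(u,v) = (a,b,i,j)} . \]
Next I would swap the order of summation, pulling the sum over $u,v$ outside the sum over $a,b$, so that for each fixed pair $u,v$ we are left with $\sum_{a,b=1}^k \1_{\dbi_\CG^{\vecd,\vect}(u,v) = (a,b,i,j)}$. The key observation is that by \eqref{eq:db-ind}, $\dbi_\CG^{\vecd,\vect}(u,v) = (a,b,i,j)$ holds iff $\di_\CG^\vecd(u) = a$, $\di_\CG^\vecd(v) = b$, $\bi^\vect_\CG(u) = i$, and $\bi^\vect_\CG(v) = j$ all hold simultaneously. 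So the inner sum over $a,b$ collapses: it equals $\1_{\bi^\vect_\CG(u) = i \text{ and } \bi^\vect_\CG(v) = j}$ whenever $u,v$ are both nonisolated and the hypothesis (every nonisolated vertex has degree between $d_0$ and $d_k$) guarantees $\di_\CG^\vecd(u)$ and $\di_\CG^\vecd(v)$ are well-defined indices in $[k]$ — so exactly one pair $(a,b)$ contributes. If $u$ or $v$ is isolated, then $\AdjG(u,v) = 0$ (by the convention noted after \cref{def:snapshot}, and also because the adjacency matrix weight must be split between the two endpoints' degrees), so the term vanishes anyway and can be discarded harmlessly. Hence
\[ (\Proj(\RSnapG))(i,j) = \frac{1}{m_\CG} \sum_{u,v=1}^n \AdjG(u,v) \1_{\bi^\vect_\CG(u,v) = (i,j)} = \SnapG(i,j), \]
where the last equality is exactly \cref{def:snapshot}.

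The only genuinely delicate point — and the one I would be most careful about — is bookkeeping around isolated vertices and the domain on which $\di_\CG^\vecd$ and $\bi^\vect_\CG$ are defined. The hypothesis that all nonisolated vertices have degree in $[d_0, d_k]$ is precisely what makes $\RSnapG$ well-defined per \cref{def:refined-snapshot}, and it ensures the degree-class sum over $a \in [k]$ genuinely partitions (rather than undercounts) the nonisolated vertices; combined with the convention $\di_\CG^\vecd(v) = -\infty \notin [k]$ for isolated $v$, no term is double-counted and no nonzero term is dropped. Everything else is a routine interchange of finite sums. I do not expect any real obstacle here — this is a ``sanity-check'' fact asserting that the refined snapshot refines the snapshot — but the isolated-vertex convention is the spot where a careless argument could go wrong.
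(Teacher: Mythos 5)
Your proposal is correct: the paper states this fact without an explicit proof, treating it as immediate from the definitions, and your argument — expanding \cref{def:projection} and \eqref{eq:biasdegg}, exchanging the finite sums, and observing that the degree-class indicator sums to one for each pair of nonisolated endpoints while terms with an isolated endpoint vanish since $\AdjG(u,v)=0$ — is exactly the intended routine verification. The hypothesis that nonisolated degrees lie in $[d_0,d_k]$ is indeed the point that guarantees $\di_\CG^\vecd(u),\di_\CG^\vecd(v)\in[k]$, so no mass is lost in the projection; nothing further is needed.
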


\subsection{Defining windows}

To present our formalism for the smoothing analysis, we begin with defining some notations for ``windows'' around entries in (1-dimensional) vectors, (2-dimensional) matrices, and (4-dimensional) arrays. These will correspond to indices to $[\ell]$, $[\ell]^2$, and $[k]^2\times[\ell]^2$, respectively, where $k, \ell \in \BN$. In each case, windows will correspond to a ball of a certain radius in the $\infty$-norm.

More concretely, we make the following definitions:

\begin{definition}[Windows]
\label{def:win}
Suppose $w < \ell \in \BN$. For $i \in [\ell]$, let \[ \Win^{w,\ell}(i) \eqdef \{i' \in [\ell] : \abs*{ i' - i } \leq w\} \] denote the \emph{1-dimensional window} around $i$. For $i,j \in [\ell]$, let \[ \Win^{w,\ell}(i,j) \eqdef \Win^{w,\ell}(i) \times \Win^{w,\ell}(j) = \{(i',j') \in [\ell]^2 : \max\{\abs*{i' - i}, \abs*{j'-j}\} \leq w\} \] denote the \emph{2-dimensional window} around $(i,j)$. Given also $k > w \in \BN$, for $a,b \in [k]$ and $i,j \in [\ell]$, let
\begin{align*}
 \Win^{w,k,\ell}(a,b,i,j) &\eqdef \Win^{w,k}(a,b) \times \Win^{w,\ell}(i,j) \\
 &= \{(a',b',i',j') \in[k]^2 \times [\ell]^2: \max\{\abs*{a-a'},\abs*{b-b'},\abs*{i-i'},\abs*{j-j'}\} \leq w 
 \}
\end{align*}
denote the \emph{$4$-dimensional window} around $(a,b,i,j)$.
\end{definition}

We state various basic but useful facts about windows in \cref{sec:window-facts} below.

\subsection{Smoothed estimates (of matrices)}

We now define what it means to \emph{smooth} a matrix $M \in \M^\ell$ over windows of size $w$.

\begin{definition}[Smoothing matrices]\label{def:smoothing-matrix}
Let $\ell \in \BN$ and $M \in \M^\ell$. For $w < \ell \in \BN$, we define a \emph{smoothed} matrix $M^{\sim w} \in \M^\ell$ by \[ M^{\sim w}(i,j) = \sum_{(i',j') \in \Win^{w,\ell}(i,j)} \nu^{\sim w,\ell}(i',j') \cdot M(i',j'), \] where $\nu^{\sim w,\ell}(i',j') \eqdef 1/|\Win^{w,\ell}(i',j')|$ is a normalization factor.
\end{definition}

Note that the normalization factors $\nu^{\sim w, \ell}(i',j')$ do not depend on the matrix $M$. Informally, their importance is as follows. Suppose $\ell$ is even and $\ell > 2w$. Consider the entries $M(1,1)$ and $M(\ell/2,\ell/2)$. The former will contribute to $\approx w^2$ entries in $M^{\sim w}$ --- in particular, the indices $\{1,\ldots,w+1\} \times \{1,\ldots,w+1\}$ --- while the latter will contribute to $\approx 4w^2$ entries --- in particular, $\{\ell/2-(w+1),\ldots,\ell/2+(w+1)\} \times \{\ell/2-(w+1),\ldots,\ell/2+(w+1)\}$. We will be interested in smoothing snapshots, i.e., $M = \SnapG$, and we would like for the resulting matrices to ``resemble'' snapshots, at least in the sense of still having entries summing to $1$. In particular, the choice of normalization factors ensures that the following holds:

\begin{proposition}[Smoothing preserves entry sum]\label{prop:smoothing-preserves-sum}
For every $M \in \M^\ell$, $\sum_{i,j=1}^\ell M^{\sim w}(i,j) = \sum_{i,j=1}^\ell M(i,j)$. In particular, if $M \in \MS^\ell$ then $M^{\sim w} \in \MS^\ell$.
\end{proposition}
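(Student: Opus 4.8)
The plan is a direct computation: expand the definition of $M^{\sim w}$, interchange the order of summation, and exploit the symmetry of the window relation. Concretely, I would start from
\[
\sum_{i,j=1}^\ell M^{\sim w}(i,j) = \sum_{i,j=1}^\ell \sum_{(i',j') \in \Win^{w,\ell}(i,j)} \nu^{\sim w,\ell}(i',j') \cdot M(i',j').
\]
The key observation is that membership in a window is a symmetric relation: by \cref{def:win}, $(i',j') \in \Win^{w,\ell}(i,j)$ iff $\max\{|i-i'|,|j-j'|\} \leq w$ iff $(i,j) \in \Win^{w,\ell}(i',j')$. Hence I can swap the two sums and rewrite the right-hand side as
\[
\sum_{i',j'=1}^\ell \nu^{\sim w,\ell}(i',j')\cdot M(i',j') \cdot \bigl|\{(i,j) \in [\ell]^2 : (i',j') \in \Win^{w,\ell}(i,j)\}\bigr| = \sum_{i',j'=1}^\ell \nu^{\sim w,\ell}(i',j')\cdot M(i',j') \cdot |\Win^{w,\ell}(i',j')|.
\]
Now plugging in $\nu^{\sim w,\ell}(i',j') = 1/|\Win^{w,\ell}(i',j')|$ collapses each summand to $M(i',j')$, giving $\sum_{i',j'} M(i',j')$, as desired.

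For the ``in particular'' clause, I would note that if $M \in \MS^\ell$ then $M$ has nonnegative entries, and since each $\nu^{\sim w,\ell}(i',j') \ge 0$, every entry $M^{\sim w}(i,j)$ is a nonnegative combination of nonnegative entries of $M$, hence nonnegative; combined with the entry-sum identity just proven (which equals $1$ since $M \in \MS^\ell$), this gives $M^{\sim w} \in \MS^\ell$.

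There is no serious obstacle here — the only point requiring care is making sure the window relation really is symmetric (which it is, because the $\infty$-ball condition $\max\{|i-i'|,|j-j'|\}\le w$ is symmetric in the pairs), and making sure that the inner count $|\{(i,j): (i',j')\in\Win^{w,\ell}(i,j)\}|$ is exactly $|\Win^{w,\ell}(i',j')|$, which again follows from the same symmetry. I would also remark in passing that the identical argument works verbatim for the $1$-dimensional and $4$-dimensional smoothing operators (which will presumably be defined later), since it only uses symmetry of the window relation and the definition of $\nu$.
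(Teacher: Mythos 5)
Your proposal is correct and matches the paper's own proof essentially verbatim: both expand the definition of $M^{\sim w}$, exchange the order of summation, use the symmetry of window membership (the paper's \cref{fact:window-count-contain}) to see that $(i',j')$ lies in exactly $|\Win^{w,\ell}(i',j')|$ windows, and then cancel against the normalization factor $\nu^{\sim w,\ell}(i',j')$. The nonnegativity remark for the ``in particular'' clause is likewise the intended (and immediate) observation.
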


The proof is by a simple double-counting argument; we give it in \cref{sec:double-counting-proofs} below.

\begin{definition}[Smoothed estimates]\label{def:smoothed-est}
    Let $M,\hat{M} \in \BM^\ell$ be matrices, $w \in \BN$, and $\epsilon > 0$. $\hat{M}$ is a \emph{$(w,\epsilon)$-smoothed estimate} of $M$ if $\|\hat{M}-M^{\sim w}\|_1 \leq \epsilon$.
\end{definition}

For us, the sufficiency of this notion of smoothed estimation is given by the following lemma, which roughly states that anything which can be deduced about $\val_\CG$ from $\SnapG$ can also be deduced from a $(w,\epsilon)$-smoothed estimate of $\SnapG$ up to an additive factor $O(\lambda w + \epsilon)$. The lemma is proved in \cref{sec:graph-smoothing-analysis} below.

\begin{lemma}[Smoothed estimate of snapshot suffices]\label{lem:graph-smoothing}
    There exists a universal constant $\Csmooth > 0$ such that the following holds. Let $w < \ell \in \BN$. Let $\vect \in \BT^\ell$ be $\lambda$-wide. Let $\CG$ be any weighted graph with snapshot $M := \SnapG$. Then there exists a weighted graph $\CH$ with snapshot $N := \SnapH$ such that $|\val_\CG - \val_{\CH}| \leq \Csmooth \lambda w$, and $\|N-M^{\sim w}\|_1 \leq \Csmooth \lambda w$.
\end{lemma}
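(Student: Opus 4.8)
The plan is to construct $\CH$ from $\CG$ by ``tweaking'' the biases of vertices so as to effectively implement the window-smoothing of the snapshot. The key observation is that smoothing $\SnapG$ over a window of radius $w$ amounts to redistributing the mass $\AdjG(u,v)\1_{\dbi^\vect_\CG(u,v)=(i,j)}$ of each edge across the window $\Win^{w,\ell}(i,j)$, which can be realized by an appropriate random relocation of each edge's endpoints into nearby bias classes. So I would (i) pick, for each nonisolated vertex $v$, a random perturbation of $\bias_\CG(v)$ to a nearby value, driving its bias class into a uniformly random element of the window; (ii) construct $\CH$ on the same vertex set with the same edges, but realize the perturbed biases by adding auxiliary ``pendant'' structure (e.g. small gadgets of forward/backward half-edges at each vertex) that shifts $\dout - \din$ and $\d$ so that the new bias lies in the target subinterval; (iii) argue that $\SnapH$ is close to $M^{\sim w}$ in $\ell_1$, and that $\val_\CH$ is close to $\val_\CG$.

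The first step is a direct computation: if each vertex independently and uniformly relocates its bias class to an element of its $w$-window, then in expectation the edge between classes $(i,j)$ contributes equally to every entry of $\Win^{w,\ell}(i,j)$, which is exactly the definition of $M^{\sim w}$ with the normalization $\nu^{\sim w,\ell}$ --- modulo the boundary effect that a vertex near the edge of $[\ell]$ has a truncated window. I would bound the resulting $\ell_1$ discrepancy between $\SnapH$ (or its expectation) and $M^{\sim w}$ by noting that the total contribution of truncated/boundary terms is $O(w/\ell) = O(\lambda w)$ since $\vect$ is $\lambda$-wide (so $\ell = \Theta(1/\lambda)$), which is where the $\Csmooth \lambda w$ bound comes from; one can then fix a single realization achieving (close to) the expectation, or work directly with a ``fractional'' weighted graph that splits each vertex into $|\Win^{w,\ell}(\cdot)|$ copies with appropriately weighted edges, avoiding any probabilistic argument.

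For the value bound $|\val_\CG - \val_\CH| \le \Csmooth \lambda w$, the point is that moving a vertex's bias by at most $w$ classes changes its bias by at most $w\lambda$ (again using $\lambda$-wideness), and a small change in bias should change how ``valuable'' the vertex is for any cut by a comparably small amount. Concretely, given an optimal cut $\vecx$ for $\CG$, I would transfer it to $\CH$ (on the shared vertex set, assigning the gadget vertices greedily), and bound the change in $\val$ by relating the cut value to the out/in-degree split at each vertex; the degree perturbation at each vertex is an $O(w\lambda)$ fraction of its degree, and summing over all edges gives $O(w\lambda \cdot m_\CG)$ absolute change, i.e. $O(w\lambda)$ in the normalized value, and symmetrically for the reverse direction.

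The main obstacle I anticipate is realizing arbitrary target biases while \emph{exactly} controlling both directions of the value comparison: adding pendant edges changes $m_\CH$ and injects new edges into the cut computation, so I must ensure these auxiliary edges (a) are few enough (total auxiliary weight an $O(w\lambda)$ fraction of $m_\CG$) that they perturb the normalized value by only $O(w\lambda)$, and (b) can themselves be cut or not cut in a way that doesn't spuriously inflate $\val_\CH$ above $\val_\CG + O(w\lambda)$. Working with weighted (rather than multi-) graphs, and possibly allowing bias to be adjusted by attaching edges to a small number of fresh ``sink''/``source'' vertices, should make the bookkeeping cleanest; the truncation-at-the-boundary issue for windows is a secondary nuisance that the $\lambda$-wide hypothesis is precisely designed to absorb.
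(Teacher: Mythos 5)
Your proposal contains the right ingredients in outline, but its primary construction has a genuine gap. You propose to relocate each vertex's bias class to a uniformly random element of its window and then ``fix a single realization achieving (close to) the expectation.'' The expectation of the resulting snapshot is indeed $M^{\sim w}$ (by independence of the two endpoints' relocations and the product structure of $\Win^{w,\ell}(i,j)$), but no single realization need be close to it: all edges incident to a vertex move \emph{together} with that vertex's one random choice. For a star graph, the center's single relocation puts the mass of essentially all $m$ edges into one row of the snapshot, whereas $M^{\sim w}$ spreads it over the whole $\approx(2w+1)^2$ window, so every realization is at $\ell_1$ distance $\Omega(1)$ from $M^{\sim w}$ --- not $O(\lambda w)$. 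Concentration around the expectation would require every degree to be $o(m_\CG)$, which is exactly the bounded-degree-style assumption this lemma must avoid. Your parenthetical fallback --- splitting each vertex $v$ into $|\Win^{w,\ell}(\bi^\vect_\CG(v))|$ weighted copies --- is in fact the paper's construction and is the heart of the proof, not an optional bookkeeping device; but as stated it is undeveloped. In particular you must (a) specify how edge weight is shared among copy pairs (each edge $(u,v)$ places weight $\nu^{\sim w,\ell}(\bi^\vect_\CG(u,v))\cdot\AdjG(u,v)$ on \emph{every} pair of copies of its endpoints), so that after nudging copy $(v,i')$ into bias class $i'$ the snapshot becomes $M^{\sim w}$ up to the perturbation cost, and (b) prove the blow-up preserves $\val$ \emph{exactly}: one direction is copying a cut to all copies, but the converse needs a rounding argument (assign each original vertex the label of a uniformly random copy and compare expectations).

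A secondary but substantive misdiagnosis: you attribute the $\Csmooth\lambda w$ error to truncation of windows at the boundary of $[\ell]$, of order $O(w/\ell)$. In fact truncation costs nothing --- the normalization factors $\nu^{\sim w,\ell}(i',j')=1/|\Win^{w,\ell}(i',j')|$ in the definition of $M^{\sim w}$ already use the truncated window sizes, and the blow-up creates exactly that many copies, so it reproduces the smoothed snapshot exactly before perturbation. The true source of the $O(\lambda w)$ error is the weight that must be added/rerouted (e.g.\ via an auxiliary vertex or your pendant gadgets) to shift each copy's bias by up to $\lambda(w+1)$: this is an $O(\lambda w)$ fraction of each vertex's degree, hence $O(\lambda w\, m_\CG)$ total change in the adjacency matrix, which is what simultaneously bounds $|\val_\CG-\val_\CH|$ (including the change in normalization $m_\CH$ and the possibility that auxiliary edges get cut) and $\|\SnapH - M^{\sim w}\|_1$ (including the auxiliary vertex's uncontrolled bias class). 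Your value paragraph has essentially this accounting, so once you commit to the fractional blow-up as the actual construction and supply items (a) and (b) above, the argument goes through along the paper's lines.
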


\subsection{Smoothed pointwise estimates (of arrays)}

Unfortunately, we will not be able to directly estimate entries of the smoothed snapshot $\SnapGw$ using our sampling-based algorithm. Roughly, this is because there may be huge discrepancy between degrees of vertices (indeed, from $O(1)$ to $\Omega(n)$), and the subsampling parameters will depend on the degree. So, we reduce to the problem of estimating more refined quantities: smoothed versions of the \emph{array} $\RSnapG$. Our hope is that, if the degree partition is fine enough, these quantities can actually be estimated.

\begin{definition}[Smoothing arrays]
\label{def:smoothing-arrays}
Let $k, \ell \in \BN$ and $A \in \A^{k,\ell}$. For $w < k,\ell \in \BN$, we define a \emph{smoothed} array $A^{\sim w} \in \A^{k,\ell}$ by \[ A^{\sim w}(a,b,i,j) = \sum_{(a',b',i',j') \in \Win^{w,k,\ell}(a,b,i,j)} \nu^{\sim w,k,\ell}(a',b',i',j') \cdot A(a',b',i',j'), \] where $\nu^{\sim w,k,\ell}(a',b',i',j') \eqdef 1/|\Win^{w,k,\ell}(a',b',i',j')|$ is a normalization factor. 
\end{definition}

Let $A \in \AN^{k,\ell}$ be an array with nonnegative entries. For the final step in our reduction, we define a certain notion of a ``pointwise smoothed estimate'' for an array (\cref{def:pointwise} below), and give a statement (\cref{lem:pointwise-impl-smooth} below) which roughly says that such an estimate implies a smoothed estimate for the snapshot in the sense of \cref{def:smoothed-est} above. Such a ``pointwise'' estimate will be what we actually aim to achieve in the algorithm, with $A = \RSnapG$.

The notion of ``pointwise estimate'' relies on the definition of two additional arrays $A^{-w},A^{+w} \in \A_{\geq 0}^{k,\ell}$ which satisfy the inequality $A^{-w} \leq A^{\sim w} \leq A^{+w}$ entrywise and account for ``off-by-one'' errors when estimating $A^{\sim w}$. We first describe these arrays informally as the actual definitions may appear quite technical. Consider an ``estimation'' function $\xi : [k]^2 \times [\ell]^2 \to [k]^2 \times [\ell]^2$ for the graph $\CG$, which takes as input the index $(a,b,i,j)$ of an entry in the array $A$, and outputs a tuple $\xi(a,b,i,j)$ which is promised to differ from $(a,b,i,j)$ by at most $1$ in each entry. (This will correspond to issues with ``borderline'' vertices when $A = \RSnapG$, as described in \cref{sec:overview} above.) Now suppose we tried to estimate the entry $A^{\sim w}(a,b,i,j)$ using the expression \[ \sum_{\xi(a',b',i',j') \in \Win^{w,k,\ell}(a,b,i,j)} \nu^{\sim w,k,\ell}(a',b',i',j') \cdot A(a',b',i',j'). \] The quantities $A^{-w}(a,b,i,j)$ and $A^{+w}(a,b,i,j)$ are lower- and upper-bounds for this expression, respectively, based on the ``worst possible'' function $\xi$.

To be more precise, we define the arrays $A^{-w}, A^{+w} \in \A_{\geq 0}^{k,\ell}$ as follows:

\begin{definition}[Upper- and lower-bound normalization factors]
\label{def:normalize-ul-bounds}
Let $w < k,\ell \in \BN$. Then for $a',b' \in [k], i',j' \in [\ell]$, we define 
\[
\nu^{-w,k,\ell}(a',b',i',j') \eqdef \min_{(a'',b'',i'',j'') \in \Win^{1,k,\ell}(a',b',i',j')} \nu^{\sim w,k,\ell}(a'',b'',i'',j'') ,
\] and 
\[
\nu^{+w,k,\ell}(a',b',i',j') \eqdef \max_{(a'',b'',i'',j'') \in \Win^{1,k,\ell}(a',b',i',j')} \nu^{\sim w,k,\ell}(a'',b'',i'',j'') .
\]
\end{definition}

\begin{definition}[Upper- and lower-bound arrays]
\label{def:arr-ul-bounds}
Let $w < k, \ell \in \BN$ and $A \in \A^{k,\ell}$. We define two arrays $A^{-w},A^{+w} \in \A_{\geq 0}^{k,\ell}$ by defining, for all $a, b \in [k]$ and $i, j \in [\ell]$:
\[ 
A^{-w}(a,b,i,j) \eqdef \sum_{(a',b',i',j') \in \Win^{w-1,k,\ell}(a,b,i,j)} \nu^{-w,k,\ell}(a',b',i',j') A(a',b',i',j') ,
\] 
and 
\[
A^{+w}(a,b,i,j) \eqdef \sum_{(a',b',i',j') \in \Win^{w+1,k,\ell}(a,b,i,j)} \nu^{+w,k,\ell}(a',b',i',j') A(a',b',i',j') . 
\]
\end{definition}

Note that by definition $\nu^{-w,k,\ell}(a',b',i',j') \leq \nu^{\sim w,k,\ell}(a',b',i',j') \leq \nu^{+w,k,\ell}(a',b',i',j')$ (since $(a',b',i',j') \in \Win^{1,k,\ell}(a',b',i',j')$). Hence, (since $A$ has nonnegative entries) we have $A^{-w}(a,b,i,j) \leq A^{\sim w}(a,b,i,j) \leq A^{+w}(a,b,i,j)$, since they sum over windows around $(a,b,i,j)$ of sizes $w-1$, $w$, and $w+1$, respectively, using increasing normalization factors. We also have the following simple lem:
\begin{lemma}
\label{lem:a-bound}
Let $w < k, \ell \in \BN$ and $A \in \AS^{k,\ell}$. For all $a, b \in [k]$ and $i, j \in [\ell]$, we have
\[
A^{-w}(a,b,i,j) \leq A^{+w}(a,b,i,j) \leq 1 .
\] 
\end{lemma}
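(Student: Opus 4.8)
The plan is to handle the two claimed inequalities separately, with the second one carrying essentially all the (very modest) content. For the first inequality $A^{-w}(a,b,i,j) \le A^{+w}(a,b,i,j)$, I would simply invoke the observation made in the paragraph preceding the lemma: since $(a',b',i',j')$ always lies in its own radius-$1$ window $\Win^{1,k,\ell}(a',b',i',j')$, we have the pointwise comparison $\nu^{-w,k,\ell}(a',b',i',j') \le \nu^{\sim w,k,\ell}(a',b',i',j') \le \nu^{+w,k,\ell}(a',b',i',j')$; combined with the fact that the sums defining $A^{-w}$, $A^{\sim w}$, $A^{+w}$ range over the nested windows of radii $w-1 \le w \le w+1$ and that $A$ has nonnegative entries, this gives $A^{-w}(a,b,i,j) \le A^{\sim w}(a,b,i,j) \le A^{+w}(a,b,i,j)$ entrywise. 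So it remains only to prove the upper bound $A^{+w}(a,b,i,j) \le 1$.

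For the upper bound, the key (trivial) observation is that every window is nonempty: it contains its own center, so $|\Win^{w,k,\ell}(a'',b'',i'',j'')| \ge 1$ for all indices. Hence each smoothing normalization factor satisfies $\nu^{\sim w,k,\ell}(a'',b'',i'',j'') = 1/|\Win^{w,k,\ell}(a'',b'',i'',j'')| \le 1$, and taking a maximum over the radius-$1$ window (which is again a nonempty set of such factors) preserves the bound, so $\nu^{+w,k,\ell}(a',b',i',j') \le 1$ for all $a',b' \in [k]$, $i',j' \in [\ell]$. Plugging this into \cref{def:arr-ul-bounds} and using that $A$ has nonnegative entries,
\[
A^{+w}(a,b,i,j) = \sum_{(a',b',i',j') \in \Win^{w+1,k,\ell}(a,b,i,j)} \nu^{+w,k,\ell}(a',b',i',j')\, A(a',b',i',j') \le \sum_{(a',b',i',j') \in \Win^{w+1,k,\ell}(a,b,i,j)} A(a',b',i',j') .
\]
Finally, since $A \in \AS^{k,\ell}$ is nonnegative with entries summing to $1$, the right-hand side is bounded above by the total sum $\sum_{a',b' \in [k],\, i',j' \in [\ell]} A(a',b',i',j') = 1$, which completes the argument.

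I do not expect a genuine obstacle here — this is a bookkeeping lemma rather than a substantive one. The only points requiring minor care are that all the index sets over which we take maxima and sums are nonempty (guaranteed because each window contains its center, and because $w < k, \ell$ ensures the $(w+1)$- and radius-$1$ windows used in \cref{def:normalize-ul-bounds,def:arr-ul-bounds} are well-formed subsets of $[k]^2 \times [\ell]^2$), and that nonnegativity of $A$ is used twice: once to pass from the normalization-factor bound to the windowed sum, and once to pass from the windowed sum to the full sum.
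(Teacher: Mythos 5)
Your proposal is correct and follows essentially the same route as the paper's proof: the first inequality is exactly the observation in the paragraph preceding the lemma, and the upper bound comes from $\nu^{+w,k,\ell} \leq 1$ (since every window contains its center) together with the fact that the entries of $A \in \AS^{k,\ell}$ are nonnegative and sum to $1$. The extra bookkeeping remarks you add (nonemptiness of the index sets, the two uses of nonnegativity) are fine but not needed beyond what the paper already records.
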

\begin{proof}
For the first inequality, see above. For the second, note from \cref{def:smoothing-arrays,def:normalize-ul-bounds} that $\nu^{+w,k,\ell}(a',b',i',j') \leq 1$ implying $A^{+w}(a,b,i,j) \leq \sum_{(a',b',i',j') \in \Win^{w+1,k,\ell}(a,b,i,j)} A(a',b',i',j') \leq 1$.
\end{proof}

Now given the definitions of these upper- and lower-bound arrays $A^{+w}$ and $A^{-w}$, we are prepared to define our notion of pointwise estimation:

\begin{definition}[Pointwise smoothed estimates]\label{def:pointwise}
    Let $A, \hat{A} \in \BA^{k,\ell}$ be arrays, $w \in \BN$, and $\delta > 0$. $\hat{A}$ is a \emph{$(w,\delta)$-pointwise smoothed estimate} of $A$ if for every $a,b \in [k],i,j \in [\ell]$, \[  A^{-w}(a,b,i,j) - \delta \leq \hat{A}(a,b,i,j) \leq A^{+w}(a,b,i,j) + \delta. \]
\end{definition}

Such an estimate (where $A$ is the refined snapshot of a graph) is precisely what is achieved by our algorithm; see \cref{lem:bias-arr-correct} below. Correspondingly, we can state our key lemma reducing smoothed estimation of a matrix (in our application, the snapshot) to pointwise smoothed estimation of an array (in our application, the refined snapshot) which projects to that matrix in the sense of \cref{def:projection}:

\begin{lemma}[Pointwise smoothed estimate of array $\implies$ smoothed estimate of matrix]\label{lem:pointwise-impl-smooth}
    There exists a universal constant $\Cwin > 0$ such that the following holds. Let $A, \hat{A} \in \BA^{k,\ell}$ be arrays, $w \in \BN$, and $\delta > 0$. Suppose $\hat{A}$ is a $(w,\delta)$-pointwise smoothed estimate of $A$. Then for $M := \Proj(A)$ and $\hat{M} := \Proj(\hat{A})$, $\hat{M}$ is a $(w,\epsilon)$-smoothed estimate of $M$, for $\epsilon := \delta (k\ell)^2 + \Cwin/w$.
\end{lemma}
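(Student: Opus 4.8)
The goal is to compare $\hat{M}^{\sim w}$-style quantities: we must show that $\|\hat{M} - M^{\sim w}\|_1 \leq \delta(k\ell)^2 + \Cwin/w$, where $M = \Proj(A)$ and $\hat{M} = \Proj(\hat{A})$. The plan is to decompose the error into two pieces via the triangle inequality: a ``pointwise slack'' piece coming from the $\pm\delta$ in the pointwise estimate, and a ``window mismatch'' piece coming from the fact that $A^{-w}$, $A^{\sim w}$, and $A^{+w}$ sum over windows of slightly different radii ($w-1$, $w$, $w+1$) with slightly different normalization factors. First I would fix an entry $(i,j) \in [\ell]^2$ and write $\hat{M}(i,j) = \sum_{a,b} \hat{A}(a,b,i,j)$ and $M^{\sim w}(i,j) = \sum_{a,b} (\Proj(A))^{\sim w}(a,b\text{-free})$ — more precisely I need the identity $\Proj(A^{\sim w})$ versus $(\Proj A)^{\sim w}$, so a first small step is to verify how projection interacts with smoothing. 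Actually the cleaner route: bound $\|\hat{M}-M^{\sim w}\|_1 \le \|\hat M - \Proj(A^{\sim w})\|_1 + \|\Proj(A^{\sim w}) - M^{\sim w}\|_1$ and handle each.

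For the first term, I would use the pointwise hypothesis entrywise: $|\hat{A}(a,b,i,j) - A^{\sim w}(a,b,i,j)| \le \max(A^{\sim w}-A^{-w}, A^{+w}-A^{\sim w})(a,b,i,j) + \delta$, and then sum over all $(a,b,i,j)$. The $\delta$ contributions sum to at most $\delta \cdot (k\ell)^2$ since there are $k^2\ell^2$ entries — this is exactly the first term in $\epsilon$. The remaining contribution, $\sum_{a,b,i,j} \max(A^{\sim w}-A^{-w}, A^{+w}-A^{\sim w})(a,b,i,j)$, together with the second term $\|\Proj(A^{\sim w}) - M^{\sim w}\|_1$ (which I'd argue is actually zero, or folds in), must be bounded by $\Cwin/w$. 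This is the crux: I need to show that the discrepancy between the radius-$(w\pm1)$ windowed sums (with the $\nu^{\pm w}$ normalizers) and the radius-$w$ windowed sum, summed over all output entries, is $O(1/w)$ relative to the total mass of $A$. Since $A \in \AS^{k,\ell}$ has entries summing to $1$, and by Fubini each entry $A(a',b',i',j')$ contributes to $O(w^4)$ output windows with normalizer $\approx 1/w^4$, the ``off-by-one in radius'' changes the window from $(2w-1)^4$ to $(2w+3)^4$ cells — a relative change of $O(1/w)$ — and the gap between $\nu^{-w}$ and $\nu^{+w}$ (coming from $\Win^{1}$ in \cref{def:normalize-ul-bounds}) is similarly an $O(1/w)$-relative perturbation of $\nu^{\sim w}$. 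Carefully multiplying these out and summing against $\sum A(\cdot) = 1$ gives the $\Cwin/w$ bound.

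The main obstacle I anticipate is the bookkeeping around \emph{degenerate (boundary) windows}: when $(a,b,i,j)$ is near the corner of $[k]^2\times[\ell]^2$, $|\Win^{w,k,\ell}(a,b,i,j)|$ is much smaller than $(2w+1)^4$, so the normalization factors $\nu^{\sim w}$, $\nu^{-w}$, $\nu^{+w}$ vary a lot and the ``surface-to-volume'' intuition needs care. The clean way to handle this, I think, is \emph{not} to compare window sizes directly but to use the double-counting identity from \cref{prop:smoothing-preserves-sum} (suitably adapted to arrays — entry sums are preserved by $\sim w$) together with the observation that $A^{-w} \le A^{\sim w} \le A^{+w}$ entrywise (already noted after \cref{def:arr-ul-bounds}), so that $\sum \max(A^{\sim w}-A^{-w}, A^{+w}-A^{\sim w}) \le \sum (A^{+w} - A^{-w})$, and then bound $\sum_{a,b,i,j}(A^{+w}-A^{-w})(a,b,i,j)$ by swapping order of summation: $= \sum_{a',b',i',j'} A(a',b',i',j') \cdot \big(\sum_{(a,b,i,j)} [\nu^{+w}\1_{\Win^{w+1}} - \nu^{-w}\1_{\Win^{w-1}}]\big)$. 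The inner sum is a purely combinatorial quantity independent of $A$, and I'd show it is $O(1/w)$ by comparing it to $\sum \nu^{\sim w}\1_{\Win^w} = 1$ (which is the content of a Proposition-\ref{prop:smoothing-preserves-sum}-type identity) — the difference involves either enlarging/shrinking the window by one layer ($O(1/w)$ of the cells) or perturbing the normalizer over a $\Win^1$ ball ($O(1/w)$ relative, using that consecutive window sizes differ by a $1+O(1/w)$ factor even at the boundary). Assembling: $\|\hat M - M^{\sim w}\|_1 \le \delta(k\ell)^2 + O(1/w)$, which is the claim with $\Cwin$ the absolute constant from this last estimate.
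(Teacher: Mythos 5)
Your proposal is correct and follows essentially the same route as the paper: it reduces the claim to (i) the fact that $\Proj$ commutes with smoothing and is $1$-Lipschitz in $\ell_1$ (\cref{lem:4-to-2}) and (ii) the sandwich bound $\|A^{+w}-A^{-w}\|_1 \leq \Cwin/w$ (\cref{lem:sandwich}), with the $\delta(k\ell)^2$ term coming from summing the pointwise slack over all $(k\ell)^2$ entries. The only differences are cosmetic — you bound $|\hat{A}-A^{\sim w}|$ entrywise directly rather than introducing the paper's intermediate array $B$, and you swap the order of summation before invoking the surface-to-volume bound ($|\Win^{w+1}\setminus\Win^{w-1}| = O(w^3)$ vs.\ $|\Win^{w-1}|\geq w^4$) and the $\Win^{1}$-nesting comparison of the normalizers, which are exactly the ingredients of the paper's proof of \cref{lem:sandwich}.
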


This lemma is proved in \cref{sec:pointwise-impl-smooth-analysis} below. We remark that there is a new factor of $(k\ell)^2$ in the error term corresponding to the product of dimensions of the array; this is because we are switching from an $\ell_\infty$-type guarantee to an $\ell_1$-type guarantee. Recall also that in our application (where $A$ is the refined snapshot), we will have $k=\Theta(\log n)$; this means we need $\delta$ to be shrinking as a function of $n$ to apply the lemma, but this turns out to be achievable.

\subsection{Snapshot algorithms}

The final piece of the puzzle is actually identifying how snapshots let us reason about the $\mdcut$ value of a graph. We make the following definition for algorithms which do exactly this:

\begin{definition}[Continuous snapshot algorithms]\label{def:snapshot-alg}
    A \emph{continuous snapshot algorithm} is defined by a threshold vector $\vect \in \BT^\ell$ and a function $\Obl : \M^\ell \to [0,1]$ such that:
    \begin{enumerate}
        \item \emph{Correctness:} For every weighted graph $\CG$, $\Obl(\SnapG) \leq \val_\CG$.
        \item \emph{Continuity:} $|\Obl(M)-\Obl(N)| \leq \|M-N\|_1$.
    \end{enumerate}

    We say $\Obl$ is an \emph{$\alpha$-approximation} if for every weighted graph $\CG$, $\alpha\, \val_\CG \leq \Obl(\SnapG)$.
\end{definition}

Our aim will be to implement any such algorithm as a streaming algorithm via the snapshot estimation machinery we described above. In particular, Feige and Jozeph~\cite{FJ15} showed that one such algorithm achieves a $0.483$-approximation:

\begin{lemma}[{Implied by Feige and Jozeph \cite{FJ15}}]\label{lem:fj}
There exists a constant $\alpha_\FJ \in (0.4835,0.4899)$, $\ell_\FJ \in \BN$, a vector of bias thresholds $\vect_\FJ \in \BT_{\pm 1}^{\ell_\FJ}$, and a vector of probabilities $\vecr_\FJ = (r_1,\ldots,r_\ell) \in [0,1]^{\ell_\FJ}$ such that the function $\Obl(M) := \sum_{i,j=1}^{\ell_\FJ} r_i (1-r_j) M(i,j)$ is a continuous snapshot algorithm achieving a $\alpha_\FJ$-approximation.
\end{lemma}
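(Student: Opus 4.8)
The plan is to realize the ``oblivious algorithm'' of Feige and Jozeph as a continuous snapshot algorithm in the sense of \cref{def:snapshot-alg}. Recall that an oblivious algorithm is specified by a function $f : [-1,1] \to [0,1]$, and on a weighted graph $\CG$ outputs the random cut $\vecx$ in which every nonisolated vertex $v$ is independently placed on the $1$-side with probability $f(\bias_\CG(v))$ (isolated vertices contribute nothing to any cut, so their labels are irrelevant). Feige and Jozeph exhibit such an $f$, which is piecewise constant and whose worst-case approximation ratio lies strictly above $0.4835$; its breakpoints furnish a threshold vector $\vect_\FJ = (t_0 = -1, \ldots, t_{\ell_\FJ} = 1) \in \BT^{\ell_\FJ}_{\pm 1}$, and I set $\vecr_\FJ = (r_1, \ldots, r_{\ell_\FJ})$ by $r_i \eqdef 1 - f_i$, where $f_i \in [0,1]$ is the constant value of $f$ on the $i$-th interval of $\vect_\FJ$. (Should one wish to start from an arbitrary measurable $f$, I would first replace it by a sufficiently fine step function: the per-edge cut probability $f(\bias_\CG(u))(1 - f(\bias_\CG(v)))$, and hence the oblivious value of every graph, is $O(1)$-Lipschitz in the values of $f$, and since $\val_\CG \geq 1/4$ always, this changes the worst-case ratio by an arbitrarily small amount.)

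With these choices I would verify the three requirements. First, letting $\vecx$ be the oblivious cut above, the labels $x_v$ are independent with $\Exp[x_v] = f(\bias_\CG(v))$, so by linearity of expectation and \cref{def:snapshot},
\[
\Exp_{\vecx}[\val_\CG(\vecx)] = \frac{1}{m_\CG} \sum_{u,v} \AdjG(u,v)\, \Exp[x_v]\,\Exp[1-x_u] = \sum_{i,j} f_j (1 - f_i)\, \SnapG(i,j) = \sum_{i,j} r_i (1 - r_j)\, \SnapG(i,j) = \Obl(\SnapG),
\]
where the last equality uses $r_i = 1 - f_i$ (so that $r_i(1-r_j) = (1-f_i)f_j$). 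Correctness (Item~1 of \cref{def:snapshot-alg}) follows immediately: $\Obl(\SnapG) = \Exp_{\vecx}[\val_\CG(\vecx)] \leq \max_{\vecx} \val_\CG(\vecx) = \val_\CG$. Continuity (Item~2) follows since $r_i(1-r_j) \in [0,1]$ for all $i,j$:
\[
|\Obl(M) - \Obl(N)| = \Bigl| \sum_{i,j} r_i(1-r_j)\bigl(M(i,j) - N(i,j)\bigr) \Bigr| \leq \sum_{i,j} |M(i,j) - N(i,j)| = \|M - N\|_1 .
\]

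It remains to certify the approximation ratio $\alpha_\FJ \cdot \val_\CG \leq \Obl(\SnapG)$ with $\alpha_\FJ \in (0.4835, 0.4899)$. By the identity $\Obl(\SnapG) = \Exp_{\vecx}[\val_\CG(\vecx)]$ just established, this is exactly the assertion that Feige and Jozeph's oblivious algorithm $\alpha_\FJ$-approximates $\mdcut$, which is their main result (the upper bound $\alpha_\FJ < 0.4899$ is not needed here and is quoted only for completeness). The single point I would double-check is that \cref{def:snapshot-alg} ranges over \emph{weighted} graphs whereas FJ state their analysis for ordinary graphs: their lower bound on the ratio is obtained edge-locally --- fixing an optimal assignment, every edge is assigned a ``type'' given by the biases and optimal labels of its two endpoints, and the ratio of the oblivious to the optimal contribution is bounded type-by-type via a finite LP/case analysis --- so both $\Exp_{\vecx}[\val_\CG(\vecx)]$ and $\val_\CG$ scale linearly in the edge weights and the bound survives verbatim. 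I expect no real obstacle in this lemma beyond this bit of bookkeeping (in particular, the complementation $r_i = 1 - f_i$ needed to make the bilinear form come out as stated) and the verbatim import of the constant $\alpha_\FJ$ from \cite{FJ15}; the substance is entirely theirs.
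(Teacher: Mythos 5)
Your proposal is correct and matches the paper's intended route: the paper gives no proof of \cref{lem:fj} at all (it imports the ratio from \cite{FJ15} and merely remarks that $\Obl(\SnapG)$ equals the expected value of the oblivious assignment), and your verification of correctness and continuity together with the black-box import of the constant is exactly the omitted bookkeeping, including the extension to weighted graphs via the edge-local nature of FJ's analysis. The only wrinkle is the direction convention: with the paper's formal definition $\val_\CG(\vecx)=\frac1{m_\CG}\sum_{u,v}x_v(1-x_u)\AdjG(u,v)$ your choice $r_i=1-f_i$ is the internally consistent one, whereas the paper's remark (vertex assigned $1$ with probability $r_i$) matches the introduction's opposite convention for which endpoint pattern cuts an edge; either way this is immaterial to the existence claim.
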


Though we will not need this fact, we remark that the estimate for $\val_\CG$ given in the lemma corresponds to a simple randomized assignment for $\CG$, namely the so-called ``oblivious'' assignment which assigns each nonisolated vertex $v \in V(\CG)$ to $1$ with probability $r_i$ and $0$ otherwise, where $i = \bi^\vect_\CG(v)$ is its bias class.

One final notion we will need regarding snapshot algorithms is the following. Suppose we have a (continuous) snapshot algorithm $\Obl_\orig$ with a threshold vector $\vect_\orig$, and we want to reduce the \emph{width} of $\vect_\orig$, i.e., the size of the largest interval, yielding a new (continuous) snapshot algorithm with a smaller-width threshold vector. (We will want to do this in order to apply \cref{lem:graph-smoothing}.) Consider greedily packing intervals of width $\leq \lambda$ within each interval in $\vect$ in some canonical way; the resulting partition, which we denote by $\RefinePart(\vect_\orig)$, has $\ell' \leq \ell + 1/\lambda$ intervals. There is a natural corresponding snapshot algorithm, which given an $\ell' \times \ell'$ snapshot collapses it to the corresponding $\ell \times \ell$ snapshot and runs $\Obl$, and remains an $\alpha$-approximation. We denote this algorithm by $\RefineAlg_\lambda(\vect_\orig,\Obl_\orig)$.


\section{The algorithm}
\label{sec:algo}

In this section, we present an algorithm that approximates the $\mdcut$ value of an arbitrary multigraph, thereby proving \cref{thm:main-thm} (in its full version \cref{thm:main-thm-formal} below). The algorithm itself has to deal with a number of technical issues, so we begin by outlining the algorithm and giving a number of pointers which hopefully make it easier to digest.

\subsection{Overview}\label{sec:algo-overview}

\paragraph{Informal recap.} At the highest level, we want to be able to sample random edges in a multigraph $\CG$ and estimate the biases of their endpoints. In particular, we fix some global partition $\vect \in \BT^\ell_{\pm 1}$ of the interval $[-1,+1]$ of possible biases into $\ell$ ``bias classes''. We want to estimate the snapshot $\SnapG$ of the input graph $\CG$, which is a matrix whose entries count the number of edges between each pair of bias classes. This can be used to simulate a ``continuous snapshot algorithm'' applied to $\CG$ (\cref{def:snapshot-alg}), and one such algorithm achieves a $0.483$-approximation (\cref{lem:fj}).

To do this, we fix some global partition $d_0 \leq \cdots \leq d_k$ of degrees in $\CG$ dividing the graph's vertices into ``layers'', where layer $a$ is the vertices of degree between $d_{a-1}$ and $d_a$, and aim to estimate the ``refined snapshot'' $\RSnapG$ (\cref{def:refined-snapshot}), which counts the number of edges whose endpoints are in particular bias classes and layers. Towards this, we hope to obtain representative samples $U_1,\ldots,U_k$ of vertices, $U_a$ containing vertices in layer (roughly) $a$, such that (1) we (roughly) know the bias classes of every sampled vertex and (2) we also have representative samples of edges between the layers.

More precisely, when we say ``roughly'' we mean ``up to $w\pm 1$'' allowing us to apply the ``smoothing'' techniques developed in \cref{sec:reduction}. Concretely, this means that we target producing a ``pointwise smoothed estimate'' of the refined snapshot in the sense of \cref{def:pointwise}; the claim that we can achieve this is \cref{lem:bias-arr-correct} below.

\paragraph{Outline of the algorithm.} The outline of our scheme is as follows. For each ``layer'' $a \in \{1,\ldots,k\}$, we fix some probabilities $q_a,p_a \in [0,1]$. We subsample the edges of $\CG$ with probability $q_a$ and then subsample positive-degree vertices in $\CG_a$ with probability $p_a$ to get a subset of vertices, which we denoted $\vStored_a$ in the algorithm. We hope to store the $\CG_a$-neighborhoods of these vertices to get a subset of edges denoted $\eStored_a$. Ideally, we can use $\eStored_a$ to estimate the degrees and biases of vertices in $\vStored_a$, and further, if we see an edge stored in $\eStored_a$ between vertices in $\vStored_a$ and $\vStored_b$, we can count it in the appropriate entry of the refined snapshot. However, note that $\eStored_a$ cannot necessarily store all neighbors for every vertex in $\vStored_a$; in particular, while our edge subsampling ensures that $\vStored_a$ is unlikely to contain vertices of $\CG$-degree much less than $d_a$, it may still contain vertices of $\CG$-degree much \emph{greater} than $d_a$, in which case we can not hope to store all its neighbors.

We write the algorithm in a particularly ``well-factored'' form called a ``sketching algorithm''; while we omit the full definition of such an algorithm, we remark that it essentially means that the algorithm behaves ``independently'' on each edge arriving in the stream. In particular, we can construct a ``sketch'' corresponding to each individual edge $e$, which consists of a pair $(\vStored_a,\eStored_a)$ for each layer $a$, where $\vStored_a$ contains either endpoint with probability $p_a$ independently and $\eStored_a$ with probability $q_a$. Then, we can ``compose'' the sketches for the entire stream together by taking the unions of the sets in these sketches, up to appropriate cutoffs.

It is useful to recall that the subsampling probabilities $q_a$ (for edges) and $p_a$ (for vertices) decrease and increase, respectively, as a function of the layer number $a$.

\paragraph{Some pointers.} It is very helpful to remember that there are two distinct sources of randomness in the algorithm: We ``sparsify'' the graph by subsampling edges, and then subsample which vertices we actually store. In the analysis we will first analyze this edge-subsampling, and then conditioned on certain ``good outcomes'' for the edge-subsampling we will analyze the vertex-subsampling. Correspondingly, it is useful to keep in mind the graph $\CG_a$ consisting of all the edges sampled in layer $a$ (each is sampled w.p. $q_a$) and the set $\CN_a$ of positive-degree vertices in $\CG_a$. (However, the actual streaming algorithm is not necessarily be able to store these sets as they can grow too large; it only stores subsets $\eStored_a$ and $\vStored_a$, respectively.)

When we actually see an edge $e = (u,v)$ in $\CG_a$, how do we know what to do with it (i.e., which entry of the matrix should it contribute to)? Note that $\Exp[\deg_{\CG_a}(v)] = q_a \deg_\CG(v)$. Thus, we can hope to use $q_a^{-1} \deg_{\CG_a}(v)$ (which we call $v$'s ``apparent degree'') as an estimate for $\deg_G(v)$. Roughly, this should work out if $\deg_\CG(v)$ is decently large; for instance, we can use the Chernoff bound to show that w.h.p. $\deg_\CG(v)/2 \leq q_a^{-1} \deg_{\CG_a}(v) \leq 2 \deg_\CG(v)$, so the apparent degree moves by at most $1$ interval relative to the actual degree. The same sort of analysis is necessary to analyze the bias of a vertex and ultimately prove that we get a so-called ``pointwise smoothed estimate''.

Finally, one remaining technical issue stems from the fact that when we want to subsample a set of nonisolated vertices in the subsampled graph $\CG_a$, but we do not know the nonisolated vertices ahead of time. In particular, each time we see a \emph{new} nonisolated vertex we want to toss a $p$-biased coin --- but if we decide \emph{not} to store a vertex, we need to ``remember'' this decision if we happen to see it again. This would be manageable if the algorithm had random access to the results of $n$ biased coin flips, but this model would be somewhat nonstandard. Instead, as in \cite{SSSV23-random-ordering}, we observe that when proving concentration it is sufficient to have four-wise independence in the vertex-subsampling procedure, and thus, we decide whether to store a vertex by plugging it into a previously sampled four-wise independent hash function (see \cref{lem:hash-functions}).

\subsection{Describing the algorithm}

The goal of this section is to prove \cref{thm:main-thm}. We begin by presenting an algorithm (\cref{alg:wrappedsketch} below) for estimating the $\mdcut$ value of a stream of edges corresponding to a graph $\CG$ given an estimate for the number of edges in $\CG$. This algorithm first produces the sketch for the stream containing the sampled vertices and edges (via a ``sketch'' subroutine, \cref{alg:sketch}, and a ``compose'' subroutine, \cref{alg:combine}), and then feeds this sketch into another subroutine, \cref{alg:output}, which estimates degrees and biases among the sampled vertices and counts sampled edges, creates an estimate for the refined snapshot of the graph, and uses this to approximate the graph's $\mdcut$ value. The key correctness lemma, \cref{lem:bias-arr-correct} below, states that this estimate for the refined snapshot is a ``pointwise smoothed estimate'', allowing us to apply the machinery from \cref{sec:reduction}.

We begin with several tables containing definitions of parameters to be used in the algorithms.

\begin{table}[ht]
\begin{center}
\begin{tabularx}{\linewidth}{| l | X | X |} 
 \hline
 \textbf{Notation} & \textbf{Value} & \textbf{Description} \\ \hline\hline
 $w \in \BN$ & $1/\epsilon$ & Size of windows for smoothing \\ \hline
 $\lambda > 0$ & $\epsilon/w$ & Maximum width of intervals in the refinement $\vect$ of $\vect_\orig$ (see next line) \\ \hline
 $\vect \in \BT^\ell$ & $\RefinePart_\lambda(\vect_\orig)$ & Refinement of $\vect_\orig$ into $\ell$ intervals of width at most $\lambda$ \\ \hline
 $\ell \in \BN$ & $\leq \ell_\orig +1/\lambda$ & Number of intervals in $\vect$ \\ \hline
 $\Obl$ & $\RefineAlg_\lambda(\vect_\orig,\Obl_\orig)$ & Corresponding refinement of $\Obl_\orig$ \\ \hline
\end{tabularx}
\caption{Global parameters determined by $\epsilon$ alone.}
\label{table:eps}
\end{center}
\end{table}

\begin{table}[ht]
\begin{center}
\begin{tabularx}{\linewidth}{| l | X | X |}
 \hline
 \textbf{Notation} & \textbf{Value} & \textbf{Description} \\ \hline\hline
 $\mmin \in \BN$ & $\sqrt{n}$ & Minimum number of edges handled \\ \hline
 $\mmax \in \BN$ & $\Csparse n/(\epsilon)^2$ where $\Csparse$ is as in \cref{lem:cut-sparsifier} & Maximum number of edges handled  (see \cref{lem:cut-sparsifier}) \\ \hline
 $k^* \in \BN$ & $6 \log \log n$ & Number of degree intervals before we begin subsampling edges \\ \hline
 $D \in \BN$ & $2^{k^*+w+2}$ & W.h.p. bound on max-degree of ``counted vertices'' in subsampled graphs (parameter for space bound), note that $D = O_\epsilon(\log^6 n)$ \\ \hline
 $\eCutoff$ & $\log^7 n$ & Maximum number of stored neighbors per vertex \\ \hline
\end{tabularx}
\caption{Global parameters determined by $\epsilon$ and $n$.}
\label{table:epsn}
\end{center}
\end{table}

\begin{table}[ht]
\begin{center}
\begin{tabularx}{\linewidth}{| l | X | X |}
 \hline
 \textbf{Notation} & \textbf{Value} & \textbf{Description} \\ \hline\hline
 $k \in \BN$ & $\log (2\hat{m})$ & Number of degree intervals (we will have $\hat{m} \leq \mmax$ and thus, $k = O_\epsilon(\log n))$ \\ \hline
 $\rho > 0$ & $1000 \sqrt{D} \cdot (k\ell)^3 /\epsilon$ & Factor controlling space usage, note that $\rho = O_\epsilon(\log^6 n)$ (assuming $\hat{m} \leq \mmax$) \\ \hline
 $p_0 > 0$ & $\rho/\sqrt{\hat{m}}$ & Factor in vertex-subsampling probability \\ \hline
 $\vCutoff$ & $10 \rho \sqrt{2\hat{m}}$ & Maximum number of stored vertices per layer, note that $\vCutoff = O_\epsilon(\sqrt n \log^6 n)$ (assuming $\hat{m} \leq \mmax$) \\ \hline
\end{tabularx}
\caption{Global parameters determined by $\epsilon$, $n$, and $\hat{m}$.}
\label{table:epsnm}
\end{center}
\end{table}

\begin{table}
\begin{center}
\begin{tabularx}{\linewidth}{| l | l | X |} 
 \hline
 \textbf{Notation} & \textbf{Value} & \textbf{Description} \\ \hline\hline
 $d_a \in \BN$ & $2^a$ & Degree partition. We also define $\vecd = (d_0, \dots, d_k)$ where $d_0 = 1$. \\ \hline
 $q_a \in [0,1]$ & $\min\{2^{k^*-a},1\}$ & Edge-sampling probability \\ \hline
 $p_a \in [0,1]$ & $\min\{p_0 q_a^{-1}, 1\}$ & Vertex-sampling probability \\ \hline
\end{tabularx}
\caption{``Per-layer'' parameters defined for all $a \in [k]$ and determined by $\epsilon$, $n$, and $\hat{m}$.}
\label{table:layer}
\end{center}
\end{table}

\begin{algorithm}[H]
\caption{Our algorithm.}
 
\label{alg:wrappedsketch}
\begin{algorithmic}[1]

\Input A multigraph $\CG$ and an estimate $\hat{m}$ for the number of edges.

\State For all $a \in [k]$, sample a hash function $\pi_a : [n] \to [1/p_a]$ from $\Hash_4(n,1/p_a)$ (see \cref{lem:hash-functions}).\label{line:hash}

\State Initialize an empty sketch $(m,(\vStored_1,\eStored_1),\ldots,(\vStored_k,\eStored_k))$.

\For{each edge $(u,v)$ in the stream}

\State Use \cref{alg:combine} to combine the current sketch with the sketch of $(u,v)$ according to \cref{alg:sketch}, and store the result in the current sketch.

\EndFor

\State Run \cref{alg:output} on the final sketch to obtain the output.
 
\alglinenoNew{algcommon}
\alglinenoPush{algcommon}
\end{algorithmic}
\end{algorithm}

\begin{algorithm}[H]
\caption{Sketch for an input edge.}
\label{alg:sketch}
\begin{algorithmic}[1]

\alglinenoPop{algcommon}

\Input An edge $(u,v)$, an estimate $\hat{m}$ of the total number of edges, and hash functions $\{\pi_a : [n] \to [1/p_a]\}$.

\State Set $m \gets 1$.

\For{$a = 1, \dots, k$}
 
\State Toss a biased coin which is $1$ with probability $q_a$, and let $z$ denote its output. \label{line:forget-edge}
 
\If{$z=1$}
 
\State $\vStored_a = \set*{ v' \mid v' \in \set*{ u, v } \wedge \pi_a(v') = 1 }$. \label{line:vstored}

\State If $\vStored_a \neq \emptyset$, set $\eStored_a \gets \{ (u,v) \}$. Otherwise, set $\eStored_a \gets \emptyset$. \label{line:ecutoff}

\Else

\State $\vStored_a, \eStored_a \gets \emptyset$.

\EndIf
\EndFor

\State Output the sketch $\paren*{ m, \paren*{ \eStored_1,\vStored_1 } , \dots, \paren*{ \eStored_k,\vStored_k } }$.

\alglinenoPush{algcommon}
\end{algorithmic}
\end{algorithm}

\begin{algorithm}[H]
\caption{Combining two sketches.}

\label{alg:combine}
\begin{algorithmic}[1]

\alglinenoPop{algcommon}

\Input Two sketches $\paren{ m^{(1)}, \paren{ \eStored^{(1)}_1,\vStored^{(1)}_1 } , \dots, \paren{ \eStored^{(1)}_k,\vStored^{(1)}_k } }$ and $\paren{ m^{(2)}, \paren{ \eStored^{(2)}_1,\vStored^{(2)}_1 } , \dots, \paren{ \eStored^{(2)}_k,\vStored^{(2)}_k } }$.

\State Set $m \gets m^{(1)} + m^{(2)}$.

\For{$a = 1, \dots, k$}

\parState{%
Set $\vStored_a \gets \bot$ if either $\vStored^{(1)}_a$ or $\vStored^{(2)}_a$ is $\bot$, or if $\card{ \vStored^{(1)}_a \cup \vStored^{(2)}_a } > \vCutoff$. Otherwise, set $\vStored_a \gets \vStored^{(1)}_a \cup \vStored^{(2)}_a$. \label{line:vcombine}
}
 
\parState{%
If $\vStored_a = \bot$, then set $\eStored_a \gets \bot$. Else, set $\eStored_a \gets \eStored^{(1)}_a \cup \eStored^{(2)}_a$ (multiset union). Iteratively remove all edges $(u,v)$ from $\eStored_a$ for which we have for all $v' \in \set*{ u, v }$ that either $v' \notin \vStored_a$ or $\deg_{\eStored_a}(v') > \eCutoff$. \label{line:ecombine}
}
 
\EndFor

\State Output the sketch $\paren*{ m, \paren*{ \eStored_1,\vStored_1 } , \dots, \paren*{ \eStored_k,\vStored_k } }$.

\alglinenoPush{algcommon}
\end{algorithmic}
\end{algorithm}

\begin{algorithm}[H]
\caption{Computing the output from a sketch.}

\label{alg:output}
\begin{algorithmic}[1]

 \alglinenoPop{algcommon}

 \Input A sketch $\paren*{ m, \paren*{ \eStored_1,\vStored_1 } , \dots, \paren*{ \eStored_k,\vStored_k } }$.

 \State If $\exists a \in [k]: \eStored_a = \bot$, then return $\bot$. 

 \State For all $a \in [k]$ and $v \in \vStored_a$, define: \label{line:destbest}
 \[
 \dEst_a(v) = \min\{q_a^{-1} \cdot \deg_{\eStored_a}(v), d_k\} \hspace{0.5cm} \text{and} \hspace{0.5cm} \bEst_a(v) = \bias_{\eStored_a}(v) .
 \]
 \State For all $a \in [k]$ and $i \in [\ell]$, define: \label{line:vest}
 \begin{multline*}
 \vEst_{a,i} = \left\{ v \in \vStored_a \mid \deg_{\eStored_a}(v) < \eCutoff \right. \\
 \left. {} \wedge \Ind^\vecd(\dEst_a(v)) \in \Win^{w,k}(a) \wedge \Ind^\vect(\bEst_a(v)) \in \Win^{w,\ell}(i) \right\} .
 \end{multline*}
 \State For all $a, b \in [k]$ and $i, j \in [\ell]$, define:
 \[
 \AEst_{a,b,i,j} = \sum_{ (u,v) \in \eEst_{a,b,i,j} } \nEst^{w,k,\ell}_{a,b}(u,v) ,
 \]
 where:
 \[
 \eEst_{a,b,i,j} = \eStored_{\min\{a,b\}} \cap (\vEst_{a,i} \times \vEst_{b,j}) ,
 \] 
 and
 \[
 \nEst^{w,k,l}_{a,b}(u,v) = \nu^{\sim w,k,\ell}\paren*{ \Ind^\vecd(\dEst_a(u)),\Ind^\vecd(\dEst_b(v)),\Ind^\vect(\bEst_a(u)),\Ind^\vect(\bEst_b(v)) } .
 \] \label{line:aest}
 
\State Define the array $\hat{A} \in \AN^{k,\ell}$ as: 
\begin{equation}
\label{eq:hata} 
\hat{A}(a,b,i,j) = \frac{\AEst_{a,b,i,j}}{m q_{\min\{a,b\}} p_ap_b}. 
\end{equation}

\State Define $\hat{M} \gets \Proj(\hat{A}) \in \M^\ell$ and output $\hat{v} \gets \Obl(\hat{M}) - \frac{\epsilon}4$.

\alglinenoPush{algcommon}
\end{algorithmic}
\end{algorithm}

\subsection{Analyzing the algorithm}

In this subsection, we prove our main result (\cref{thm:main-thm}) which asserts that we can achieve a $\tilde{O}(\sqrt n)$-space $0.483$-approximation algorithm. The essence of this algorithm is \cref{alg:wrappedsketch}; unfortunately we cannot quite use the latter directly,  because it requires an estimate for the number of edges in $\CG$ which we do not have \emph{a priori}, but this can be fixed with some standard tricks.

In fact, we prove the following more detailed theorem:

\begin{theorem}
\label{thm:main-thm-formal}
For every fixed continuous snapshot algorithm $(\vect_\orig,\Obl_\orig)$ achieving a ratio of $\alpha$, the following holds. Let $\epsilon > 0$. There is a sketching algorithm which, given the edges (in adversarial order) of a multigraph $\CG$ on $n$ vertices and $m$ edges, uses $2^{O(1/\epsilon)} \cdot \sqrt{n} \cdot \log^{O(1)} (n+m)$ space and, with probability at least $9/10$, outputs a value $\hat{v}$ satisfying $(\alpha- \epsilon) \val_\CG \leq \hat{v} \leq \val_\CG$.
\end{theorem}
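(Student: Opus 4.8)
The plan is to run \cref{alg:wrappedsketch} on (a sparsified version of) the input stream and then chain together the three reduction lemmas of \cref{sec:reduction} with the continuity and correctness properties of the snapshot algorithm. First I would fix the parameters: given the continuous snapshot algorithm $(\vect_\orig,\Obl_\orig)$ of ratio $\alpha$ and target error $\epsilon$, I would instantiate \cref{table:eps,table:epsn,table:epsnm,table:layer} with a parameter $\epsilon' = \Theta(\epsilon)$ taken small enough (small universal constant) that the total accumulated error $2\Csmooth\lambda w + \delta(k\ell)^2 + \Cwin/w$ arising below is at most $\epsilon/4$. Concretely this just pins down the constants hidden in the window size $w = \Theta(1/\epsilon)$ and interval width $\lambda = \Theta(\epsilon/w)$, so that the refinement $\vect = \RefinePart_\lambda(\vect_\orig)$ is $\lambda$-wide with $\ell = O(1/\epsilon^2)$ intervals (the implicit constant depending on the fixed algorithm) and $\Obl = \RefineAlg_\lambda(\vect_\orig,\Obl_\orig)$ is again a continuous snapshot algorithm of ratio $\alpha$.

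Assuming momentarily that the edge count $m$ lies in $[\mmin,\mmax]$ and that we hold an estimate $\hat m$ with $m/2 \le \hat m \le 2m$, the core argument runs as follows. By \cref{lem:bias-arr-correct}, invoked with pointwise-error parameter $\delta := \Theta(\epsilon/(k\ell)^2)$ (which is in the regime that lemma is stated for), with probability $1 - 1/\poly(n)$ the array $\hat A$ computed in \cref{alg:output} is a $(w,\delta)$-pointwise smoothed estimate of $A := \RSnapG$ and the algorithm does not output $\bot$. Then \cref{lem:pointwise-impl-smooth} upgrades this to a $(w,\epsilon_1)$-smoothed estimate $\hat M := \Proj(\hat A)$ of $M := \Proj(A)$ with $\epsilon_1 = \delta(k\ell)^2 + \Cwin/w$, and \cref{fact:bias-deg-proj} identifies $M = \SnapG$, so $\|\hat M - \SnapG^{\sim w}\|_1 \le \epsilon_1$. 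Next \cref{lem:graph-smoothing} produces a weighted graph $\CH$ with $|\val_\CG - \val_{\CH}| \le \Csmooth\lambda w$ and $\|\SnapH - \SnapG^{\sim w}\|_1 \le \Csmooth\lambda w$, so the triangle inequality gives $\|\hat M - \SnapH\|_1 \le \epsilon_1 + \Csmooth\lambda w$. Finally I would apply the two defining properties of $\Obl$: continuity bounds $|\Obl(\hat M) - \Obl(\SnapH)|$ by $\|\hat M - \SnapH\|_1$; correctness gives $\Obl(\SnapH) \le \val_{\CH} \le \val_\CG + \Csmooth\lambda w$; and the $\alpha$-approximation property gives $\Obl(\SnapH) \ge \alpha\val_{\CH} \ge \alpha(\val_\CG - \Csmooth\lambda w)$. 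Combining, the output $\hat v = \Obl(\hat M) - \epsilon/4$ satisfies $\hat v \le \val_\CG + 2\Csmooth\lambda w + \epsilon_1 - \epsilon/4 \le \val_\CG$ (this is exactly why the subtracted $\epsilon/4$ and the small choice of $\epsilon'$ are needed) and $\hat v \ge \alpha\val_\CG - \alpha\Csmooth\lambda w - \epsilon_1 - \epsilon/4 \ge \alpha\val_\CG - \epsilon \ge (\alpha-\epsilon)\val_\CG$, using $\val_\CG \le 1$; one may clamp $\hat v$ into $[\tfrac14,1]$ harmlessly.

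Next I would discharge the assumption that $m \in [\mmin,\mmax]$ and that a good $\hat m$ is available, via a standard parallel wrapper (itself implementable as a sketch). Maintain an exact edge-counter; run one branch that stores the whole multigraph, abandoned once it exceeds $\mmin = \sqrt n$ edges — if $m \le \mmin$ this branch outputs $\Obl(\SnapG) \in [\alpha\val_\CG,\val_\CG]$ exactly in $O(\sqrt n\log n)$ space. For larger $m$, for each dyadic rate $2^{-r}$ with $0 \le r = O(\log n)$ feed the substream that keeps each edge independently with probability $2^{-r}$ into a family of copies of \cref{alg:wrappedsketch}, one per dyadic guess $\hat m \in [\mmin, 2\mmax]$; at the end, knowing $m$ exactly, select $r$ with $2^{-r}m \in [\mmax,2\mmax]$ (or $r = 0$ if $m \le \mmax$) and the matching $\hat m$. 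By \cref{lem:cut-sparsifier} applied with $\epssparse = \epsilon$ (whose hypothesis $\pspar \ge \Csparse n/(\epssparse^2 m)$ is exactly the condition $2^{-r}m \ge \mmax$), the selected substream has $\mdcut$ value within $\epsilon$ of $\val_\CG$ and edge count within a constant factor of $2^{-r}m \in [\mmin,\mmax]$, so the core argument applies to it, at the cost of one more additive $\epsilon$ folded back in by a further constant rescaling. For the space bound I would tally: each copy of \cref{alg:wrappedsketch} keeps $k = O_\epsilon(\log n)$ pairs $(\vStored_a,\eStored_a)$, each with at most $\vCutoff = 2^{O(1/\epsilon)}\sqrt n\log^{O(1)}n$ vertex identifiers and $\vCutoff\cdot\eCutoff$ edge identifiers, plus $k$ four-wise-independent hash functions of $O_\epsilon(\log n)$ bits (\cref{lem:hash-functions}); there are $O(\log^2(n+m))$ copies; each identifier costs $O(\log(n+m))$ bits; the product is $2^{O(1/\epsilon)}\sqrt n\log^{O(1)}(n+m)$. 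For the failure probability, \cref{lem:bias-arr-correct} fails with probability $1/\poly(n)$ and \cref{lem:cut-sparsifier} with probability $1/100$, and a union bound over all $O(\log^2(n+m))$ copies keeps success probability above $9/10$.

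The genuinely hard ingredient is \cref{lem:bias-arr-correct} itself — that the intertwined edge- and vertex-subsampling of \cref{alg:sketch,alg:combine,alg:output} really yields a pointwise smoothed estimate of the refined snapshot — which is deferred to \cref{sec:bias-arr-correct}. Within this top-level argument the only real care needed is in juggling the interlocking parameters $w,\lambda,\ell,k,\delta$ so that every error term accumulated along the chain stays below $\epsilon/4$, and in setting up the wrapper that copes with the a priori unknown number of edges; everything else is a routine chaining of the stated lemmas.
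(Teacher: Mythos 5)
Your proposal follows essentially the same route as the paper's own proof: run \cref{alg:wrappedsketch} in parallel over geometrically-spaced guesses $\hat m$ (with sparsification via \cref{lem:cut-sparsifier} to land in $[\mmin,\mmax]$), select the correct guess using an exact edge count, and chain \cref{lem:bias-arr-correct}, \cref{lem:pointwise-impl-smooth}, and \cref{lem:graph-smoothing} with the continuity and correctness of $\Obl$, finishing with the same space tally. One small accounting slip: \cref{lem:bias-arr-correct} only guarantees success probability $99/100$ (not $1-1/\poly(n)$), so a union bound over all parallel copies is not available, but none is needed since, as in the paper, only the single selected copy must succeed, and $99/100$ there together with the sparsifier's $99/100$ already gives the claimed $9/10$.
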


For correctness of the algorithm, we will need the following lemma, whose proof we defer to later:

\begin{lemma}
\label{lem:bias-arr-correct}
Let $\epsilon > 0$, $n,m,\hat{m} \in \BN$, and $\CG$ be a multigraph on $n$ vertices and $\mmin \leq m \leq \mmax$ edges and such that $m/2 \leq \hat{m} \leq 2m$. Let $A= \RSnapG$ and $\hat{A}$ be as computed by \cref{eq:hata} in the call to \cref{alg:output} in \cref{alg:wrappedsketch}. Letting $w,k,\ell$ be defined as in \cref{table:eps,table:epsnm}, we have with probability at least $99/100$ that $\hat{A}$ is a $(w,\epsilon/(k\ell))^2$-pointwise smoothed estimate of $A$.
\end{lemma}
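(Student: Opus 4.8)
The plan is to fix an arbitrary target entry $(a,b,i,j) \in [k]^2 \times [\ell]^2$, say with $a \leq b$ WLOG (the algorithm counts each edge $(u,v)$ in layer $\min\{a,b\}$, so the two cases are symmetric up to relabeling), and show that $\hat{A}(a,b,i,j)$ lies in the interval $[A^{-w}(a,b,i,j) - \delta, A^{+w}(a,b,i,j) + \delta]$ with $\delta = (\epsilon/(k\ell))^2$ except with probability $o(1/(k\ell)^2)$; a union bound over the $(k\ell)^2$ entries then gives the claim with probability $99/100$ (using that $k = O_\epsilon(\log n)$, $\ell = O_\epsilon(1)$). I would condition on the randomness in two stages, exactly as the overview suggests: first the edge-subsampling (which edges land in each $\CG_a$), and then, conditioned on a ``good'' edge-subsampling event, the vertex-subsampling via the hash functions $\pi_a$. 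The good event for edge-subsampling is that for every vertex $v$ and every layer $a$ with $q_a \deg_\CG(v) = \Omega(\log n)$ (which holds precisely when $v$'s degree class is within $w$ of $a$, by the choice $q_a = \min\{2^{k^*-a},1\}$, $d_a = 2^a$, $k^* = 6\log\log n$), the ``apparent degree'' $q_a^{-1}\deg_{\CG_a}(v)$ and the ``apparent bias'' $\bias_{\CG_a}(v)$ are within a $(1\pm o(1))$ factor / $o(1)$ additive error of the true values, so that $\Ind^\vecd(\dEst_a(v))$ and $\Ind^\vect(\bEst_a(v))$ differ from $\di^\vecd_\CG(v)$ and $\bi^\vect_\CG(v)$ by at most $1$. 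This is a Chernoff bound (\cref{lem:chernoff-twosided}) per vertex-layer pair, and here the constant $C = \Theta(\log n)$ hidden in $D$ and the $\eCutoff = \log^7 n$ threshold are what make the error probability small enough to union-bound over all $n$ vertices and all $k$ layers; I also need that under this good event no counted vertex exceeds $\eCutoff$ or $\vCutoff$, so no $\bot$ is ever produced and no edge is spuriously discarded in \cref{line:ecombine,line:vcombine}.

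Next, conditioned on this good edge-subsampling outcome, I analyze $\AEst_{a,b,i,j} = \sum_{(u,v) \in \eEst_{a,b,i,j}} \nEst^{w,k,\ell}_{a,b}(u,v)$ over the vertex-subsampling randomness. The key point is that $\eEst_{a,b,i,j} = \eStored_a \cap (\vEst_{a,i} \times \vEst_{b,j})$, and $\vEst_{a,i}$ consists of stored vertices whose \emph{estimated} degree/bias classes lie in the $w$-windows around $(a,i)$; since estimates are off by at most $1$, the ``true'' degree/bias class of such a vertex is in the $(w+1)$-window, and every vertex with true classes in the $(w-1)$-window is definitely included. Thus, summing over edges $(u,v) \in \CG_a$ with $u \in \vStored_a$, $v \in \vStored_b$, each counted edge contributes a normalization factor $\nu^{\sim w,k,\ell}$ evaluated at the estimated classes of its endpoints, which by \cref{def:normalize-ul-bounds} lies between $\nu^{-w}$ and $\nu^{+w}$ evaluated at the true classes. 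Taking expectations over the (four-wise independent, hence for first-moment purposes fully independent) inclusion events — $u \in \vStored_a$ w.p. $p_a$, $v \in \vStored_b$ w.p. $p_b$, and the edge itself is in $\CG_a$ w.p. $q_a$ — I get that $\Exp[\hat{A}(a,b,i,j)] = \Exp[\AEst_{a,b,i,j}]/(mq_a p_a p_b)$ is sandwiched between $A^{-w}(a,b,i,j)$ and $A^{+w}(a,b,i,j)$ (modulo the negligible slack from the good event, which I fold into a constant-factor adjustment of $\delta$). This is where \cref{def:arr-ul-bounds} was engineered precisely to make the sandwich work out.

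The main obstacle, and the bulk of the work, is the \emph{concentration} of $\AEst_{a,b,i,j}$ around its mean, because the summands are not independent: two edges sharing an endpoint in $\vStored_a$ or $\vStored_b$ are positively correlated. Here I would split into the two regimes from \cref{sec:overview:sampling-edges}. When $d_b = \Omega(\sqrt n)$ (equivalently $p_b = 1$), conditioning on $v \in \vStored_b$ is vacuous, so the only dependence is through shared layer-$a$ endpoints $u$; but counted vertices in $\vStored_a$ have $\CG_a$-degree at most $\eCutoff = \polylog(n)$, so each summand is pairwise-independent of all but $O(\eCutoff)$ others, and the number of counted edges is $\Omega(\sqrt n \cdot p_0) = \Omega(\rho)$, which I verify is large enough that \cref{lem:chebyshev-indep} (or a Bernstein/weighted-Chernoff bound, \cref{lem:weighted-chernoff}, accounting for the $\nu$-weights which are all $\leq 1$) gives relative error $o(\epsilon/(k\ell)^2)$. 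When $d_b = o(\sqrt n)$, a layer-$b$ vertex can have large $\CG_a$-degree (up to $\sim d_b/d_a$), creating more dependencies; the saving grace, as the overview notes, is that $p_a$ and the ``remember all incident edges'' feature (\cref{line:ecombine} stores up to $\eCutoff$ neighbors per stored vertex) make the \emph{expected count} of stored cross-edges correspondingly larger (roughly by the same $d_b/d_a$ factor), so the relative deviation is still controlled — I would set up \cref{lem:chebyshev-indep} with $D = O(d_b/d_a \cdot \eCutoff)$ and $\Exp[\AEst] = \Omega(\rho \cdot d_b/d_a)$ and check that $D \cdot \Exp[\AEst]/\eta^2 = o(1/(k\ell)^2)$ for $\eta = \delta' \Exp[\AEst]$ with $\delta' = \Theta(\epsilon/(k\ell)^2)$; this is exactly where the definitions $\rho = 1000\sqrt D (k\ell)^3/\epsilon$ and $\delta = (\epsilon/(k\ell))^2$ are calibrated. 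A final bookkeeping step combines the multiplicative error on $\AEst$ with the fact that $A$ and the $A^{\pm w}$ arrays have entries bounded by $1$ (\cref{lem:a-bound}) to convert everything into the additive $\pm\delta$ form demanded by \cref{def:pointwise}, and absorbs the edge-subsampling good-event slack. The trickiest bit to get exactly right will be tracking the interaction between the $\pm 1$ estimation errors in the \emph{window membership} tests (which shift $w \mapsto w\pm 1$) and the $\pm 1$ errors in the \emph{normalization factor} $\nEst$ (which is why \cref{def:normalize-ul-bounds} takes a min/max over a radius-$1$ window), making sure these compose into precisely the $A^{-w} \le \cdot \le A^{+w}$ envelope rather than something weaker.
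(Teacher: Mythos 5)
Your overall architecture is the same as the paper's: condition first on the edge-subsampling and then on the hash-based vertex-subsampling, use per-vertex Chernoff bounds to show the apparent degree and bias classes are within $1$ of the true ones (so window membership is trapped between the $(w-1)$- and $(w+1)$-windows and the estimated normalization factor is trapped between the min/max factors of \cref{def:normalize-ul-bounds}), and handle the hash stage by Chebyshev with limited independence (\cref{lem:chebyshev-indep}) in exactly the paper's two regimes $p_b=1$ versus $p_b<1$, with dependency-neighborhood sizes governed by $\CG_a$-degrees of order $\eCutoff$ and $D\cdot d_b/d_a$ respectively.

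There is, however, a genuine gap in how the two stages are glued together. Your expectation computation treats the event that an edge survives into $\CG_a$ as an independent Bernoulli factor of probability $q_a$, yet your concentration step is only valid \emph{conditionally} on the edge-subsampling outcome: unconditionally, the summand of an edge is correlated with those of all edges sharing an endpoint in $\CG$ (up to $\Theta(m)$ of them), and the bounds $\eCutoff$ and $d_b/d_a\cdot\eCutoff$ on the dependency neighborhood are $\CG_a$-degree bounds that hold only after conditioning. So what your Chebyshev step actually gives is concentration of $\AEst_{a,b,i,j}$ around $p_ap_b$ times a weighted count of the \emph{already-subsampled} edges, and nothing in the proposal controls the fluctuation of that count over the edge-subsampling randomness itself. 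The paper closes this with a separate device: it defines, for each entry, weighted counts $Y^{\pm}$ over deterministic edge sets determined by the \emph{true} classes (the $(w\mp1)$-windows, with weights $\nu^{\mp w,k,\ell}$ at the true classes), shows that under the good events $\AEst_{a,b,i,j}$ is deterministically sandwiched between the corresponding hash-stage quantities, proves the edge-stage concentration of $Y^{\pm}$ around $q_{\min\{a,b\}}m\,A^{\mp w}(a,b,i,j)$ by the weighted Chernoff bound (\cref{lem:Y-sandwich}), and only then applies the limited-independence Chebyshev bound (\cref{lem:X-sandwich}) to these clean sandwich variables rather than to $\AEst$ itself, whose summands have random weights and random window membership. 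Without this intermediate step (or an equivalent joint argument), your claimed sandwich on $\Exp[\hat{A}(a,b,i,j)]$ conflates conditional and unconditional expectations and the chain of bounds does not close; the fix is exactly the paper's $Y^{\pm}$ lemma and is routine, but it is a needed ingredient, not bookkeeping. A smaller point: the bound $|\vStored_a|\le\vCutoff$ is a property of the hash randomness (the paper's separate overflow event), so it cannot be folded into the edge-subsampling good event as you state.
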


Returning to the main claim:

\begin{proof}
We assume for simplicity that we know an \emph{a priori} bound $m < n^C$. The algorithm we use will run the following sketches in parallel:

\begin{itemize}
    \item A sketch to count the number of edges $m$ in the input graph.
    \item A ``buffer'' sketch to store up to $\mmin$ edges from the input graph (and $\bot$ if there are more).
    \item For $t \in \set*{ \log_{1.9} \mmin, \dots, \log_{1.9} (n^C) }$, set $\hat{m}(t) \gets 1.9^t$, $\hmsparse(t) \gets \min\paren*{ \mmax, \hat{m}(t) }$, and $\pspar(t) \gets \hmsparse(t) / \hat{m}(t)$. Obtain a graph stream $\hat{\CG}(t)$ by including each edge of $\hat{\CG}$ with probability $\pspar(t)$ independently and run \cref{alg:wrappedsketch} with $\hat{\CG}(t)$ and $\hmsparse(t)$ to get an output.
\end{itemize}

After the stream, if $m \leq \mmin$ we solve the instance (which we have stored) exactly. Otherwise, we choose the unique $t$ such that $\hat{m}(t) \leq m < 1.9 \cdot \hat{m}(t)$ and return the output corresponding to this $t$.

The proof of the theorem now proceeds in several steps.

\paragraph{Space bound.} In the buffer we store at most $\mmin$ edges, which takes $\tilde{O}(\mmin) = \tilde{O}(\sqrt n)$ space. Then, we invoke \cref{alg:wrappedsketch} on $O(C \log n)$ values of $t$, so it suffices to show the bound separately for each value of $t$. For any such $t$, note that each sketch created by \cref{alg:sketch} for a single edge is at most $k=O(\log n)$ copies of the edge and its endpoints, and therefore this sketch has size $O(\log^2 n)$. We now bound the size of the sketch obtained by combining the sketches using \cref{alg:combine}.  which means that it suffices to bound the size of the pairs $\paren*{ \eStored_1,\vStored_1 } , \dots, \paren*{ \eStored_k,\vStored_k }$. For this note that the number $a$ of such pairs is $O(\log n)$ and thus, it suffices to bound the size of each pair. For any such pair $a$, the number of vertices in $\vStored_a$ is at most $\vCutoff$ (due to \cref{line:vcombine}) and the number of edges is at most $\vCutoff \cdot \eCutoff$ (due to \cref{line:ecombine}). Finally, we observe that $\eCutoff = \log^{O(1)}n$ and, since $\hmsparse(t) \leq \mmax$, we have $\vCutoff = 2^{O(1/\epsilon)} \cdot \sqrt{n} \cdot \log^{O(1)} n$.

\paragraph{Reducing to ``correct'' $\hat{m}$.} If $m \leq \mmin$ we solve the instance exactly. Otherwise, we return the output for $t$ satisfying $\hat{m}(t) \leq m < 1.9 \cdot \hat{m}(t)$. For this $t$, we have:
\[
\hmsparse(t) \leq m \cdot \pspar(t) \leq 1.9 \cdot \hmsparse(t) .
\]
Using this and \cref{lem:cut-sparsifier} with $\epssparse = \epsilon$ (note that the conditions of \cref{lem:cut-sparsifier} are satisfied as either $\pspar(t) = 1$ and the lemma is trivial or $\pspar(t) = \mmax/\hat{m}(t) \geq \frac{\Csparse n}{(\epsilon)^2 m}$ ), we have with probability at least $99/100$ that $\abs*{ \val_\CG - \val_{\hat{\CG}(t)} } \leq \epsilon$ and the number of edges in $\hat{\CG}(t)$ is between $\hmsparse(t)/2$ and $2\hmsparse(t)$.

\paragraph{Applying the reduction.} Define $\hat{\CG} := \hat{\CG}(t)$. By \cref{lem:bias-arr-correct}, $\hat{A}$ is a $(w,\epsilon/(k\ell)^2)$-pointwise smoothed estimate of $A = \RSnap_{\hat{\CG},\vecd,\vect}$ with probability $99/100$. Conditioning on this event, we can then apply \cref{lem:pointwise-impl-smooth} to conclude that $\hat{M}$ is a $(w,\epsilon+\Cwin/w)$-smoothed estimate of $M := \RSnap_{\hat{\CG},\vect}$, i.e., $\|\hat{M}-M^{\sim w}\|_1 \leq \epsilon + \Cwin/w$.

By \cref{lem:graph-smoothing}, there exists a weighted graph $\CH$ with snapshot $N := \SnapH$ such that $\|N-M^{\sim w}\|_1 \leq \Csmooth\lambda w$ and $|\val_{\hat{\CG}}-\val_{\CH}| \leq \Csmooth \lambda w$. By the triangle inequality, $\|N-\hat{M}\|_1 \leq \Csmooth \lambda w+\Cwin/w + \epsilon$ and $|\val_\CG-\val_\CH| \leq \Csmooth \lambda w + \epsilon$.

Finally, we assumed $\alpha \val_{\CH} \leq \Obl(N) \leq \val_{\CH}$. By continuity of the snapshot algorithm $\CO$, we have $|\Obl(\hat{M})-\Obl(N)| \leq \Csmooth \lambda w+\Cwin/w + \epsilon$. Re-parametrizing $\epsilon$ finishes the proof.
\end{proof}

\cref{thm:main-thm} follows from \cref{thm:main-thm-formal} by instantiating the latter with the specific continuous snapshot algorithm in \cref{lem:fj}.

\newcommand{\E}{J}
\newcommand{\e}{e}
\newcommand{\Gk}{\CJ_{[k]}}
\newcommand{\Edges}{\CJ}
\newcommand{\mult}{\mathsf{mult}}

\section{Achieving a pointwise smoothed estimate of the refined snapshot: Proving \cref{lem:bias-arr-correct}}\label{sec:bias-arr-correct}

\newcommand{\Egooddeg}{\CE_{\mathtt{deg}}}
\newcommand{\Egoodbias}{\CE_{\mathtt{bias}}}
\newcommand{\Egoodcounte}{\CE_{\mathtt{count1}}}
\newcommand{\Egoodcountv}{\CE_{\mathtt{count2}}}
\newcommand{\Eofg}{\CE_{\mathtt{of1}}}
\newcommand{\Eofh}{\CE_{\mathtt{of2}}}
\newcommand{\Egood}{\CE_{\mathtt{good}}}

The goal of this section is to prove \cref{lem:bias-arr-correct}, the key correctness lemma of our algorithm, which states that the estimated array $\hat{A}$ is a pointwise smoothed estimate for the refined snapshot $\RSnapG$. More precisely, this means that it lies between the upper- and lower-bound arrays $\RSnapG^{+w},\RSnapG^{-w}$ up to some additive error (see \cref{def:pointwise,def:arr-ul-bounds}). Throughout this section, we work with a fixed $\epsilon, n, m, \hat{m}$ and $\CG$. Besides the variables in \cref{table:eps,table:epsn,table:epsnm,table:layer}, we also define the following notation for each $a \in [k]$:
\begin{itemize}
    \item We associate to each edge in the stream the \emph{index} of its position. For  $\e \in [m]$, we let $(u_\e,v_\e)$ denote the $\e$-th edge in the stream (of edges of $\CG$). We let $\Edges_a$ denote the set of edge indices where $z = 1$ in \cref{line:forget-edge} in the $a$-th execution of the loop.

    \item $\CG_a$ will denote the random variable denoting the subgraph of $\CG$ containing the multiset of edges $\{(u_\e,v_\e) : \e \in \Edges_a\}$.
    
    \item $m_a$ will denote the random variable equal to the number of edges in $\CG_a$.
    \item $\CN_a$ will denote the set-valued random variable that is equal to the set of all non-isolated vertices in $\CG_a$.
\end{itemize}

Note that these random variables are all determined by the randomness in \cref{line:forget-edge}.

For $a \in [k]$, we also define the set-valued random variable $S_a = \{v \in [n] : \pi_a(v) = 1\}$, that is, $S_a$ is the set of all vertices that hash to $1$ under $\pi_a$. Note that $S_a$ is determined by the randomness in \cref{line:hash}. Also, note that the randomness' in \cref{line:forget-edge,line:hash} are independent. 

The plan of attack is to build up to the full proof by defining several ``good'' events, that each happen with high probability and together imply that we get the desired output. We start by analyzing the condition in which \cref{alg:combine} sets either $\vStored_a$ or $\eStored_a$ to $\bot$, for some $a \in [k]$, in the combined sketch (we call this the overflow conditions in \cref{sec:overflow}. In \cref{sec:deg-Ga,sec:bias-Ga}, we analyze the degrees and the biases (respectively) of the vertices the $\CG_a$ and show that they are close to the corresponding values in $\CG$. By \cref{lem:biasdeggood}, this implies that our estimates $\nEst$ in \cref{line:aest} are in the desired range. Finally, in \cref{sec:counting-edges-e,sec:counting-edges-v,sec:final-proof-of-bias-arr-correct}, we build on these lemmas and show that our final output is as desired.

\subsection{Overflow condition}
\label{sec:overflow}

The goal of this section is to define and analyze two events $\Eofg$ and $\Eofh$ such that \cref{alg:combine} never sets either $\vStored_a$ or $\eStored_a$ to $\bot$, for any $a \in [k]$, when they occur. These are defined in \cref{lem:ofg} and \cref{lem:ofh} respectively.

\begin{lemma}
\label{lem:ofg}
We have $\Pr(\Eofg) \geq 999/1000$, where the probability is over the randomness in \cref{line:forget-edge} and $\Eofg$ is defined as follows: For every $a \in [k]$, we have:
\[
m_a \leq \begin{cases}
2 q_a m , &\text{~if~} p_0 \leq q_a \\
5 \rho \sqrt{m} , &\text{~if~} p_0 > q_a 
\end{cases} .
\]
\end{lemma}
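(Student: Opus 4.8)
The plan is a one-line reduction to a Chernoff bound followed by a union bound over layers. First I would observe that, for a fixed layer $a \in [k]$, the quantity $m_a = \abs{\Edges_a}$ is exactly a sum of $m$ independent $\{0,1\}$-valued random variables --- the coins tossed in \cref{line:forget-edge} across the $m$ edges of the stream during the $a$-th pass of the inner loop --- each equal to $1$ with probability $q_a$. Hence $m_a$ is $\mathrm{Bin}(m,q_a)$-distributed with $\Exp[m_a] = q_a m$, and the lemma will follow once I bound, for each $a$ separately, the probability that $m_a$ exceeds its case-specific threshold by at most $1/(1000k)$, and then union bound over the $k = O_\epsilon(\log n)$ layers.

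In the case $p_0 \le q_a$, the target bound $m_a \le 2q_a m$ is just $m_a \le (1+1)\Exp[m_a]$, so \cref{lem:chernoff-ub} with $\delta = 1$ gives $\Pr[m_a \ge 2q_a m] \le \exp(-q_a m/3)$. I would then lower-bound $q_a m \ge p_0 m = \rho m/\sqrt{\hat m} \ge \rho\sqrt{m/2}$, using the hypothesis $\hat m \le 2m$ inherited from \cref{lem:bias-arr-correct}; since $\rho \ge 1000(k\ell)^3$ and $m \ge \mmin$, the exponent $q_a m/3 \ge \rho\sqrt{m/2}/3$ comfortably exceeds $\ln(1000k) = O(\log\log n)$, so $\exp(-q_a m/3) \le 1/(1000k)$.

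In the case $p_0 > q_a$, I would instead note $\Exp[m_a] = q_a m < p_0 m = \rho m/\sqrt{\hat m} \le \rho\sqrt{2m}$ (now using $\hat m \ge m/2$), and since $3\sqrt 2 < 5$ this gives $5\rho\sqrt m \ge 3\Exp[m_a]$; hence the high-deviation Chernoff bound \cref{lem:chernoff-highdev} applies with $\eta = 5\rho\sqrt m$ and yields $\Pr[m_a \ge 5\rho\sqrt m] \le \exp(-5\rho\sqrt m/8)$, which is again at most $1/(1000k)$ by the same polylog-versus-log comparison. Union bounding over $a \in [k]$ then gives $\Pr[\neg\Eofg] \le 1/1000$, as desired.

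I do not expect a genuine obstacle here; this is a routine tail estimate, and the structure of the two-case split exactly mirrors the two regimes in the statement. The only points needing attention are the numerical check $3\sqrt 2 < 5$, which is what legitimizes the high-deviation form of the Chernoff bound in the second case (a larger constant in the lemma statement would simply give more slack), and confirming that $\rho$ --- which is only polylogarithmic in $n$ --- is nevertheless large enough (thanks to the factor $1000$ in its definition) that the union-bound loss of a factor $k = O_\epsilon(\log n)$ is absorbed.
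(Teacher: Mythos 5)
Your proposal is correct and follows essentially the same route as the paper: view $m_a$ as a sum of $m$ independent Bernoulli($q_a$) indicators, apply \cref{lem:chernoff-ub} with $\delta=1$ when $p_0 \le q_a$ and \cref{lem:chernoff-highdev} with $\eta = 5\rho\sqrt m$ (justified by $3p_0 m \le 3\sqrt2\,\rho\sqrt m < 5\rho\sqrt m$) when $p_0 > q_a$, then union bound over the $k$ layers. Your handling of the $\hat m$-versus-$m$ factors (via $m/2 \le \hat m \le 2m$) is if anything slightly more careful than the paper's, which states the exponent as $\rho\sqrt m/3$ rather than $\rho\sqrt{m/2}/3$; this makes no difference to the conclusion.
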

\begin{proof}
Note that for all $a \in [k]$, the random variable $m_a$ is the sum of $m$ independent and identically distributed indicator random variables that take the value $1$ with probability $q_a$. For all $a \in [k]$ such that $p_0 \leq q_a$, this fact together with \cref{lem:chernoff-ub} gives:
\[
\Pr\paren*{ m_a \geq 2 q_am } \leq \exp\paren*{ - q_a m / 3 } \leq \exp( - \rho \sqrt{m} / 3) \leq o(1/k) .
\]
Similarly, for all $a \in [k]$ such that $p_0 > q_a$ which implies $5 \rho \sqrt{m} \geq 3 p_0 m \geq 3 q_a m$, \cref{lem:chernoff-highdev} gives:
\[
\Pr\paren*{ m_a \geq 5 \rho \sqrt{m} } \leq \exp\paren*{ - 5 \rho \sqrt{m} / 8 } \leq o(1/k) .
\]
The lemma now follows by combining the two inequalities above and applying a union bound over all $a \in [k]$.
\end{proof}

Next, let $\Edges_1, \ldots, \Edges_k$ be a set of subsets of $[m]$ that may be sampled in \cref{line:forget-edge}. For convenience, we shall often abbreviate $\Gk = \Edges_1,\ldots,\Edges_k$. Observe that $\Gk$ determines whether or not $\Eofg$ occurs. We have:

\begin{lemma}
\label{lem:ofh}
Fix any $\Gk$ such that $\Eofg$ occurs. We have $\Pr\paren*{ \Eofh^{\Gk} } \geq 999/1000$, where the probability is over the randomness in \cref{line:hash} and $\Eofh^{\Gk}$ is defined as follows: For every $a \in [k]$, we have:
\[
\card*{ \CN_a \cap S_a } < \vCutoff .
\]
\end{lemma}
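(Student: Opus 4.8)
The plan is to reduce \cref{lem:ofh} to a per-layer tail bound on $|\CN_a \cap S_a|$. Conditioning on $\Gk$ fixes each subgraph $\CG_a$, hence each set $\CN_a$ and count $m_a$ (and in particular fixes whether $\Eofg$ holds); the hashing randomness of \cref{line:hash} is independent of this, and makes $S_a = \{v : \pi_a(v)=1\}$ a random set in which each vertex appears with probability $p_a$, these events being $4$-wise (hence pairwise) independent across vertices. Since the per-edge sketch of \cref{alg:sketch} sets, for each layer-$a$ edge $e=(u_e,v_e)$, the set $\vStored_a = \{v' \in \{u_e,v_e\} : \pi_a(v')=1\}$, the union of these over all $e \in \Edges_a$ equals exactly $\CN_a \cap S_a$. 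Hence every partial union of $\vStored_a$-sets formed while running \cref{alg:combine} is a subset of $\CN_a \cap S_a$, so if $|\CN_a \cap S_a| < \vCutoff$ for all $a \in [k]$, the cardinality test in \cref{line:vcombine} never triggers and $\vStored_a$ (and therefore $\eStored_a$, via \cref{line:ecombine}) is never set to $\bot$. So it suffices to bound $\Pr[|\CN_a \cap S_a| \geq \vCutoff]$ for each fixed $a$ and union-bound over $a \in [k]$.

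Fix $a \in [k]$, recall $p_a = \min\{p_0 q_a^{-1},1\}$, and note $|\CN_a| \leq 2m_a$ (each edge of $\CG_a$ has two endpoints). If $p_0 > q_a$ then $p_a = 1$, so $S_a = [n]$ deterministically and $|\CN_a \cap S_a| = |\CN_a| \leq 2m_a \leq 10\rho\sqrt{m} < \vCutoff$, using the bound $m_a \leq 5\rho\sqrt{m}$ from $\Eofg$ and $\vCutoff = 10\rho\sqrt{2\hat{m}} \geq 10\rho\sqrt{m}$ (since $\hat{m} \geq m/2$); this case contributes no failure probability. If instead $p_0 \leq q_a$, then $p_a = p_0 q_a^{-1} \leq 1$, and writing $X_a = |\CN_a \cap S_a| = \sum_{v \in \CN_a} \1[\pi_a(v)=1]$ as a sum of pairwise-independent $\Bern(p_a)$ variables, we get $\Exp[X_a] = p_a|\CN_a| \leq 2p_a m_a \leq 2 p_0 q_a^{-1}\cdot 2q_a m = 4p_0 m \leq 8\rho\sqrt{\hat{m}}$ (using $m_a \leq 2q_a m$ from $\Eofg$, $p_0 = \rho/\sqrt{\hat{m}}$, and $m \leq 2\hat{m}$). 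Since $\vCutoff = 10\sqrt{2}\,\rho\sqrt{\hat{m}} > 8\rho\sqrt{\hat{m}} \geq \Exp[X_a]$, we have $\vCutoff - \Exp[X_a] \geq 6\rho\sqrt{\hat{m}}$, so \cref{lem:chebyshev-indep} (in its pairwise-independent form, using $\Var[X_a] \leq \Exp[X_a]$) gives
\[
\Pr[X_a \geq \vCutoff] \leq \Pr\bigl[\,|X_a - \Exp[X_a]| \geq \vCutoff - \Exp[X_a]\,\bigr] \leq \frac{\Exp[X_a]}{(\vCutoff - \Exp[X_a])^2} \leq \frac{8\rho\sqrt{\hat{m}}}{36\rho^2\hat{m}} = \frac{2}{9\rho\sqrt{\hat{m}}}.
\]

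Finally, since $\rho = 1000\sqrt{D}(k\ell)^3/\epsilon \geq 1000k^3$ and $\hat{m} \geq 1$, the last quantity is at most $\tfrac{1}{4500k^3} \leq \tfrac{1}{1000k}$ for $k \geq 1$; union-bounding over the at most $k$ layers with $p_0 \leq q_a$ (the remaining layers being deterministic) yields $\Pr[\lnot\,\Eofh^{\Gk}] \leq 1/1000$, which is what we want. The only real work here is parameter bookkeeping: keeping straight which branch of $p_a = \min\{p_0 q_a^{-1},1\}$ is active, disposing of the $p_a = 1$ boundary case deterministically, and checking that the constant $10\sqrt{2}$ in $\vCutoff$ exceeds the $8$ in the bound on $\Exp[X_a]/(\rho\sqrt{\hat{m}})$ by a large enough margin that a crude Chebyshev bound with only pairwise independence already beats $1/k$ with room to spare. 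There is no deeper difficulty — in particular, the full $4$-wise independence of the hash family is not needed for this lemma, only for the finer edge-counting concentration arguments later.
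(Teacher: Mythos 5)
Your proposal is correct and follows essentially the same route as the paper's proof: fix $\Gk$, handle the $p_0 > q_a$ layers deterministically via the $\Eofg$ bound $m_a \leq 5\rho\sqrt{m}$, and for $p_0 \leq q_a$ bound $\Exp[|\CN_a\cap S_a|] \leq 4p_0 m$ using $m_a \leq 2q_a m$, apply Chebyshev with the pairwise independence inherited from the $4$-wise hash, and union-bound over $a \in [k]$. The only difference is cosmetic bookkeeping (you center the Chebyshev deviation at $\vCutoff - \Exp[X_a] \geq 6\rho\sqrt{\hat m}$, while the paper routes through the threshold $5p_0 m$ and deviation $p_0 m$), which yields the same $O(1/(\rho\sqrt{m}))$ per-layer failure probability.
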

\begin{proof}
Fix $\Gk$ as in the lemma. We show the lemma by showing that for all $a \in [k]$, we have $\Pr\paren*{ \card*{ \CN_a \cap S_a } \geq \vCutoff } \leq o(1/k)$. Indeed, the lemma then follows by a union bound. Fix $a \in [k]$. If $p_0 > q_a$, we have by \cref{lem:ofg} that $\card*{ \CN_a } \leq 2m_a \leq 10 \rho \sqrt{m} < \vCutoff$ and the result follows. Thus, assume that $p_0 \leq q_a$. As $\Eofg$ occurs, we have $\card*{ \CN_a } \leq 2m_a \leq 4q_am$ by \cref{lem:ofg}. For all $v \in \CN_a$, define the indicator random variable $I_v$ to be $1$ if and only if $v \in S_a$. Note that these random variables are pairwise independent as \cref{line:hash} samples a $4$-wise independent hash function. Define $I = \sum_{v \in \CN_a} I_v$ and note that (as $I$ is a sum of pairwise independent indicator random variables) $\Var[I] \leq \Exp[I] \leq 4p_aq_am = 4p_0m$ as $p_0 \leq q_a$. We get:
\begin{align*}
\Pr\paren*{ \card*{ \CN_a \cap S_a } \geq \vCutoff } &= \Pr\paren*{ I \geq \vCutoff } \\
&\leq \Pr\paren*{ I \geq 5p_0m } \tag{As $\vCutoff \geq 10 \rho \sqrt{\hat{m}} = 10 p_0 \hat{m} \geq 5 p_0 m$} \\
&\leq \frac{\Var[I]}{(p_0m)^2} \tag{\cref{lem:chebyshev} and $\Exp[I] \leq 4p_0m$} \\
&\leq \frac{4}{p_0m} \tag{As $\Var[I] \leq 4p_0m$} \\
&\leq \frac{8}{ \rho \sqrt{m} }= o(1/k) .
\end{align*}
\end{proof}

We now formalize our claim above that the overflow never occurs $\mfail$ if $\Eofg$ and $\Eofh$ occur.
\begin{lemma}
\label{lem:overflow}
Fix any $\Gk$ such that $\Eofg$ occurs. If $\Gk$ is sampled in \cref{line:forget-edge} and $\Eofh^{\Gk}$ occurs, then the following hold (with probability $1$):
\begin{enumerate}
 \item \label{item:lem:overflow:1} For all $a \in [k]$, we have $\vStored_a, \eStored_a \neq \bot$.
 \item \label{item:lem:overflow:2} For all $a \in [k]$, we have $\vStored_a = \CN_a \cap S_a$.
 \item \label{item:lem:overflow:3} For all $a \in [k]$ and $u \in \vStored_a$ such that $\deg_{\eStored_a}(u) < \eCutoff$, we have for all $v \in [n]$ that $\mult_{\eStored_a}(u,v) = \mult_{\CG_a}(u,v)$ and $\mult_{\eStored_a}(v,u) =\mult_{\CG_a}(v,u)$.
 \item \label{item:lem:overflow:4} For all $a \in [k]$, $u \in \vStored_a$, we have $\deg_{\eStored_a}(u) < \eCutoff \iff \deg_{\CG_a}(u) < \eCutoff$.
 \item \label{item:lem:overflow:5} For all $a,b \in [k]$ and $i,j\in[\ell]$, we have $\sum_{\e \in \Edges_{\min\{a,b\}}} \1_{(u_\e,v_\e) \in \vEst_{a,i} \times \vEst_{b,j}} = \sum_{(u,v) \in \eStored_{\min\{a,b\}}} \1_{(u,v) \in \vEst_{a,i} \times \vEst_{b,j}}$.
\end{enumerate} 
\end{lemma}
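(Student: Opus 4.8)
\textbf{Proof plan for \cref{lem:overflow}.}

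The plan is to argue that, under the hypothesis $\Eofh^{\Gk}$, \cref{alg:combine} never triggers any of the conditions that would set a slot to $\bot$, and then to track exactly what the surviving slots contain. The key observation driving everything is that the composed sketch is built by iterated application of \cref{alg:combine} to the per-edge sketches of \cref{alg:sketch}, and that \cref{alg:combine} only sets $\vStored_a \gets \bot$ when a union of partial $\vStored_a$'s exceeds $\vCutoff$, and only sets $\eStored_a \gets \bot$ as a consequence of $\vStored_a = \bot$. So I would first prove \cref{item:lem:overflow:2} (conditionally on no overflow) and then use it to rule out overflow, making the argument slightly circular-looking but really an induction over the stream.

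Concretely: fix $a \in [k]$ and run through the stream. From \cref{line:vstored} of \cref{alg:sketch}, an edge $\e$ contributes $v'$ to the $a$-th vertex slot precisely when $\e \in \Edges_a$ (i.e.\ the coin $z=1$ in \cref{line:forget-edge}) and $\pi_a(v')=1$, i.e.\ $v' \in S_a$; such a $v'$ is then an endpoint of an edge of $\CG_a$, hence lies in $\CN_a$. Conversely every vertex of $\CN_a \cap S_a$ is picked up the first time one of its incident $\CG_a$-edges appears. Hence, \emph{as long as no overflow has occurred so far}, the running $\vStored_a$ after processing a prefix of the stream is exactly $(\CN_a \cap S_a)$ restricted to that prefix, and the final value is $\CN_a \cap S_a$. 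Now invoke $\Eofh^{\Gk}$: $|\CN_a \cap S_a| < \vCutoff$, so at no point does $|\vStored_a^{(1)} \cup \vStored_a^{(2)}|$ in \cref{line:vcombine} exceed $\vCutoff$, so $\vStored_a$ is never set to $\bot$; and since $\eStored_a \gets \bot$ only when $\vStored_a = \bot$, the edge slot is never nullified either. This gives \cref{item:lem:overflow:1} and \cref{item:lem:overflow:2} simultaneously. (Strictly, one phrases this as an induction on the number of processed edges: the inductive hypothesis is ``no overflow yet and $\vStored_a$ equals $\CN_a \cap S_a$ restricted to the prefix''; the bound from $\Eofh^{\Gk}$ maintains the no-overflow half.)

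For \cref{item:lem:overflow:3} and \cref{item:lem:overflow:4}, the point is to understand the edge-pruning in \cref{line:ecombine}: an edge $(u,v) \in \CG_a$ gets added to $\eStored_a$ when it arrives (since by \cref{item:lem:overflow:2} at least one endpoint is now in $\vStored_a$, as $(u,v)\in\CG_a$ means $z=1$ and we assumed, say, $\pi_a$ put an endpoint in $S_a$ --- actually one must be a bit careful: $(u,v)$ is stored by \cref{line:ecutoff} of \cref{alg:sketch} iff $\vStored_a \ne \emptyset$, i.e.\ iff $u$ or $v \in S_a$, and is later removed in \cref{line:ecombine} only if \emph{both} endpoints are either outside $\vStored_a$ or have accumulated degree $> \eCutoff$ in $\eStored_a$). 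So: if $u \in \vStored_a$ with $\deg_{\eStored_a}(u) < \eCutoff$ at the end, then $u$ never caused a removal of an incident edge, so every $\CG_a$-edge incident to $u$ with the appropriate endpoint-in-$S_a$ condition stays; and since $u \in \vStored_a = \CN_a \cap S_a \subseteq S_a$, \emph{every} $\CG_a$-edge incident to $u$ has an endpoint ($u$ itself) in $S_a$, hence is stored and never removed --- this is exactly $\mult_{\eStored_a}(u,v) = \mult_{\CG_a}(u,v)$ and symmetrically for $(v,u)$, giving \cref{item:lem:overflow:3}. Summing over $v$ and using that edges are only ever removed (so $\deg_{\eStored_a}(u) \le \deg_{\CG_a}(u)$ always), \cref{item:lem:overflow:4} follows: if $\deg_{\eStored_a}(u) < \eCutoff$ then \cref{item:lem:overflow:3} forces $\deg_{\CG_a}(u) = \deg_{\eStored_a}(u) < \eCutoff$; conversely if $\deg_{\CG_a}(u) < \eCutoff$ then since $u \in S_a$ all its $\CG_a$-edges are stored and never pruned on $u$'s account, so $\deg_{\eStored_a}(u) = \deg_{\CG_a}(u) < \eCutoff$.

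Finally, \cref{item:lem:overflow:5} is a bookkeeping consequence of the previous items. The left side counts, over indices $\e$ in $\Edges_{\min\{a,b\}}$, the edges $(u_\e,v_\e)$ landing in $\vEst_{a,i}\times\vEst_{b,j}$; the right side counts the same over the stored multiset $\eStored_{\min\{a,b\}}$. Set $c := \min\{a,b\}$. By definition \cref{line:vest}, membership in $\vEst_{a,i}$ requires $\deg_{\eStored_c}(\cdot) < \eCutoff$ (for the relevant layer), so any edge contributing to either sum has both endpoints with $\eStored$-degree below $\eCutoff$; by \cref{item:lem:overflow:3,item:lem:overflow:4} such an edge is stored in $\eStored_c$ with the correct multiplicity and its $\CG_c = \{(u_\e,v_\e):\e\in\Edges_c\}$ multiplicity agrees. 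Conversely every edge of $\eStored_c \subseteq \CG_c$ arises from some $\e \in \Edges_c$. Matching the two multisets term by term gives the equality. The \textbf{main obstacle} I anticipate is getting the induction in \cref{item:lem:overflow:2} genuinely airtight --- i.e.\ phrasing the ``no overflow so far implies correct contents so far, which implies no overflow now'' loop as a clean induction on processed edges rather than a circular assertion --- together with being careful about the asymmetry in \cref{alg:sketch} (an edge is stored iff \emph{some} endpoint hashes to $1$, not both), which matters for the multiplicity claims in \cref{item:lem:overflow:3,item:lem:overflow:4}.
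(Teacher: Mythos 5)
Your proposal is correct and follows essentially the same route as the paper's proof: rule out overflow via the bound $|\CN_a \cap S_a| < \vCutoff$ from $\Eofh^{\Gk}$, identify $\vStored_a$ with $\CN_a \cap S_a$, argue that edges incident to a stored vertex of final $\eStored_a$-degree below $\eCutoff$ are stored and never pruned (so multiplicities agree), and derive Items 4 and 5 as bookkeeping consequences. Your explicit stream-induction for Items 1--2 and your attention to the one-endpoint-in-$S_a$ asymmetry only make explicit what the paper's terser proof leaves implicit.
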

\begin{proof}
We fix the randomness in \cref{line:forget-edge,line:hash} arbitrarily such that both $\Eofg$ and $\Eofh^{\Gk}$ occur and show that \cref{item:lem:overflow:1,item:lem:overflow:2,item:lem:overflow:3,item:lem:overflow:4} hold. Note that fixing the randomness fixes the value of all variables in \cref{alg:sketch,alg:combine}. We show each part in turn.
\begin{enumerate}
\item Fix $a \in [k]$ and note that $\vStored_a = \bot \iff \eStored_A = \bot$. This means that it suffices to show that $\vStored_a \neq \bot$. This is because by \cref{lem:ofh}, we have $\card*{ \CN_a \cap S_a } < \vCutoff$.

\item Follows directly from \cref{item:lem:overflow:1}.
\item We claim that for each $\e \in [m]$, the $\e$-th edge in the stream is placed in $\eStored_a$ iff $\e \in \Edges_a$. As the $\implies$ direction is trivial, we only show the $\impliedby$ direction. We only show the first equation as the proof for the second one is analogous. 
Let $v \in [n]$ be arbitrary such that $(u,v) \in \CG_a$. Note that, as $u \in \vStored_a$, the edge $(u,v)$ was added to $\eStored$ in its own sketch in \cref{alg:sketch}. Thus, the only way $(u,v) \notin \eStored_a$ is if it was removed in \cref{line:ecombine}. However, this would contradict $\deg_{\eStored_a}(u) < \eCutoff$, and we are done.

\item Follows directly from \cref{item:lem:overflow:3}.
\item Follows from \cref{item:lem:overflow:3} and the definition of $\vEst$ on \cref{line:vest}.
\end{enumerate}

\end{proof}

\subsection{Degrees after edge-subsampling}\label{sec:deg-Ga}

For $a \in [k]$ and $v \in [n]$, define a random variable for the {\em apparent degree index} as follows:
\begin{equation}
\label{eq:adi}
\adi^\vecd_a(v) = \Ind^\vecd\paren*{ \min\{q_a^{-1} \deg_{\CG_a}(v), d_k\} } .
\end{equation}
In the above definition, we adopt the convention that the right-hand side is $-\infty$ if $\deg_{\CG_a}(v) = 0$. We remark that this definition is similar to but different from the quantity $\dEst_a(v)$ we defined in \cref{alg:output}, since that was only defined for vertices in $\vStored_a$ and counted only the edges in $\eStored_a$. We begin with the following lemma, which gives a general characterization of degrees of vertices in the graphs $\CG_a$.
\begin{lemma}
\label{lem:deg-Ga}
We have $\Pr\paren*{ \Egooddeg } \geq 999/1000$, where the probability is over the randomness in \cref{line:forget-edge} and $\Egooddeg$ is defined as follows: For all $a \in [k]$ and $v \in [n]$, we have:
\begin{enumerate}
\item \label{item:deg-Ga:1} If $\deg_\CG(v) \geq 2^{a-w-3}$, then we have $\deg_\CG(v)/2 < q_a^{-1} \deg_{\CG_a}(v) < 2\deg_\CG(v)$.
\item \label{item:deg-Ga:2} If $\deg_\CG(v) \leq 2^{a-w-3}$, then we have $q_a^{-1} \deg_{\CG_a}(v) < 2^{a-w-1}$.
\end{enumerate}
\end{lemma}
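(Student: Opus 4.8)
The plan is to fix $a \in [k]$ and $v \in [n]$, show both items hold for this pair except with probability at most $3\exp(-T/27)$ for a suitable quantity $T$, and then take a union bound over all $nk$ pairs. The starting observation is that the coin tosses in \cref{line:forget-edge} are independent across stream-edges, so $\deg_{\CG_a}(v) = \sum_{\e : v \in \{u_\e,v_\e\}} \1_{\e \in \Edges_a}$ is a sum of exactly $\deg_\CG(v)$ independent $\Bern(q_a)$ random variables; in particular $\Exp[q_a^{-1}\deg_{\CG_a}(v)] = \deg_\CG(v)$. If $a \le k^*$ then $q_a = 1$, so $\CG_a = \CG$ and $q_a^{-1}\deg_{\CG_a}(v) = \deg_\CG(v)$ exactly; \cref{item:deg-Ga:1} is then immediate (for nonisolated $v$; for isolated $v$ its hypothesis is vacuous), and \cref{item:deg-Ga:2} holds since $\deg_\CG(v) \le 2^{a-w-3} < 2^{a-w-1}$. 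So assume $a > k^*$, i.e.\ $q_a = 2^{k^*-a} < 1$. The key algebraic fact is that $q_a\cdot 2^{a-w-3} = 2^{k^*-w-3}$ does not depend on $a$; set $T := 2^{k^*-w-3}$. Since $k^* = 6\log\log n$ (so $2^{k^*} = \log^6 n$) and $w = 1/\epsilon$ is a constant, $T = \Theta_\epsilon(\log^6 n)$, which dominates $\log(nk)$ for $n$ large.

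For \cref{item:deg-Ga:1}, if $\deg_\CG(v) \ge 2^{a-w-3}$ then $\mu := \Exp[\deg_{\CG_a}(v)] = q_a\deg_\CG(v) \ge T$, and the two-sided Chernoff bound (\cref{lem:chernoff-twosided}) with $\delta = 1/3$ gives that, with probability at least $1 - 2\exp(-\mu/27) \ge 1 - 2\exp(-T/27)$, we have $\tfrac{2}{3}\deg_\CG(v) < q_a^{-1}\deg_{\CG_a}(v) < \tfrac{4}{3}\deg_\CG(v)$, which (as $\deg_\CG(v) > 0$) implies the claimed bound $\deg_\CG(v)/2 < q_a^{-1}\deg_{\CG_a}(v) < 2\deg_\CG(v)$. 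For \cref{item:deg-Ga:2}, if $\deg_\CG(v) \le 2^{a-w-3}$ then $\mu = q_a\deg_\CG(v) \le T$, and since $q_a 2^{a-w-1} = 2^{k^*-w-1} = 4T \ge 3\mu$, the high-deviation Chernoff bound (\cref{lem:chernoff-highdev}) with $\eta = 4T$ gives $\Pr[\deg_{\CG_a}(v) \ge 4T] \le \exp(-4T/8) = \exp(-T/2)$; on the complementary event $q_a^{-1}\deg_{\CG_a}(v) < q_a^{-1}\cdot 4T = 2^{a-w-1}$ (using $q_a^{-1} = 2^{a-k^*}$), as required.

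Union-bounding over all $a \in [k]$ and $v \in [n]$, the failure probability of $\Egooddeg$ is at most $nk\big(2\exp(-T/27) + \exp(-T/2)\big) \le 3nk\exp(-T/27)$, which is $o(1) \le 1/1000$ since $T = \Theta_\epsilon(\log^6 n) = \omega(\log(nk))$ (here we use, as elsewhere in the paper, that $n$ exceeds a threshold depending on $\epsilon$). The only point requiring any care is purely quantitative: confirming that the subsampled expected degree $T$ in the range $a > k^*$ is large enough to survive the union bound over all $nk$ vertex--layer pairs — this is precisely why $k^*$ is taken to be $6\log\log n$ rather than $O(1)$, so that $2^{k^*}$ is a sufficiently large power of $\log n$.
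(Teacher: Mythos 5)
Your proposal is correct and follows essentially the same route as the paper's proof: handle $a \le k^*$ trivially, note $\deg_{\CG_a}(v)$ is a sum of $\deg_\CG(v)$ independent $\Bern(q_a)$ variables, apply Chernoff bounds (a two-sided bound for \cref{item:deg-Ga:1} and the high-deviation bound \cref{lem:chernoff-highdev} for \cref{item:deg-Ga:2}) with the key observation that $q_a 2^{a-w-3} = 2^{k^*-w-3} = \Theta_\epsilon(\log^6 n)$, and finish with a union bound over all $(a,v)$ pairs. The only differences are cosmetic (you use the two-sided bound with $\delta=1/3$ where the paper uses separate $\delta=1$ and $\delta=1/2$ tails, and you track the failure probability as $3nk\exp(-T/27)$ rather than $o(1/n^2)$ per pair), so the proofs match in substance.
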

\begin{proof}
We will upper bound the probability that $\Egooddeg$ does not happen. For this, we fix $a \in [k]$ and $v \in [n]$ and upper bound the probability that one of \cref{item:b-lb,item:b-ub} does not hold for this $a$ and $v$ by $o(1/n^2)$. The proof then follows by a union bound. If $a \leq k^*$, then $\deg_{\CG_a}(v) = \deg_\CG(v)$ and $q_a = 1$ and there is nothing to show. We therefore assume that $a > k^* \implies q_a = 2^{k^* - a}$. Note that $\deg_{\CG_a}(v)$ is a sum of $\deg_\CG(v)$-many independent and identically distributed indicator random variables, each of which is $1$ with probability $q_a$. 

Consider first the case $\deg_\CG(v) \geq 2^{a-w-3}$. This means that $q_a \deg_\CG(v) \geq 2^{k^*-w-3} \geq 5 0\log n$. In this case, we use \cref{lem:chernoff-ub,lem:chernoff-lb} to get:
\begin{align*}
\Pr\paren*{ \deg_{\CG_a}(v) \geq 2 q_a \deg_\CG(v) } &\leq \exp\paren*{ - q_a \deg_\CG(v)/3 } = o(1/n^2) . \\
\Pr\paren*{ \deg_{\CG_a}(v) \leq q_a \deg_\CG(v) / 2 } &\leq \exp\paren*{ - q_a \deg_\CG(v)/8 } = o(1/n^2) .
\end{align*}

Now, consider the case $\deg_\CG(v) \leq 2^{a-w-3}$. In this case, we use \cref{lem:chernoff-highdev} to get:
\[
\Pr\paren*{ \deg_{\CG_a}(v) \geq q_a 2^{a-w-1} } \leq \exp\paren*{ - q_a 2^{a-w-4} } = o(1/n^2) .
\]
Adding the three bounds gives the result.
\end{proof}
We can deduce several nice properties about the random variable $\adi^\vecd_a(v)$ defined in \cref{eq:adi} when $\Egooddeg$ happens.

\begin{lemma}
\label{lem:deg-Ga-prop}
If $\Egooddeg$ occurs, the following hold for all $a \in [k]$ and $v \in [n]$ (with probability $1$):
\begin{enumerate}
\item \label{item:b-cutoff} If $\deg_\CG(v) \leq 2^{a+w+1}$, then we have $\deg_{\CG_a}(v) < \eCutoff$.
\item \label{item:b-lb} If $\di^\vecd_\CG(v) \geq a-w-1$, then we have $\adi^\vecd_a(v) \geq \di^\vecd_\CG(v)-1$.
\item \label{item:b-ub} We have $\adi^\vecd_a(v) \leq \max(\di^\vecd_\CG(v)+1, a-w-1)$.
\end{enumerate}
\end{lemma}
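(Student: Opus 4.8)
The plan is to prove \cref{lem:deg-Ga-prop} purely deterministically: condition on $\Egooddeg$ and verify each item by unwinding the definitions of $\di^\vecd$, $\adi^\vecd$, and $\Ind^\vecd$ against the concrete dyadic partition $d_a = 2^a$. Fix $a \in [k]$ and $v \in [n]$. If $v$ is isolated in $\CG$ then $\deg_{\CG_a}(v) = 0$, so $\adi^\vecd_a(v) = -\infty$ and all three items hold vacuously; hence assume $v$ is non-isolated and write $c := \di^\vecd_\CG(v) \in [k]$, so $2^{c-1} \le \deg_\CG(v) < 2^c$. For \cref{item:b-cutoff}, suppose $\deg_\CG(v) \le 2^{a+w+1}$. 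I would combine both halves of $\Egooddeg$: if $\deg_\CG(v) \ge 2^{a-w-3}$ then \cref{item:deg-Ga:1} gives $q_a^{-1}\deg_{\CG_a}(v) < 2\deg_\CG(v) \le 2^{a+w+2}$, while if $\deg_\CG(v) \le 2^{a-w-3}$ then \cref{item:deg-Ga:2} gives $q_a^{-1}\deg_{\CG_a}(v) < 2^{a-w-1} \le 2^{a+w+2}$; in either case $\deg_{\CG_a}(v) < q_a 2^{a+w+2}$. Since $q_a = \min\{2^{k^*-a},1\}$ we have $q_a 2^a \le 2^{k^*}$, so $\deg_{\CG_a}(v) < 2^{k^*+w+2} = D$, and by the choices $k^* = 6\log\log n$, $\eCutoff = \log^7 n$ in \cref{table:epsn} we have $D = O_\epsilon(\mathrm{polylog}\, n) < \eCutoff$ for $n$ large in terms of $\epsilon$ (equivalently, of $w$).

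For \cref{item:b-lb}, assume $c = \di^\vecd_\CG(v) \ge a-w-1$. Then $\deg_\CG(v) \ge 2^{c-1} \ge 2^{a-w-2} > 2^{a-w-3}$, so the lower bound of \cref{item:deg-Ga:1} applies and gives $q_a^{-1}\deg_{\CG_a}(v) > \deg_\CG(v)/2 \ge 2^{c-2}$. In particular $\deg_{\CG_a}(v) \ge 1$, so $\adi^\vecd_a(v) = \Ind^\vecd(\min\{q_a^{-1}\deg_{\CG_a}(v),d_k\})$ is a genuine index; and since $d_k = 2^k \ge 2^{c-2}$ (as $c \le k$), the quantity inside $\Ind^\vecd$ is $> 2^{c-2} = d_{c-2}$ when $c \ge 2$, which forces the index to be $\ge c-1$ (the corner case $c \le 2$ is immediate, since $\deg_{\CG_a}(v) \ge 1$ already forces $\adi^\vecd_a(v) \ge 1 \ge c-1$).

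For \cref{item:b-ub}, I would split once more on the size of $\deg_\CG(v)$ relative to $2^{a-w-3}$. If $\deg_\CG(v) \ge 2^{a-w-3}$, then \cref{item:deg-Ga:1} gives $q_a^{-1}\deg_{\CG_a}(v) < 2\deg_\CG(v) < 2^{c+1} = d_{c+1}$, hence $\min\{q_a^{-1}\deg_{\CG_a}(v),d_k\} < d_{c+1}$ and $\adi^\vecd_a(v) \le c+1 = \di^\vecd_\CG(v)+1$. If instead $\deg_\CG(v) < 2^{a-w-3}$, then \cref{item:deg-Ga:2} gives $q_a^{-1}\deg_{\CG_a}(v) < 2^{a-w-1} = d_{a-w-1}$, so (using the convention $\adi^\vecd_a(v) = -\infty$ when this quantity drops below $d_0 = 1$, which is automatic once $a-w-1 \le 0$) we get $\adi^\vecd_a(v) \le a-w-1$. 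In both cases $\adi^\vecd_a(v) \le \max(\di^\vecd_\CG(v)+1, a-w-1)$, as claimed.

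The only points requiring care are the degenerate regimes of $\Ind^\vecd$ — when the (capped) apparent degree is $0$ (so $\adi^\vecd_a(v) = -\infty$ and every inequality is trivially true), when it meets the cap $d_k$, or when the target index $c \pm 1$ or $a-w-1$ leaves $[k]$ — all of which are covered by the conventions accompanying \cref{eq:adi} and \cref{def:refined-snapshot}; and the single quantitative point, that $D = 2^{k^*+w+2}$ lies comfortably below $\eCutoff$, which is precisely why $k^*$ and $\eCutoff$ are set as they are in \cref{table:epsn}. I do not anticipate any real obstacle: the statement is a bookkeeping consequence of $\Egooddeg$ together with the arithmetic of the dyadic partition $d_a = 2^a$.
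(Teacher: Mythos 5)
Your proof is correct and takes essentially the same route as the paper's: each item is derived deterministically from the two guarantees in $\Egooddeg$ (\cref{lem:deg-Ga}) together with the dyadic partition $d_a = 2^a$ and the conventions around \cref{eq:adi}, including the same (implicit in the paper) requirement that $D = 2^{k^*+w+2} < \eCutoff$, i.e., $n$ large relative to $\epsilon$. The only differences are cosmetic — you argue \cref{item:b-cutoff} and \cref{item:b-ub} by a direct case split on $\deg_\CG(v)$ versus $2^{a-w-3}$, where the paper argues \cref{item:b-cutoff} by contradiction and splits \cref{item:b-ub} on the apparent degree.
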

\begin{proof}
Fix any $\Gk$ such that $\Egooddeg$ occurs. Also, fix $a$ and $v$. Fixing $\Gk$ also fixes the value of $\deg_{\CG_a}(v)$ and $\adi^\vecd_a(v)$. We prove each part in turn:
\begin{enumerate}
\item If not, we have from \cref{item:deg-Ga:2} of \cref{lem:deg-Ga} that $\deg_\CG(v) > 2^{a-w-3}$. \cref{item:deg-Ga:1} of \cref{lem:deg-Ga} then implies the following contradiction:
\[
\eCutoff \leq \deg_{\CG_a}(v) \leq 2 q_a \deg_\CG(v) \leq q_a 2^{a+w+2} \leq 2^{k^*+w+2} .
\]
\item Note that $\di^\vecd_\CG(v) \geq a-w-1$ implies that $\deg_\CG(v) \geq 2^{a-w-2}$. \cref{item:deg-Ga:1} of \cref{lem:deg-Ga} implies $q_a^{-1} \deg_{\CG_a}(v) > \deg_\CG(v)/2$. As $d_k \geq \deg_\CG(v)$ as well, it follows from \cref{eq:adi} that $\adi^\vecd_a(v) \geq \di^\vecd_\CG(v) - 1$. 
\item If $q_a^{-1} \deg_{\CG_a}(v) \leq 2^{a-w-1}$, there is nothing to show. Otherwise, we have from \cref{item:deg-Ga:2} of \cref{lem:deg-Ga} that $\deg_\CG(v) > 2^{a-w-3}$. Now, using \cref{item:deg-Ga:1} of \cref{lem:deg-Ga}, we get $q_a^{-1} \deg_{\CG_a}(v) < 2\deg_\CG(v)$. It follows that $\adi^\vecd_a(v) \leq \di^\vecd_\CG(v)+1$.
\end{enumerate}
\end{proof}

Next, for $a \in [k]$, we define the following set-valued random variable that is determined by the randomness in \cref{line:forget-edge}.
\begin{equation}
\label{eq:vhat}
\hat{V}_a = \set*{v \in [n] \mid \deg_{\CG_a}(v) < \eCutoff \wedge \adi^\vecd_a(v) \in \Win^{w,k}(a) } .
\end{equation}
We also define the following sets:
\begin{equation}
\label{eq:V}
\begin{split}
V^-_a &= \set*{v \in [n] \mid \di^\vecd_\CG(v) \in \Win^{w-1,k}(a) } . \\
V^+_a &= \set*{v \in [n] \mid \di^\vecd_\CG(v) \in \Win^{w+1,k}(a) } .
\end{split}
\end{equation}
Recall that we defined the parameter $D = 2^{k^* + w + 2}$. These definitions satisfy:
\begin{lemma}
\label{lem:egooddeg}
If $\Egooddeg$ occurs, the following hold for all $a \in [k]$ (with probability $1$):
\begin{enumerate}
\item \label{item:egooddeg:1} We have $V^-_a \subseteq \hat{V}_a \subseteq V^+_a$.
\item \label{item:egooddeg:2} For all $a' \geq a \in [k]$ and all $v \in V^+_{a'}$, we have $\deg_{\CG_a}(v) \leq 2D q_a / q_{a'}$.
\item \label{item:egooddeg:3} For all $v \in V^+_a$, we have $\abs*{ \adi^\vecd_a(v) - \di^\vecd_\CG(v) } \leq 1$.
\end{enumerate}
\end{lemma}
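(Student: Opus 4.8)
The plan is to derive all three items by unwinding the definitions of the apparent degree index $\adi^\vecd_a(v)$, of $\hat{V}_a$, and of $V^-_a, V^+_a$, and substituting into the degree estimates of \cref{lem:deg-Ga,lem:deg-Ga-prop} (available since we condition on $\Egooddeg$), using the explicit values $d_a = 2^a$, $q_a = \min\{2^{k^*-a},1\}$, and $D = 2^{k^*+w+2}$. One preliminary observation makes everything well-posed: membership of $v$ in any of $\hat{V}_a$, $V^-_a$, $V^+_a$ forces $v$ to be nonisolated in $\CG$ --- for $V^\pm_a$ because $\di^\vecd_\CG(v) = -\infty \notin \Win$, and for $\hat{V}_a$ because $\deg_{\CG_a}(v) = 0$ would give $\adi^\vecd_a(v) = -\infty \notin \Win^{w,k}(a)$ and $\CG_a$ is a subgraph of $\CG$. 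Hence every degree index appearing below is a genuine element of $[k]$, and $\deg_\CG(v) < d_{\di^\vecd_\CG(v)} = 2^{\di^\vecd_\CG(v)}$ whenever $v$ is nonisolated.

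Item~\ref{item:egooddeg:3} and the inclusion $V^-_a \subseteq \hat{V}_a$ of item~\ref{item:egooddeg:1} follow almost directly from \cref{lem:deg-Ga-prop}. For item~\ref{item:egooddeg:3}: if $v \in V^+_a$ then $\di^\vecd_\CG(v) \geq a - w - 1$, so \cref{item:b-lb} of \cref{lem:deg-Ga-prop} gives $\adi^\vecd_a(v) \geq \di^\vecd_\CG(v) - 1$; and since then $\di^\vecd_\CG(v) + 1 > a - w - 1$, \cref{item:b-ub} of \cref{lem:deg-Ga-prop} gives $\adi^\vecd_a(v) \leq \di^\vecd_\CG(v) + 1$, which is exactly the claim. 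For $V^-_a \subseteq \hat{V}_a$: a vertex $v \in V^-_a$ has $a - w + 1 \leq \di^\vecd_\CG(v) \leq a + w - 1$, hence $\deg_\CG(v) < 2^{\di^\vecd_\CG(v)} \leq 2^{a+w-1} \leq 2^{a+w+1}$, so \cref{item:b-cutoff} of \cref{lem:deg-Ga-prop} gives $\deg_{\CG_a}(v) < \eCutoff$; and feeding the bounds on $\di^\vecd_\CG(v)$ into \cref{item:b-lb,item:b-ub} of \cref{lem:deg-Ga-prop} as above yields $a - w \leq \adi^\vecd_a(v) \leq a + w$, i.e.\ $\adi^\vecd_a(v) \in \Win^{w,k}(a)$, so $v \in \hat{V}_a$.

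The two remaining parts --- the inclusion $\hat{V}_a \subseteq V^+_a$ and item~\ref{item:egooddeg:2} --- need slightly more care. For $\hat{V}_a \subseteq V^+_a$, take $v \in \hat{V}_a$, so $\deg_{\CG_a}(v) \geq 1$ and $a - w \leq \adi^\vecd_a(v) \leq a + w$. If $\di^\vecd_\CG(v) \geq a + w + 2$, then \cref{item:b-lb} of \cref{lem:deg-Ga-prop} would give $\adi^\vecd_a(v) \geq \di^\vecd_\CG(v) - 1 \geq a + w + 1$, a contradiction; so $\di^\vecd_\CG(v) \leq a + w + 1$. On the other side, \cref{item:b-ub} of \cref{lem:deg-Ga-prop} gives $\adi^\vecd_a(v) \leq \max(\di^\vecd_\CG(v) + 1, a - w - 1)$, and since $\adi^\vecd_a(v) \geq a - w > a - w - 1$ the maximum cannot be $a - w - 1$, so it equals $\di^\vecd_\CG(v) + 1$, giving $\di^\vecd_\CG(v) \geq \adi^\vecd_a(v) - 1 \geq a - w - 1$; hence $\di^\vecd_\CG(v) \in \Win^{w+1,k}(a)$, i.e.\ $v \in V^+_a$. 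For item~\ref{item:egooddeg:2}, $v \in V^+_{a'}$ gives $\di^\vecd_\CG(v) \leq a' + w + 1$, so $\deg_\CG(v) < 2^{a'+w+1}$; plugging this into \cref{item:deg-Ga:1} of \cref{lem:deg-Ga} when $\deg_\CG(v) \geq 2^{a-w-3}$, or into \cref{item:deg-Ga:2} of \cref{lem:deg-Ga} together with $a \leq a'$ otherwise, yields $\deg_{\CG_a}(v) < q_a 2^{a'+w+2}$ in all cases; and since $q_{a'} \leq 2^{k^*-a'}$ we have $2^{a'} q_{a'} \leq 2^{k^*}$, so $q_a 2^{a'+w+2} \leq q_a 2^{k^*+w+2}/q_{a'} = D q_a / q_{a'} \leq 2 D q_a/q_{a'}$. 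The main obstacle in these last two arguments is purely bookkeeping: correctly tracking which branch of the piecewise $q_a = \min\{2^{k^*-a},1\}$ is active, and observing that $D$ and $\eCutoff$ were set with enough slack (indeed $\eCutoff = \log^7 n$ dominates $D = O_\epsilon(\log^6 n)$ for large $n$, which is what \cref{item:b-cutoff} of \cref{lem:deg-Ga-prop} relies on); no new probabilistic input beyond \cref{lem:deg-Ga,lem:deg-Ga-prop} is needed.
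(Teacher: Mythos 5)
Your proposal is correct and follows essentially the same route as the paper's proof: condition on $\Egooddeg$, unwind the definitions of $\hat{V}_a$, $V^\pm_a$, and $\adi^\vecd_a$, and invoke \cref{lem:deg-Ga,lem:deg-Ga-prop} (the paper proves $\hat{V}_a \subseteq V^+_a$ contrapositively, but that is only cosmetic). The one substantive variation is in \cref{item:egooddeg:2}, where you bound $\deg_\CG(v) \leq 2^{a'+w+1}$ from the upper end of the window and apply \cref{lem:deg-Ga} directly with a two-case split, thereby avoiding the paper's detour through the apparent degree index and its $\min\{\cdot,d_k\}$ truncation; both arguments are valid, and yours even yields the marginally tighter bound $D q_a/q_{a'}$.
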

\begin{proof}
Fix any $\Gk$ such that $\Egooddeg$ occurs. Fixing $\Gk$ also fixes the value of $\deg_{\CG_a}(v)$ and $\adi^\vecd_a(v)$ for all $a \in [k]$ and $v \in [n]$. Fix $a \in [k]$ as in the lemma. We prove each part in turn.
\begin{enumerate}
\item We first show that $V^-_a \subseteq \hat{V}_a$. Let $v \in V^-_a$ be arbitrary. By \cref{eq:V}, we have $\di^\vecd_\CG(v) \in \Win^{w-1,k}(a)$. Using \cref{item:b-cutoff,item:b-lb,item:b-ub} of \cref{lem:deg-Ga-prop}, we get $\deg_{\CG_a}(v) < \eCutoff$ and $\adi^\vecd_a(v) \geq \di^\vecd_\CG(v)-1 \geq a - w$ and $\adi^\vecd_a(v) \leq \di^\vecd_\CG(v)+1 \leq a + w$. It follows from \cref{eq:vhat} that $v \in \hat{V}_a$. As $v \in V^-_a$ was arbitrary, we have $V^-_a \subseteq \hat{V}_a$, as desired.

We now show that $\hat{V}_a \subseteq V^+_a$ in the contrapositive. Let $v \in [n] \setminus V^+_a$ be arbitrary. By \cref{eq:V}, we have $\di^\vecd_\CG(v) \notin \Win^{w+1,k}(a)$ Thus, either $\di^\vecd_\CG(v) < a-w-1$ or $\di^\vecd_\CG(v) > a+w+1$. Using \cref{item:b-ub} of \cref{lem:deg-Ga-prop} in the former case and \cref{item:b-lb} in the latter, we get that either $\adi^\vecd_a(v) \leq a-w-1 < a-w$ or $\adi^\vecd_a(v) \geq \di^\vecd_\CG(v)-1 > a + w$. In either case, we have from \cref{eq:vhat} that $v \in [n] \setminus \hat{V}_a$. As $v \in [n] \setminus V^+_a$ was arbitrary, the result follows.

\item As $v \in V^+_{a'}$, we have $\di^\vecd_\CG(v) \in \Win^{w+1,k}(a')$. It follows that $\di^\vecd_\CG(v) \geq a'-w-1 \geq a-w-1$. From \cref{item:b-ub} of \cref{lem:deg-Ga-prop}, we get $\adi^\vecd_a(v) \leq \di^\vecd_\CG(v)+1 \leq a' + w + 2$. From \cref{eq:adi}, we get that $\min\{q_a^{-1} \deg_{\CG_a}(v), d_k\} \leq 2^{a' + w + 2}$. As \cref{item:deg-Ga:1} of \cref{lem:deg-Ga} implies that $q_a^{-1} \deg_{\CG_a}(v) < 2\deg_\CG(v) \leq 2d_k$, we can continue as $q_a^{-1} \deg_{\CG_a}(v) \leq 2^{a' + w + 3}$. This means:
\[
\deg_{\CG_a}(v) \leq 2^{a' + w + 3} q_a \leq 2D q_a / q_{a'} .
\]

\item As $v \in V^+_a$, we have $\di^\vecd_\CG(v) \in \Win^{w+1,k}(a)$. The result follows from \cref{item:b-ub,item:b-lb} of \cref{lem:deg-Ga-prop}.
\end{enumerate}
\end{proof}

\subsection{Biases after edge-subsampling}\label{sec:bias-Ga}

For $a \in [k]$ and $v \in [n]$, we define a random variable for the {\em apparent bias index} as follows:
\begin{equation}
\label{eq:abi}
\abi^\vect_a(v) = \bi^\vect_{\CG_a}(v) .
\end{equation}
(The right hand side is undefined if $\deg_{\CG_a}(v) = 0$.) Note that this random variable is determined by $\CG_a$ and therefore, by the randomness in \cref{line:forget-edge}.

\begin{lemma}
\label{lem:bias-Ga}
We have $\Pr\paren*{ \Egoodbias } \geq 999/1000$, where the probability is over the randomness in \cref{line:forget-edge} and $\Egoodbias$ is defined as follows: For all $a \in [k]$ and $v \in V^+_a$, we have:
\begin{enumerate}
\item \label{item:bias-Ga:1} We have:
\[
\abs*{ \deg_{\CG_a}(v) - q_a \deg_\CG(v) } \leq \frac{\epsbias}{12} \cdot q_a \deg_\CG(v) .
\]
\item \label{item:bias-Ga:2} If $\dout_\CG(v) \geq \deg_\CG(v)/2$, then we have:
\[
\abs*{ \dout_{\CG_a}(v) - q_a \dout_\CG(v) } \leq \frac{\epsbias}{12} \cdot q_a \dout_\CG(v) .
\]
\item \label{item:bias-Ga:3} If $\din_\CG(v) \geq \deg_\CG(v)/2$, then we have:
\[
\abs*{ \din_{\CG_a}(v) - q_a \din_\CG(v) } \leq \frac{\epsbias}{12} \cdot q_a \din_\CG(v) .
\]
\end{enumerate}
\end{lemma}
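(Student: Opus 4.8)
The plan is a routine Chernoff-plus-union-bound argument, mirroring the proof of \cref{lem:deg-Ga}. I would fix a pair $a \in [k]$ and $v \in V^+_a$, bound the probability that each of the three inequalities fails by $o(1/n^2)$, and then union-bound over the at most $nk$ such pairs and over the three inequalities (using $k = O(\log n) \le n$).

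First I would dispose of the case $a \le k^*$, which is trivial: there $q_a = \min\{2^{k^*-a},1\} = 1$, so $\CG_a = \CG$ with probability $1$, and all three left-hand sides vanish identically. So suppose $a > k^*$, i.e.\ $q_a = 2^{k^*-a} < 1$. The key structural observation is that, because the coin in \cref{line:forget-edge} is tossed independently for each edge of the stream, the random variables $\deg_{\CG_a}(v)$, $\dout_{\CG_a}(v)$, and $\din_{\CG_a}(v)$ are sums of, respectively, $\deg_\CG(v)$, $\dout_\CG(v)$, and $\din_\CG(v)$ i.i.d.\ $\Bern(q_a)$ indicators, with expectations $q_a\deg_\CG(v)$, $q_a\dout_\CG(v)$, $q_a\din_\CG(v)$. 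Each inequality is then precisely a two-sided multiplicative deviation statement with relative error $\delta = \epsbias/12$, so I would apply \cref{lem:chernoff-twosided}.

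The one thing requiring care is that the relevant expectation be large enough for the Chernoff tail to dominate $o(1/n^2)$. Here is where the hypothesis $v \in V^+_a$ enters: it means $\di^\vecd_\CG(v) \ge a-w-1$, hence $\deg_\CG(v) \ge d_{a-w-2} = 2^{a-w-2}$ (well-defined since $a > k^* = 6\log\log n$ exceeds $w+2$ once $n$ is large), so $q_a\deg_\CG(v) = 2^{k^*-a}\deg_\CG(v) \ge 2^{k^*-w-2} \ge 2^{k^*-w-3}$; and in parts (2) and (3) the extra hypothesis gives $\dout_\CG(v), \din_\CG(v) \ge \deg_\CG(v)/2 \ge 2^{a-w-3}$, hence $q_a\dout_\CG(v), q_a\din_\CG(v) \ge 2^{k^*-w-3}$ as well. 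Since $k^* = 6\log\log n$ while $w = 1/\epsilon$ is an absolute constant, $2^{k^*-w-3} = \Omega_\epsilon(\log^6 n)$, which for $n$ large exceeds $C\ln n/\epsbias^2$ for any fixed $C$. Plugging into \cref{lem:chernoff-twosided}, each inequality fails with probability at most $2\exp\bigl(-(\epsbias/12)^2\, 2^{k^*-w-3}/3\bigr) = o(1/n^2)$, so the union bound yields $\Pr(\overline{\Egoodbias}) = o(1)$, in particular at most $1/1000$ for $n$ sufficiently large.

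I do not expect a genuine obstacle here; the only delicate point is the bookkeeping in the previous paragraph, namely that membership in $V^+_a$ (rather than in the smaller sets $V^-_a$ or $\hat V_a$) is exactly the hypothesis that forces $\deg_\CG(v) \ge 2^{a-w-2}$ and hence $q_a\deg_\CG(v) = \Omega_\epsilon(\log^6 n)$ --- which is precisely why $k^*$ was chosen to be $6\log\log n$, so that the apparent degrees/biases computed from $\CG_a$ are concentrated enough that a $(1\pm\epsbias/12)$-relative guarantee holds. For $n$ below the implied $\epsilon$-dependent threshold one appeals, as usual, to the fact that the whole graph then fits within the space budget.
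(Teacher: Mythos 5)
Your proposal is correct and follows essentially the same route as the paper: dispose of $a\le k^*$ trivially, represent $\deg_{\CG_a}(v)$, $\dout_{\CG_a}(v)$, $\din_{\CG_a}(v)$ as sums of i.i.d.\ $\Bern(q_a)$ indicators, apply \cref{lem:chernoff-twosided} with relative error $\epsbias/12$, use $v\in V^+_a$ (and $\dout,\din\ge\deg/2$ in parts 2--3) to lower-bound the expectation by $2^{k^*-w-O(1)}=\Omega_\epsilon(\log^6 n)$ so each failure probability is $o(1/n^2)$, and union-bound over all $a,v$.
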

\begin{proof}
We will upper bound the probability that $\Egoodbias$ does not happen. For this, we fix $a \in [k]$ and $v \in V^+_a$ and upper bound the probability that one of \cref{item:bias-Ga:1,item:bias-Ga:2,item:bias-Ga:3} does not hold for this $a$ and $v$ by $o(1/n^2)$. The proof then follows by a union bound. If $a \leq k^*$, then $\CG_a = \CG$ and $q_a = 1$ and there is nothing to show. We therefore assume that $a > k^* \implies q_a = 2^{k^* - a}$ and show \cref{item:bias-Ga:2} as the proofs of \cref{item:bias-Ga:1,item:bias-Ga:3} is analogous. Note that $\dout_{\CG_a}(v)$ is a sum of $\dout_\CG(v)$-many independent and identically distributed indicator random variables, each of which is $1$ with probability $q_a$. Thus, we have from \cref{lem:chernoff-twosided} that:
\begin{align*}
\Pr\paren*{ \abs*{ \dout_{\CG_a}(v) - q_a \dout_\CG(v) } \geq \frac{\epsbias}{12} \cdot q_a \dout_\CG(v) } &\leq 2\exp\paren*{ -\frac{\epsbias^2}{500} \cdot q_a \dout_\CG(v) } \\
&\leq 2\exp\paren*{ -\frac{\epsbias^2}{1000} \cdot q_a \deg_\CG(v) } \\
&\leq 2\exp\paren*{ -\frac{\epsbias^2}{1000} \cdot 2^{k^* - w - 2} } \tag{\cref{eq:V} implies $\deg_\CG(v) \geq 2^{a-w-2}$} \\
&= o(1/n^2) .
\end{align*}
\end{proof}

\begin{lemma}
\label{lem:bias-Ga-prop}
If $\Egoodbias$ occurs, for all $a \in [k]$ and $v \in V^+_a$, we have $|\abi^\vect_a(v)-\bi^\vect_\CG(v)|\leq 1$ (with probability $1$).
\end{lemma}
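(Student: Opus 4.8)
The plan is: conditioned on $\Egoodbias$, show that $\bias_{\CG_a}(v)$ differs from $\bias_\CG(v)$ by at most $\epsbias$, which (by our choice of parameters) is smaller than the width of every interval of $\vect$, and then observe that two reals lying within the minimum interval width of each other have $\Ind^\vect$-indices differing by at most $1$. I would first handle the easy case $a \le k^*$: then $q_a = 1$, $\CG_a = \CG$, so $\bias_{\CG_a}(v) = \bias_\CG(v)$ and $\abi^\vect_a(v) = \bi^\vect_\CG(v)$ exactly. So assume $a > k^*$, i.e.\ $q_a = 2^{k^*-a}$. I would then note that $\abi^\vect_a(v)$ is well defined, i.e.\ $\deg_{\CG_a}(v) > 0$: since $v \in V^+_a$, \cref{eq:V} gives $\deg_\CG(v) \ge 2^{a-w-2}$, so $q_a\deg_\CG(v) \ge 2^{k^*-w-2} > 0$, and \cref{item:bias-Ga:1} of $\Egoodbias$ then forces $\deg_{\CG_a}(v) \ge (1-\epsbias/12)\,q_a\deg_\CG(v) > 0$.

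For the core estimate I would rewrite $\bias_\CG(v) = 2\dout_\CG(v)/\deg_\CG(v) - 1$ (using $\din = \deg - \dout$) and likewise $\bias_{\CG_a}(v) = 2\dout_{\CG_a}(v)/\deg_{\CG_a}(v) - 1$. Since $\dout_\CG(v)+\din_\CG(v) = \deg_\CG(v)$, at least one of $\dout_\CG(v) \ge \deg_\CG(v)/2$ or $\din_\CG(v) \ge \deg_\CG(v)/2$ holds; assume the former (the latter case is identical, using \cref{item:bias-Ga:3} in place of \cref{item:bias-Ga:2} and writing the biases in terms of $\din$). Then \cref{item:bias-Ga:1,item:bias-Ga:2} say that $\dout_{\CG_a}(v)$ and $\deg_{\CG_a}(v)$ are each within a multiplicative factor $1\pm\epsbias/12$ of $q_a\dout_\CG(v)$ and $q_a\deg_\CG(v)$, so their ratio lies within a factor $\tfrac{1+\epsbias/12}{1-\epsbias/12}$ of $\dout_\CG(v)/\deg_\CG(v) \le 1$; hence
\[ \bigl|\bias_{\CG_a}(v) - \bias_\CG(v)\bigr| = 2\left|\frac{\dout_{\CG_a}(v)}{\deg_{\CG_a}(v)} - \frac{\dout_\CG(v)}{\deg_\CG(v)}\right| \le 2\left(\frac{1+\epsbias/12}{1-\epsbias/12} - 1\right) \le \epsbias, \]
since $\epsbias$ is a small constant.

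Finally I would compare the indices. If $\abi^\vect_a(v) \ge \bi^\vect_\CG(v)+2$ (the opposite gap being symmetric), then writing $i := \bi^\vect_\CG(v)$, the definition of $\Ind^\vect$ gives $\bias_\CG(v) < t_i$ while $\bias_{\CG_a}(v) \ge t_{i+1}$, so $\bias_{\CG_a}(v)-\bias_\CG(v) > t_{i+1}-t_i > \epsbias$ because every interval of $\vect$ has width exceeding $\epsbias$ (by the $\lambda$-wideness of $\vect$ from \cref{table:eps} and the choice of $\epsbias$); this contradicts the displayed bound, so $|\abi^\vect_a(v)-\bi^\vect_\CG(v)| \le 1$. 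There is no deep obstacle here; the only step needing care is lining up the case split on which of $\dout_\CG(v),\din_\CG(v)$ is the majority with the corresponding item of $\Egoodbias$ that one is entitled to invoke, and converting the multiplicative control supplied by $\Egoodbias$ into an additive bound on the bias.
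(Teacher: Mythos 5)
Your proof is correct and follows essentially the same route as the paper's: the same case split on whether $\dout_\CG(v)$ or $\din_\CG(v)$ is the majority (paired with the corresponding item of $\Egoodbias$), the same conversion of the multiplicative control on $\dout_{\CG_a}(v)$ and $\deg_{\CG_a}(v)$ into an additive bound on the bias via $\bias = 2\dout/\deg - 1$, and the same final comparison of $\Ind^\vect$-indices. The only (cosmetic) difference is that you settle for the additive bound $\epsbias$ on $|\bias_{\CG_a}(v)-\bias_\CG(v)|$ and therefore need every interval of $\vect$ to have width at least $\epsbias$, whereas the paper --- and in fact your own computation, since $2\bigl(\tfrac{1+\epsbias/12}{1-\epsbias/12}-1\bigr) \le \epsbias/2$ for small $\epsbias$ --- gives $\epsbias/2$ and only needs widths at least $\epsbias/2$, which is what the $\lambda$-wideness of $\vect$ actually guarantees.
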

\begin{proof}
Fix any $\Gk$ such that $\Egoodbias$ occurs. Also, fix $a$ and $v$. Fixing $\Gk$ also fixes the value of $\bi^\vect_\CG(v)$ and $\abi^\vect_a(v)$. We assume that $\dout_\CG(v) \geq \deg_\CG(v)/2$ as the proof for the case $\din_\CG(v) \geq \deg_\CG(v)/2$ is analogous. By \cref{item:bias-Ga:1,item:bias-Ga:2} of \cref{lem:bias-Ga}, we have:
\begin{align*}
1 - \frac{\epsbias}{12} \leq \frac{ \dout_{\CG_a}(v) }{ q_a \dout_\CG(v) } \leq 1 + \frac{\epsbias}{12} . \\
1 - \frac{\epsbias}{12} \leq \frac{ \deg_{\CG_a}(v) }{ q_a \deg_\CG(v) } \leq 1 + \frac{\epsbias}{12} .
\end{align*}
It follows that
\[
1 - \frac{\epsbias}{4} \leq \frac{ \dout_{\CG_a}(v) / \dout_\CG(v) }{ \deg_{\CG_a}(v) / \deg_\CG(v) } \leq 1 + \frac{\epsbias}{4} ,
\]
which implies
\[
\frac{ \dout_\CG(v) }{ \deg_\CG(v) } - \frac{\epsbias}{4} \leq \frac{ \dout_{\CG_a}(v) }{ \deg_{\CG_a}(v) } \leq \frac{ \dout_\CG(v) }{ \deg_\CG(v) } + \frac{\epsbias}{4} .
\]
As $\bias = 2\dout/\deg - 1$, we get:
\[
\bias_\CG(v) - \frac{\epsbias}{2} \leq \bias_{\CG_a}(v) \leq \bias_\CG(v) + \frac{\epsbias}{2} .
\]
As consecutive entries of $\vect$ are at least $\epsbias/2$ apart, this means that $|\abi^\vect_a(v)-\bi^\vect_\CG(v)| \leq 1$ as desired.
\end{proof}

Next, for $a \in [k]$ and $i \in [\ell]$, we define the following set-valued random variable that is determined by the randomness in \cref{line:forget-edge}.
\begin{equation}
\label{eq:vhati}
\hat{V}_{a,i} = \set*{ v \in \hat{V}_a \mid \abi^\vect_a(v) \in \Win^{w,\ell}(i) } .
\end{equation}
We also define the following sets for all $a \in [k]$ and $i \in [\ell]$:
\begin{equation}
\label{eq:Vi}
\begin{split}
V^-_{a,i} &= \set*{ v \in V^-_a \mid \bi^\vect_\CG(v) \in \Win^{w-1,\ell}(i) } . \\
V^+_{a,i} &= \set*{ v \in V^+_a \mid \bi^\vect_\CG(v) \in \Win^{w+1,\ell}(i) } .
\end{split}
\end{equation}
These definitions satisfy:
\begin{lemma}
\label{lem:egoodbias}
If $\Egooddeg$ and $\Egoodbias$ occur, for all $a \in [k]$ and $i \in [\ell]$, we have $V^-_{a,i} \subseteq \hat{V}_{a,i} \subseteq V^+_{a,i}$ (with probability $1$).
\end{lemma}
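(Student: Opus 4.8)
The statement is the ``bias-coordinate'' analogue of \cref{item:egooddeg:1} of \cref{lem:egooddeg} (which handled the degree coordinate), and the plan is to deduce it by combining that lemma with the bias-closeness estimate \cref{lem:bias-Ga-prop}. Concretely, I would fix an outcome $\Gk$ of the coins in \cref{line:forget-edge} for which both $\Egooddeg$ and $\Egoodbias$ occur --- this fixes the graphs $\CG_a$ and hence all of $\adi^\vecd_a(\cdot)$, $\abi^\vect_a(\cdot)$, $\hat V_a$, and $\hat V_{a,i}$ --- then fix $a \in [k]$ and $i \in [\ell]$, and prove the two inclusions $V^-_{a,i}\subseteq\hat V_{a,i}$ and $\hat V_{a,i}\subseteq V^+_{a,i}$ separately by a one-line triangle inequality on indices.

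For $V^-_{a,i}\subseteq\hat V_{a,i}$: take $v\in V^-_{a,i}$, so by \cref{eq:Vi} we have $v\in V^-_a$ and $\bi^\vect_\CG(v)\in\Win^{w-1,\ell}(i)$. Since $\Win^{w-1,k}(a)\subseteq\Win^{w+1,k}(a)$, membership in $V^-_a$ implies $v\in V^+_a$, so \cref{lem:bias-Ga-prop} gives $|\abi^\vect_a(v)-\bi^\vect_\CG(v)|\le 1$; moreover $v\in\hat V_a$ by \cref{item:egooddeg:1} of \cref{lem:egooddeg}, which via \cref{eq:vhat} (and the $-\infty$ convention following \cref{eq:adi}) forces $\deg_{\CG_a}(v)>0$, so $\abi^\vect_a(v)$ is well-defined. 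The triangle inequality then gives $|\abi^\vect_a(v)-i|\le 1+(w-1)=w$, i.e. $\abi^\vect_a(v)\in\Win^{w,\ell}(i)$, and together with $v\in\hat V_a$ this yields $v\in\hat V_{a,i}$ by \cref{eq:vhati}. For $\hat V_{a,i}\subseteq V^+_{a,i}$, the argument is symmetric: for $v\in\hat V_{a,i}$ we have $v\in\hat V_a\subseteq V^+_a$ by \cref{item:egooddeg:1} of \cref{lem:egooddeg}, so \cref{lem:bias-Ga-prop} again gives $|\abi^\vect_a(v)-\bi^\vect_\CG(v)|\le 1$; since $\abi^\vect_a(v)\in\Win^{w,\ell}(i)$ by \cref{eq:vhati}, the triangle inequality gives $|\bi^\vect_\CG(v)-i|\le w+1$, hence $v\in V^+_{a,i}$ by \cref{eq:Vi}.

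I do not expect any real obstacle --- this is a short bookkeeping argument --- so the only points I would watch are well-definedness and the arithmetic of the window radii: that $\abi^\vect_a(v)$ and $\bi^\vect_\CG(v)$ are defined for the relevant vertices (which is precisely what membership in $\hat V_a$, resp. $V^+_a$, guarantees), and that $w\ge 1$ (true since $w=1/\epsilon\in\BN$, so $\Win^{w-1,\ell}$ is nonempty) together with the standing assumption $w+1<k,\ell$ so that the radius-$(w+1)$ windows are meaningful.
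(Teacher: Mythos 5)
Your proposal is correct and follows exactly the route the paper takes: its proof of \cref{lem:egoodbias} is a one-liner citing \cref{item:egooddeg:1} of \cref{lem:egooddeg}, \cref{lem:bias-Ga-prop}, and the definition of $\Win$, and your write-up simply fills in the same triangle-inequality bookkeeping (radius $w-1$ plus $1$ gives $w$, radius $w$ plus $1$ gives $w+1$). The well-definedness checks you flag (nonisolation in $\CG_a$ via $\hat V_a$ and in $\CG$ via $V^+_a$) are the right ones and are implicitly covered by the paper's conventions.
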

\begin{proof}
Follows immediately from \cref{item:egooddeg:1} of \cref{lem:egooddeg}, \cref{lem:bias-Ga-prop}, and the definition of $\Win$.
\end{proof}

Recall the set $\vEst_{a,i}$ defined in \cref{line:vest} and the sets $\CN_a$ and $S_a$ defined at the beginning of \cref{sec:bias-arr-correct}. We have:

\begin{lemma}
\label{lem:biasdeggood}
Fix any $\Gk$ such that all of $\Eofg$, $\Egooddeg$, and $\Egoodbias$ occur. If $\Gk$ is sampled in \cref{line:forget-edge} and $\Eofh^{\Gk}$ occurs, then the following hold (with probability $1$):
\begin{enumerate}
\item \label{item:biasdeggood:1} For all $a \in [k]$ and $i \in [\ell]$, we have $\hat{V}_{a,i} \cap S_a = \vEst_{a,i}$ .
\item \label{item:biasdeggood:2} Let $a,b \in [k]$ and $i,j \in [\ell]$. For all $u \in \vEst_{a,i}$ and $v \in \vEst_{b,j}$, it holds that:
\[
\nu^{-w,k,\ell}\paren*{ \dbi^{\vecd,\vect}_\CG(u,v) } \leq \nEst^{w,k,\ell}_{a,b}(u,v) \leq \nu^{+w,k,\ell}\paren*{ \dbi^{\vecd,\vect}_\CG(u,v) } .
\]
\end{enumerate}
\end{lemma}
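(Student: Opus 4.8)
Both parts follow by ``definition-chasing'': the plan is to translate every quantity the algorithm computes from the stored sets $\eStored_a,\vStored_a$ into the corresponding quantity defined from the subsampled graphs $\CG_a$ and hash sets $S_a$ (using \cref{lem:overflow}), after which Part~1 is a routine unwinding of definitions and Part~2 reduces immediately to the $\pm 1$ bounds already established in \cref{lem:egooddeg,lem:bias-Ga-prop}. The central bookkeeping step I would record first is: \emph{for every $a\in[k]$ and every $v\in\vStored_a$ with $\deg_{\eStored_a}(v)<\eCutoff$, the ``apparent'' indices the algorithm reads off from $\eStored_a$ agree with those defined from $\CG_a$}, i.e., $\deg_{\eStored_a}(v)=\deg_{\CG_a}(v)$, $\Ind^\vecd(\dEst_a(v))=\adi^\vecd_a(v)$, and $\Ind^\vect(\bEst_a(v))=\abi^\vect_a(v)$. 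Indeed, \cref{item:lem:overflow:3} of \cref{lem:overflow} gives $\mult_{\eStored_a}(u,v)=\mult_{\CG_a}(u,v)$ and $\mult_{\eStored_a}(v,u)=\mult_{\CG_a}(v,u)$ for all $u$, hence $\deg_{\eStored_a}(v)=\deg_{\CG_a}(v)$, $\dout_{\eStored_a}(v)=\dout_{\CG_a}(v)$, and $\din_{\eStored_a}(v)=\din_{\CG_a}(v)$; comparing with \cref{line:destbest}, \cref{eq:adi}, and \cref{eq:abi} gives the claimed equalities. I would also note that \cref{item:lem:overflow:4} of \cref{lem:overflow} lets one freely replace ``$\deg_{\eStored_a}(v)<\eCutoff$'' by ``$\deg_{\CG_a}(v)<\eCutoff$'' whenever $v\in\vStored_a$.

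For Part~1, I would expand both sides of the claimed equality $\hat{V}_{a,i}\cap S_a=\vEst_{a,i}$. Using \cref{item:lem:overflow:2} of \cref{lem:overflow} ($\vStored_a=\CN_a\cap S_a$), the bookkeeping step, and \cref{item:lem:overflow:4} of \cref{lem:overflow}, membership $v\in\vEst_{a,i}$ (\cref{line:vest}) is equivalent to the conjunction: $v\in S_a$, $v\in\CN_a$, $\deg_{\CG_a}(v)<\eCutoff$, $\adi^\vecd_a(v)\in\Win^{w,k}(a)$, and $\abi^\vect_a(v)\in\Win^{w,\ell}(i)$. The last four conditions are exactly the definition of $v\in\hat{V}_{a,i}$ (\cref{eq:vhat,eq:vhati}), once one observes that $\adi^\vecd_a(v)\in\Win^{w,k}(a)\subseteq[k]$ forces $\deg_{\CG_a}(v)>0$ (since $\adi^\vecd_a(v)=-\infty$ when $\deg_{\CG_a}(v)=0$), so ``$v\in\CN_a$'' is redundant on the $\hat{V}$ side. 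Hence $\vEst_{a,i}=\hat{V}_{a,i}\cap S_a$; one does have to chase the forward and backward inclusions separately to confirm the bookkeeping step is legitimately applicable in each direction, but this is straightforward.

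For Part~2, fix $u\in\vEst_{a,i}$ and $v\in\vEst_{b,j}$. By Part~1, $u\in\hat{V}_{a,i}\subseteq\hat{V}_a$ and $v\in\hat{V}_{b,j}\subseteq\hat{V}_b$, and \cref{item:egooddeg:1} of \cref{lem:egooddeg} upgrades these to $u\in V^+_a$ and $v\in V^+_b$. Then \cref{item:egooddeg:3} of \cref{lem:egooddeg} gives $|\adi^\vecd_a(u)-\di^\vecd_\CG(u)|\leq 1$ and $|\adi^\vecd_b(v)-\di^\vecd_\CG(v)|\leq 1$, and \cref{lem:bias-Ga-prop} gives $|\abi^\vect_a(u)-\bi^\vect_\CG(u)|\leq 1$ and $|\abi^\vect_b(v)-\bi^\vect_\CG(v)|\leq 1$. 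By the bookkeeping step, $\nEst^{w,k,\ell}_{a,b}(u,v)$ (\cref{line:aest}) equals $\nu^{\sim w,k,\ell}$ evaluated at the tuple $(\adi^\vecd_a(u),\adi^\vecd_b(v),\abi^\vect_a(u),\abi^\vect_b(v))$, which by the four inequalities above lies in $\Win^{1,k,\ell}(\dbi^{\vecd,\vect}_\CG(u,v))$ (recalling \cref{eq:db-ind}). Since, by \cref{def:normalize-ul-bounds}, $\nu^{-w,k,\ell}(\dbi^{\vecd,\vect}_\CG(u,v))$ and $\nu^{+w,k,\ell}(\dbi^{\vecd,\vect}_\CG(u,v))$ are exactly the minimum and maximum of $\nu^{\sim w,k,\ell}$ over that window, the claimed sandwich follows.

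There is no genuine inequality to fight through here; the entire content is matching the algorithm's ``$\eStored$-world'' definitions with the analysis's ``$\CG_a$-world'' definitions, and I expect the only subtle point to be making sure the $\eCutoff$ guard in the definition of $\vEst_{a,i}$ is used correctly so that \cref{lem:overflow}---specifically the combination of \cref{item:lem:overflow:2,item:lem:overflow:3,item:lem:overflow:4}---applies and yields \emph{exact} agreement between the two worlds for the surviving vertices.
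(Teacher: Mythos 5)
Your proposal is correct and follows essentially the same route as the paper: both use \cref{item:lem:overflow:2,item:lem:overflow:3,item:lem:overflow:4} of \cref{lem:overflow} to identify the $\eStored$-world quantities with the $\CG_a$-world ones (including the observation that $\adi^\vecd_a(v)\in\Win^{w,k}(a)$ forces $v\in\CN_a$), and then invoke the $\pm 1$ bounds on apparent degree and bias indices together with \cref{def:normalize-ul-bounds} for the sandwich in Part~2. The only cosmetic difference is that you cite \cref{item:egooddeg:3} of \cref{lem:egooddeg} where the paper cites \cref{item:b-lb,item:b-ub} of \cref{lem:deg-Ga-prop} directly, which is the same content.
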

\begin{proof}
We fix the randomness in \cref{line:forget-edge,line:hash} arbitrarily such that all the events in the lemma occur. Note that fixing the randomness fixes the value of all variables in \cref{alg:sketch,alg:combine} and also fixes the random variables defined above. We prove each part in turn.

\begin{enumerate}
\item Fix $a \in [k]$ and $i \in [\ell]$. Observe that, for any $v \in [n]$, we have the following equivalences:
\begin{align*}
&v \in \hat{V}_{a,i} \cap S_a \\
&\hspace{0.5cm}\iff v \in \hat{V}_{a,i} \wedge v \in S_a \\
&\hspace{0.5cm}\iff v \in \hat{V}_a \wedge \abi^\vect_a(v) \in \Win^{w,\ell}(i) \wedge v \in S_a \tag{\cref{eq:vhati}} \\
&\hspace{0.5cm}\iff \deg_{\CG_a}(v) < \eCutoff \wedge \adi^\vecd_a(v) \in \Win^{w,k}(a) \wedge \abi^\vect_a(v) \in \Win^{w,\ell}(i) \wedge v \in S_a \tag{\cref{eq:vhat}} \\
&\hspace{0.5cm}\iff \deg_{\CG_a}(v) < \eCutoff \wedge \adi^\vecd_a(v) \in \Win^{w,k}(a) \wedge \abi^\vect_a(v) \in \Win^{w,\ell}(i) \wedge v \in \CN_a \cap S_a \tag{As $\adi^\vecd_a(v) \in \Win^{w,k}(a)$ implies by \cref{eq:adi} that $\deg_{\CG_a}(v) > 0 \implies v \in \CN_a$} \\
&\hspace{0.5cm}\iff \deg_{\CG_a}(v) < \eCutoff \wedge \adi^\vecd_a(v) \in \Win^{w,k}(a) \wedge \abi^\vect_a(v) \in \Win^{w,\ell}(i) \wedge v \in \vStored_a \tag{\cref{item:lem:overflow:2} of \cref{lem:overflow}} \\
&\hspace{0.5cm}\iff \deg_{\eStored_a}(v) < \eCutoff \wedge \adi^\vecd_a(v) \in \Win^{w,k}(a) \wedge \abi^\vect_a(v) \in \Win^{w,\ell}(i) \\
&\hspace{12.5cm} {} \wedge v \in \vStored_a \tag{\cref{item:lem:overflow:4} of \cref{lem:overflow}} .
\end{align*}
To continue, note that for any $v \in \vStored_a$ such that $\deg_{\eStored_a}(v) < \eCutoff$, we have from \cref{item:lem:overflow:3} of \cref{lem:overflow} that $\din_{\eStored_a}(v) = \din_{\CG_a}(v)$ and $\dout_{\eStored_a}(v) = \dout_{\CG_a}(v)$. Conclude from \cref{line:destbest,eq:adi,eq:abi} that for any such $v$, we have $\adi^\vecd_a(v) = \Ind^\vecd(\dEst_a(v))$ and $\abi^\vect_a(v) = \Ind^\vect(\bEst_a(v))$. We can now continue as:
\begin{align*}
v \in \hat{V}_{a,i} \cap S_a &\iff \deg_{\eStored_a}(v) < \eCutoff \wedge \Ind^\vecd(\dEst_a(v)) \in \Win^{w,k}(a) \\
&\hspace{4cm} \wedge \Ind^\vect(\bEst_a(v)) \in \Win^{w,\ell}(i) \wedge v \in \vStored_a \\
&\iff v \in \vEst_{a,i} \tag{\cref{line:vest}} .
\end{align*}
\item As $u \in \vEst_{a,i}$, we have by \cref{line:vest} that $u \in \vStored_a$ and $\deg_{\eStored_a}(u) < \eCutoff$. From \cref{item:lem:overflow:3} of \cref{lem:overflow}, this means that $\din_{\eStored_a}(u) = \din_{\CG_a}(u)$ and $\dout_{\eStored_a}(u) = \dout_{\CG_a}(u)$. Using \cref{line:destbest,eq:adi,eq:abi}, this means that $\adi^\vecd_a(u) = \Ind^\vecd(\dEst_a(u))$ and $\abi^\vect_a(u) = \Ind^\vect(\bEst_a(u))$. As similar arguments apply for $v$, we get from \cref{line:aest} that:
\[
\nEst^{w,k,l}_{a,b}(u,v) = \nu^{\sim w,k,\ell}\paren*{ \adi^\vecd_a(u),\adi^\vecd_b(v),\abi^\vect_a(u),\abi^\vect_b(v) }
\]
Now, as we have \cref{def:normalize-ul-bounds}, it suffices to show that:
\[
\paren*{ \adi^\vecd_a(u),\adi^\vecd_b(v),\abi^\vect_a(u),\abi^\vect_b(v) } \in \Win^{1,k,\ell}\paren*{ \dbi^{\vecd,\vect}_\CG(u,v) } .
\]
From \cref{def:win,eq:db-ind}, this follows if we show that:
\begin{align*}
\abs*{ \adi^\vecd_a(u) - \di_\CG^\vecd(u) }, \abs*{ \adi^\vecd_b(v) - \di_\CG^\vecd(v) } &\leq 1 , \\
\abs*{ \abi^\vect_a(u) - \bi^\vect_\CG(u) }, \abs*{ \abi^\vect_b(v) - \bi^\vect_\CG(v) } &\leq 1 .
\end{align*}
This first inequality is due to \cref{item:b-lb,item:b-ub} of \cref{lem:deg-Ga-prop} while the second inequality is due to \cref{lem:bias-Ga-prop} (note that \cref{item:biasdeggood:1} implies that $u \in \hat{V}_{a,i}$ and $v \in \hat{V}_{b,j}$ and we have \cref{lem:egoodbias}).

\end{enumerate}
\end{proof}

\subsection{Counting target edges after edge-subsampling}\label{sec:counting-edges-e}

For $a, b \in [k]$ and $i, j \in [\ell]$, define:
\begin{equation}
\label{eq:E}
E^-_{a,b,i,j} = \{ e \in [m] : (u_e,v_e) \in V^-_{a,i}\times V^-_{b,j}\} \hspace{.5cm} \text{and} \hspace{.5cm} E^+_{a,b,i,j} = \{e \in [m]: (u_e,v_e) \in V^+_{a,i} \times V^+_{b,j}\} .
\end{equation}

Recall from the statement of \cref{lem:bias-arr-correct} the notation $A = \RSnapG$. Next, for $a, b \in [k]$ and $i, j \in [\ell]$, we define the random variables:

\begin{equation}
\label{eq:Y}
\begin{split}
Y^-_{a,b,i,j} &= \sum_{\e \in E^-_{a,b,i,j}} \1_{\e \in \Edges_{\min\{a,b\}}} \cdot \nu^{-w,k,\ell}\paren*{ \dbi^{\vecd,\vect}_\CG(u_\e,v_\e) } . \\
Y^+_{a,b,i,j} &= \sum_{\e \in E^+_{a,b,i,j}} \1_{\e \in \Edges_{\min\{a,b\}}} \cdot \nu^{+w,k,\ell}\paren*{ \dbi^{\vecd,\vect}_\CG(u_\e,v_\e) } . 
\end{split}
\end{equation}

Recall the notation in \cref{def:arr-ul-bounds}. The following lemma calculates the expectation of these random variables.
\begin{lemma}
\label{lem:Y-exp}
For all $a, b \in [k]$ and $i, j \in [\ell]$, it holds that:
\[
\Exp[ Y^-_{a,b,i,j} ] = q_{\min\{a,b\}} m \cdot A^{-w}(a,b,i,j) \hspace{1cm} \text{and} \hspace{1cm} \Exp[ Y^+_{a,b,i,j} ] = q_{\min\{a,b\}} m \cdot A^{+w}(a,b,i,j) .
\]
\end{lemma}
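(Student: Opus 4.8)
The plan is to reduce the claim to a deterministic identity about $\CG$ by applying linearity of expectation, since the only randomness in the quantities $Y^-_{a,b,i,j}, Y^+_{a,b,i,j}$ of \eqref{eq:Y} is in the indicators $\1_{\e \in \Edges_{\min\{a,b\}}}$: the index sets $E^\pm_{a,b,i,j}$ of \eqref{eq:E} and the weights $\nu^{\mp w,k,\ell}(\dbi^{\vecd,\vect}_\CG(u_\e,v_\e))$ are determined by $\CG$ and the fixed partitions $\vecd,\vect$ alone. Each edge index $\e$ lands in $\Edges_c$ with probability exactly $q_c$ in \cref{line:forget-edge} (with $c=\min\{a,b\}$), so $\Exp[\1_{\e\in\Edges_c}]=q_c$ for every $\e$, and hence
\[
\Exp[Y^-_{a,b,i,j}] = q_{\min\{a,b\}} \sum_{\e \in E^-_{a,b,i,j}} \nu^{-w,k,\ell}\paren*{ \dbi^{\vecd,\vect}_\CG(u_\e,v_\e) } ,
\]
and likewise for $Y^+$ with $\nu^{-w,k,\ell}$ replaced by $\nu^{+w,k,\ell}$. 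It thus suffices to prove the deterministic identities $\sum_{\e \in E^-_{a,b,i,j}} \nu^{-w,k,\ell}(\dbi^{\vecd,\vect}_\CG(u_\e,v_\e)) = m \cdot A^{-w}(a,b,i,j)$ and the corresponding statement for $A^{+w}$; I will do the $A^{-w}$ case and note the other is word-for-word the same with $w-1$ replaced by $w+1$.

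For that identity, I would first rewrite $A=\RSnapG$ in ``per-edge'' form: since $\CG$ is a multigraph with $m=m_\CG$ edges, \cref{def:refined-snapshot} (i.e. \eqref{eq:biasdegg}) gives $A(a',b',i',j') = \frac1m \sum_{\e\in[m]} \1_{\dbi^{\vecd,\vect}_\CG(u_\e,v_\e)=(a',b',i',j')}$ --- here the endpoints of every edge are nonisolated, so the convention discarding isolated vertices causes no issue, and a multiplicity-$c$ edge simply contributes through its $c$ stream indices. Plugging this into \cref{def:arr-ul-bounds} and swapping the two finite summations yields
\[
m \cdot A^{-w}(a,b,i,j) = \sum_{\e \in [m]} \ \sum_{(a',b',i',j') \in \Win^{w-1,k,\ell}(a,b,i,j)} \nu^{-w,k,\ell}(a',b',i',j') \cdot \1_{\dbi^{\vecd,\vect}_\CG(u_\e,v_\e) = (a',b',i',j')} .
\]
For each fixed $\e$ the inner sum has at most one nonzero term, namely $(a',b',i',j')=\dbi^{\vecd,\vect}_\CG(u_\e,v_\e)$, which contributes precisely when this tuple lies in $\Win^{w-1,k,\ell}(a,b,i,j)$, so
\[
m \cdot A^{-w}(a,b,i,j) = \sum_{\substack{\e \in [m]:\\ \dbi^{\vecd,\vect}_\CG(u_\e,v_\e) \in \Win^{w-1,k,\ell}(a,b,i,j)}} \nu^{-w,k,\ell}\paren*{ \dbi^{\vecd,\vect}_\CG(u_\e,v_\e) } .
\]

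The final step is to identify this index set with $E^-_{a,b,i,j}$. Unwinding \eqref{eq:db-ind}, $\dbi^{\vecd,\vect}_\CG(u_\e,v_\e) = (\di^\vecd_\CG(u_\e),\di^\vecd_\CG(v_\e),\bi^\vect_\CG(u_\e),\bi^\vect_\CG(v_\e))$, and by \cref{def:win} membership in $\Win^{w-1,k,\ell}(a,b,i,j)$ is the $\ell_\infty$-condition $\max\{\abs{\di^\vecd_\CG(u_\e)-a}, \abs{\di^\vecd_\CG(v_\e)-b}, \abs{\bi^\vect_\CG(u_\e)-i}, \abs{\bi^\vect_\CG(v_\e)-j}\}\le w-1$, which decomposes coordinatewise into $\di^\vecd_\CG(u_\e)\in\Win^{w-1,k}(a)$, $\bi^\vect_\CG(u_\e)\in\Win^{w-1,\ell}(i)$, $\di^\vecd_\CG(v_\e)\in\Win^{w-1,k}(b)$, $\bi^\vect_\CG(v_\e)\in\Win^{w-1,\ell}(j)$ --- i.e. exactly $u_\e\in V^-_{a,i}$ and $v_\e\in V^-_{b,j}$ by \eqref{eq:V} and \eqref{eq:Vi}, which is the definition of $E^-_{a,b,i,j}$ in \eqref{eq:E}. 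This completes the identity, and combining with the first display gives $\Exp[Y^-_{a,b,i,j}]=q_{\min\{a,b\}}m\cdot A^{-w}(a,b,i,j)$; the $Y^+$ case is identical. I do not anticipate a genuine obstacle here --- the argument is linearity of expectation plus Fubini --- the only points needing a little care are the multigraph and isolated-vertex bookkeeping in the per-edge rewriting of $A$, and making the coordinatewise decomposition of the window condition explicit so that the index sets match exactly.
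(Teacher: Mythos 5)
Your proposal is correct and follows essentially the same route as the paper's proof: linearity of expectation over the indicators $\1_{\e\in\Edges_{\min\{a,b\}}}$, a per-edge rewriting of $\RSnapG$ plugged into \cref{def:arr-ul-bounds} with the sums exchanged, and a coordinatewise unwinding of the window condition to identify the resulting index set with $E^{-}_{a,b,i,j}$ (resp.\ $E^{+}_{a,b,i,j}$). The only difference is cosmetic bookkeeping (your explicit multigraph/isolated-vertex remarks), which the paper handles implicitly via its stated conventions.
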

\begin{proof}
We only show the first equation as the proof of the second one is analogous. Examining \cref{eq:Y}, we recall that each variable $\1_{\e \in \Edges_{\min\{a,b\}}}$ is $1$ w.p. $q_{\min\{a,b\}}$. Thus, by linearity of expectation that it suffices to show that:
\begin{equation}
\label{eq:Y-exp-to-show}
m \cdot A^{-w}(a,b,i,j) = \sum_{\e \in E^-_{a,b,i,j}} \nu^{-w,k,\ell}\paren*{ \dbi^{\vecd,\vect}_\CG(u_\e,v_\e) } .
\end{equation}
This equation follows because:
\begin{align*}
m \cdot A^{-w}(a,b,i,j) &= m \cdot \sum_{(a',b',i',j') \in \Win^{w-1,k,\ell}(a,b,i,j)} \nu^{-w,k,\ell}(a',b',i',j') A(a',b',i',j') \tag{\cref{def:arr-ul-bounds}} \\
&= \sum_{(a',b',i',j') \in \Win^{w-1,k,\ell}(a,b,i,j)} \nu^{-w,k,\ell}(a',b',i',j') \sum_{\e=1}^m \1_{\dbi_\CG^{\vecd,\vect}(u_\e,v_\e) = (a',b',i',j')}\tag{\cref{eq:biasdegg}} \\
&= \sum_{\e=1}^m \1_{\dbi_\CG^{\vecd,\vect}(u_\e,v_\e) \in \Win^{w-1,k,\ell}(a,b,i,j)} \cdot \nu^{-w,k,\ell}\paren*{ \dbi_\CG^{\vecd,\vect}(u_\e,v_\e) } .
\end{align*}
Now, by \cref{eq:E}, it suffices to show that every pair of nonisolated vertices $(u,v)$, we have:
\[ 
\dbi_\CG^{\vecd,\vect}(u,v) \in \Win^{w-1,k,\ell}(a,b,i,j) \iff (u,v) \in V^-_{a,i} \times V^-_{b,j} .
\]
This follows because:
\begin{align*}
&\dbi_\CG^{\vecd,\vect}(u,v) \in \Win^{w-1,k,\ell}(a,b,i,j) \\
&\hspace{1cm}\iff \paren*{ \di_\CG^\vecd(u),\di_\CG^\vecd(v),\bi^\vect_\CG(u),\bi^\vect_\CG(v) } \in \Win^{w-1,k,\ell}(a,b,i,j) \tag{\cref{eq:db-ind}} \\
&\hspace{1cm}\iff \di_\CG^\vecd(u) \in \Win^{w-1,k}(a) \wedge \di_\CG^\vecd(v) \in \Win^{w-1,k}(b) \\
&\hspace{4cm} {} \wedge \bi^\vect_\CG(u) \in \Win^{w-1,\ell}(i) \wedge \bi^\vect_\CG(v) \in \Win^{w-1,\ell}(j)  \tag{\cref{def:win}} \\
&\hspace{1cm}\iff u \in V^-_{a,i} \wedge v \in V^-_{b,j} . \tag{\cref{eq:V,eq:Vi}}
\end{align*}
\end{proof}

Using standard concentration bounds, we get:
\begin{lemma}
\label{lem:Y-sandwich}
We have $\Pr(\Egoodcounte) \geq 999/1000$, where the probability is over the randomness in \cref{line:forget-edge} and $\Egoodcounte$ is defined as follows: For all $a,b \in [k]$ and $i,j \in [\ell]$, we have:
\begin{align*}
\abs*{ Y^-_{a,b,i,j} - \Exp[ Y^-_{a,b,i,j} ] } &\leq \frac{ \epsilon }{ 2(k\ell)^2 } \cdot q_{\min\{a,b\}} m . \\
\abs*{ Y^+_{a,b,i,j} - \Exp[ Y^+_{a,b,i,j} ] } &\leq \frac{ \epsilon }{ 2(k\ell)^2 } \cdot q_{\min\{a,b\}} m . 
\end{align*}
\end{lemma}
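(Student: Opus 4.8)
The plan is to recognize $Y^-_{a,b,i,j}$ and $Y^+_{a,b,i,j}$ as weighted sums of independent $\{0,1\}$-valued random variables with deterministic weights bounded by $1$, apply the two-sided weighted Chernoff bound (\cref{lem:weighted-chernoff}), and union bound over all tuples $(a,b,i,j) \in [k]^2 \times [\ell]^2$. Fix such a tuple and write $c = \min\{a,b\}$. By \cref{eq:Y}, $Y^-_{a,b,i,j} = \sum_{\e \in E^-_{a,b,i,j}} \nu^{-w,k,\ell}\paren*{\dbi^{\vecd,\vect}_\CG(u_\e,v_\e)} \cdot \1_{\e \in \Edges_c}$; the coefficients $\nu^{-w,k,\ell}\paren*{\dbi^{\vecd,\vect}_\CG(u_\e,v_\e)}$ are deterministic (they depend only on $\CG,\vecd,\vect,w$), the indices in $E^-_{a,b,i,j}$ are distinct, and the indicators $\1_{\e \in \Edges_c}$ are mutually independent $\Bern(q_c)$ variables since in \cref{alg:sketch} each stream edge gets an independent coin toss per layer (\cref{line:forget-edge}) and $c$ is fixed. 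Moreover, since $\nu^{-w,k,\ell}(\cdot) \le \nu^{\sim w,k,\ell}(\cdot) = 1/|\Win^{w,k,\ell}(\cdot)| \in (0,1]$ (the window always contains its center), every coefficient lies in $(0,1]$, so \cref{lem:weighted-chernoff} applies. The same observations hold verbatim for $Y^+_{a,b,i,j}$, with $\nu^{+w,k,\ell}$ replacing $\nu^{-w,k,\ell}$.

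Next I would dispatch the degenerate case: if $E^-_{a,b,i,j} = \emptyset$ then $Y^-_{a,b,i,j} = 0$ identically, and by \cref{lem:Y-exp} also $\Exp[Y^-_{a,b,i,j}] = q_c m \cdot A^{-w}(a,b,i,j) = 0$, so the claimed inequality is immediate; otherwise $\Exp[Y^-_{a,b,i,j}] = q_c m \cdot A^{-w}(a,b,i,j) > 0$ because every coefficient is strictly positive. I then invoke \cref{lem:weighted-chernoff} with deviation parameter $\delta := \epsilon/\paren*{2(k\ell)^2 A^{-w}(a,b,i,j)}$, chosen so that $\delta\,\Exp[Y^-_{a,b,i,j}] = \tfrac{\epsilon}{2(k\ell)^2} q_c m$; since $A^{-w}(a,b,i,j) \le 1$ by \cref{lem:a-bound}, we get $\delta \ge \epsilon/(2(k\ell)^2)$, hence $\Pr\!\left[\abs*{Y^-_{a,b,i,j}-\Exp[Y^-_{a,b,i,j}]} \ge \tfrac{\epsilon}{2(k\ell)^2}\, q_c m\right] \le 2\exp\!\left(-\tfrac{\delta^2 \Exp[Y^-_{a,b,i,j}]}{3}\right) \le 2\exp\!\left(-\tfrac{\epsilon^2 q_c m}{12(k\ell)^4}\right)$. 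The identical computation, now using $A^{+w}(a,b,i,j) \le 1$ (\cref{lem:a-bound} again), bounds the failure probability for $Y^+_{a,b,i,j}$ by the same quantity.

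It remains to check that $q_c m$ is large enough for this tail to survive the $O((k\ell)^2)$-term union bound. Since $q_a = \min\{2^{k^*-a},1\}$ is non-increasing in $a$ and $c \le k = \log(2\hat m)$, and since $2\hat m \ge \mmin = \sqrt n > \log^6 n = 2^{k^*}$ for $n$ large (so $k > k^*$ and $q_k = 2^{k^*-k}$), we obtain $q_c m \ge q_k m = \tfrac{2^{k^*}}{2\hat m}\, m \ge \tfrac{2^{k^*}}{4} = \tfrac14 \log^6 n$, using $\hat m \le 2m$. Because $k = O_\epsilon(\log n)$ and $\ell = O_\epsilon(1)$ (by \cref{table:eps,table:epsnm}), $(k\ell)^4 = O_\epsilon(\log^4 n)$, so the exponent above is $\tfrac{\epsilon^2 q_c m}{12(k\ell)^4} = \Omega_\epsilon(\log^2 n)$; thus each of the $2(k\ell)^2$ events (two per tuple) fails with probability $o(1/(k\ell)^2)$, and a union bound gives $\Pr(\Egoodcounte) \ge 999/1000$. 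The only step requiring real care is this last parameter bookkeeping; the probabilistic core is a direct application of \cref{lem:weighted-chernoff}, available precisely because the sole randomness entering $Y^\pm_{a,b,i,j}$ is the fully independent edge-subsampling — in contrast to the vertex-subsampling analysis, there are no limited-independence complications here.
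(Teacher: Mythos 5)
Your proposal is correct and follows essentially the same route as the paper: express $Y^{\pm}_{a,b,i,j}$ as a weighted sum of the independent edge-subsampling indicators with weights $\nu^{\pm w,k,\ell}(\cdot)\in(0,1]$, apply the two-sided weighted Chernoff bound (\cref{lem:weighted-chernoff}) with $\Exp[Y^{\pm}_{a,b,i,j}]$ from \cref{lem:Y-exp}, use $A^{\pm w}(a,b,i,j)\le 1$ (\cref{lem:a-bound}) to clean up the exponent, note $q_{\min\{a,b\}}m \ge q_k m = \Omega(\log^6 n)$ so the exponent is $\Omega_\epsilon(\log^2 n)$, and union bound over all $(a,b,i,j)$. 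Your explicit handling of the degenerate $E^\pm_{a,b,i,j}=\emptyset$ case and the independence check are just additional care on top of the paper's argument, not a different method.
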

\begin{proof}
We will upper bound the probability that $\Egoodcounte$ does not happen. For this, we fix $a,b \in [k]$ and $i,j \in [\ell]$ and upper bound the probability that the first inequality does not hold for this $a,b$ and $i,j$ by $o(1/n)$. The proof for the second inequality is similar and the lemma, thus follows by a union bound. Fix $a,b \in [k]$ and $i,j \in [\ell]$. We have from \cref{lem:Y-exp,lem:weighted-chernoff} that:
\begin{align*}
\Pr\paren*{ \abs*{ Y^-_{a,b,i,j} - \Exp[ Y^-_{a,b,i,j} ] } \geq \frac{ \epsilon }{ 2(k\ell)^2 } \cdot q_{\min\{a,b\}} m } &\leq 2 \exp\paren*{ - \frac{ \epsilon^2 \cdot q_{\min\{a,b\}} m }{ 12 (k\ell)^4 A^{-w}(a,b,i,j) } } \\
&\leq 2 \exp\paren*{ - \frac{ \epsilon^2 \cdot q_{\min\{a,b\}} m }{ 12 (k\ell)^4 } } \tag{\cref{lem:a-bound}} \\
&\leq 2 \exp\paren*{ - \frac{ \epsilon^2 q_k m }{ 12 (k\ell)^4 } } \tag{\cref{table:layer}} \\
&= o(1/n) ,
\end{align*}
where, for the last step, recall that $k = \log (2\hat{m}) = O_\epsilon(\log n)$, while $q_k m = 2^{k^*-k} m = m \cdot \frac{ \log^6 n}{ 2 \hat{m} } \geq 0.5 \log^6 n$. So the negated expression in the exponent is $\Theta_\epsilon(\log^2 n)$ and we get the required bound of $o(1/n)$.
\end{proof}

\subsection{Counting target edges after vertex-subsampling}\label{sec:counting-edges-v}

Recall that $S_a = \{v \in [n] : \pi_a(v) = 1\}$ for $a \in [k]$ and that $S_a$ is determined by the randomness in \cref{line:hash}. Observe from \cref{line:vstored} that only vertices in $S_a$ can possibly be in $\vStored_a$. Next, let $\CJ_1,\ldots,\CJ_k$ be a set of subsets of $[m]$, corresponding to edge-indices, that may be sampled in \cref{line:forget-edge} and recall our notation $\Gk = \CJ_1,\ldots,\CJ_k$. For such a $\Gk$ and any $a, b \in [k]$ and $i, j \in [\ell]$, define the random variable:
\begin{equation}
\label{eq:X}
\begin{split}
X^{\Gk,-}_{a,b,i,j} &= \sum_{\e \in E^-_{a,b,i,j} \cap \Edges_{\min\{a,b\}}} \1_{u_\e \in S_a} \1_{v_\e \in S_b} \nu^{-w,k,\ell}\paren*{ \dbi^{\vecd,\vect}_\CG(u,v) } . \\
X^{\Gk,+}_{a,b,i,j} &= \sum_{\e \in E^+_{a,b,i,j} \cap \Edges_{\min\{a,b\}}} \1_{u_\e \in S_a} \1_{v_\e \in S_b}  \nu^{+w,k,\ell}\paren*{ \dbi^{\vecd,\vect}_\CG(u,v) } . \\
\end{split}
\end{equation}

Note that the random variables $X^{\Gk,-}_{a,b,i,j}$ and $X^{\Gk,+}_{a,b,i,j}$ are determined solely by the randomness in \cref{line:hash} (as they only depend on the randomness in the sets $S_1, \dots, S_k$). Our definitions satisfy the concentration lemma in \cref{lem:X-sandwich} below but first we calculate the expectation of the random variables defined above (recall that the value of $Y^-_{a,b,i,j}$ and $Y^+_{a,b,i,j}$, for all $a, b \in [k]$ and $i, j \in [\ell]$, and whether or not $\Egooddeg$ and $\Egoodcounte$ occur is determined by the graphs $\Gk$):

\begin{lemma}
\label{lem:X-exp}
Fix any $\Gk$. For all $a,b \in [k]$ and $i,j \in [\ell]$, we have:
\[
\Exp[ X^{\Gk,-}_{a,b,i,j} ] = p_ap_b Y^-_{a,b,i,j} \hspace{1cm}\text{and}\hspace{1cm} \Exp[ X^{\Gk,+}_{a,b,i,j} ] = p_ap_b Y^+_{a,b,i,j} .
\]
\end{lemma}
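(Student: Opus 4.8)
The plan is to treat this as a routine first-moment computation, carefully separating the two sources of randomness. Once $\Gk = \CJ_1,\dots,\CJ_k$ is fixed, the only randomness left is that of the hash functions $\pi_1,\dots,\pi_k$ sampled in \cref{line:hash}, and hence of the sets $S_1,\dots,S_k$. In particular, with $\Gk$ fixed, the index set $\Edges_{\min\{a,b\}} = \CJ_{\min\{a,b\}}$, the sets $E^-_{a,b,i,j}$ and $E^+_{a,b,i,j}$ (which by \cref{eq:V,eq:Vi,eq:E} depend only on $\CG$), the coefficients $\nu^{\pm w,k,\ell}(\dbi^{\vecd,\vect}_\CG(u_\e,v_\e))$, and the numbers $Y^-_{a,b,i,j},Y^+_{a,b,i,j}$ are all deterministic; indeed, reading off \cref{eq:Y} with $\Gk$ fixed, $Y^-_{a,b,i,j} = \sum_{\e \in E^-_{a,b,i,j}\cap\Edges_{\min\{a,b\}}} \nu^{-w,k,\ell}(\dbi^{\vecd,\vect}_\CG(u_\e,v_\e))$, and similarly for $Y^+$. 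So in \cref{eq:X} the only random factors are the indicators $\1_{u_\e \in S_a}$ and $\1_{v_\e \in S_b}$.

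By linearity of expectation over the sum in \cref{eq:X}, it then suffices to prove the single identity $\Exp[\1_{u_\e \in S_a}\,\1_{v_\e \in S_b}] = p_a p_b$ for each relevant edge index $\e$, the expectation being over \cref{line:hash}. I would argue this by two cases. If $a \neq b$, then $\pi_a$ and $\pi_b$ are drawn independently in \cref{line:hash}, so $S_a$ and $S_b$ are independent; by the ``uniform marginal'' property in \cref{lem:hash-functions}, $\Pr[u_\e \in S_a] = \Pr[\pi_a(u_\e) = 1] = p_a$ and $\Pr[v_\e \in S_b] = p_b$, so the expectation factors as $p_a p_b$. If $a = b$, I would use that $\CG$ has no self-loops, so $u_\e \neq v_\e$; hence $\1_{u_\e \in S_a}$ and $\1_{v_\e \in S_a}$ are determined by the evaluations of $\pi_a$ at two \emph{distinct} points, which are independent (and each uniform on $[1/p_a]$) because $\pi_a$ comes from a $4$-wise (in particular $2$-wise) independent family. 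In either case $\Exp[\1_{u_\e \in S_a}\,\1_{v_\e \in S_b}] = p_a p_b$.

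Substituting this back, and using that every other factor in each summand of \cref{eq:X} is deterministic once $\Gk$ is fixed, I get $\Exp[X^{\Gk,-}_{a,b,i,j}] = p_a p_b \sum_{\e \in E^-_{a,b,i,j}\cap\Edges_{\min\{a,b\}}} \nu^{-w,k,\ell}(\dbi^{\vecd,\vect}_\CG(u_\e,v_\e)) = p_a p_b\, Y^-_{a,b,i,j}$ via the formula for $Y^-_{a,b,i,j}$ noted above; the identical argument with $-$ replaced by $+$ everywhere gives $\Exp[X^{\Gk,+}_{a,b,i,j}] = p_a p_b\, Y^+_{a,b,i,j}$. The only step that needs any care is the diagonal case $a = b$, where one must invoke the absence of self-loops in $\CG$ to conclude that the two relevant hash evaluations are on distinct inputs and therefore independent; the rest is bookkeeping together with linearity of expectation.
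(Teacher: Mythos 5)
Your proposal is correct and follows essentially the same route as the paper: fix $\Gk$, note that $Y^\pm_{a,b,i,j}$ and all coefficients are then deterministic, and apply linearity of expectation together with the ($4$-wise, hence pairwise) independence of the hash functions to get $\Exp[\1_{u_\e \in S_a}\1_{v_\e \in S_b}] = p_a p_b$. Your explicit case split at $a=b$ (using the absence of self-loops so the two hash evaluations are at distinct inputs) is just a spelled-out version of the paper's appeal to $4$-wise independence.
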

\begin{proof}
We only prove the former as the latter is analogous. By linearity of expectation, \cref{eq:Y} and the fact that the hash-functions sampled in \cref{line:hash} are $4$-wise independent, we have:
\[
\Exp[X^{\Gk,-}_{a,b,i,j}] = \sum_{\e \in E^-_{a,b,i,j} \cap \CJ_{\min\{a,b\}}} p_ap_b \cdot \nu^{-w,k,\ell}(\dbi^{\vecd,\vect}_\CG(u_\e,v_\e)) = p_ap_b \cdot Y^-_{a,b,i,j} .
\]
\end{proof}

\begin{lemma}
\label{lem:X-sandwich}
Fix any $\Gk$ such that both $\Egooddeg$ and $\Egoodcounte$ occur. It holds that $\Pr\paren*{ \Egoodcountv^{\Gk} } \geq 999/1000$, where the probability is over the randomness in \cref{line:hash} and $\Egoodcountv^{\Gk}$ is defined as follows: For all $a,b \in [k]$ and $i,j \in [\ell]$, we have:
\begin{align*}
\abs*{ X^{\Gk,-}_{a,b,i,j} - \Exp[ X^{\Gk,-}_{a,b,i,j} ] } &\leq \frac{ \epsilon }{ 2(k\ell)^2 } \cdot p_a p_b q_{\min\{a,b\}} m . \\
\abs*{ X^{\Gk,+}_{a,b,i,j} - \Exp[ X^{\Gk,+}_{a,b,i,j} ] } &\leq \frac{ \epsilon }{ 2(k\ell)^2 } \cdot p_a p_b q_{\min\{a,b\}} m . 
\end{align*}
\end{lemma}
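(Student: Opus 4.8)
The plan is to condition on $\Gk$ as in the hypothesis, so that the only remaining randomness is the choice of the $4$-wise independent hash functions in \cref{line:hash}, equivalently the sets $S_1,\dots,S_k$. I fix $a,b\in[k]$ and $i,j\in[\ell]$ and aim to prove the single-index tail bound with failure probability $o(1/(k\ell)^2)$; a union bound over the $O((k\ell)^2)$ indices and the two signs then finishes it. By symmetry I may assume $a\le b$, so $\min\{a,b\}=a$, $q_a\ge q_b$ and $p_a\le p_b$; write $X:=X^{\Gk,-}_{a,b,i,j}$ and $\eta:=\frac{\epsilon}{2(k\ell)^2}p_ap_bq_am$ (the case of $X^{\Gk,+}_{a,b,i,j}$ is identical, replacing $\nu^{-w,k,\ell},A^{-w}$ by $\nu^{+w,k,\ell},A^{+w}$). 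The first step is to bound $\Exp[X]$: combining \cref{lem:X-exp}, the hypothesis that $\Egoodcounte$ holds (via \cref{lem:Y-sandwich}), \cref{lem:Y-exp}, and \cref{lem:a-bound} gives $\Exp[X]=p_ap_bY^-_{a,b,i,j}\le 2p_ap_bq_am$ (and $\le 1.5 q_a m$ bounds we only need on the $Y$-side).

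I would then split into three cases by the values of $p_a,p_b$. When $p_a=p_b=1$, the hash functions map into a singleton, so $S_a=S_b=[n]$ deterministically and $X=Y^-_{a,b,i,j}=\Exp[X]$ — nothing to prove. When $p_a<1=p_b$, only $S_a$ is random: grouping the defining sum by its \emph{left} endpoint, $X=\sum_{u\in V^-_{a,i}}\1_{u\in S_a}W_u$ with $W_u:=\sum_{\e\in\Edges_a:\,u_\e=u,\,v_\e\in V^-_{b,j}}\nu^{-w,k,\ell}(\dbi^{\vecd,\vect}_\CG(u_\e,v_\e))$, so that $0\le W_u\le\deg_{\CG_a}(u)\le 2D$ (using $u\in V^-_{a,i}\subseteq V^+_a$ and \cref{item:egooddeg:2} of \cref{lem:egooddeg} with $a'=a$) and $\sum_uW_u=Y^-_{a,b,i,j}\le 2q_am$. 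Since the $\1_{u\in S_a}$ are pairwise independent, every off-diagonal covariance vanishes and $\Var[X]=\sum_uW_u^2\,p_a(1-p_a)\le 2Dp_a\sum_uW_u\le 4Dp_aq_am$; Chebyshev (\cref{lem:chebyshev}) together with $p_aq_a=p_0$ (forced by $p_a<1$) gives $\Var[X]/\eta^2\le 16D(k\ell)^4/(\epsilon^2p_0m)$, which is $o(1/(k\ell)^2)$ because $p_0m\ge\rho\sqrt{m/2}\ge\rho n^{1/4}/\sqrt2$ (recall $m\ge\mmin=\sqrt n$) while $D(k\ell)^4=2^{O(1/\epsilon)}\log^{O(1)}n$, i.e.\ the former beats the latter by a polynomial-in-$n$ factor.

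The main case is $p_a,p_b<1$, i.e.\ $q_a,q_b>p_0$. Here $X=\sum_\e X_\e$ over $\e\in E^-_{a,b,i,j}\cap\Edges_a$, with $X_\e:=\1_{u_\e\in S_a}\1_{v_\e\in S_b}\,\nu^{-w,k,\ell}(\dbi^{\vecd,\vect}_\CG(u_\e,v_\e))\in[0,1]$. By $4$-wise independence of the hashes (and independence of $\pi_a$ and $\pi_b$ when $a\ne b$), $X_\e$ and $X_{\e'}$ are independent unless $\e,\e'$ share an endpoint, and the number of $\e'\in E^-_{a,b,i,j}\cap\Edges_a$ sharing an endpoint with $\e$ is at most $\deg_{\CG_a}(u_\e)+\deg_{\CG_a}(v_\e)\le 2D+2Dq_a/q_b\le 4Dq_a/q_b$ (by \cref{item:egooddeg:2} of \cref{lem:egooddeg}, since $u_\e\in V^+_a$, $v_\e\in V^+_b$, $b\ge a$; when $a=b$ this is just $4D$). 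Feeding this into \cref{lem:chebyshev-indep} with $\Exp[X]\le 2p_ap_bq_am$ and $\eta^2=\frac{\epsilon^2}{4(k\ell)^4}p_a^2p_b^2q_a^2m^2$,
\begin{align*}
\Pr[|X-\Exp[X]|\ge\eta]
&\le\frac{4D(q_a/q_b)\cdot 2p_ap_bq_am}{\eta^2}\\
&=\frac{32D(k\ell)^4}{\epsilon^2\,q_bp_ap_bm}=\frac{32D(k\ell)^4\,q_a}{\epsilon^2\,p_0^2m}\le\frac{32D(k\ell)^4}{\epsilon^2\,p_0^2m},
\end{align*}
where I used $q_bp_b=p_0$ and $p_a=p_0/q_a$ (both valid since $q_a,q_b>p_0$) and $q_a\le 1$. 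Since $\rho$ is chosen with $\rho^2=10^6D(k\ell)^6/\epsilon^2$ and $p_0^2m=\rho^2m/\hat m\ge\rho^2/2$, the right-hand side is at most $32/(5\cdot10^5(k\ell)^2)<1/(1000(k\ell)^2)$, as needed.

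The one genuinely delicate point — and where I expect all the work to be — is the $q_a/q_b$ blow-up in the dependency degree in this last case: when $b\gg a$, a stored high-degree-class vertex $v_\e$ has degree $\approx q_ad_b$ in the sparsified graph $\CG_a$, so the $X_\e$'s are badly correlated and a naive application of \cref{lem:chebyshev-indep} is far too weak. The resolution visible in the algebra above is that in exactly this regime the vertex-sampling probability $p_a=p_0/q_a$ is correspondingly tiny, and the two effects cancel against the allowed absolute deviation $\eta$ — but only because $\rho$ (hence $p_0$) was made large enough, which is precisely why $\rho$ carries the factor $\sqrt D\,(k\ell)^3$ and a large absolute constant. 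Verifying this cancellation quantitatively in all sub-cases, and separately dispatching the boundary cases where $p_a$ or $p_b$ equals $1$ (handled above by the determinism of $S_a=[n]$ and by the cross-term-free variance computation, respectively), is the crux; the remaining manipulations are routine bookkeeping.
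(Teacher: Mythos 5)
Your proof is correct and follows essentially the same route as the paper's: condition on $\Gk$, union-bound over the $O((k\ell)^2)$ indices, and apply Chebyshev with limited independence to the edge-indexed indicators, bounding the dependency degree via the $\CG_a$-degree bounds guaranteed by $\Egooddeg$ so that the $q_a/q_b$ blow-up cancels against $p_ap_b$ through the choice of $\rho$. The only cosmetic difference is that you dispatch the sub-case $p_a<1=p_b$ with a direct pairwise-independence variance computation grouped by left endpoints, whereas the paper folds it into the same limited-independence bound using the dependency-degree estimate $5Dp_b/p_0$.
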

\begin{proof}
Fix $\Gk$ and note that this also fixes $Y^-_{a,b,i,j}$ and $Y^+_{a,b,i,j}$. We will upper bound the probability that $\Egoodcountv^{\Gk}$ does not happen. For this, we fix $a,b \in [k]$ and $i,j \in [\ell]$ and upper bound the probability that the first inequality does not hold for this $a,b$ and $i,j$ by $\frac{ 10^{-4} }{ (k\ell)^2 }$. The proof for the second inequality is similar and the lemma, thus follows by a union bound. Fix an arbitrary $a,b \in [k]$ and $i,j \in [\ell]$ and assume $a \leq b$ without loss of generality.

Note from \cref{table:layer} that $p_0 \leq p_1 \dots \leq p_k \leq 1$. This means that if $p_a = 1$, then $p_a = p_b = 1$ and $X^{\Gk,-}_{a,b,i,j}$ is a constant value independent of the randomness and the inequality follows. Thus, we can assume $p_a < 1$ which means that $p_a = p_0q_a^{-1}$. For convenience, define $F = E^-_{a,b,i,j} \cap \Edges_a$ and $\nabla = \frac{ \epsilon }{ 2(k\ell)^2 } \cdot p_a p_b q_a m$.  For $\e \in F$, define the random variable:
\[
I_\e := \1_{u_\e \in S_a} \1_{v_\e \in S_b} \nu^{-w,k,\ell}\paren*{ \dbi^{\vecd,\vect}_\CG(u_\e,v_\e) } .
\]

Plugging into \cref{eq:X}, we get that $X^{\Gk,-}_{a,b,i,j} = \sum_{\e \in F} I_\e$. Next, we claim that for all $\e \in F$, there are at most $5D \cdot \frac{p_b}{p_0}$ many $\e' \in F$ such that $I_\e$ and $I_{\e'}$ are not independent, where $D$ and $p_0$ are as in \cref{table:epsn,table:epsnm}. Assuming this claim for now, \cref{lem:chebyshev-indep} says:
\[
\Pr\paren*{ \abs*{ X^{\Gk,-}_{a,b,i,j} - \Exp[ X^{\Gk,-}_{a,b,i,j} ] } \geq \nabla } \leq \frac{5D p_b \cdot \Exp[ X^{\Gk,-}_{a,b,i,j} ]}{ p_0 \nabla^2}. 
\] 
Now, recall from \cref{lem:X-exp} that $\Exp[ X^{\Gk,-}_{a,b,i,j} ] = p_ap_b Y^-_{a,b,i,j}$. As $\Egoodcounte$ occurs, we have from \cref{lem:Y-exp,lem:Y-sandwich} that $Y^-_{a,b,i,j} \leq q_a m \cdot \paren*{ A^{-w}(a,b,i,j) + \frac{ \epsilon }{ 2(k\ell)^2 } } \leq 2 q_a m$ by \cref{lem:a-bound}. Plugging in we get:
\[
\Pr\paren*{ \abs*{ X^{\Gk,-}_{a,b,i,j} - \Exp[ X^{\Gk,-}_{a,b,i,j} ] } \geq \nabla } \leq \frac{10D p_b \cdot p_a p_b q_a m}{ p_0 \nabla^2} = \frac{ 40 (k\ell)^4 D }{ p_0 \epsilon^2 p_a q_a m } , 
\] 
by definition of $\nabla$. Recalling that $p_a = p_0q_a^{-1}$, we get:
\[
\Pr\paren*{ \abs*{ X^{\Gk,-}_{a,b,i,j} - \Exp[ X^{\Gk,-}_{a,b,i,j} ] } \geq \nabla } \leq \frac{ 80 (k\ell)^4 D }{ \rho^2 \epsilon^2 } \leq \frac{ 10^{-4} }{ (k\ell)^2 } . 
\] 
It remains to show the claim. For this, we consider two cases, based on whether or not $p_b = 1$. In both cases, we use the fact that the hash-functions sampled in \cref{line:hash} are $4$-wise independent. If $p_b = 1$, then $S_b = [n]$ (with probability $1$) and $I_\e$ and $I_{\e'}$ are not independent only if $u_\e = u_{\e'}$. As $F \subseteq \Edges_a$, this means that the number of $\e' \in F$ such that the random variables $I_\e$ and $I_{\e'}$ are not independent is at most $\deg_{\CG_a}(u_\e)$, implying that it suffices to show the bound $\deg_{\CG_a}(u_\e) \leq 2D$. For this, note that $F \subseteq E^-_{a,b,i,j}$ implies by \cref{eq:E} that $u_\e \in V^-_{a,i} \subseteq V^+_a$ by \cref{eq:Vi} and \cref{item:egooddeg:1} of \cref{lem:egooddeg}. Now, using \cref{item:egooddeg:2} of \cref{lem:egooddeg}, we get that $\deg_{\CG_a}(u_\e) \leq 2D$, as desired.

Now, consider the case $p_b < 1$ which implies $p_b/p_0 = 1/q_b$. In this case, we have that $I_\e$ and $I_{\e'}$ are not independent only if the pairs $(u_\e,v_\e)$ and $(u_{\e'},v_{\e'})$ have a common element. Using the fact that $F \subseteq \Edges_a$, we get that the number of $\e' \in F$ such that the random variables $I_\e$ and $I_{\e'}$ are not independent is at most $2 + \deg_{\CG_a}(u_\e) + \deg_{\CG_a}(v_\e)$, implying that it suffices to show the bounds $\deg_{\CG_a}(u_\e) \leq 2D$ and $\deg_{\CG_a}(v_\e) \leq 2D/q_b$. For this, note that $F \subseteq E^-_{a,b,i,j}$ implies by \cref{eq:E} that $u_\e \in V^-_{a,i}$ and $v_\e \in V^-_{b,j}$. As in the previous part, we get that $u_\e \in V^+_a$ and $v_\e \in V^+_b$. Now, using \cref{item:egooddeg:2} of \cref{lem:egooddeg}, we get that $\deg_{\CG_a}(u_\e) \leq 2D$ and $\deg_{\CG_a}(v_\e) \leq 2D/q_b$, as desired.
\end{proof}

\subsection{Putting it all together: Proving \cref{lem:bias-arr-correct}}\label{sec:final-proof-of-bias-arr-correct}

We are now ready to prove \cref{lem:bias-arr-correct}.

\begin{proof}[Proof of \cref{lem:bias-arr-correct}]
Define the event $\Egood$ (over the randomness in \cref{line:hash,line:forget-edge}) that for all $a, b \in [k]$ and $i, j \in [\ell]$, it holds that:
\[
A^{-w}(a,b,i,j) - \frac{\epsilon}{(k\ell)^2} \leq \hat{A}(a,b,i,j) \leq A^{+w}(a,b,i,j) + \frac{\epsilon}{(k\ell)^2} .
\]
Observe that \cref{lem:bias-arr-correct} is the same as showing $\Pr\paren*{ \Egood } \geq 99/100$. We show this by upper bounding the probability of the complement event. Using a union bound, we have:
\begin{align*}
\Pr\paren*{ \overline{ \Egood } } &\leq \Pr\paren*{ \overline{\Eofg} \vee \overline{\Egooddeg} \vee \overline{\Egoodbias} \vee \overline{\Egoodcounte} } + \Pr\paren*{ \overline{ \Egood } \mid \Eofg,\Egooddeg,\Egoodbias,\Egoodcounte } \\
&\leq \frac{4}{1000} + \Pr\paren*{ \overline{ \Egood } \mid \Eofg,\Egooddeg,\Egoodbias,\Egoodcounte } \tag{\cref{lem:ofg,lem:deg-Ga,lem:bias-Ga,lem:Y-sandwich}} .
\end{align*}
Thus, it suffices to bound the last term by $\frac{2}{1000}$. For this, note that the events $\Eofg$, $\Egooddeg$, $\Egoodbias$, and $\Egoodcounte$ are all determined by the randomness in \cref{line:forget-edge}, or equivalently by the subsets $\Gk = \CJ_1, \dots, \CJ_k$. Thus, it suffices to show that for any $\Gk$ such that all the events $\Eofg$, $\Egooddeg$, $\Egoodbias$, and $\Egoodcounte$ occur, we have $\Pr\paren*{ \overline{ \Egood } \mid \Gk } \leq \frac{1}{1000}$. Fix such a $\Gk$. By another union bound, we have:
\begin{align*}
\Pr\paren*{ \overline{ \Egood } \mid \Gk } &\leq \Pr\paren*{ \overline{ \Eofh^{\Gk} } \vee \overline{ \Egoodcountv^{\Gk} } \mid \Gk } + \Pr\paren*{ \overline{ \Egood } \mid \Gk, \Eofh^{\Gk}, \Egoodcountv^{\Gk} } \\
&\leq \frac{2}{1000} + \Pr\paren*{ \overline{ \Egood } \mid \Gk, \Eofh^{\Gk}, \Egoodcountv^{\Gk} } \tag{\cref{lem:ofh,lem:X-sandwich}} ,
\end{align*}
where the last inequality also uses the fact that $\Egoodcountv^{\Gk}$ and $\Eofh^{\Gk}$ are determined by the randomness in \cref{line:hash} and therefore, independent of $\Gk$. To finish the proof, we show that for any choice of the randomness in \cref{line:hash,line:forget-edge} such that the events $\Gk$, $\Eofh^{\Gk}$, and $\Egoodcountv^{\Gk}$ occur, the event $\Egood$ also occurs. Fix such a randomness for the rest of the proof. Note that fixing this randomness fixes the value of all the random variables defined above and also fixes the value of all variables in \cref{alg:sketch,alg:combine}. Henceforth, we shall often abuse notation and use the name of the random variable to denote the value it is fixed to, e.g., $X^{\Gk,-}_{a,b,i,j}$ will denote the value $X^{\Gk,-}_{a,b,i,j}$ is fixed to. We use a similar notation for variables. We first claim that, for all $a,b \in [k]$ and $i,j \in [\ell]$, we have:
\begin{equation}
\label{eq:bacorr:1}
X^{\Gk,-}_{a,b,i,j} \leq \hat{A}(a,b,i,j) \cdot p_a p_b q_{\min\{a,b\}} m \leq X^{\Gk,+}_{a,b,i,j} .
\end{equation}
We prove \cref{eq:bacorr:1} later but assuming it for now, we get from \cref{lem:X-exp,lem:X-sandwich} that (for all $a,b \in [k]$ and $i,j \in [\ell]$):
\[
\frac{ Y^-_{a,b,i,j} }{ q_{\min\{a,b\}} m } - \frac{\epsilon}{2(k\ell)^2} \leq \hat{A}(a,b,i,j) \leq \frac{ Y^+_{a,b,i,j} }{ q_{\min\{a,b\}} m } + \frac{\epsilon}{2(k\ell)^2} .
\]
Continuing using \cref{lem:Y-exp,lem:Y-sandwich}, we get (for all $a,b \in [k]$ and $i,j \in [\ell]$):
\[
A^{-w}(a,b,i,j) - \frac{\epsilon}{(k\ell)^2} \leq \hat{A}(a,b,i,j) \leq A^{+w}(a,b,i,j) + \frac{\epsilon}{(k\ell)^2} ,
\]
as desired.

It remains to prove \cref{eq:bacorr:1}. We only show the second inequality as the proof for the first one is analogous. For this, fix $a,b \in [k]$ and $i,j \in [\ell]$ and note from \cref{eq:hata} that $\hat{A}(a,b,i,j) \cdot p_a p_b q_{\min\{a,b\}} m = \AEst_{a,b,i,j}$. Thus, it suffices to show that $\AEst_{a,b,i,j} \leq X^{\Gk,+}_{a,b,i,j}$. For this, we first combine \cref{line:aest} with \cref{item:biasdeggood:2} of \cref{lem:biasdeggood} to get:
\[
\AEst_{a,b,i,j} \leq \sum_{ (u,v) \in \eStored_{\min\{a,b\}} }\1_{u \in \vEst_{a,i}} \1_{v \in \vEst_{b,j}} \nu^{+w,k,\ell}\paren*{ \dbi^{\vecd,\vect}_\CG(u,v) } .
\]
By \cref{item:lem:overflow:5} of \cref{lem:overflow}, we deduce:
\[
\AEst_{a,b,i,j} \leq \sum_{\e \in \CJ_{\min\{a,b\}}} \1_{u_\e \in \vEst_{a,i}} \1_{v_\e \in \vEst_{b,j}} \nu^{+w,k,\ell}\paren*{ \dbi^{\vecd,\vect}_\CG(u_\e,v_\e) } .
\]

Next, we apply \cref{item:biasdeggood:1} of \cref{lem:biasdeggood} to get $\vEst_{a,i} \times \vEst_{b,j} = \paren*{ \hat{V}_{a,i} \cap S_a } \times \paren*{ \hat{V}_{b,j} \cap S_b }$. The right-hand side equals $\paren*{ \hat{V}_{a,i} \times \hat{V}_{b,j} } \cap \paren*{ S_a \times S_b }$ and we get:
\[
\AEst_{a,b,i,j} \leq \sum_{ \e \in \Edges_{\min\{a,b\}} } \1_{u_\e \in \hat{V}_{a,i})} \1_{v_\e \in \hat{V}_{b,j}}\1_{u \in S_a} \1_{v \in S_b} \nu^{+w,k,\ell}\paren*{ \dbi^{\vecd,\vect}_\CG(u_\e,v_\e) } .
\]
To finish the proof, use \cref{eq:E,eq:X,lem:egoodbias} to get
\begin{align*}
    \AEst_{a,b,i,j} &\leq \sum_{\e \in E^+_{a,b,i,j} \cap \CJ_{\min\{a,b\}}} \1_{u_\e \in S_a} \1_{v_\e \in S_b} \nu^{+w,k,\ell}\paren*{ \dbi^{\vecd,\vect}_\CG(u_\e,v_\e) } = X^{\Gk,+}_{a,b,i,j}.
\end{align*}

\end{proof}

\section{From pointwise smoothed array estimates to smoothed matrix estimates: Proving \cref{lem:pointwise-impl-smooth}}\label{sec:pointwise-impl-smooth-analysis}

In this section, we prove \cref{lem:pointwise-impl-smooth} in several steps. This lemma states that a ``pointwise smoothed estimate'' (\cref{def:pointwise}) of an array $A$ implies a ``smoothed estimate'' (\cref{def:smoothed-est}) of its projection $M = \Proj(A)$ (\cref{def:projection}). We will aim to apply this in the algorithm where $A = \RSnapG$ is the refined snapshot of a graph $\CG$ (\cref{def:refined-snapshot}) and $M=\SnapG$ is its snapshot (\cref{def:snapshot}).

To prove \cref{lem:pointwise-impl-smooth}, we first use the following simple proposition, which states that to estimate the matrix $M^{\sim w}$ with $1$-norm error, it suffices to instead estimate the array $A^{\sim w}$ with $1$-norm error.

\begin{proposition}\label{lem:4-to-2}
For every $w < k,\ell \in \BN$, and $A \in \A^{k,\ell}$, let $M = \Proj(A)$. Then $M^{\sim w} = \Proj(A^{\sim w})$. Moreover, for any array $\hat{A} \in \A^{k,\ell}$, define $\hat{M} = \Proj(\hat{A})$. Then $\|\hat{M}-M^{\sim w}\|_1 \leq \|\hat{A}-A^{\sim w}\|_1$.
\end{proposition}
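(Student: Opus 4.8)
The plan is to prove both assertions by direct manipulation of sums, exploiting the fact that projection ($\Proj$) sums out the first two coordinates while smoothing ($\sim w$) acts as a product window over all four coordinates, with the windows in the third/fourth coordinates being exactly the windows appearing in matrix smoothing.

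For the first assertion, $M^{\sim w} = \Proj(A^{\sim w})$, I would fix $i,j \in [\ell]$ and expand $(\Proj(A^{\sim w}))(i,j) = \sum_{a,b=1}^k A^{\sim w}(a,b,i,j)$. Substituting \cref{def:smoothing-arrays} gives a double sum over $(a,b)$ and over $(a',b',i',j') \in \Win^{w,k,\ell}(a,b,i,j)$ with normalization factor $\nu^{\sim w,k,\ell}(a',b',i',j')$. The key computational step is to swap the order of summation, summing first over $(a',b',i',j')$ and then over the $(a,b)$ for which $(a',b',i',j')$ lies in the window; and then, crucially, to factor the $4$-dimensional window as $\Win^{w,k}(a,b) \times \Win^{w,\ell}(i,j)$ (this is exactly how it is defined) so that the normalization factor $\nu^{\sim w,k,\ell}(a',b',i',j') = 1/|\Win^{w,k,\ell}(a',b',i',j')| = 1/(|\Win^{w,k}(a',b')|\cdot|\Win^{w,\ell}(i',j')|)$ splits as a product $\nu^{\sim w,k}(a',b') \cdot \nu^{\sim w,\ell}(i',j')$. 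The sum over $a,b$ of $\nu^{\sim w,k}(a',b')$ — i.e., the number of $(a,b)$ whose $k$-window contains $(a',b')$, times $1/|\Win^{w,k}(a',b')|$ — collapses to exactly $1$ by the same double-counting identity used in \cref{prop:smoothing-preserves-sum} (the window relation is symmetric: $(a',b') \in \Win^{w,k}(a,b) \iff (a,b) \in \Win^{w,k}(a',b')$). What remains is $\sum_{(i',j') \in \Win^{w,\ell}(i,j)} \nu^{\sim w,\ell}(i',j') \sum_{a',b'=1}^k A(a',b',i',j')$, which is precisely $\sum_{(i',j') \in \Win^{w,\ell}(i,j)} \nu^{\sim w,\ell}(i',j') M(i',j') = M^{\sim w}(i,j)$ by \cref{def:projection,def:smoothing-matrix}. (I should double-check the ``degenerate'' boundary cases where windows are truncated near $1$ or $k$; the symmetry of the window relation and the double-counting argument handle these uniformly, exactly as in \cref{prop:smoothing-preserves-sum}.)

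For the second assertion, I would apply the first one to $B := \hat{A} - A$: by linearity of $\Proj$ and of smoothing, $\hat{M} - M^{\sim w} = \Proj(\hat{A}) - \Proj(A^{\sim w}) = \Proj(\hat{A} - A^{\sim w})$. Then, since $\Proj$ is a sum of entries (with all coefficients equal to $1$), for each fixed $(i,j)$ we have $|(\Proj(C))(i,j)| = |\sum_{a,b} C(a,b,i,j)| \leq \sum_{a,b} |C(a,b,i,j)|$ by the triangle inequality; summing over $i,j$ gives $\|\Proj(C)\|_1 \leq \|C\|_1$ for any array $C$. Taking $C = \hat{A} - A^{\sim w}$ yields $\|\hat{M} - M^{\sim w}\|_1 \leq \|\hat{A} - A^{\sim w}\|_1$, as claimed.

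The only step requiring genuine care — and hence the main obstacle, modest as it is — is verifying that the normalization factors multiply out correctly and that the double-counting collapse $\sum_{a,b} \nu^{\sim w,k}(a',b') \mathbbm{1}[(a',b') \in \Win^{w,k}(a,b)] = 1$ holds including at the boundary. This is essentially a re-run of the argument behind \cref{prop:smoothing-preserves-sum} in a ``partial'' form, so I would either cite that proof or reproduce its one-line double-counting identity. Everything else is bookkeeping with finite sums and the triangle inequality.
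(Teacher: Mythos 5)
Your proposal is correct and follows essentially the same route as the paper's proof: exchange the order of summation, factor the four-dimensional window and its normalization factor as a product of the degree-part and bias-part, collapse the sum over $(a,b)$ via the double-counting/symmetry identity (\cref{fact:window-count-contain}), and then obtain the norm inequality from the triangle inequality applied entrywise to the projection. No gaps.
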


We prove this proposition (again by double-counting) in \cref{sec:double-counting-proofs} below. The second key ingredient involves the definition of pointwise smoothed estimates, and in particular, the upper- and lower-bound arrays $A^{-w}$ and $A^{+w}$ (\cref{def:arr-ul-bounds}).:

\begin{lemma}[``Sandwiching $A^{\sim w}$'']\label{lem:sandwich}
There exists a universal constant $\Cwin > 0$ such that for every $w < k,\ell \in \BN$ and every $A \in \AS^{k,\ell}$, $\|A^{+w} - A^{-w}\|_1 \leq \Cwin/w$.
\end{lemma}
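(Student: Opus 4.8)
The plan is to bound $\|A^{+w}-A^{-w}\|_1$ by comparing both arrays, entry by entry, against the ``honest'' smoothed array $A^{\sim w}$, and then summing over all entries. Recall from \cref{def:arr-ul-bounds} that $A^{+w}(a,b,i,j)$ sums $A$ over the window $\Win^{w+1,k,\ell}(a,b,i,j)$ with the larger normalization factors $\nu^{+w,k,\ell}$, while $A^{-w}(a,b,i,j)$ sums $A$ over the smaller window $\Win^{w-1,k,\ell}(a,b,i,j)$ with the smaller factors $\nu^{-w,k,\ell}$. Since $A$ has nonnegative entries and we already observed $A^{-w}\le A^{\sim w}\le A^{+w}$ entrywise, it suffices to bound $\sum_{a,b,i,j}\bigl(A^{+w}(a,b,i,j)-A^{\sim w}(a,b,i,j)\bigr)$ and $\sum_{a,b,i,j}\bigl(A^{\sim w}(a,b,i,j)-A^{-w}(a,b,i,j)\bigr)$ separately, each by $O(1/w)$.

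First I would handle the difference $A^{+w}-A^{\sim w}$. Writing out the definitions, this difference splits into two contributions: (i) the extra ``shell'' of indices lying in $\Win^{w+1}\setminus\Win^w$, which contributes $\sum \nu^{+w,k,\ell}(a',b',i',j')\,A(a',b',i',j')$ over that shell, and (ii) on the common window $\Win^w$, the excess $(\nu^{+w,k,\ell}-\nu^{\sim w,k,\ell})\ge 0$ times $A$. For (i), I would swap the order of summation (a double-counting argument as used for \cref{prop:smoothing-preserves-sum,lem:4-to-2}): each index $(a',b',i',j')$ is counted once for each center $(a,b,i,j)$ whose window-$(w{+}1)$ shell it lies in, and the number of such centers, times the normalization factor $\nu^{+w,k,\ell}(a',b',i',j')\le 1/|\Win^{w-1,k,\ell}(a',b',i',j')|$ (since $\nu^{+w}$ maxes over a radius-$1$ ball, it is at most the reciprocal of the size of a radius-$(w{-}1)$ window), is $O(1/w)$ by a surface-area-to-volume estimate — a radius-$(w{+}1)$ $\ell_\infty$-shell in $d=4$ dimensions has size $O(w^{d-1})=O(w^3)$ while a radius-$(w{-}1)$ window has size $\Theta(w^4)$, modulo boundary effects when $a',b',i',j'$ are near $1$ or $k,\ell$. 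Summing over all $(a',b',i',j')$ and using $\sum A(a',b',i',j')=1$ (as $A\in\AS^{k,\ell}$) gives $O(1/w)$. For (ii), I would show $\nu^{+w,k,\ell}(a',b',i',j')-\nu^{\sim w,k,\ell}(a',b',i',j')\le O(1/w)\cdot\nu^{\sim w,k,\ell}(a',b',i',j')$: perturbing each of the four coordinates by $1$ changes a window size $\prod_{\text{coords}}(\text{side length})$ by at most a multiplicative $(1+O(1/w))$ factor, since each side length is $\Theta(w)$ (again modulo boundary truncation). Then $\sum_{\text{centers}}\sum_{\text{window}} O(1/w)\,\nu^{\sim w}\,A = O(1/w)\sum A = O(1/w)$ by \cref{prop:smoothing-preserves-sum}. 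The bound on $A^{\sim w}-A^{-w}$ is entirely symmetric, using that $\Win^{w-1}$ differs from $\Win^w$ by an inner shell of size $O(w^3)$ and $\nu^{-w}\ge(1-O(1/w))\nu^{\sim w}$.

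The main obstacle I anticipate is handling the \emph{boundary cases} carefully — when a center $(a,b,i,j)$ or an index $(a',b',i',j')$ is within distance $w$ of the edges of the grid $[k]^2\times[\ell]^2$, the windows get truncated and are no longer full $(2w{+}1)^4$ cubes, so the clean ``side length $\Theta(w)$'' and ``shell size $O(w^3)$'' estimates need to be replaced by the statement that $|\Win^{w+1}(\cdot)|/|\Win^{w-1}(\cdot)| = 1+O(1/w)$ \emph{uniformly}, which still holds because truncation shrinks all relevant windows consistently (each truncated side has length between $w$ and $2w{+}1$, or at least $\min(w,\cdot)$, so ratios of consecutive-radius window sizes stay $1+O(1/w)$). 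I would isolate this as a short preliminary fact about window sizes — essentially that for any index $p$ and radii $r<r'$ with $r'-r=O(1)$, $|\Win^{r',k,\ell}(p)| \le (1+O((r'-r)/r))|\Win^{r,k,\ell}(p)|$ — proved coordinatewise, and then the rest of the argument goes through mechanically. These window-size facts are presumably among the ``basic but useful facts about windows'' promised in \cref{sec:window-facts}, so I would cite them rather than reprove them.
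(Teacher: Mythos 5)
Your proposal is correct and follows essentially the same route as the paper's proof: an entrywise split into a ``shell'' term (bounded via the fact that the boundary of a 4D window has size $O(w^3)$ while the window itself has size $\geq w^4$) plus a normalization-factor-difference term (bounded via the triangle inequality for windows), followed by a double-counting/summation swap using $\|A\|_1=1$. Routing the argument through $A^{\sim w}$ as an intermediate and bounding the factor differences multiplicatively rather than additively are only cosmetic variations on the paper's direct bound on $A^{+w}-A^{-w}$.
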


We prove this lemma in \cref{sec:sandwiching} below. But first, we show how these two statements together will let us finally prove \cref{lem:pointwise-impl-smooth}:

\begin{proof}[Proof of \cref{lem:pointwise-impl-smooth} modulo \cref{lem:4-to-2,lem:sandwich}]
Let $\hat{A}$ be a $(w,\delta)$-pointwise smoothed estimate for $A$, for $\delta = \epsilon/(k\ell)^2$. This means, by definition, that $A^{-w} - \delta \leq \hat{A} \leq A^{+w}+\delta$ entrywise. Thus, we can pick some $B \in \A^{k,\ell}$ such that $\|B - \hat{A}\|_\infty \leq \delta$ and $A^{-w} \leq B \leq A^{+w}$ entrywise. The former implies $\|B - \hat{A}\|_1 \leq \delta (k\ell)^2 = \epsilon$. To use the latter, we recall that also $A^{-w} \leq A^{\sim w} \leq A^{+w}$ entrywise; therefore, $|B-A^{\sim w}| \leq A^{+w}-A^{-w}$ entrywise, so $\|B - A^{\sim w}\|_1 \leq \|A^{+w}-A^{-w}\|_1$. By \cref{lem:sandwich} we deduce $\|B - A^{\sim w}\|_1 \leq \Cwin/w$. Hence, by the triangle inequality, $\|\hat{A} - A^{\sim w}\|_1 \leq \epsilon + \Cwin/w$. Finally, we apply \cref{lem:4-to-2} to conclude that $\|\hat{M}-M^{\sim w}\|_1 \leq \epsilon + \Cwin/w$ as well, for $\hat{M} = \Proj(\hat{A})$ and $M = \Proj(A)$.
\end{proof}

\subsection{Basic properties of windows}\label{sec:window-facts}

To prove \cref{lem:4-to-2,lem:sandwich}, we begin by stating a number of basic facts about windows (presented only for the necessary dimensions, for brevity).

\begin{fact}[Size of 1D and 4D windows]\label{fact:window-size}
The size of $d$-dimensional windows is bounded within a factor of $2^d$. In particular,
\begin{itemize}
    \item In one dimension, for every $w < \ell \in \BN$ and $i \in [\ell]$, \[ w+1 \leq |\Win^{w,\ell}(i)| \leq 2w+1. \]
    \item In four dimensions, for every $w < k,\ell \in \BN$ and $a,b \in [k]$, $i,j \in [\ell]$, \[ (w+1)^4 \leq |\Win^{w,k,\ell}(a,b,i,j)| \leq (2w+1)^4. \]
\end{itemize}
\end{fact}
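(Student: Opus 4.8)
The plan is to prove the one-dimensional bound directly, and then obtain the four-dimensional bound essentially for free, using the fact that \cref{def:win} defines a four-dimensional window as a Cartesian product of four one-dimensional windows (and cardinality is multiplicative over products).

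For the one-dimensional bound, fix $w < \ell$ and $i \in [\ell]$. Unfolding \cref{def:win}, $\Win^{w,\ell}(i)$ is exactly the set of integers lying in the interval $[\max(1,i-w),\min(\ell,i+w)]$, and in particular it is a subset of the block $\{i-w,i-w+1,\dots,i+w\}$ of $2w+1$ consecutive integers; this gives the upper bound $|\Win^{w,\ell}(i)| \le 2w+1$ with no further work. For the lower bound I would exhibit an explicit sub-block of $w+1$ valid indices, splitting on whether the window is truncated on the right. If $i+w \le \ell$, then the $w+1$ integers $i, i+1, \dots, i+w$ all lie in $[\ell]$ (using $i \ge 1$) and all are within distance $w$ of $i$, hence all lie in $\Win^{w,\ell}(i)$. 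Otherwise $i + w > \ell$; since $w < \ell$ forces $\ell - w \ge 1$, the $w+1$ integers $\ell - w, \ell - w + 1, \dots, \ell$ all lie in $[\ell]$, and each is within distance $w$ of $i$: indeed $i > \ell - w$ places $i$ itself inside this block of $w+1$ consecutive integers, so any member of the block is within distance $w$ of $i$ (concretely, $i - (\ell-w) \le w$ since $i \le \ell$, and $\ell - i < w$ since $i > \ell - w$). Either way $|\Win^{w,\ell}(i)| \ge w+1$.

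For the four-dimensional bound, \cref{def:win} gives $\Win^{w,k,\ell}(a,b,i,j) = \Win^{w,k}(a) \times \Win^{w,k}(b) \times \Win^{w,\ell}(i) \times \Win^{w,\ell}(j)$, so its cardinality is the product of the four factor cardinalities. Applying the one-dimensional bound to each factor (legitimate since the hypotheses $w < k$ and $w < \ell$ hold) yields $(w+1)^4 \le |\Win^{w,k,\ell}(a,b,i,j)| \le (2w+1)^4$, as claimed; the intermediate two-dimensional bound, and the general $d$-dimensional claim with factor $2^d$, follow identically. The argument is entirely elementary, and the only step demanding a moment of care is the truncated case of the one-dimensional lower bound, where one must make sure the exhibited block of $w+1$ indices is placed on the side of $i$ away from the boundary and genuinely stays within distance $w$ of $i$.
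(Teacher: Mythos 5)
Your proof is correct: the upper bound follows since $\Win^{w,\ell}(i)$ sits inside a block of $2w+1$ consecutive integers, your case split (window truncated at the right boundary or not, using $w<\ell$ so that $\ell-w\geq 1$) correctly exhibits $w+1$ valid indices in all cases, and the four-dimensional bound follows by multiplicativity of cardinality over the product structure in \cref{def:win}. The paper states this fact without proof as elementary, and your argument is exactly the standard one it implicitly has in mind.
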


\begin{fact}[Size difference of 1D windows]\label{fact:window-size-diff}
For every $w \leq w' < \ell \in \BN$ and $i \in [\ell]$, \[ |\Win^{w',\ell}(i) \setminus \Win^{w,\ell}(i)| \leq 2(w'-w). \]
\end{fact}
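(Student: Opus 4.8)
The plan is to simply unfold the definition of a one-dimensional window as an interval of integers and count. By \cref{def:win}, $\Win^{w,\ell}(i) = \{i' \in [\ell] : \abs{i'-i} \le w\}$, and since $w \le w'$ we have $\Win^{w,\ell}(i) \subseteq \Win^{w',\ell}(i)$, so the set difference is exactly
\[
\Win^{w',\ell}(i) \setminus \Win^{w,\ell}(i) = \{i' \in [\ell] : w < \abs{i'-i} \le w'\}.
\]
First I would drop the constraint $i' \in [\ell]$ and observe that the integers $i'$ with $w < \abs{i'-i} \le w'$ fall into exactly two disjoint ``tails'': a left tail $\{i-w', \dots, i-w-1\}$ and a right tail $\{i+w+1, \dots, i+w'\}$, each an interval of exactly $w'-w$ consecutive integers. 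Hence this unconstrained set has at most $2(w'-w)$ elements. Intersecting back with $[\ell]$ can only delete elements, never add them, so $\abs{\Win^{w',\ell}(i) \setminus \Win^{w,\ell}(i)} \le 2(w'-w)$, as claimed.

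I do not anticipate any real obstacle: the argument is a one-line counting fact. The only points worth a sentence are the degenerate case $w = w'$, where the difference is empty and the bound reads $0 \le 0$; and the boundary effects near $1$ or $\ell$, which are absorbed automatically since restricting to a subset of $\BZ$ never increases cardinality. (An analogous four-dimensional statement, which is what is actually invoked later, would follow by applying this one-dimensional bound coordinatewise together with \cref{fact:window-size}, but that is not needed for the present fact.)
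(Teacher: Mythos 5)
Your proof is correct: the set difference is exactly the two tails $\{i-w',\dots,i-w-1\}$ and $\{i+w+1,\dots,i+w'\}$ intersected with $[\ell]$, each of size at most $w'-w$, which gives the bound. The paper states this fact without proof (it is listed among the basic window facts), and your counting argument is precisely the intended elementary justification, so there is nothing to compare beyond that.
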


\begin{fact}[Symmetry of containment]\label{fact:window-symm-contain}
Containment in windows is a symmetric property, i.e.,
\begin{itemize}
    \item In two dimensions, for every $w < \ell \in \BN$ and $i,j,i',j' \in [\ell]$, \[ (i',j') \in \Win^{w,\ell}(i,j) \iff (i,j) \in \Win^{w,\ell}(i',j'). \]
    \item In four dimensions, for every $w < k,\ell \in \BN$ and $a,b,a',b' \in [k]$, $i,j,i',j' \in [\ell]$, \[ (a',b',i',j') \in \Win^{w,k,\ell}(a,b,i,j) \iff (a,b,i,j) \in \Win^{w,k,\ell}(a',b',i',j'). \]
\end{itemize}
\end{fact}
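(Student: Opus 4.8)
The plan is to simply unfold \cref{def:win}, where windows are given explicitly as $\ell_\infty$-balls. Recall that for $i,j,i',j' \in [\ell]$ the definition yields the characterization
\[ (i',j') \in \Win^{w,\ell}(i,j) \iff \max\{\abs{i'-i},\,\abs{j'-j}\} \leq w, \]
and likewise in four dimensions
\[ (a',b',i',j') \in \Win^{w,k,\ell}(a,b,i,j) \iff \max\{\abs{a'-a},\,\abs{b'-b},\,\abs{i'-i},\,\abs{j'-j}\} \leq w. \]
That is, membership of one index tuple in the window centered at another is precisely the statement that the two tuples are within $\ell_\infty$-distance $w$.

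First I would observe that for all reals $x,y$ we have $\abs{x-y} = \abs{y-x}$, so every coordinate-wise term appearing in the maxima above is invariant under swapping the ``center'' tuple with the ``queried'' tuple; hence each maximum is invariant as well. Consequently the right-hand side of the equivalence for $\Win^{w,\ell}(i,j)$ is literally the same predicate as the right-hand side of the equivalence for $\Win^{w,\ell}(i',j')$ evaluated at $(i,j)$, which gives the two-dimensional claim; the four-dimensional claim follows by the identical argument, now with four coordinate terms inside the maximum. One could alternatively note that both directions of each biconditional are symmetric by construction and skip the intermediate display, but quoting the coordinate-wise form from \cref{def:win} makes the symmetry completely transparent.

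There is no genuine obstacle here: the statement is nothing more than the symmetry of the $\ell_\infty$ metric, and the only point requiring (minimal) care is to invoke the explicit coordinate-wise description of $\Win$ from \cref{def:win} rather than its recursive product form $\Win^{w,\ell}(i,j) = \Win^{w,\ell}(i) \times \Win^{w,\ell}(j)$, so that the cancellation of centers and queries is immediate.
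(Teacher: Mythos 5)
Your proposal is correct and matches the paper's (implicit) justification: the paper states this fact without proof, treating it as immediate from the explicit $\ell_\infty$-ball description in \cref{def:win}, which is exactly the unfolding you perform via the symmetry $\abs{x-y}=\abs{y-x}$ in each coordinate. Nothing further is needed.
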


As a corollary, we have the following fact:

\begin{fact}[Counting containments]\label{fact:window-count-contain}
The number of windows a particular index is contained equals the size of its window, i.e.,
\begin{itemize}
    \item In two dimensions, for every $w < \ell \in \BN$ and $i',j' \in [\ell]$, the number of distinct windows containing $(i',j')$ is \[ |\{(i,j) \in [\ell]^2 : \Win^{w,\ell}(i,j) \ni (i',j')\}| = |\Win^{w,\ell}(i',j')|. \]
    \item In four dimensions, for every $w < k, \ell \in \BN$ and $a',b' \in [k]$, $i',j' \in [\ell]$, the number of distinct windows containing $(a',b',i',j')$ is \[ |\{(a,b,i,j) \in [k]^2 \times [\ell]^2 : \Win^{w,k,\ell}(a,b,i,j) \ni (a',b',i',j')\}| = |\Win^{w,k,\ell}(a',b',i',j')|. \]
\end{itemize}
\end{fact}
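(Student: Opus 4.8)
The plan is to deduce this immediately from \cref{fact:window-symm-contain} (symmetry of containment), of which it is essentially a restatement. Consider first the two-dimensional case: fix $w < \ell \in \BN$ and $i',j' \in [\ell]$, and rewrite the set being counted by swapping the roles of the two index pairs. By \cref{fact:window-symm-contain}, for every $(i,j) \in [\ell]^2$ we have $\Win^{w,\ell}(i,j) \ni (i',j') \iff (i,j) \in \Win^{w,\ell}(i',j')$. Hence
\[ \{(i,j) \in [\ell]^2 : \Win^{w,\ell}(i,j) \ni (i',j')\} = \Win^{w,\ell}(i',j'), \]
and taking cardinalities of both sides yields the claimed identity.

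For the four-dimensional case I would argue in exactly the same way, invoking the four-dimensional half of \cref{fact:window-symm-contain}: for every $(a,b,i,j) \in [k]^2 \times [\ell]^2$ one has $\Win^{w,k,\ell}(a,b,i,j) \ni (a',b',i',j') \iff (a,b,i,j) \in \Win^{w,k,\ell}(a',b',i',j')$, so the set of $(a,b,i,j)$ whose window contains $(a',b',i',j')$ is precisely $\Win^{w,k,\ell}(a',b',i',j')$, whose size is $|\Win^{w,k,\ell}(a',b',i',j')|$.

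There is no real obstacle here: the only content is the observation that ``the windows containing a fixed point'' and ``the window of that point'' are literally the same set, which is forced by the symmetry of the $\infty$-norm distance underlying \cref{def:win} and already packaged as \cref{fact:window-symm-contain}. (If one preferred not to cite that fact, one could instead unfold \cref{def:win} directly and use that $\max\{|a-a'|,\dots\}$ is symmetric in the two tuples, but citing the stated fact is cleaner and keeps the corollary a one-line deduction.)
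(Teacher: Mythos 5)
Your proposal is correct and matches the paper's treatment: the paper presents \cref{fact:window-count-contain} precisely as a corollary of \cref{fact:window-symm-contain}, and your argument (the set of windows containing a fixed index is, by symmetry, exactly that index's own window, so the cardinalities agree) is the intended one-line deduction in both the two- and four-dimensional cases.
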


The following is essentially the triangle inequality for the $\infty$-norm, and roughly states that ``nearby windows look alike'':

\begin{fact}[Triangle inequality for windows]\label{fact:window-triangle-ineq}
For every $w+w' < k, \ell \in \BN$, and every $a,b,a',b' \in [k]$, $i,j,i',j'\in[\ell]$, \[ (a',b',i',j') \in \Win^{w',k,\ell}(a,b,i,j) \implies \Win^{w,k,\ell}(a',b',i',j') \subseteq \Win^{w+w',k,\ell}(a,b,i,j). \]
\end{fact}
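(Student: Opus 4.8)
The plan is simply to unwind \cref{def:win} and apply the scalar triangle inequality in each of the four coordinates; the $\infty$-norm structure makes the argument immediate. Recall that, by \cref{def:win}, a tuple $(p,q,r,s)$ lies in $\Win^{w,k,\ell}(a,b,i,j)$ precisely when $(p,q,r,s) \in [k]^2 \times [\ell]^2$ and all four of $|p-a| \leq w$, $|q-b| \leq w$, $|r-i| \leq w$, $|s-j| \leq w$ hold. So I would first rewrite both the hypothesis and the desired conclusion in this coordinatewise form.

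Given the hypothesis $(a',b',i',j') \in \Win^{w',k,\ell}(a,b,i,j)$, we get $|a'-a| \leq w'$, $|b'-b| \leq w'$, $|i'-i| \leq w'$, $|j'-j| \leq w'$. Now fix an arbitrary $(a'',b'',i'',j'') \in \Win^{w,k,\ell}(a',b',i',j')$; by definition $(a'',b'',i'',j'') \in [k]^2 \times [\ell]^2$ and $|a''-a'| \leq w$, $|b''-b'| \leq w$, $|i''-i'| \leq w$, $|j''-j'| \leq w$. Applying the triangle inequality to each coordinate separately, e.g. $|a''-a| \leq |a''-a'| + |a'-a| \leq w + w'$ and likewise for the other three, shows $(a'',b'',i'',j'') \in \Win^{w+w',k,\ell}(a,b,i,j)$ — note the ambient membership $(a'',b'',i'',j'') \in [k]^2 \times [\ell]^2$ is already guaranteed, and the assumption $w + w' < k,\ell$ ensures this larger window is well-defined. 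Since $(a'',b'',i'',j'')$ was arbitrary, this establishes the inclusion $\Win^{w,k,\ell}(a',b',i',j') \subseteq \Win^{w+w',k,\ell}(a,b,i,j)$.

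I expect there to be no real obstacle here: the statement is nothing more than the triangle inequality for $\|\cdot\|_\infty$ on $[k]^2 \times [\ell]^2$, with the product structure (two coordinates from $[k]$, two from $[\ell]$) handled by treating each coordinate independently. The only point requiring a line of care is confirming that the box constraint defining the windows ($[k]^2 \times [\ell]^2$) is automatically inherited by the inner window and therefore need not be re-derived, which is immediate.
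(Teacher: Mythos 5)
Your argument is correct and is exactly the intended justification: the paper states this fact without proof, treating it as immediate from \cref{def:win} (the window is an $\infty$-norm ball intersected with $[k]^2\times[\ell]^2$), and your coordinatewise triangle-inequality unwinding, including the observation that ambient membership in $[k]^2\times[\ell]^2$ is inherited automatically, is precisely that immediate argument.
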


\subsection{Double-counting arguments: Proving \cref{prop:smoothing-preserves-sum,lem:4-to-2}}\label{sec:double-counting-proofs}

In this subsection, we prove \cref{lem:4-to-2}, but we begin with a simpler version of the argument which proves \cref{prop:smoothing-preserves-sum}, which states that smoothing preserves the sum of entries in a matrix.

\begin{proof}[Proof of \cref{prop:smoothing-preserves-sum}]
This holds by \cref{fact:window-count-contain} and ``double-counting''. In particular,
\begin{align*}
\sum_{i,j=1}^\ell M^{\sim w}(i',j') &= \sum_{i,j=1}^\ell \sum_{i',j' \in \Win^{w,\ell}(i,j)} \nu^{\sim w,\ell}(i',j') M(i',j') \tag{def. of $M^{\sim w}$} \\
&= \sum_{i',j'=1}^\ell |\{(i,j) \in [\ell]^2 : \Win^{w,\ell}(i,j) \ni (i',j')\}| \cdot \nu^{\sim w,\ell}(i',j')  M(i',j') \tag{exchanging sums} \\
&= \sum_{i',j'=1}^\ell |\Win^{w,\ell}(i',j')| \cdot \nu^{\sim w,\ell}(i',j')  M(i',j') \tag{\cref{fact:window-count-contain}} \\
&= \sum_{i',j'=1}^\ell M(i',j'). \tag{def. of $\nu^{\sim w,\ell}$} \\
\end{align*}
\end{proof}

\begin{proof}[Proof of \cref{lem:4-to-2}]
Recall $M = \Proj(A)$ and $N = \Proj(A^{\sim w})$. We first want to show $M^{\sim w} = N$. Again, we can do this by ``double-counting''. Recall that $\Win^{w,k,\ell}(a,b,i,j) = \Win^{w,k}(a,b) \times \Win^{w,\ell}(i,j)$ (and thus $\nu^{\sim w,k,\ell}(a,b,i,j) = \nu^{\sim w,k}(a,b) \cdot \nu^{\sim w,\ell}(i,j)$). We have:

\begin{align*}
    N(i,j) &= \sum_{a,b=1}^k A^{\sim w}(a,b,i,j) \tag{def. of $N$} \\
    &= \sum_{a,b=1}^k \sum_{(a',b',i',j') \in \Win^{w,k,\ell}(a,b,i,j)} \nu^{\sim w,k,\ell}(a',b',i',j') A(a',b',i',j') \tag{def. of $A^{\sim w}$} \\
    &= \sum_{(i',j') \in \Win^{w,\ell}(i,j)} \nu^{\sim w,\ell}(i',j') \left(\sum_{a,b=1}^k \sum_{(a',b')\in\Win^{w,k}(a,b)} \nu^{\sim w,k}(a',b') A(a',b',i',j')\right) \tag{exchanging sums} \\
    &= \sum_{(i',j') \in \Win^{w,\ell}(i,j)} \nu^{\sim w,\ell}(i',j') \left(\sum_{a',b'=1}^k |\{(a,b) \in [k]^2 : \Win^{w,k}(a,b) \ni (a',b')\}| \cdot \nu^{\sim w,k}(a',b') A(a',b',i',j')\right) \tag{exchanging sums} \\
    &= \sum_{(i',j') \in W^{w,\ell}(i,j)} \nu^{\sim w,\ell}(i',j') \left(\sum_{a',b'=1}^k A(a',b',i',j')\right) \tag{\cref{fact:window-count-contain}} \\
    &= \sum_{(i',j') \in W^{w,\ell}(i,j)} \nu^{\sim w,\ell}(i',j') M(i',j') \tag{def. of $M$} \\
    &= M^{\sim w}(i,j) \tag{def. of $M^{\sim w}$}.
\end{align*}

Now, consider an array $\hat{A} \in \A^{k,\ell}$; recall, $\hat{M} = \Proj(\hat{A})$, and we want to show that $\|\hat{M}-M^{\sim w}\|_1 \leq \|\hat{A}-A^{\sim w}\|_1$. This follows essentially from the triangle inequality. Indeed:

\begin{align*}
    \|\hat{M}-M^{\sim w}\|_1 &= \sum_{i,j=1}^\ell |\hat{M}(i,j) - M^{\sim w}(i,j)| \tag{def. of $\|\cdot\|_1$} \\
    &= \sum_{i,j=1}^\ell |\hat{M}(i,j) - N(i,j)| \tag{first part of lemma} \\
    &= \sum_{i,j=1}^\ell \abs*{\sum_{a,b=1}^k \hat{A}(a,b,i,j) - \sum_{a,b=1}^k A^{\sim w}(a,b,i,j)} \tag{defs. of $\hat{M}, N$} \\
    &\leq \sum_{a,b=1}^k \sum_{i,j=1}^\ell \abs*{\hat{A}(a,b,i,j) - A^{\sim w}(a,b,i,j)} \tag{triangle ineq.} \\
    &= \|\hat{A}-A^{\sim w}\|_1 \tag{def. of $\|\cdot\|_1$}.
\end{align*}
\end{proof}

\subsection{``Sandwiching'': Proving \cref{lem:sandwich}}\label{sec:sandwiching}

Now we prove \cref{lem:sandwich}, bounding the $1$-norm between $A^{-w}$ and $A^{+w}$ for an array $A \in \AS^{k,\ell}$. We begin with the following lemma:

\begin{lemma}[``Borders of 4D windows are effectively 3D'']\label{lem:small-borders}
There exists a universal constant $\Cwin' > 0$ such that for every $w < k,\ell \in \BN$, and $a,b \in [k], i,j \in [\ell]$, $|\Win^{w+1,k,\ell}(a,b,i,j) \setminus \Win^{w-1,k,\ell}(a,b,i,j)| \leq \Cwin' w^3$.
\end{lemma}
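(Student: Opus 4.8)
The plan is to bound the size of the ``annulus'' $\Win^{w+1,k,\ell}(a,b,i,j) \setminus \Win^{w-1,k,\ell}(a,b,i,j)$ by a pure inclusion--exclusion argument, exploiting the product structure of windows (recall $\Win^{w,k,\ell}(a,b,i,j) = \Win^{w,k}(a) \times \Win^{w,k}(b) \times \Win^{w,\ell}(i) \times \Win^{w,\ell}(j)$). Write $P_r = \Win^{r,k}(a)$, $Q_r = \Win^{r,k}(b)$, $R_r = \Win^{r,\ell}(i)$, $S_r = \Win^{r,\ell}(j)$ for $r \in \{w-1,w+1\}$, so that the $4$-dimensional window of radius $r$ is the product $P_r \times Q_r \times R_r \times S_r$. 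I want to estimate $|P_{w+1}\times Q_{w+1}\times R_{w+1}\times S_{w+1}| - |P_{w-1}\times Q_{w-1}\times R_{w-1}\times S_{w-1}|$.

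First I would note the two ingredients from \cref{sec:window-facts}: by \cref{fact:window-size} each one-dimensional factor has size at most $2(w+1)+1 \leq 4w$ (say, absorbing small cases into the constant), and by \cref{fact:window-size-diff} each factor satisfies $|P_{w+1}| - |P_{w-1}| \leq 2\cdot 2 = 4$, and similarly for $Q, R, S$. Then the difference of the two product sizes telescopes: writing $|P_{w+1}\times Q_{w+1}\times R_{w+1}\times S_{w+1}| - |P_{w-1}\times Q_{w-1}\times R_{w-1}\times S_{w-1}|$ as a sum of four terms, each of which changes exactly one factor from radius $w-1$ to $w+1$ while keeping the others fixed (at radius $w+1$ for the factors not yet changed, at radius $w-1$ for those already changed), each term has the form $(|P_{w+1}|-|P_{w-1}|)\cdot|Q_{*}|\cdot|R_{*}|\cdot|S_{*}| \leq 4 \cdot (4w)^3$. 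Summing the four terms gives the bound $|\Win^{w+1,k,\ell}(a,b,i,j) \setminus \Win^{w-1,k,\ell}(a,b,i,j)| \leq 4\cdot 4\cdot(4w)^3 = \Cwin' w^3$ for a suitable universal constant $\Cwin'$, which is exactly the claim. (Since $\Win^{w-1,k,\ell}(a,b,i,j) \subseteq \Win^{w+1,k,\ell}(a,b,i,j)$, the set difference has size exactly the difference of the cardinalities, so no further work is needed there.)

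I don't anticipate a genuine obstacle here; the only mild care needed is handling degenerate boundary cases — e.g.\ when $w$ is close to $k$ or $\ell$ so that windows get truncated at the edges of $[k]$ or $[\ell]$. But truncation only \emph{decreases} the sizes of the factors and of the differences $|P_{w+1}|-|P_{w-1}|$, so all the upper bounds above still hold (and indeed \cref{fact:window-size,fact:window-size-diff} are stated with these truncations already in mind), so the argument goes through verbatim. One should also double-check the hypothesis $w < k, \ell$ is enough for $\Win^{w-1}$ and $\Win^{w+1}$ to be defined in the intended sense; this is a non-issue since windows are defined for all radii (they just saturate), and the constant $\Cwin'$ can be chosen large enough to cover the few small-$w$ cases.
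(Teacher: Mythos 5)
Your proof is correct and takes essentially the same route as the paper: both exploit the product structure of the 4D window and combine the 1D size bound (\cref{fact:window-size}) with the 1D size-difference bound (\cref{fact:window-size-diff}); the only difference is that you telescope the difference of products one factor at a time, whereas the paper expands $(w^-_1+4)(w^-_2+4)(w^-_3+4)(w^-_4+4)-w^-_1w^-_2w^-_3w^-_4$ via symmetric polynomials, which is an immaterial algebraic variation. Your handling of truncation and small-$w$ cases (absorbed into the constant) is also fine.
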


\begin{proof}
Let $w^-_1 = |\Win^{w-1,k}(a)|$ and $w^+_1 = |\Win^{w+1,k}(a)|$ denote the sizes of the 1-dimensional windows around $a$ of sizes $w-1$ and $w+1$, respectively. By \cref{fact:window-size}, we have $w^-_1 \leq 2(w-1)+1 = 2w-1$. By \cref{fact:window-size-diff}, we have $w^+_1 \leq w^-_1 + 4$.

We can similarly define $w^-_2,w^+_2,w^-_3,w^+_3,w^-_4,w^+_4$, and then we have
\begin{align*}
|\Win^{w+1,k,\ell}(a,b,i,j) - \Win^{w-1,k,\ell}(a,b,i,j)|  &= w^+_1w^+_2w^+_3w^+_4-w^-_1w^-_2w^-_3w^-_4 \\
&\leq (w^-_1+4)(w^-_2+4)(w^-_3+4)(w^-_4+4)-w^-_1w^-_2w^-_3w^-_4 \\
&= 4 C_3 + 16C_2 + 64 C_1 + 256
\end{align*}

where $C_t$ denotes the degree-$t$ symmetric polynomial evaluated on $w^-_1,w^-_2,w^-_3,w^-_4$ (i.e., it sums the products of sets of $t$ values). Combining with the upper-bounds on $w_i^-$, we conclude \[ 4 \binom{4}3 (2w-1)^3 + 16 \binom{4}2 (2w-1)^2 + 64 \binom{4}1 (2w-1) + 256 = 16(8w^3+12w^2+14+5), \] so setting $\Cwin' = 16(8+12+14+5) = 624$ is sufficient.
\end{proof}

Now, we prove \cref{lem:sandwich}:

\begin{proof}[Proof of \cref{lem:sandwich}]
We begin by examining the difference $\Delta(a,b,i,j) = A^{+w}(a,b,i,j) - A^{-w}(a,b,i,j)$ in a single entry. Note that $\Delta(a,b,i,j) \geq 0$ for all $a,b,i,j$ since $A$'s entries are nonnegative. We have
\begin{align*}
    \Delta(a,b,i,j) &= \sum_{(a',b',i',j') \in \Win^{w-1,k,\ell}(a,b,i,j)} (\nu^{+w,k,\ell}(a',b',i',j')-\nu^{-w,k,\ell}(a',b',i',j')) A(a',b',i',j') \\
    &+ \sum_{(a',b',i',j') \in \Win^{w+1,k,\ell}(a,b,i,j) \setminus \Win^{w-1,k,\ell}(a,b,i,j)} \nu^{+w,k,\ell}(a',b',i',j') A(a',b',i',j').
\end{align*}
In other words, the error comes from two places: Different normalizations inside $\Win^{w-1,k,\ell}(a,b,i,j)$, and then extra entries in $\Win^{w+1,k,\ell}(a,b,i,j)$ which are not counted at all in $\Win^{w-1,k,\ell}(a,b,i,j)$.

Recall that \[ \nu^{+w,k,\ell}(a',b',i',j') = \max_{(a'',b'',i'',j'') \in \Win^{1,k,\ell}(a',b',i',j')} 1/|\Win^{w,k,\ell}(a'',b'',i'',j'')|. \] Now by the triangle inequality for windows (\cref{fact:window-triangle-ineq}), we have $\Win^{w-1,k,\ell}(a',b',i',j') \subseteq \Win^{w,k,\ell}(a'',b'',i'',j'')$ where $(a'',b'',i'',j'')$ is the maximizing index in $\nu^{+w,k,\ell}$; thus,  $\nu^{+w,k,\ell}(a',b',i',j') \leq 1/|\Win^{w-1,k,\ell}(a',b',i',j')|$. Similarly, $\nu^{-w,k,\ell}(a',b',i',j') \geq 1/|\Win^{w+1,k,\ell}(a',b',i',j')|$. So \[ \nu^{+w,k,\ell}(a',b',i',j')-\nu^{-w,k,\ell}(a',b',i',j') \leq \frac{|\Win^{w+1,k,\ell}(a',b',i',j')| - |\Win^{w-1,k,\ell}(a',b',i',j')|}{|\Win^{w+1,k,\ell}(a',b',i',j')| \cdot |\Win^{w-1,k,\ell}(a',b',i',j')|} \leq \Cwin' w^{-5}, \] where we upper-bound the numerator by $\Cwin' w^3$ with \cref{lem:small-borders} and lower-bound the denominator by $w^8$ with \cref{fact:window-size}. Hence we can write
\begin{align*}
    \Delta(a,b,i,j) &\leq \Cwin' w^{-5} \sum_{(a',b',i',j') \in \Win^{w-1,k,\ell}(a,b,i,j)} A(a',b',i',j') \\
    &+ w^{-4} \sum_{(a',b',i',j') \in \Win^{w+1,k,\ell}(a,b,i,j) \setminus \Win^{w-1,k,\ell}(a,b,i,j)} A(a',b',i',j'),
\end{align*}
where for the second term we use that $\nu^{+w,k,\ell}(a',b',i',j') \geq 1/|\Win^{w+1,k,\ell}(a',b',i',j')| \geq w^{-4}$ (by \cref{fact:window-triangle-ineq,fact:window-size}).

Finally, we bound $\|A^{+w}-A^{-w}\|_1 = \sum_{a,b=1}^k \sum_{i,j=1}^\ell \Delta(a,b,i,j)$ by double-counting and symmetry of inclusion in windows (\cref{fact:window-symm-contain,fact:window-count-contain}). In particular, an entry $A(a',b',i',j')$ will be counted $|\Win^{w-1,k,\ell}(a',b',i',j')| \leq 16w^4$ times in the first sum (by \cref{fact:window-size}), and $|\Win^{w+1,k,\ell}(a',b',i',j')\setminus\Win^{w-1,k,\ell}(a',b',i',j')| \leq \Cwin' w^3$ times in the second sum, giving an overall bound of $\|A^{+w}-A^{-w}\|_1 \leq 17\|A\|_1 \Cwin'/w$, which is the desired bound when $\Cwin = 17\Cwin'$ (since $\|A\|_1 = 1$ by assumption).
\end{proof}

\section{``Smoothing graphs'': Proving \cref{lem:graph-smoothing}}\label{sec:graph-smoothing-analysis}

In this section, we prove \cref{lem:graph-smoothing}, which roughly states that ``smoothing a graph's snapshot does not affect its $\mdcut$ value too much''. (Smoothing a matrix was defined in \cref{def:smoothing-matrix} above.) More precisely, \cref{lem:graph-smoothing} states that given any graph $\CG$ one can construct a graph $\CH$ whose snapshot $\SnapH$ resembles a smoothing of $\CG$'s snapshot $\SnapG$, and whose $\mdcut$ value $\val_\CH$ is similar to $\CG$'s value $\val_\CG$, and that these two error terms are $O(\lambda w)$ where $w$ is the size of the windows being smoothed over and $\lambda$ is the width of the bias partition.

We construct $\CH$ based on $\CG$ in two stages: First, we construct an intermediate (weighted) graph $\CK$ by ``blowing up'' each vertex of $\CG$ (which preserves biases of vertices and the $\mdcut$ value). Then, we slightly modify $\CK$ to form the (weighted) graph $\CH$ by perturbing the bias at each vertex, and argue that this does not change the $\mdcut$ value too much.

We begin with some notation. We label the vertices of $\CG$ by $[n] = \{1,\ldots,n\}$. We assume WLOG that $\CG$ has no isolated vertices (if not, we simply remove all isolated vertices and analyze the new graph; this doesn't affect $\SnapG$ or $\val_\CG$).

\subsection{Defining $\CK$ via (weighted) blowups}

First, given $\CG$, we construct a graph $\CK$ as follows. For each vertex $v \in [n]$, in $\CK$ we create $ |\Win^{w,\ell}(\bi^\vect_\CG(v))|$ distinct vertices labeled $\{v\} \times \Win^{w,\ell}(\bi^\vect_\CG(v))$. Thus, $\CK$ has vertex-set $\CV \eqdef \bigcup_{v =1}^n \{v\} \times \Win^{w,\ell}(\bi^\vect_\CG(v))$.

Now if $\AdjG(u,v)$ is the (positive) weight of an edge $u \to v$ in $\CG$, we divide its weight evenly among all the copies of $u$ and $v$ in $\CK$, i.e., we place weight
\begin{equation}\label{eq:weight-K}
    \AdjK(u,i',v,j') \eqdef \nu^{\sim w,\ell}(\bi^\vect_\CG(u,v)) \cdot \AdjH(u,v)
\end{equation}
on all edges $(u,i') \to (v,j')$ for $(i',j') \in \Win^{w,\ell}(\bi^\vect_\CG(u,v))$.\footnote{Recall $\nu^{\sim w,\ell}(i,j) = |\Win^{w,\ell}(i,j)|$.}

We claim that $\CK$ has the following properties:

\begin{claim}\label{claim:K}
For every vertex $v \in [n]$ and $i' \in \Win^{w,\ell}(\bi^\vect_\CG(v))$, we have:
\begin{enumerate}
    \item $\dout_\CK(v,i') = \dout_\CG(v)\cdot|\Win^{w,\ell}(\bi^\vect_\CG(v))|^{-1}$.
    \item $\din_\CK(v,i') = \din_\CG(v)\cdot|\Win^{w,\ell}(\bi^\vect_\CG(v))|^{-1}$.
    \item $\d_\CK(v,i') = \d_\CG(v)\cdot|\Win^{w,\ell}(\bi^\vect_\CG(v))|^{-1}$.
    \item $\bias_\CK(v,i') = \bias_\CG(v)$.
\end{enumerate}
Further, $m_\CG = m_\CK$ and $\val_\CG = \val_\CK$.
\end{claim}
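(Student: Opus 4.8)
The plan is to prove \cref{claim:K} by direct calculation from the definition of the weighted blowup $\CK$, taking advantage of the fact that the edge weights were chosen precisely so that the ``mass'' incident to each original vertex $v$ is divided evenly among its $|\Win^{w,\ell}(\bi^\vect_\CG(v))|$ copies. First I would verify the degree identities (items 1--3). For the out-degree of a copy $(v,i')$, I would write $\dout_\CK(v,i') = \sum_{(u,j') \in \CV} \AdjK(v,i',u,j')$, substitute \cref{eq:weight-K}, and observe that for each original out-neighbor $u$ of $v$, the copy $(v,i')$ sends an edge to each copy $(u,j')$ with $j' \in \Win^{w,\ell}(\bi^\vect_\CG(u))$, provided $i' \in \Win^{w,\ell}(\bi^\vect_\CG(v))$ --- but we need to be careful about which pairs $(i',j')$ actually appear as edges. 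Examining \cref{eq:weight-K}, the edges from copies of $u$ to copies of $v$ (for an original edge $u \to v$) are exactly those $(i',j') \in \Win^{w,\ell}(\bi^\vect_\CG(u,v)) = \Win^{w,\ell}(\bi^\vect_\CG(u)) \times \Win^{w,\ell}(\bi^\vect_\CG(v))$, so a fixed copy $(v,i')$ with $i' \in \Win^{w,\ell}(\bi^\vect_\CG(v))$ receives, from the original edge $u \to v$, weight $\nu^{\sim w,\ell}(\bi^\vect_\CG(u,v)) \cdot \AdjG(u,v)$ on each of the $|\Win^{w,\ell}(\bi^\vect_\CG(u))|$ copies of $u$. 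Recalling $\nu^{\sim w,\ell}(i,j) = 1/|\Win^{w,\ell}(i,j)| = 1/(|\Win^{w,\ell}(i)| \cdot |\Win^{w,\ell}(j)|)$, the total weight into $(v,i')$ from original edge $u\to v$ is $\AdjG(u,v)/|\Win^{w,\ell}(\bi^\vect_\CG(v))|$. Summing over all original in-neighbors $u$ gives $\din_\CK(v,i') = \din_\CG(v)/|\Win^{w,\ell}(\bi^\vect_\CG(v))|$, which is item 2; item 1 is symmetric (summing over out-neighbors instead), and item 3 follows by adding items 1 and 2 and using $\d = \dout + \din$.

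Next I would deduce item 4, the bias identity, immediately from items 1--3: by definition $\bias_\CK(v,i') = (\dout_\CK(v,i') - \din_\CK(v,i'))/\d_\CK(v,i')$, and the common factor $|\Win^{w,\ell}(\bi^\vect_\CG(v))|^{-1}$ cancels between numerator and denominator, leaving exactly $\bias_\CG(v)$. (One should note the denominators are nonzero since $\CG$ has no isolated vertices, so every copy has positive degree.) For $m_\CG = m_\CK$, I would sum \cref{eq:weight-K} over all edges of $\CK$: the total weight contributed by a single original edge $u \to v$ is $\sum_{(i',j') \in \Win^{w,\ell}(\bi^\vect_\CG(u,v))} \nu^{\sim w,\ell}(\bi^\vect_\CG(u,v)) \AdjG(u,v) = |\Win^{w,\ell}(\bi^\vect_\CG(u,v))| \cdot \nu^{\sim w,\ell}(\bi^\vect_\CG(u,v)) \cdot \AdjG(u,v) = \AdjG(u,v)$, so total weights match.

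Finally, for $\val_\CG = \val_\CK$ I would argue both inequalities. For $\val_\CK \geq \val_\CG$: given an optimal cut $\vecx \in \{0,1\}^n$ of $\CG$, lift it to $\CK$ by assigning every copy $(v,i')$ the value $x_v$; then an edge $(u,i') \to (v,j')$ in $\CK$ is cut iff $x_u = 1, x_v = 0$ iff the original $u \to v$ is cut, and since total weights are preserved per original edge, the cut fraction is unchanged, so $\val_\CK(\text{lift}) = \val_\CG(\vecx)$. For $\val_\CK \geq$ is thus clear; for $\val_\CG \geq \val_\CK$, given a cut $\vecy$ of $\CK$, I would use an averaging/convexity argument: the contribution of original edge $u\to v$ to $\val_\CK(\vecy)$ is $\AdjG(u,v) \cdot \nu^{\sim w,\ell}(\cdots) \sum_{i',j'} y_{(u,i')}(1-y_{(v,j')})$, and one can independently round each original vertex $v$ by picking the copy-index $i'$ and setting $x_v = y_{(v,i')}$; the expected cut value under a suitable product distribution over such choices equals $\val_\CK(\vecy)$ (the terms factor because the weight distributes uniformly and independently over the copy-indices of $u$ and of $v$), so some deterministic cut achieves at least $\val_\CK(\vecy)$. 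The only mild subtlety --- and the part I would be most careful about --- is making this rounding argument clean: one wants to choose, for each original vertex $v$, a distribution over $\{y_{(v,i')} : i' \in \Win^{w,\ell}(\bi_\CG^\vect(v))\}$ such that the product of the marginals reproduces each bilinear term $\sum_{i',j'} \nu^{\sim w,\ell}(\cdots) y_{(u,i')}(1-y_{(v,j')})$ as an expectation; since $\nu^{\sim w,\ell}(\bi_\CG^\vect(u,v))$ is the uniform product weight $\frac{1}{|\Win^{w,\ell}(\bi_\CG^\vect(u))|}\cdot\frac{1}{|\Win^{w,\ell}(\bi_\CG^\vect(v))|}$, taking $x_v$ to be a uniformly random one of $v$'s copy-values, independently across $v$, works, and then $\Exp[\val_\CG(\vecx)] = \val_\CK(\vecy)$, so $\val_\CG \geq \val_\CK$. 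Combining the two inequalities gives $\val_\CG = \val_\CK$, completing the claim.
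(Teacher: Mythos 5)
Your proposal is correct and follows essentially the same route as the paper: direct computation of the copy degrees from \cref{eq:weight-K} (so the normalization factor cancels in the bias), double-counting for $m_\CK = m_\CG$, lifting a cut of $\CG$ to all copies for $\val_\CK \geq \val_\CG$, and rounding a cut $\vecy$ of $\CK$ by choosing, independently for each original vertex, a uniformly random copy-index so that $\Exp[\val_\CG(\vecx)] = \val_\CK(\vecy)$. No gaps; the independence-across-vertices point you flag is exactly the step the paper also uses (together with the absence of self-loops) to factor the expectation.
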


The proof is basically a series of double-counting arguments using the definition of $\CK$, and we defer it until the end of the section.

\subsection{Constructing $\CH$ via perturbations}

Now, we construct $\CH$ by modifying $\CK$. We begin with a new graph $\CH_0$ consisting of a copy of $\CK$ and a new isolated vertex $\sstar$, so that $\CH_0$ has vertex-set $\CV \cup \{\sstar\}$, and let $\CV^* = \bigcup_{v\in[n]} (\{v\} \times (\Win^{w,\ell}(\bi^\vect_\CG(v))) \setminus \{\bi^\vect_\CG(v)\})) \subseteq \CV$ denote the set of vertices whose biases we'd like to modify. We let $m_0 \eqdef m_\CG = m_\CK = m_{\CH_0}$ denote the total weight in $\CG$, $\CK$, and $\CH_0$ (which coincide by \cref{claim:K}).

Next, we iteratively modify the biases of vertices using the following claim, which roughly lets us decrease the bias of a vertex $(v,i') \in \CV^*$ by increasing its in-degree and decreasing its out-degree, or increase the bias of a vertex by increasing its out-degree and decreasing its in-degree, ``without modifying the graph too much''.

\begin{claim}\label{claim:change-bias}
Let $\CL$ be any weighted graph on vertex-set $\CV\cup\{\sstar\}$. For every $(v,i') \in \CV^*$, $0 \leq \alpha$, and $0 \leq \beta \leq \dout_\CL(v,i')$, there exists an ``updated'' graph $\CL'$ on vertex-set $\CV\cup\{\sstar\}$, such that:

\begin{enumerate}
    \item ``All vertices except $(v,i')$ and $\sstar$ stay the same'': For all $(u,j') \neq (v,i') \in \CV$, $\dout_{\CL'}(u,j') = \dout_\CL(u,j')$ and $\din_{\CL'}(u,j') = \din_\CL(u,j')$.
    \item ``$(v,i')$'s change is controlled'': $\dout_{\CL'}(v,i') = \dout_\CL(v,i')-\beta$ and $\din_{\CL'}(v,i') = \din_\CL(v,i')+\alpha$.
    \item ``The new weight added is bounded'': $m_{\CL'} = m_\CL + \alpha$.\label{item:changed-total-weight}
    \item ``Changes in weight are bounded'': $\|\Adj_{\CL'} - \AdjL\|_1 = \alpha+2\beta$. \label{item:changed-weight}
\end{enumerate}

The same holds when we swap $\dout$'s with $\din$'s.
\end{claim}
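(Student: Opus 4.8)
The plan is to build $\CL'$ from $\CL$ by two strictly local edits, both supported on the pair of vertices $(v,i')$ and $\sstar$; the key point enabling this is that $\sstar \notin \CV$, so the first requested property imposes no constraint on the degrees of $\sstar$ in $\CL'$. To realize the in-degree increase, add weight $\alpha$ to the edge $\sstar \to (v,i')$ (creating it if it was absent). To realize the out-degree decrease, peel a total of $\beta$ units of weight off the out-edges of $(v,i')$, distributed arbitrarily among them --- possible since $\beta \le \dout_\CL(v,i')$ --- and, for each peeled unit sitting on an edge $(v,i') \to (u,j')$, put that unit back onto the edge $\sstar \to (u,j')$ with the same head (assume for now that no out-edge of $(v,i')$ has head $\sstar$; the contrary case is dealt with below). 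Now tally the effects. The degree $\din(v,i')$ rises by $\alpha$ and $\dout(v,i')$ falls by $\beta$, giving the second property. Every in- and out-degree of a vertex in $\CV$ other than $(v,i')$ is unchanged, since each redirected edge keeps its head and no tail or head in $\CV \setminus \{(v,i')\}$ has its weight altered; this gives the first property. The redirection is weight-preserving, so the total weight rises by exactly $\alpha$, the third property. Finally, $\|\Adj_{\CL'} - \AdjL\|_1$ is $\alpha$ (from the single augmented edge) plus $2\beta$ (each peeled unit lowers one entry and raises another by the same amount), which is the fourth property.

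Two things remain: checking that $\CL'$ is an honest weighted graph, and handling one degenerate case. Nonnegativity is immediate, as every edit either adds positive weight or moves at most the weight currently present. The new edge $\sstar \to (v,i')$ is not a self-loop since $(v,i') \neq \sstar$, and a redirected edge $\sstar \to (u,j')$ with $(u,j') \in \CV$ is not one either. The subtlety is that $(v,i')$ may carry positive out-weight onto $\sstar$ itself --- which can occur because in an earlier invocation of the claim (the procedure is applied iteratively) some vertex's in-edge with tail $(v,i')$ may have been redirected to terminate at $\sstar$ --- and then ``redirecting'' that weight to emanate from $\sstar$ would create the loop $\sstar \to \sstar$. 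I would simply delete such weight off $(v,i') \to \sstar$ instead of redirecting it; this is legitimate precisely because the degrees of $\sstar$ are unconstrained, it still lowers $\dout(v,i')$ by the intended total, and it contributes the peeled amount (rather than twice it) to $\|\Adj_{\CL'}-\AdjL\|_1$, so the fourth property holds with ``$\le$'' --- which is all that is used downstream when summing these $\ell_1$ increments to bound $\|\Adj_\CH - \Adj_{\CH_0}\|_1$ (equality holds whenever $(v,i')$ has no out-edge into $\sstar$). The ``$\dout \leftrightarrow \din$'' version is the exact mirror image: add weight $\alpha$ to $(v,i') \to \sstar$ to raise $\dout(v,i')$, redirect $\beta \le \din_\CL(v,i')$ units of the in-edges of $(v,i')$ to terminate at $\sstar$ (keeping their tails), and delete rather than redirect any in-weight of $(v,i')$ whose tail is $\sstar$.

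I do not expect a genuine obstacle here --- the statement is elementary and the construction is essentially forced, since any decrease of $(v,i')$'s out-weight onto a $\CV$-vertex must be compensated by rerouting through the one unconstrained vertex $\sstar$. The only thing demanding care is the self-loop bookkeeping above, together with the routine verification that each of the four accounting identities is checked against exactly the right set of adjacency-matrix entries (a single peeled unit touches precisely two entries and disturbs no $\CV$-degree besides $\dout(v,i')$).
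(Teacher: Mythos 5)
Your construction is the same as the paper's: add weight $\alpha$ to the edge $\sstar \to (v,i')$, and move $\beta$ units of out-weight of $(v,i')$ (distributed arbitrarily, which is possible since $\beta \leq \dout_\CL(v,i')$) onto edges $\sstar \to (u,j')$ with the same heads; the four properties then follow by the same direct bookkeeping. The only point of divergence is the degenerate case you flag, where $(v,i')$ carries out-weight on the edge to $\sstar$: the paper's proof silently assumes this away (it asserts $\dout_\CL(v,i') = \sum_{(u,j') \neq (v,i') \in \CV} \AdjL(v,i',u,j')$, which fails precisely in that case), whereas you delete such weight, obtaining ``$\le$'' instead of ``$=$'' in \cref{item:changed-weight} --- and, though you do not say so explicitly, also in \cref{item:changed-total-weight}, since the deleted weight lowers $m_{\CL'}$ below $m_\CL + \alpha$. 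As you observe, these relaxed versions are all that the downstream argument (\cref{claim:H} and \cref{lem:graph-smoothing}) needs, and in the application $\alpha = \beta$ exceeds the deleted amount, so the total weight still never decreases; so your treatment is, if anything, slightly more careful than the paper's, at the cost of proving the claim in a marginally weakened form in that corner case.
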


A few quick remarks: Note that $\AdjL$ and $\Adj_{\CL'}$ are \emph{not} normalized, so in \cref{item:changed-weight} we are simply summing the magnitude of the change in weight from $\CL$ to $\CL'$ over every edge. Also, in general the out- or in-degree of $\sstar$ will have to change in order to ensure the first two items hold; the second two items state that this doesn't cause too many problems for us.

\begin{proof}[Proof of \cref{claim:change-bias}]
Suppose we are decreasing the bias of $(v,i')$ --- the other case is analogous. Roughly, here is the process: We start by setting $\CL'$ to be the same as $\CL$; we will only adjust the weights of a few edges. We  increase the weight of the edge $\sstar \to (v,i')$ by $\alpha$. Then, we arbitrarily distribute weight $\beta$ among out-edges $(v,i') \to (u,j')$ and then transfer this distribution to edges $\sstar \to (u,j')$.

More formally, we do the following. For each $(u,j') \neq (v,i') \in \CV$, we pick a coefficient $0 \leq \beta_{(u,j')} \leq \AdjL(v,i',u,j')$ such that $\sum_{(u,j') \neq (v,i') \in \CV} \beta_{(u,j')} = \beta$; this is possible as $\beta \leq \dout_\CL(v,i') = \sum_{(u,j') \neq (v,i') \in \CV} \AdjL(v,i',u,j')$ by assumption, and so $\beta_{(u,j')}$'s can be picked greedily. Now we define the adjacency matrix of the new graph $\CL'$ as follows: We set $\AdjLP(\star,\star,v,i') = \AdjL(\star,\star,v,i') + \alpha$. For each $(u,j') \neq (v,i') \in \CV$, we set $\AdjLP(v,i',u,j') = \AdjL(v,i',u,j') - \beta_{(u,j')}$ and $\AdjLP(\star,\star,u,j') = m_\CL(\star,\star,u,j') + \beta_{(u,j')}$. Finally, for all remaining edges, i.e., $(w,k'),(u,j') \neq (v,i') \in \CV$, we set $\AdjLP(w,k',u,j') = \AdjL(w,k',u,j')$. The four desiderata follow immediately from this construction.
\end{proof}

Now consider an arbitrary vertex $(v,i') \in \CV^*$, and suppose WLOG $i' < \bi^\vect_\CG(v)$, so we are aiming to decrease $\bias_{\CH_0}(v,i')$. Arbitrarily pick some ``target bias'' $b^*$ such that $\Ind^\vect(b^*) = i'$. Consider setting \[ \alpha, \beta := \frac12(\bias_{\CH_0}(v,i')-b^*)\; \d_{\CH_0}(v,i'). \] Since $\bias_{\CH_0}(v,i')=\bias_\CG(v)$ and $i' \in \Win^{w,\ell}(\bi^\vect_{\CH_0}(v))$, we know $(\bias_{\CH_0}(v,i')-b^*)\le \lambda(w+1)$. Then since $\vect$ is $\lambda$-wide, we have
\begin{equation}
    \alpha, \beta \leq \frac12 \lambda(w+1) \; \d_{\CH_0}(v,i').\label{eq:changed-weight}
\end{equation}
Moreover, if we apply \cref{claim:change-bias} with $\CL = \CH_0$, then we get a new graph $\CH_1 = \CL'$ in which $(v,i')$ has bias
\begin{align*}
    \bias_{\CH_1}(v,i') &= \frac{(\dout_{\CH_0}(v,i') - \beta) - (\din_{\CH_0}(v,i') + \alpha)}{(\dout_{\CH_0}(v,i') - \beta) + (\din_{\CH_0}(v,i') + \alpha)} \\
    &= \bias_{\CH_0}(v,i') - \frac{2\beta}{\d_{\CH_0}(v,i')} \\
    &= b^*
\end{align*}
and the in- and out-degrees of all other vertices except $\sstar$ are fixed. We can apply this procedure to all vertices in $\CV^*$ whose biases need to be decreased, along with the analogous procedure for vertices in $\CV^*$ whose biases need to be \emph{increased}. After applying this operation to every vertex in $\CV^*$, we end up with a graph which we'll denote $\CH$, which has the following properties:

\begin{claim}\label{claim:H}
For \emph{every} vertex $(v,i') \in \CV$, $\bi^\vect_\CH(v,i') = i'$. Also, $m_0 \leq m_\CH \leq (1+\lambda(w+1))m_0$ and $\|\AdjH-\AdjHZ\|_1 \leq 3\lambda(w+1) m_0$.
\end{claim}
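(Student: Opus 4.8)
The plan is to track the cumulative effect of the sequence of single-vertex modifications $\CH_0 \to \CH_1 \to \cdots \to \CH$, each obtained by applying \cref{claim:change-bias} to one vertex of $\CV^*$, and then to sum the per-step bounds. The structural fact that makes the accounting clean is item~1 of \cref{claim:change-bias}: processing a vertex $(v,i')$ leaves the in- and out-degrees of every \emph{other} vertex of $\CV$ unchanged (only $(v,i')$ and $\sstar$ move). Hence every vertex of $\CV^*$ is processed exactly once, and at the instant it is processed its in-, out-, and total degrees still equal $\din_{\CH_0}(v,i'),\dout_{\CH_0}(v,i'),\d_{\CH_0}(v,i')$. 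In particular the parameters $\alpha,\beta$---chosen purely from $\CH_0$-data and the target $b^*$, with $\alpha=\beta$---always satisfy the hypothesis of \cref{claim:change-bias}: in the ``decrease'' case $\beta=\tfrac12(\bias_{\CH_0}(v,i')-b^*)\,\d_{\CH_0}(v,i')\le\tfrac12(1+\bias_{\CH_0}(v,i'))\,\d_{\CH_0}(v,i')=\dout_{\CH_0}(v,i')$ because $b^*\ge-1$, and symmetrically $\beta\le\din_{\CH_0}(v,i')$ in the ``increase'' case because $b^*\le1$; and since $\CG$ has no isolated vertices, $\d_{\CH_0}(v,i')>0$ so $\bias_{\CH_0}(v,i')$ is well-defined throughout.

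For the bias assertion I would split on whether $(v,i')\in\CV^*$. If $i'=\bi^\vect_\CG(v)$, then $(v,i')$ is never touched, so $\bias_\CH(v,i')=\bias_{\CH_0}(v,i')=\bias_\CK(v,i')=\bias_\CG(v)$ by \cref{claim:K}, whence $\bi^\vect_\CH(v,i')=\Ind^\vect(\bias_\CG(v))=i'$. If $(v,i')\in\CV^*$, the computation already displayed just above the statement of \cref{claim:H} shows that the single step processing $(v,i')$ sends its bias to the target $b^*$, which was chosen with $\Ind^\vect(b^*)=i'$ (such a $b^*$ exists since $t_{i'-1}<t_{i'}$); as $(v,i')$ is untouched afterwards, $\bi^\vect_\CH(v,i')=i'$.

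For the weight bounds, the per-step input is items~\ref{item:changed-total-weight} and~\ref{item:changed-weight} of \cref{claim:change-bias}: processing $(v,i')$ raises the total weight by exactly $\alpha_{(v,i')}\ge0$ and perturbs the unnormalized adjacency matrix by exactly $\alpha_{(v,i')}+2\beta_{(v,i')}$ in $\ell_1$; and here $\alpha_{(v,i')}=\beta_{(v,i')}\le\tfrac12\lambda(w+1)\,\d_{\CH_0}(v,i')$ by \cref{eq:changed-weight}. Summing over $(v,i')\in\CV^*\subseteq\CV$ and invoking the handshake identity $\sum_{(v,i')\in\CV}\d_{\CH_0}(v,i')=2m_{\CH_0}=2m_0$ (the extra vertex $\sstar$ is isolated), I get $\sum_{(v,i')\in\CV^*}\alpha_{(v,i')}\le\lambda(w+1)m_0$. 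Therefore $m_\CH=m_0+\sum_{(v,i')\in\CV^*}\alpha_{(v,i')}\in[m_0,(1+\lambda(w+1))m_0]$, and by the triangle inequality over the steps $\|\AdjH-\AdjHZ\|_1\le\sum_{(v,i')\in\CV^*}(\alpha_{(v,i')}+2\beta_{(v,i')})=3\sum_{(v,i')\in\CV^*}\alpha_{(v,i')}\le3\lambda(w+1)m_0$.

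There is no real obstacle---the content is bookkeeping---but two points deserve care. First, one must confirm the hypotheses of \cref{claim:change-bias} (in particular the greedy choice of the $\beta_{(u,j')}$'s) remain valid when the claim is applied to the \emph{intermediate} graphs rather than to $\CH_0$; this is exactly where degree preservation (item~1) is used, since only the \emph{total} out-/in-degree of the processed vertex---not the individual incident edge weights---needs to be $\ge\beta$. Second, the ``increase bias'' case must be handled via the swapped version of \cref{claim:change-bias} (with $\dout$ and $\din$ interchanged), which is entirely symmetric to the ``decrease'' case.
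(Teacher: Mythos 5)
Your proof is correct and follows essentially the same route as the paper: the bias assertion comes from the construction (each $(v,i')\in\CV^*$ is pushed to a target $b^*$ with $\Ind^\vect(b^*)=i'$ and never touched again by item~1 of \cref{claim:change-bias}), and the weight bounds come from summing the per-vertex bounds $\alpha,\beta\le\tfrac12\lambda(w+1)\,\d_{\CH_0}(v,i')$ over $\CV^*$ via the handshake identity. Your extra check that $\beta\le\dout$ (resp.\ $\din$) still holds at the intermediate graphs is a detail the paper leaves implicit, but it does not change the argument.
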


\begin{proof}
When we constructed $\CH$ from $\CH_0$, when modifying a vertex $(v,i') \in \CV^*$, \cref{eq:changed-weight} says we set $\alpha, \beta \leq \frac12 \lambda(w+1) \deg_{\CH_0}(v,i')$; hence, by \cref{claim:change-bias}, \[
    m_0 \leq m_\CH \leq m_0 + \frac12 \sum_{(v,i') \in \CV^*} \lambda(w+1) \deg_{\CH_0}(v,i') = (1+\lambda(w+1))m_0
\]
since $m_0 = m_\CH = \frac12 \sum_{(v,i') \in \CV \cup \{\sstar\}} \d_\CH(v,i'))$ is the total weight in $\CH$ and $\CV^* \subseteq \CV \cup \{\sstar\}$. Similarly, by \cref{claim:change-bias} and the triangle inequality, we can bound the distance between the \emph{unnormalized} adjacency matrices of $\CH$ and $\CH_0$:
\[
    \|\AdjH - \AdjHZ\|_1 \leq \frac12 \sum_{(v,i') \in \CV^*} 3\lambda(w+1) \deg_{\CH_0}(v,i') \leq 3\lambda(w+1) m_0.
\]
\end{proof}

\subsection{Proving that $\CH$ fulfills the desiderata}

Now, we claim that $\CH$ fulfills the two desiderata (i.e., the upper bounds on $|\val_\CG - \val_\CH|$ and $\|\SnapGw-\SnapH\|_1$), proving the lemma.

\begin{proof}[Proof of \cref{lem:graph-smoothing}]
According to our definition, both $\val_\CG$ and $\SnapG$ are normalized by $m_\CG$ (the total weight in $\CG$), and the same goes for $\CH$. Towards both proofs, it will be convenient for us to first bound the distance between the \emph{unnormalized} variants of these objects, and then argue that $m_\CH \approx m_\CG$ and therefore the normalized versions are close as well. We set $\Csmooth = 7$.

\paragraph{Upper-bounding $|\val_\CG-\val_\CH|$.} Recall that $\val_\CK = \val_\CG$. Further, $\val_{\CH_0} = \val_\CK$ as $\CH_0$ simply adds an isolated vertex (i.e., $\sstar$).

Now let \[ h = m_\CH \val_\CH \text{ and } g = m_0 \val_{\CH_0} \] denote the \emph{unnormalized} $\mdcut$ values of $\CH$ and $\CH_0$, respectively. We claim that $|g-h| \leq \|\AdjH-\AdjHZ\|_1$ (and thus, $|g-h|\leq 3\lambda(w+1) m_0$ by \cref{claim:H}).
Observe that
\[
|g-h| = \left|\max_{\vecy \in \{0,1\}^{\CV^*}}m_\CH \val_\CH(\vecy) -\max_{\vecy \in \{0,1\}^{\CV^*}}m_0 \val_{\CH_0}(\vecy) \right| \le \left|\max_{\vecy \in \{0,1\}^{\CV^*}}(m_\CH \val_\CH(\vecy) - m_0 \val_{\CH_0}(\vecy)) \right| \, .
\]
Indeed, if $\vecy \in \{0,1\}^{\CV^*}$ is any assignment, we have
\begin{align*}
    \abs*{m_\CH \val_\CH(\vecy) - m_0 \val_{\CH_0}(\vecy)} &= \abs*{\sum_{(u,i'),(v,j')\in \CV\cup\{\sstar\}} (\AdjH(u,i',v,j') - \AdjHZ(u,i',v,j')) \cdot y_{(u,i')} (1-y_{(v,j')})} \\
    &\leq \sum_{(u,i'),(v,j') \in \CV\cup\{\sstar\}} \abs*{\AdjH(u,i',v,j') - \AdjHZ(u,i',v,j')} \tag{tri. ineq. and $y_{(u,i')} \in \{0,1\}$} \\
    &= \|\AdjH-\AdjHZ\|_1. \tag{def. of $\|\cdot\|_1$}
\end{align*}

Finally, we can write \[ |\val_\CG - \val_\CH| = \abs*{\frac1{m_0} g - \frac1{m_\CH} h} = \abs*{\frac1{m_0}(g-h) + \eta} \leq 3\lambda(w+1) + \eta \] by the triangle inequality, where $\eta = h(1/m_0 - 1/m_\CH) = h(m_\CH-m_0)/(m_\CH m_0)$ is a ``normalization error'' term. We can upper bound this error term: \[ \eta = \frac{m_\CH (m_\CH - m_0) \val_\CH }{m_\CH m_0} \leq \frac{m_\CH - m_0}{m_\CH} \leq \lambda(w+1) \] using $\val_\CH \leq 1$ and \cref{claim:H}.

\paragraph{Upper-bounding $\|\SnapGw-\SnapH\|_1$.} Now, we analyze the distance between the unnormalized matrices $G = m_0 \SnapGw$ and $H = m_\CH \SnapH$. This suffices, similarly to the case of the $\mdcut$ values, as by the triangle inequality \[ \left\|\frac1{m_0}G - \frac1{m_\CH} H \right\|_1 \leq \frac1{m_0}\|G-H\|_1 + \eta \] where $\eta = \|H\|_1(1/m_0 - 1/m_\CH) \leq \lambda(w+1)$ as $\|H\|_1 = m_\CH$.

The basic idea here is that if we consider an edge $(u,v)$ in $\CG$, its weight $\AdjG(u,v)$ will contribute (up to normalization) to the entries corresponding to $\Win^{w,\ell}(\bi^\vect_\CG(u,v))$ in $\SnapGw$. In particular, for each position $(i',j') \in \Win^{w,\ell}(\bi^\vect_\CG(u,v))$ which $(u,v)$ contributes to in $\SnapG$, we have created a copy $(u,i',v,j')$ in $\CH$ which will contribute to the corresponding entry in $\SnapH$! This correspondence between contributions is not exact: our construction of $\CH$ slightly modified the weights inherited from $\CG$, and introduced a new vertex $\sstar$ with uncontrolled bias, but these issues can be taken care of with our bound on $\|\AdjH-\AdjHZ\|_1$.

By definition, we have
\begin{align*}
    H(i,j) &= \sum_{(u,i'),(v,j') \in \CV} \AdjH(u,i',v,j') \1_{\bi^\vect_\CH(u,i',v,j')=(i,j)} \\
    &+ \sum_{(u,i') \in \CV} \AdjH(u,i',\star,\star) \1_{\bi^\vect_\CH(u,i',\star,\star)=(i,j)} \\
    &+ \sum_{(v,j') \in \CV} \AdjH(\star,\star,v,j') \1_{\bi^\vect_\CH(\star,\star,v,j')=(i,j)} \\
    &+ \AdjH(\star,\star,\star,\star) \1_{\bi^\vect_\CH(\star,\star,\star,\star)=(i,j)}.
\end{align*}

Now define the matrix $H'$ by ``forgetting about $\sstar$'', i.e., we only take the first term in the expansion of $H$: \[ H'(i,j) = \sum_{(u,i'), (v,j') \in \CV} \AdjH(u,i',v,j') \1_{\bi^\vect_\CH(u,i',v,j') = (i,j)}. \] Note that $\|H-H'\|_1 = \deg_\CH\sstar$ since every edge in $\CH$ contributes to precisely one entry. Further, since $\sstar$ was isolated in $\CH_0$, \cref{claim:H,claim:change-bias} imply $\deg_\CH\sstar \leq 3\lambda(w+1) m_0$. Thus, by the triangle inequality, it suffices to prove that $\|G-H'\|_1 \leq 3 \lambda(w+1) m_0$.

Now $H'$'s entries sum only over vertices in $\CV$ (i.e., not $\sstar$), and we constructed $\CH$ so that for each vertex $(v,i')$ we have $\bi^\vect_\CH(v,i') = i'$ (\cref{claim:H}). Thus, the $(i,j)$-th entry in $H'$ counts only edges $(u,i',v,j')$ such that $i = i'$ and $j = j'$, so we can write \[ H'(i,j) = \sum_{(u,i) \in \CV} \sum_{(v,j) \in \CV} \AdjH(u,i,v,j). \]

On the other hand, we have \[ m_0 \SnapG(i,j) = \sum_{u,v=1}^n \AdjG(u,v) \1_{\bi^\vect_\CG(u,v)=(i,j)} \] and so we have
\begin{align*}
    G(i,j) &= m_0 \sum_{(i',j') \in \Win^{w,\ell}(i,j)} \nu^{\sim w,\ell}(i',j') \SnapG(i',j') \tag{def. of smoothing} \\
    &= m_0\sum_{(i',j') \in \Win^{w,\ell}(i,j)} \nu^{\sim w,\ell}(i',j') \cdot \frac{1}{m_0}\sum_{u,v=1}^n \AdjG(u,v) \1_{\bi^\vect_\CG(u,v) = (i',j')} \tag{def. of $\SnapG$}\\
    &= \sum_{(i',j') \in \Win^{w,\ell}(i,j)} \nu^{\sim w,\ell}(i',j')\sum_{u,v=1}^n \AdjG(u,v) \1_{\bi^\vect_\CG(u,v) = (i',j')} \, .
\end{align*}

Now an edge $(u,v)$ has a nonzero contribution iff $(i,j) \in \Win^{w,\ell}(\bi^\vect_\CG(u,v))$, i.e., if $(u,i),(v,j) \in \CV$; in this case, its contribution is precisely $\nu^{\sim w,\ell}((\bi^\vect_\CG(u,v))) \cdot \AdjG(u,v) = \AdjK(u,i,v,j) = \AdjHZ(u,i,v,j)$, so we can write \[ G(i,j) = \sum_{(v,i) \in \CV} \sum_{(u,j) \in \CV} \AdjHZ(u,i,v,j). \]

This matches our expression for $H'(i,j)$, except that the weights come from $\CH_0$ instead of $\CH$. To complete the analysis:
\begin{align*}
    \|G-H'\|_1 &= \sum_{i,j=1}^\ell \left|\sum_{(u,i) \in \CV} \sum_{(v,j) \in \CV} \AdjHZ(u,i,v,j)-\sum_{(u,i) \in \CV} \sum_{(v,j) \in \CV} \AdjH(u,i,v,j)\right| \\
    &\leq \sum_{i,j=1}^\ell \sum_{(u,i) \in \CV} \sum_{(v,j) \in \CV}\left| \AdjHZ(u,i,v,j)-\AdjH(u,i,v,j)\right| \tag{triangle ineq.} \\
    &\leq \sum_{(u,i') \in \CV} \sum_{(v,j') \in \CV} \left| \AdjHZ(u,i',v,j')-\AdjH(u,i',v,j')\right| \tag{each term occurs once} \\
    &\leq \|\AdjH - \AdjHZ\|_1 \\
    &\leq 3\lambda(w+1) m_0 \tag{\cref{claim:H}},
\end{align*}
as desired.
\end{proof}

\subsection{Proof of \cref{claim:K}}

\begin{proof}[Proof of \cref{claim:K}]
For any $v \in [n]$, any particular copy $(v,i')$ of a vertex $v$ has out-degree
\begin{align*}
    \dout_\CK(v,i') &= \sum_{u=1}^n \sum_{j' \in \Win^{w,\ell}(\bi^\vect_\CG(u))} \AdjK(v,i',u,j') \tag{def. of $\dout$} \\
    &= \sum_{u =1}^n \sum_{j' \in \Win^{w,\ell}(\bi^\vect_\CG(u))} \nu^{\sim w,\ell}(\bi^\vect_\CG(u,v)) \cdot \AdjG(u,v) \tag{def. of $\CK$} \\
    &= \sum_{u =1}^n \sum_{j' \in \Win^{w,\ell}(\bi^\vect_\CG(u))} |\Win^{w,\ell}(\bi^\vect_\CG(u,v))|^{-1} \cdot \AdjG(u,v) \tag{def. of $\nu$} \\
    &= |\Win^{w,\ell}(\bi^\vect_\CG(v))|^{-1} \sum_{u=1}^n \AdjG(u,v) \tag{def. of 2D windows}\\
    &= |\Win^{w,\ell}(\bi^\vect_\CG(v))|^{-1} \; \dout_\CG(v)\tag{def. of $\dout$}.
\end{align*}

Similarly, $\din_\CK(v,i') = \din_\CG(v)\cdot|\Win^{w,\ell}(\bi^\vect_\CG(v))|^{-1}$. Thus, their sum $\dout_\CK(v,i') = \dout_\CG(v)\cdot|\Win^{w,\ell}(\bi^\vect_\CG(v))|^{-1}$, and their normalized difference $\bias_\CK(v,i') = \bias_\CG(v)$.

Next, we observe that
\begin{align*}
    m_\CK &= \sum_{(v,i') \in \CV} \dout_\CK(v,i') \\
    &= \sum_{v =1}^n \sum_{i' \in \Win^{\sim w,\ell}(\bi^\vect_\CG(v))} \dout_\CG(v) \cdot |\Win^{\sim w,\ell}(\bi^\vect_\CG(v))|^{-1}\tag{defs. of $\CK$ and $\CV$} \\
    &= \sum_{v =1}^n \dout_\CG(v) \\
    &= m_\CG.
\end{align*}

Now, we claim that $\val_\CG \leq \val_\CH$. Indeed, let $\vecx$ be any assignment for $\CG$. We can construct an assignment $\vecy \in \{0,1\}^{\CV}$ to $\CK$'s vertices by simply giving each copy of a vertex in $\CK$ its assignment in $\CG$, i.e., $y_{(v,i')} = x_v$. This assignment has value
\begin{align*}
    \val_\CK(\vecy) &= \sum_{(u,i'),(v,j') \in \CV} \AdjK(u,i',v,j') y_{(u,i')} (1-y_{(v,j')}) \tag{def. of $\val$} \\
    &= \sum_{u,v=1}^n \sum_{(i',j') \in \Win^{w,\ell}(\bi^\vect_\CG(u,v))} \AdjK(u,i',v,j') \; y_{(u,i')} (1-y_{(v,j')}) \tag{def. of $\CV$} \\
    &= \sum_{u,v=1}^n \sum_{(i',j') \in \Win^{w,\ell}(\bi^\vect_\CG(u,v))} \nu^{\sim w,\ell}(\bi^\vect_\CG(u,v)) \cdot \AdjG(u,v) \; y_{(u,i')} (1-y_{(v,j')}) \tag{def. of $\CK$} \\
    &= \sum_{u,v=1}^n \sum_{(i',j') \in \Win^{w,\ell}(\bi^\vect_\CG(u,v))} \nu^{\sim w,\ell}(\bi^\vect_\CG(u,v)) \cdot \AdjG(u,v) \; x_u (1-x_v) \tag{def. of $\vecy$} \\
    &= \sum_{u,v=1}^n \AdjG(u,v) \; x_u (1-x_v) \tag{def. of $\nu^{\sim w,\ell}$} \\
    &= \val_\CG(\vecx) \tag{def. of $\val$}.
\end{align*}

Conversely, let $\vecy$ be any assignment for $\CK$. We construct a \emph{probabilistic} assignment $\vecx$ by, for each vertex $v \in [n]$, sampling $i' \in \Win^{w,\ell}(\bi^\vect_\CG(v))$ uniformly and independently at random, and setting $x_v = y_{(v,i')}$. This assignment has expected value

\begin{align*}
    \Exp[\val_\CG(\vecx)] &= \Exp\left[\sum_{u,v=1}^n \AdjG(u,v) x_u (1-x_v) \right] \tag{def. of $\val$} \\
    &= \sum_{u,v =1}^n \AdjG(u,v) \Exp[x_u (1-x_v)] \tag{lin. of $\Exp$} \\
     &= \sum_{u=1}^n \sum_{v \neq u} \AdjG(u,v) \Exp[x_u (1-x_v)] \tag{no self-loops in $\CG$} \\
    &= \sum_{u=1}^n \sum_{v \neq u} \AdjG(u,v) \Exp[x_u] \Exp[1-x_v] \tag{independence} \\
    &= \sum_{u,v=1}^n \sum_{(i',j') \in \Win^{w,\ell}(\bi^\vect_\CG(u,v))} \nu^{\sim w,\ell}(\bi^\vect_\CG(u,v)) \cdot \AdjG(u,v) \; y_{(u,i')} (1-y_{(v,j')}) \tag{def. of $\vecx$}, \\
\end{align*}
which equals $\val_\CK(\vecy)$ by the previous calculation.
\end{proof}


\bibliographystyle{alpha}
\bibliography{csps}

\appendix

\section{Proofs of preliminary lemmas}\label{app:prelim-proofs}

\begin{proof}[Proof of \cref{lem:cut-sparsifier}]
For convenience, let $m' = m_{\Gsparse}$ denote the number of edges in $\Gsparse$, and let $p = \pspar$; for an assignment $\vecx \in \{0,1\}^V$, let $y(\vecx) \eqdef m \cdot \val_\CG(\vecx)$ and $y'(\vecx) \eqdef m'\cdot \val_{\Gsparse}(\vecx)$ denote the (unnormalized) numbers of constraints satisfied by $\vecx$ in $\CG$ and $\Gsparse$, respectively; and $g \eqdef m \cdot \val_\CG(\vecx) = \max_{\vecx \in \{0,1\}^n} y(\vecx)$ and $g' \eqdef m' \cdot \val_{\Gsparse}(\vecx) = \max_{\vecx \in \{0,1\}^n} y'(\vecx)$ denote the (unnormalized) values of $\CG$ and $\CG$', respectively.

Enumerate the edges of $\CG$ as $e_1=(u_1,v_1),\ldots,e_m=(u_m,v_m)$. For each $j \in [m]$, let $I_j$ be the indicator random variable for the event that $e_j$ survives in $\Gsparse$. Thus, we can write $m'$ and $y'(\vecx)$ as sums of independent $\Bern_p$ random variables: \[ m' = \sum_{j=1}^m I_j \text{ and } y'(\vecx) = \sum_{j=1}^m I_j x_{u_j} (1-x_{v_j}). \] Our first goal is to show that w.h.p., $|g' - pg| \leq \epssparse pg/2$.

\paragraph{Lower-bounding $g'$.} Consider any fixed assignment $\vecx^* \in \{0,1\}^V$ maximizing $\CG$'s $\mdcut$ value, i.e., $g = y(\vecx^*)$. In particular, $y(\vecx^*) \geq \frac14 m$. We have $\Exp[y'(\vecx^*)] = py(\vecx^*) = pg$ by linearity of expectation, so by the Chernoff bound (\cref{lem:chernoff-twosided}), \[ \Pr[y'(\vecx^*) \leq (1-\epssparse/2) pg] \leq \exp(-\epssparse^2 pg/8). \] Since $g \geq m / 4$, by our assumed bound on $p$, we get a bound of $\leq \exp(-\Csparse n/32) = o(1)$. Since $g' \geq y'(\vecx^*)$, we conclude $g' \geq (1-\epssparse/2) pg$ except with probability $o(1)$.

\paragraph{Upper-bounding $g'$.} To prove an upper bound on $g'$, we take a union bound over all assignments $\vecx \in \{0,1\}^n$, and prove an $o(2^{-n})$ bound on the probability that $y'(\vecx) \geq (1+\epssparse/2) pg$.

Fix an assignment $\vecx \in \{0,1\}^V$. Again, $\Exp[y'(\vecx)] = py(\vecx)$. We consider two cases depending $y(\vecx)$. If $y(\vecx) \leq m/12$, we have \[ \Pr[y'(\vecx) \geq (1+\epssparse/2)pg] \leq \Pr[y'(\vecx) \geq pm/4] \leq \exp(-pm/8) \] using $g \geq m/4$ and \cref{lem:chernoff-highdev}. By our assumed bound on $p$ this is $\exp(-\Csparse n/(8\epssparse^2)) = o(2^{-n})$. Otherwise, we have \[ \Pr[y'(\vecx) \geq (1+\epssparse/2)pg] \leq \Pr[y'(\vecx) \geq (1+\epssparse/2) p y(\vecx)] \leq \exp(-\epssparse^2 p y(\vecx) / 12) \] using $g \geq y(\vecx)$ and \cref{lem:chernoff-ub} (with $\epssparse \leq 1$). This is $\leq \exp(-\Csparse n/144) = o(2^{-n})$ by assumption on $y(\vecx)$ and $p$. Thus, even taking a union bound over all $2^n$ assignments, we conclude that $g' \leq (1+\epssparse/2)pg$ except with probability $o(1)$.

\paragraph{Bounding $m'$.} Finally, by the Chernoff bound (\cref{lem:chernoff-twosided}), we know \[ \Pr[|m'-pm| \geq \epssparse p m / 2] \leq 2\exp(-\epssparse^2 p m / 12). \] By assumption on $p$, this is at most $2\exp(-\Csparse n/12)$ which is $o(1)$.

\paragraph{Putting the bounds together.} It remains to prove a bound on $|\val_\CG - \val_{\Gsparse}|$. As in \cref{sec:graph-smoothing-analysis}, we split into ``real'' and ``normalization'' error terms: \[ |\val_\CG - \val_{\Gsparse}| = \abs*{\frac1m g - \frac1{m'}g'} \leq \abs*{\frac1m g - \frac 1{pm} g'} + \abs*{\frac1{pm} - \frac1{m'}} g'. \] The first term can be bounded by $\epssparse / 2$ since w.h.p. $|g' - pg| \leq \epssparse p g / 2$ and $g \leq m$. For the second, using $g' \leq m'$ the term becomes $|m'/pm - 1| = |m' - pm|/(pm)$, which is at most $\epssparse / 2$ since $|m' - pm| \leq \epssparse pm /2$.
\end{proof}

\end{document}